\documentclass{LMCS}

\def\doi{8 (1:17) 2012}
\lmcsheading%
{\doi}
{1--50}
{}
{}
{Sep.~21, 2011}
{Mar.~\phantom02, 2012}
{}

\newif\ifnocomment
\nocommenttrue
%\nocommentfalse

\usepackage{lmodern}
\usepackage{mathpartir}
\usepackage{mathbbol}
\usepackage{calrsfs}
\usepackage{amssymb,amsmath}
\usepackage{stmaryrd}
\usepackage{empheq}
\usepackage{color}
\usepackage{ifpdf}
\usepackage{wasysym}
\usepackage{upgreek}
\usepackage{bussproofs}
\usepackage{ifthen}
\usepackage{enumerate,hyperref}

\usepackage{listings}

%%% LISTINGS %%%

\definecolor{lstbackground}{gray}{0.9}
\definecolor{lstdarkblue}{rgb}{0,0,0.5}

\lstloadlanguages{C}

\lstdefinestyle{PSEUDOC}
  {language=C,
   print=true,
   numbers=left,
   numberstyle=\tiny,
   columns=fullflexible,
   basicstyle=\ttfamily,
   identifierstyle=\it\ttfamily,
   keywordstyle=\bfseries,
   morekeywords={bool,let,new,map,foreach},
   commentstyle=\color{lstdarkblue},
   mathescape=true,
   literate=
     {\ !=\ }{ $\ne$ }3
  }

\lstnewenvironment{PseudoC}
  {\lstset{style=PSEUDOC}}
  {}

\lstdefinestyle{SINGSHARP}
  {language=C,
   print=true,
   columns=fullflexible,
   basicstyle=\ttfamily,
   identifierstyle=\ttfamily,
   keywordstyle=\ttfamily\color{lucablue},
   morekeywords={class,message,ExHeap,Claims,linear,in,out,receive,send,delete,expose,null,claims,dispose,contract,initial,final,state,open,close,imp,exp,new},
   commentstyle=\color{lstdarkblue},
   mathescape=true,
    literate= 
    {->}{$\to$}1
    {\ ->}{ $\to$}3
    {\ ->\ }{ $\to$ }3
%    {\ <-\ }{ $\leftarrow$ }3
%    {\ =>\ }{ $\Rightarrow$ }3,
%   moredelim=[il][\itshape]{\%\ },
%    moredelim=[il][\color{lstdarkred}]{\%\%\ },
%    moredelim=**[il][\color{lstdarkred}]{\%\ },
%   backgroundcolor=\color{lstbackground}
  }

\lstnewenvironment{SingSharp}[1][]
  {\lstset{style=SINGSHARP,#1}}
  {}

\newcommand{\Code}{\lstinline[style=SINGSHARP]}

%%% CONDITIONALS %%%

\newif\iflong
\longtrue
\longfalse

\newif\ifcomments
\commentstrue
\commentsfalse

\newif\ifproofs
\proofstrue
\proofsfalse

%%% TYPESETTING %%%

\ifcomments
\newcommand{\REVISION}[1]{{\color{magenta}#1}}
\else
\newcommand{\REVISION}[1]{#1}
\fi

\newcommand{\hypo}[1]{\textrm{\color{lucared}(#1)}}
\newcommand{\STEP}[2][]{
  \begin{iteMize}{$-$}{}
  \item
    \ifthenelse{\equal{#1}{}}{}{\hypo{#1} }%
    #2
  \end{iteMize}
}

\newcommand{\eoe}{
  \hfill\hbox{$\blacksquare$}
}

\setlength{\marginparwidth}{2cm}
\let\oldmarginpar\marginpar
\renewcommand{\marginpar}[1]{\oldmarginpar[\raggedleft\tiny #1]%
{\raggedright\tiny #1}}

\newcommand{\marginnote}[2]{
  \ifcomments
  \marginpar{\sf\textbf{#1}: #2}
  \fi
}
\newcommand{\Luca}[1]{\marginnote{Luca}{\color{blue}#1}}

%%% COLORS %%%

\definecolor{lucared}{rgb}{0.5,0,0}
\definecolor{lucalert}{rgb}{0.8,0,0}
\definecolor{lucagreen}{rgb}{0,0.3,0}
\definecolor{lucablue}{rgb}{0,0,0.8}

%%% INFERENCE RULES %%%

\newcommand{\rulename}[1]{\text{{\sc(#1)}}}

%%% NAMES %%%

\newcommand{\natset}{\mathbb{N}}

\newcommand{\Fsub}{$F_{{<}{:}}$}
\newcommand{\Sharp}[1]{\textsf{#1}{${}^{\#}$}}

\newcommand{\PolySing}{\Sharp{PolySing}}
\newcommand{\Sing}{\Sharp{Sing}}

\newcommand{\Process}{\ProcessP}
\newcommand{\ProcessP}{P}
\newcommand{\ProcessQ}{Q}
\newcommand{\ProcessR}{R}

\newcommand{\PointerSet}{\mathtt{Pointers}}
\newcommand{\Pointer}{\PointerA}
\newcommand{\PointerA}{a}
\newcommand{\PointerB}{b}
\newcommand{\PointerC}{c}
\newcommand{\PointerD}{d}

\newcommand{\Type}{\TypeT}
\newcommand{\TypeT}{t}
\newcommand{\TypeS}{s}

\newcommand{\SessionType}{\SessionTypeT}
\newcommand{\SessionTypeT}{T}
\newcommand{\SessionTypeS}{S}

\newcommand{\Name}{\NameA}
\newcommand{\NameA}{u}
\newcommand{\NameB}{v}

\newcommand{\Value}{\ValueV}
\newcommand{\ValueV}{\mathsf{v}}
\newcommand{\ValueW}{\mathsf{w}}

\newcommand{\VarSet}{\mathtt{Variables}}
\newcommand{\Var}{\VarX}
\newcommand{\VarX}{x}
\newcommand{\VarY}{y}
\newcommand{\VarZ}{z}

\newcommand{\RecVar}{\RecVarX}
\newcommand{\RecVarX}{{X}}
\newcommand{\RecVarY}{{Y}}

\newcommand{\Tag}{\mathtt{m}}

\newcommand{\Memory}{\mu}

\newcommand{\Queue}{\mathfrak{Q}}

\newcommand{\BoundContext}{\Updelta}
\newcommand{\EmptyBoundContext}{{\emptyset}}
\newcommand{\RecContext}{\Upsigma}
\newcommand{\Context}{\Upgamma}

\newcommand{\Qualifier}{q}

%%% VALUES %%%

%%% MEMORY %%%

\newcommand{\EmptyMemory}{\emptyset}

\newcommand{\xmsg}[3]{#1\langle#2\rangle(#3)}
\newcommand{\msg}[2]{#1(#2)}

\newcommand{\EmptyQueue}{\varepsilon}
\newcommand{\shared}[1]{\co{#1}}

%%% SYSTEMS %%%

\newcommand{\system}[2]{(#1;#2)}

%%% PROGRAMS %%%

%%% EXPRESSIONS %%%

%%% PROCESSES %%%

\newcommand{\parop}{\mathbin{|}}
\newcommand{\choice}{\oplus}

\newcommand{\idle}{\textbf{\color{lucagreen}0}}
\newcommand{\primitive}[1]{\text{\ttfamily\color{lucablue}#1}}
\newcommand{\openChannel}{\primitive{open}}
\newcommand{\closeChannel}{\primitive{close}}
\newcommand{\rec}{\primitive{rec}~}

\newcommand{\iteb}[3]{\primitive{if}~#1~\begin{array}[t]{@{}l@{}}\primitive{then}~#2 \\ \primitive{else}~#3\end{array}}

\newcommand{\xsend}[4]{#1!\xmsg{#2}{#3}{#4}}
\newcommand{\xreceive}[4]{#1?\xmsg{#2}{#3}{#4}}
\newcommand{\send}[3]{#1!\msg{#2}{#3}}
\newcommand{\receive}[3]{#1?\msg{#2}{#3}}

%%% SYNTACTIC SUGAR FOR PROCESSES %%%

%%%%%%%%%%%%%%%%%%%%%%%%%
%%% TYPE VARIABLE MAP %%%
%%%%%%%%%%%%%%%%%%%%%%%%%

\newcommand{\varmap}{\mathsf{m}}

%%%%%%%%%%%%%
%%% TYPES %%%
%%%%%%%%%%%%%

\newcommand{\tbool}{\mathtt{Bool}}

\newcommand{\tvar}{\tvarA}
\newcommand{\tvarA}{\alpha}
\newcommand{\tvarB}{\beta}
\newcommand{\tvarC}{\gamma}
\newcommand{\tvarD}{\delta}

\newcommand{\tmsg}[3]{#1\langle#2\rangle(#3)}

\newcommand{\Ref}[1]{{\ast}#1}
\newcommand{\Open}{{\color{lucared}\bullet}}

%%%%%%%%%%%%%%%%%%
%%% QUALIFIERS %%%
%%%%%%%%%%%%%%%%%%

\newcommand{\qlin}{{\color{lucared}\mathsf{lin}}}
\newcommand{\qun}{{\color{lucared}\mathsf{un}}}

%%%%%%%%%%%%%%%%%%%%%
%%% SESSION TYPES %%%
%%%%%%%%%%%%%%%%%%%%%

\newcommand{\trec}{\mathtt{rec}~}
\newcommand{\SessionEnd}{{\color{lucagreen}\mathsf{end}}}
\newcommand{\InternalChoice}[1]{\{{!}#1\}}
\newcommand{\ExternalChoice}[1]{\{{?}#1\}}

%%%%%%%%%%%%%%%%%
%%% RELATIONS %%%
%%%%%%%%%%%%%%%%%

\newcommand{\dual}{\bowtie}
\newcommand{\drel}{\mathcal{D}}
\newcommand{\srel}{\mathcal{S}}

\newcommand{\wrel}{\mathcal{W}}

\newcommand{\red}[1]{\rightarrow_{#1}}
\newcommand{\nred}[1]{\arrownot\red{#1}}
\newcommand{\wred}[1]{\Rightarrow_{#1}}
\newcommand{\eqdef}{\stackrel{\text{def}}{=}}

\newcommand{\subt}{\leqslant}
\newcommand{\asubt}{\leqslant_{\mathsf{\color{lucared}a}}}

\newcommand{\wbound}{\mathrel{::}}
\newcommand{\wbb}[3]{#1\vdash#2\wbound#3}

\renewcommand{\land}{\mathrel{\&}}
\newcommand{\seq}{\equiv}
\newcommand{\wfdash}{\Vdash}

%%%%%%%%%%%%%%%%%
%%% FUNCTIONS %%%
%%%%%%%%%%%%%%%%%

\newcommand{\co}[1]{\overline{#1}}
\newcommand{\dom}{\mathtt{dom}}
\newcommand{\subst}[2]{\{#1/#2\}}
\newcommand{\psubst}[2]{\{\!\{#1/#2\}\!\}}
\newcommand{\ftv}{\mathtt{ftv}}
\newcommand{\btv}{\mathtt{btv}}
\newcommand{\fpv}{\mathtt{fpv}}
\newcommand{\bpv}{\mathtt{bpv}}
\newcommand{\fn}{\mathtt{fn}}
\newcommand{\bn}{\mathtt{bn}}

\newcommand{\tail}{\mathtt{tail}}

\newcommand{\reachable}[2]{\mathtt{reach}(#1,#2)}
\newcommand{\weight}[1]{\|#1\|}
\newcommand{\xweight}[2]{\|#2\|_{#1}}
\newcommand{\trees}{\mathtt{trees}}
\newcommand{\instances}{\mathtt{instances}}
\newcommand{\aweight}{\mathtt{W}}

%%% Local Variables: 
%%% mode: latex
%%% TeX-master: "main"
%%% End: 

\theoremstyle{plain}
\newtheorem{proposition}{Proposition}[section]
\newtheorem{lemma}{Lemma}[section]
\newtheorem{theorem}{Theorem}[section]

\theoremstyle{definition}
\newtheorem{definition}{Definition}[section]
\newtheorem{example}{Example}[section]

\theoremstyle{remark}

%% due to the dependence on amsart.cls, \begin{document} has to occur
%% BEFORE the title and author information:
\begin{document}

\title{Typing Copyless Message Passing}
\thanks{This work was partially supported by MIUR (PRIN 2008 DISCO)}

\author[V.~Bono]{Viviana Bono}
\address{Dipartimento di Informatica, Universit\`a degli Studi di Torino, Torino, Italy}
\email{\{bono,padovani\}@di.unito.it}  %optional

\author[L.~Padovani]{Luca Padovani}	%optional
\address{\vskip -6 pt}
%\email{padovani@di.unito.it}  %optional

%% required for running head on odd and even pages, use suitable
%% abbreviations in case of long titles and many authors:

%\subjclass{{F.1.2 }{[{\bf Computation by Abstract Devices}]: }{Modes of Computation}{---\em Parallelism and concurrency},
%{F.3.3 }{[{\bf Logics and Meanings of Programs}]: }{Studies of Program Constructs}{---\em Type structure},
%{F.3.1 }{[{\bf Logics and Meanings of Programs}]: }{Specifying and
%  Verifying and Reasoning about Programs}{---\em Specification techniques},
%{D.4.4 }{[{\bf Operating Systems}]: }{Communications
%  Management}{---\em Message sending.}}

\subjclass{F.1.2, F.3.3, F.3.1, D.4.4}
\keywords{copyless message passing, concurrency, type theory,
  subtyping, session types.}

%% the abstract has to PRECEED the command \maketitle:
%% be sure not to issue the \maketitle command twice!

\begin{abstract}
  We present a calculus that models a form of process interaction
  based on copyless message passing, in the style of Singularity
  OS. The calculus is equipped with a type system ensuring that
  well-typed processes are free from memory faults, memory leaks, and
  communication errors.
  The type system is essentially linear, but we show that linearity
  alone is inadequate, because it leaves room for scenarios where
  well-typed processes leak significant amounts of memory.
  We address these problems basing the type system upon an original
  variant of session types.
\end{abstract}

\maketitle

\section{Introduction}
\label{sec:intro}

\newcommand{\foo}{\Code{foo}}

Communicating systems pervade every modern computing environment
ranging from lightweight threads in multi-core architectures to Web
services deployed over wide area networks.
\emph{Message passing} is a widespread communication paradigm adopted
in many such systems. In this paradigm, it is usually the case that a
message traveling on a channel is \emph{copied} from the source to
the destination. This is inevitable in a distributed setting, where
the communicating parties are loosely coupled, but some small-scale
systems grant access to a shared address space. In these cases it is
possible to conceive a different communication paradigm --
\emph{copyless message passing} -- where only \emph{pointers} to
messages are copied from the source to the destination.
The Singularity Operating System (Singularity OS for
short)~\cite{SingularityOverview05,HuntLarus2007} is a notable example
of system that adopts the copyless paradigm.
In Singularity OS, processes have access to their own local memory as
well as to a region called \emph{exchange heap} that is shared by all
processes in the system and that is explicitly managed (objects on the
exchange heap are not garbage collected, but are explicitly allocated
and deallocated by processes).
Inter-process communication solely occurs by means of message passing
over \emph{channels} allocated on the exchange heap and messages are
themselves pointers to the exchange heap.

The copyless paradigm has obvious performance advantages, because it
may dramatically decrease the overhead caused by copying (possibly
large) messages.  At the same time, it fosters the proliferation of
subtle programming errors due to the explicit handling of pointers and
the sharing of data.
For this reason, Singularity processes
% are \emph{isolated} and intended to
must respect an \emph{ownership invariant}: at any given point in
time, each object allocated on the exchange heap is owned by exactly
one process.
In addition, inter-process communication is regulated by so-called
\emph{channel contracts} which specify, for each channel, the
sequences of interactions that are expected to occur.
Overall, these features are meant to prevent \emph{memory faults} (the
access to non-owned/deallocated/uninitialized objects on the exchange
heap), \emph{memory leaks} (the accumulation of unreachable allocated
objects on the exchange heap), and communication errors which could
cause the abnormal termination of processes and trigger the previous
kinds of errors.

In this paper we attempt at providing a formal foundation to the
copyless paradigm from a type-theoretic point of view, along the
following lines:
\begin{iteMize}{$\bullet$}
\item We develop a process calculus that captures the essential
  features of Singularity OS and formalizes a substantial fragment of
  \Sing{}, the programming language specifically designed for the
  development of programs that run in Singularity OS.
  We provide a formal characterization of \emph{well-behaved systems},
  those that are free from memory faults, memory leaks, and
  communication errors.

\item We develop a type system ensuring that well-typed systems are
  well behaved. The type system is fundamentally based on the
  \emph{linear usage} of pointers and on \emph{endpoint types}, a
  variant of session
  types~\cite{Honda93,HondaVasconcelosKubo98,Vasconcelos09} tailored
  to the communication model of Singularity OS.
  We provide evidence that session types are a natural and expressive
  formalization of channel contracts.

\item We show that the combination of linearity and endpoint types is
  insufficient for preserving the ownership invariant, but also that
  endpoint types convey enough information to tighten the type system
  so as to guarantee its soundness.
  This allows us to give an indirect soundness proof of the current
  Singularity OS implementation.
\end{iteMize}

\noindent The rest of the paper is organized as follows.
In Section~\ref{sec:singularity} we take a quick tour of \Sing{} and
we focus on its peculiar features in the context of Singularity OS
that we are going to study more formally in the subsequent sections.
In Section~\ref{sec:types} we define syntax and semantics (in terms of
subtyping) of the type language for our type system. We also give a
number of examples showing how to represent the \Sing{} types and
channel contracts encountered in Section~\ref{sec:singularity} into
our type language.
Section~\ref{sec:processes} presents the syntax and reduction
semantics of the process calculus and ends with the formal definition
of well-behaved systems. Since we want to model the copyless paradigm,
our calculus includes an explicit representation of the exchange heap
and of the objects allocated therein. Names in the language represent
pointers to the exchange heap rather than abstract communication
channels.
Section~\ref{sec:type_system} begins showing that a traditionally
conceived type system based on linearity and behavioral types may
leave room for violations of the ownership invariant. We then devise a
type-theoretic approach to solve the problem, we present the type
rules for the exchange heap and the process calculus and the soundness
results of the type system.
In Section~\ref{sec:algorithms} we define algorithms for deciding the
subtyping relation and for implementing the type checking rules
presented in the previous section.
We relate our work with relevant literature in
Section~\ref{sec:related}, where we also detail similarities and
differences between this paper and two earlier
versions~\cite{BonoMessaPadovani11,BonoPadovani11} that have appeared
in conference and workshop proceedings.
We conclude in Section~\ref{sec:conclusions} with a summary of our
work.
For the sake of readability, proofs and additional technical material
relative to Sections~\ref{sec:types}, \ref{sec:type_system},
and~\ref{sec:algorithms} have been moved into
Appendixes~\ref{sec:extra_types}, \ref{sec:extra_type_system},
and~\ref{sec:extra_algorithms} respectively.

%%% Local Variables:
%%% mode: latex
%%% TeX-master: "main"
%%% End:

\section{A Taste of \Sing{}}
\label{sec:singularity}

\newcommand{\TagData}{\Code{Data}}
\newcommand{\TagEos}{\Code{Eos}}
\newcommand{\TagArg}{\Code{Arg}}
\newcommand{\TagRes}{\Code{Res}}

\begin{figure}
\begin{minipage}{0.9\textwidth}
\begin{SingSharp}[frame=single,numbers=left,numberstyle=\tiny\color{lucared}]
  void map<$\tvarA$,$\tvarB$>(imp<Mapper<$\tvarA$,$\tvarB$>:WAIT_ARG> in ExHeap mapper,
                  [Claims] imp<Stream<$\tvarA$>:START> in ExHeap source,
                  [Claims] exp<Stream<$\tvarB$>:START> in ExHeap target) {
    switch receive {
      case source.Data($\tvarA$ in ExHeap x):
        mapper.Arg(x);
        switch receive {
          case mapper.Res($\tvarB$ in ExHeap y):
            target.Data(y);
            map<$\tvarA$,$\tvarB$>(mapper, source, target);
        }

      case source.Eos():
        target.Eos();
        source.Close();
        target.Close();
    }
  }
\end{SingSharp}
\end{minipage}
\caption{\label{fig:example}\strut An example of \Sing{} code.}
\end{figure}

In this section we take a closer look at \Sing{}, the programming
language specifically designed for the development of programs that
run in Singularity OS. We do so by means of a simple, yet rather
comprehensive example that shows the main features of the language and
of its type system.
In the discussion that follows it is useful to keep in mind that
Singularity channels consist of pairs of related \emph{endpoints},
called the \emph{peers} of the channel. Messages sent over one peer
are received from the other peer, and vice versa. Each peer is
associated with a FIFO buffer containing the messages sent to that
peer that have not been received yet. Therefore, communication is
asynchronous (send operations are non-blocking) and process
synchronization must be explicitly implemented by means of suitable
handshaking protocols.

The code snippet in Figure~\ref{fig:example} defines a polymorphic
function \Code{map} that transforms a stream of data of type $\tvarA$
into a stream of data of type $\tvarB$ through a provided
mapper.\footnote{This function can be thought of as the
  communication-oriented counterpart of the higher-order,
  list-processing \Code{map} function defined in the standard library
  of virtually all functional programming languages.}
The function accepts two type arguments $\tvarA$ and $\tvarB$ and
three proper arguments: a \Code{mapper} endpoint that allows
communication with a process that performs the actual processing of
data; a \Code{source} endpoint from which data to be processed is
read; a \Code{target} endpoint to which processed data is
forwarded. For the time being, we postpone the discussion of the type
annotations of these arguments and focus instead on the operational
semantics of the function. We will come back to types shortly, when we
discuss static analysis.
The \Code{switch receive} construct (lines~4--17) is used to receive
messages from an endpoint, and to dispatch the control flow to various
cases depending on the kind of message that is received. Each
\Code{case} block specifies the endpoint from which a message is
expected and the tag of the message.
In this example, two kinds of messages can be received from the
\Code{source} endpoint: either a \TagData-tagged message (lines~5--11)
or a \TagEos-tagged message (lines~13--16).
A \TagData-tagged message contains a chunk of data to be processed,
which is bound to the local variable \Code{x} (line~5). The data is
sent in an \TagArg-tagged message on the \Code{mapper} endpoint for
processing (line~6), the result is received from the same endpoint as
a \TagRes-tagged message, stored in the local variable \Code{y}
(line~8) and forwarded on the \Code{target} endpoint as another
outgoing \TagData-tagged message (line~9). Finally, the \Code{map}
function is invoked recursively so that further data can be processed
(line~10).
An \TagEos-tagged message flags the fact that the incoming stream of
data is finished (line~13). When this happens, the same kind of
message is sent on the \Code{target} endpoint (line~14) and both the
\Code{source} and the \Code{target} endpoints are closed (lines~15
and~16).

We now illustrate the meaning of the type annotations and their
relevance with respect to static analysis.
The \Code{in ExHeap} annotations state that all the names in this
example denote pointers to objects allocated on the exchange
heap. Some of these objects (like those pointed to by \Code{source}
and \Code{target}) represent communication endpoints, others (those
pointed to by \Code{x} and \Code{y}) represent data contained in
messages. Static analysis of \Sing{} programs aims at providing strong
guarantees on the absence of errors deriving from communications and
the usage of heap-allocated objects.

Regarding communications, the correctness of this code fragment
relies on the assumption that the process(es) using the peer endpoints
of \Code{mapper}, \Code{source}, and \Code{target} are able to deal
with the message types as they are received/sent from within
\Code{map}. For instance, \Code{map} assumes to receive a
\TagRes-tagged message \emph{after} it has sent an \TagArg-tagged
message on \Code{mapper}. It also assumes that only \TagData-tagged
and \TagEos-tagged messages can be received from \Code{source} and
sent to \Code{target}, and that after an \TagEos-tagged message is
received no further message can be received from it. No classical type
associated with \Code{mapper} or \Code{source} or \Code{target} is
able to capture these temporal dependencies between such different
usages of the same object at different times. The designers of \Sing{}
have consequently devised \emph{channel contracts} describing the
allowed communication patterns on a given endpoint. Consider, for
example, the polymorphic contracts \Code{Mapper<$\tvarA$,$\tvarB$>}
and \Code{Stream<$\tvarA$>} below:

\begin{tabular}{@{}l@{\qquad\qquad}l@{}}
\begin{minipage}[t]{0.4\textwidth}
\begin{SingSharp}
contract Mapper<$\tvarA$,$\tvarB$> {
  message Arg($\tvarA$ in ExHeap);
  message Res($\tvarB$ in ExHeap);
  state WAIT_ARG
  { Arg? -> SEND_RES; }
  state SEND_RES
  { Res! -> WAIT_ARG; }
}
\end{SingSharp}
\end{minipage}
&
\begin{minipage}[t]{0.5\textwidth}
\begin{SingSharp}
contract Stream<$\tvarA$> {
  message Data($\tvarA$ in ExHeap);
  message Eos();
  state START
  { Data! -> START;
    Eos! -> END; }
  state END { }
}
\end{SingSharp}
\end{minipage}
\end{tabular}

A contract is made of a finite set of \emph{message specifications}
and a finite set of \emph{states} connected by
\emph{transitions}. Each message specification begins with the
\Code{message} keyword and is followed by the \emph{tag} of the
message and the type of its arguments. For instance, the
\Code{Stream<$\tvarA$>} contract defines the \TagData-tagged message
with an argument of type $\tvarA$ and the \TagEos-tagged message with
no arguments.
The state of the contract determines the state in which the endpoint
associated with the contract is and this, in turn, determines which
messages can be sent/received. The same contract can have multiple
states, each with a possibly different set of messages that can be
sent/received, therefore capturing the behavioral nature of
endpoints. In \Code{Stream<$\tvarA$>} we have a \Code{START} state
from which two kinds of message can be sent: if a \TagData-tagged
message is sent, the contract remains in the \Code{START} state; if a
\TagEos-tagged message is sent, the contract transits to the
\Code{END} state from which no further transitions are possible.
Communication errors are avoided by associating the two peers of a
channel with types that are complementary, in that they specify
complementary actions. This is achieved in \Sing{} with the
\Code{exp<C:$s$>} and \Code{imp<C:$s$>} type constructors that, given
a contract \Code{C} and a state $s$ of \Code{C}, respectively denote
the so-called \emph{exporting} and \emph{importing} views of \Code{C}
when in state $s$. For the sake of hindsight, it is useful to think of
the exporting view as of the type of the \emph{provider} of the
behavior specified in the contract, and of the importing view as of
the type of the \emph{consumer} of the behavior specified in the
contract.
On the one hand, the \Code{map} function in Figure~\ref{fig:example}
accepts a \Code{mapper} argument of type
\Code{imp<Map<$\tvarA$,$\tvarB$>:WAIT_ARG>} since it consumes the
mapping service accessible through the \Code{mapper} endpoint and a
\Code{source} argument of type \Code{imp<Stream<$\tvarA$>:START>}
since it consumes the source stream of data to be processed.
On the other hand, the function accepts a \Code{target} argument of
type \Code{exp<Stream<$\tvarB$>:START>} since it produces a new stream
of data on the \Code{target} endpoint.
In the code fragment in Figure~\ref{fig:example}, the endpoint
\Code{target} has type \Code{exp<Stream<$\tvarB$>:START>} on line~9,
the output of a \TagData-tagged message is allowed by the exporting
view of \Code{Stream<$\tvarB$>} in this state, and the new type of
\Code{target} on line~10 is again \Code{exp<Stream<$\tvarB$>:START>}.
Its type turns to \Code{exp<Stream<$\tvarA$>:END>} from line~14 to
line~15, when the \TagEos-tagged message is received.
The endpoint \Code{mapper} has type
\Code{imp<Mapper<$\tvarA$,$\tvarB$>:WAIT_ARG>} on line~6. The
importing view of \Code{Mapper<$\tvarA$,$\tvarB$>} allows sending a
\TagArg-tagged message in this state, hence the type of \Code{mapper}
turns to \Code{imp<Mapper<$\tvarA$,$\tvarB$>:SEND_RES>} in lines~7 and
back to type \Code{imp<Mapper<$\tvarA$,$\tvarB$>:WAIT_ARG>} from
line~8 to line~9.

A major complication of the copyless paradigm derives from the fact
that communicated objects are not copied from the sender to the
receiver, but rather pointers to allocated objects are passed
around. This can easily invalidate the ownership invariant if special
attention is not payed to whom is entitled to access which objects.
Given these premises, it is natural to think of a type discipline
controlling the \emph{ownership} of allocated objects, whereby at any
given time every allocated object is owned by one (and only one)
process. Whenever (the pointer to) an allocated object is sent as a
message, its ownership is also transferred from the sender to the
receiver.
In the example of Figure~\ref{fig:example}, the function \Code{map}
becomes the owner of data \Code{x} in line~5. When \Code{x} is sent on
endpoint \Code{mapper}, the ownership of \Code{x} is transferred from
\Code{map} to whichever process is receiving messages on
\Code{mapper}'s peer endpoint. Similarly, \Code{map} acquires the
ownership of \Code{y} on line~8, and ceases it in the subsequent
line. Overall it seems like \Code{map} is well balanced, in the sense
that everything it acquires it also released. In fact, as
\Code{mapper}, \Code{source}, and \Code{target} are also allocated on
the exchange heap, we should care also for \Code{map}'s
arguments. Upon invocation of \Code{map}, the ownership of these three
arguments transfers from the caller to \Code{map}, but when \Code{map}
terminates, only the ownership of \Code{mapper} returns to the caller,
since \Code{source} and \Code{target} are closed (and deallocated)
within \Code{map} on lines~15 and~16. This is the reason why the types
of \Code{source} and \Code{target} in the header of \Code{map} are
annotated with a \Code{[Claims]} clause indicating that \Code{map}
retains the ownership of these two arguments even after it has
returned.

From the previous discussion it would seem plausible to formalize
\Sing{} using a process calculus equipped with a suitable session type
system. Session types capture very well the sort of protocols
described by \Sing{} contracts and one could hope that, by imposing a
\emph{linear} usage on entities, the problems regarding the ownership
of heap-allocated objects would be easily solved.
In practice, things are a little more involved than this because,
somewhat surprisingly, linearity alone is \emph{too weak} to guarantee
the absence of \emph{memory leaks}, which occur when every reference
to an heap-allocated object is lost.
We devote the rest of this section to illustrating this issue through
a couple of simple examples. Consider the function:
\begin{SingSharp}
  void foo([Claims] imp<C:START> in ExHeap e,
            [Claims] exp<C:START> in ExHeap f)
  { e.Arg(f); e.Close(); }
\end{SingSharp}
which accepts two endpoints \Code{e} and \Code{f} allocated in the
exchange heap, sends endpoint \Code{f} as an \TagArg-tagged message on
\Code{e}, and closes \Code{e}. The \Code{[Claims]} annotations in the
function header are motivated by the fact that one of the two
arguments is sent away in a message, while the other is properly
deallocated within the function. Yet, this function may produce a leak
if \Code{e} and \Code{f} are the peer endpoints of the same
channel. If this is the case, only the \Code{e} endpoint is properly
deallocated while every reference to \Code{f} is lost.
Note that the \Code{foo} function behaves correctly with respect to
the \Sing{} contract
\begin{SingSharp}
  contract C {
    message Arg(exp<C:START> in ExHeap);
    state START { Arg? -> END; }
    state END { }
  }
\end{SingSharp}
whose only apparent anomaly is the implicit recursion in the type of
the argument of the \TagArg{} message, which refers to the contract
\Code{C} being defined.
A simple variation of \Code{foo} and \Code{C}, however, is equally
dangerous and does not even need this form of implicit recursion:
\begin{SingSharp}
  void bar([Claims] imp<D:START> in ExHeap e,
            [Claims] exp<D:START> in ExHeap f)
  { e.Arg<exp<D:START>>(f); e.Close(); }
\end{SingSharp}

In this case, the \TagArg-tagged message is polymorphic (it accepts a
linear argument of \emph{any} type) and the contract \Code{D} is
defined as:
\begin{SingSharp}
  contract D {
    message Arg<$\tvar$>($\tvar$ in ExHeap);
    state START { Arg? -> END; }
    state END { }    
  }
\end{SingSharp}

These examples show that, although it makes sense to allow the types
\Code{exp<C:START>} and \Code{exp<D:START>} in general, their specific
occurrences in the definition of \Code{C} and in the body of
\Code{bar} are problematic. We will see why this is the case in
Section~\ref{sec:type_system} and we shall devise a purely
type-theoretic framework that avoids these problems.
\REVISION{Remarkably, the \Code{foo} function is ill typed also in
  \Sing{}~\cite{Fahndrich06}, although the motivations for considering
  \Code{foo} dangerous come from the implementation details of
  ownership transfer rather than from the memory leaks that \Code{foo}
  can produce (see Section~\ref{sec:related} for a more detailed
  discussion).}

%%% Local Variables: 
%%% mode: latex
%%% TeX-master: "main"
%%% End: 

\section{Types}
\label{sec:types}

\newcommand{\innervars}{\BoundContext_{\textsc{i}}}
\newcommand{\outervars}{\BoundContext_{\textsc{o}}}

\begin{table}
\caption{\label{tab:type_syntax}\strut Syntax of types.}
\framebox[\textwidth]{
\begin{math}
\displaystyle
\begin{array}[t]{r@{\quad}rcl@{\quad}l}
  \textbf{Type} & \TypeT
    & ::= & \Qualifier~\SessionType & \text{(qualified endpoint type)} \\
  \\
  \textbf{Qualifier} & \Qualifier
    & ::= & \qlin & \text{(linear)} \\
  & &  |  & \qun & \text{(unrestricted)} \\
  \\
  \textbf{Endpoint Type} & \SessionType
    & ::= & \SessionEnd & \text{(termination)} \\
  & &  |  & \tvar & \text{(type variable)} \\
%  & &  |  & \co\tvar & \text{(dualized type variable)} \\
  & &  |  & \InternalChoice{\tmsg{\Tag_i}{\tvar_i}{\Type_i}.\SessionType_i}_{i\in I} & \text{(internal choice)} \\
  & &  |  & \ExternalChoice{\tmsg{\Tag_i}{\tvar_i}{\Type_i}.\SessionType_i}_{i\in I} & \text{(external choice)} \\
  & &  |  & \trec\tvar.\SessionType & \text{(recursive type)} \\
\end{array}
\end{math}
}
\end{table}

We introduce some notation for the type language: we assume an
infinite set of \emph{type variables} ranged over by $\tvarA$,
$\tvarB$, $\dots$; we use $\TypeT$, $\TypeS$, $\dots$ to range over
types, $\Qualifier$ to range over qualifiers, and $\SessionTypeT$,
$\SessionTypeS$, $\dots$ to range over endpoint types.
The syntax of types and endpoint types is defined in
Table~\ref{tab:type_syntax}.
An endpoint type describes the allowed behavior of a process with
respect to a particular endpoint. The process may send messages over
the endpoint, receive messages from the endpoint, and deallocate the
endpoint.
The endpoint type $\SessionEnd$ denotes an endpoint on which no
input/output operation is possible and that can only be deallocated.
An internal choice
$\InternalChoice{\tmsg{\Tag_i}{\tvar_i}{\Type_i}.\SessionType_i}_{i\in
  I}$ denotes an endpoint on which a process may send any message with
tag $\Tag_i$ for $i\in I$. The message has a \emph{type parameter}
$\tvar_i$, which the process can instantiate with any endpoint type
(but we will impose some restrictions in
Section~\ref{sec:type_system}), and an argument of type $\Type_i$.
Depending on the tag $\Tag_i$ of the message, the endpoint can be used
thereafter according to the endpoint type $\SessionTypeT_i$.
In a dual manner, an external choice
$\ExternalChoice{\tmsg{\Tag_i}{\tvar_i}{\TypeT_i}.\SessionType_i}_{i\in
  I}$ denotes and endpoint from which a process must be ready to
receive any message with tag $\Tag_i$ for $i\in I$. Again, $\tvar_i$
is the type parameter of the message and $\Type_i$ denotes the type of
the message's argument. Depending on the tag $\Tag_i$ of the received
message, the endpoint is to be used according to $\SessionTypeT_i$.
The duality between internal and external choices regards not only the
dual send/receive behaviors of processes obeying these types, but also
the quantification of type parameters in messages, which we can think
universally quantified in internal choices (the sender chooses how to
instantiate the type variable) and existentially quantified in
external choices (the receiver does not know the type with which the
type variable has been instantiated).
In endpoint types
$\InternalChoice{\tmsg{\Tag_i}{\tvar_i}{\TypeT_i}.\SessionType_i}_{i\in
  I}$ and
$\ExternalChoice{\tmsg{\Tag_i}{\tvar_i}{\TypeT_i}.\SessionType_i}_{i\in
  I}$ we assume that $\Tag_i = \Tag_j$ implies $i = j$. That is, the
tag $\Tag_i$ of the message that is sent or received identifies a
unique continuation $\SessionType_i$.
Terms $\trec\tvar.\SessionTypeT$ can be used to specify recursive
behaviors, as usual.
The role of type variables $\tvar$ is twofold, depending on whether
they are bound by a recursion $\trec\tvar.\SessionTypeT$ or by a
prefix $\tmsg\Tag\tvar\Type$ in a choice: they either represent
recursion points, like $\tvar$ in
$\trec\tvar.{!}\tmsg\Tag\tvarB\Type.\tvar$, or abstracted endpoint
types, like $\tvar$ in
${!}\tmsg\Tag\tvar{\qlin~{?}\tmsg{\Tag'}\tvarB\Type.\tvar}.\SessionEnd$.
We will see plenty of examples of both usages in the following.

Even though the type system focuses on linear objects allocated on the
exchange heap, the type language must be expressive enough to describe
Singularity OS entities like system-wide services or \Sing{} functions
and procedures. For this reason, we distinguish \emph{linear}
resources from \emph{unrestricted} ones and, along the lines
of~\cite{Vasconcelos09,GiuntiVasconcelos10}, we define types as
qualified endpoint types. A qualifier is either `$\qlin$', denoting a
\emph{linear endpoint type} or `$\qun$', denoting an
\emph{unrestricted endpoint type}. Endpoints with a linear type must
be owned by exactly one process at any given time, whereas endpoints
with an unrestricted type can be owned by several (possibly zero)
processes at the same time.
Clearly, not every endpoint type can be qualified as unrestricted, for
the type system relies fundamentally on linearity in order to enforce
its properties. In the following we limit the use of the `$\qun$'
qualifier to endpoint types of the form
$\trec\tvar.\InternalChoice{\tmsg{\Tag_i}{\tvar_i}{\Type_i}.\tvar}_{i\in
  I}$, whose main characteristic is that they do not change over time
(each continuation after an output action is $\tvar$, that is the
whole endpoint type itself). In a sense, they are not behavioral
types, which intuitively explains why they can be safely qualified as
unrestricted.

Here are some conventions regarding types and endpoint types:
\begin{iteMize}{$\bullet$}
\item we sometimes use an infix notation for internal and external
  choices and write
\[
{!}\tmsg{\Tag_1}{\tvar_1}{\Type_1}.\SessionTypeT_1
  \oplus \cdots \oplus
  {!}\tmsg{\Tag_n}{\tvar_n}{\Type_n}.\SessionTypeT_n
\text{\qquad instead of\qquad}
\InternalChoice{\tmsg{\Tag_i}{\tvar_i}{\Type_i}.\SessionTypeT_i}_{i\in\{1,\dots,n\}}
\]
and
\[
{?}\tmsg{\Tag_1}{\tvar_1}{\Type_1}.\SessionTypeT_1
  + \cdots +
  {?}\tmsg{\Tag_n}{\tvar_n}{\Type_n}.\SessionTypeT_n
\text{\qquad instead of\qquad}
\ExternalChoice{\tmsg{\Tag_i}{\tvar_i}{\Type_i}.\SessionTypeT_i}_{i\in\{1,\dots,n\}}
\]

\item we omit the type variable specification $\langle\tvar\rangle$
  when useless (if the type variable occurs nowhere else) and write,
  for example, ${!}\Tag(\Type).\SessionTypeT$;

\item for the sake of simplicity, we formally study (endpoint) types
  where messages carry exactly one type/value argument, but we will be
  more liberal in the examples;

\item we write $\qlin(\Type)$ and $\qun(\Type)$ to mean that $\Type$
  is respectively linear and unrestricted.
\end{iteMize}

\begin{table}
\caption{\label{tab:wf}\strut Well-formedness rules for endpoint
  types.}
\framebox[\textwidth]{
\begin{math}
\displaystyle
\begin{array}{c}
\inferrule[\rulename{WF-End}]{}{
  \outervars; \innervars \wfdash \SessionEnd
}
\qquad
\inferrule[\rulename{WF-Var}]{
  \tvar \in \outervars \setminus \innervars
}{
  \outervars; \innervars \wfdash \tvar
}
\qquad
\inferrule[\rulename{WF-Rec}]{
  \outervars, \tvar; \innervars \setminus \{ \tvar \} \wfdash \SessionType
}{
  \outervars; \innervars \wfdash \trec\tvar.\SessionType
}
\\\\
\inferrule[\rulename{WF-Prefix}]{
  \dagger\in\{{!},{?}\}
  \\
  (\outervars \cup \innervars), \tvar_i; \emptyset \wfdash \SessionTypeS_i
  ~{}^{(i\in I)}
  \\
  \outervars; \innervars, \tvar_i \wfdash \SessionTypeT_i
  ~{}^{(i\in I)}
}{
  \outervars; \innervars \wfdash {\dagger}\{\xmsg{\Tag_i}{\tvar_i}{\Qualifier_i~\SessionTypeS_i}.\SessionTypeT_i\}_{i\in I}
}
\end{array}
\end{math}
}
\end{table}

We have standard notions of free and bound type variables for
(endpoint) types. The binders are $\trec$ and
$\xmsg\Tag\tvar\Type$. In particular, $\trec\tvar.\SessionTypeT$ binds
$\tvar$ in $\SessionTypeT$ and
${\dagger}\xmsg\Tag\tvar\Type.\SessionTypeT$ where ${\dagger} \in \{
{!}, {?} \}$ binds $\tvar$ in $\Type$ and in $\SessionTypeT$.
We will write $\ftv(\SessionType)$ and $\btv(\SessionType)$ for the
set of free and bound type variables of $\SessionType$.
We require that type variables bound by a recursion $\trec$ must be
guarded by a prefix (therefore a non-contractive endpoint type such as
$\trec\tvar.\tvar$ is forbidden) and that type variables bound in
$\langle\tvar\rangle$ as in ${!}\tmsg\Tag\tvar\Type.\SessionTypeT$ can
only occur in $\Type$ and within the prefixes of $\SessionTypeT$.
We formalize this last requirement as a well-formedness predicate for
types denoted by a judgment $\outervars; \innervars \wfdash
\SessionType$ and inductively defined by the axioms and rules in
Table~\ref{tab:wf}.
The set $\outervars$ contains so-called \emph{outer variables} (those
that can occur everywhere) while the set $\innervars$ contains
so-called \emph{inner variables} (those that can occur only within
prefixes).
Here and in the following we adopt the convention that $\BoundContext,
\BoundContext'$ denotes $\BoundContext \cup \BoundContext'$ when
$\BoundContext \cap \BoundContext' = \emptyset$ and is undefined
otherwise; we also write $\BoundContext, \tvar$ instead of
$\BoundContext, \{ \tvar \}$.
We say that $\SessionType$ is well formed with respect to
$\BoundContext$, written $\BoundContext \wfdash \SessionType$, if
$\BoundContext; \emptyset \wfdash \Type$ is derivable.
Well formedness restricts the expressiveness of types, in particular
endpoint types such as ${!}\xmsg\Tag\tvar\Type.\tvar$ and
${?}\xmsg\Tag\tvar\Type.\tvar$ are not admitted because ill formed. We
claim that ill-formed endpoint types have little practical utility: a
process using an endpoint with type ${!}\xmsg\Tag\tvar\Type.\tvar$
knows the type with which $\tvar$ is instantiated while no process is
capable of using an endpoint with type ${?}\tmsg\Tag\tvar\Type.\tvar$
since nothing can be assumed about the endpoint type with which
$\tvar$ is instantiated.

In what follows we consider endpoint types modulo renaming of bound
variables and \REVISION{the law $\trec\tvar.\SessionType =
  \SessionType\subst{\trec\tvar.\SessionType}{\tvar}$ where
  $\SessionType\subst{\trec\tvar.\SessionType}{\tvar}$ is the
  capture-avoiding substitution of $\trec\tvar.\SessionType$ in place
  of every free occurrence of $\tvar$ in $\SessionType$.}  Whenever we
want to reason on the structure of endpoint types, we will use a
syntactic equality operator $\seq$.  Therefore we have
$\trec\tvar.\SessionType \not\seq
\SessionType\subst{\trec\tvar.\SessionType}{\tvar}$ (recall that
$\SessionType$ cannot be $\tvar$ for contractivity).

\iffalse
{\color{magenta}\Luca{Qui?}
%x
The duality between inputs and outputs induces dual quantification
modalities for type variables.
%
A process using an endpoint with type
${!}\xmsg\Tag\tvar\Type.\SessionTypeT$ may choose a particular
$\SessionTypeS$ to instantiate $\tvar$ and use
$\SessionTypeT\subst{\SessionTypeS}{\tvar}$ accordingly. In other
words, the variable $\tvar$ is \emph{universally quantified}.
%
A process using an endpoint with type
${?}\xmsg\Tag\tvar\Type.\SessionTypeT$ does not know the exact
endpoint type $\SessionTypeS$ with which $\tvar$ is instantiated,
since this type is chosen by the sender. In other words, the type
variables of an external choice are \emph{existentially quantified}.
%
}
\fi

\newcommand{\Tmap}{\SessionType_{\mathtt{map}}}
\newcommand{\TMapper}{\SessionType_{\mathtt{Mapper}}}
\newcommand{\TStream}{\SessionType_{\mathtt{Stream}}}

\begin{example}
\label{ex:contracts}
Consider the contracts \Code{Mapper<$\tvarA$,$\tvarB$>} and
\Code{Stream<$\tvarA$>} presented in Section~\ref{sec:singularity}.
We use the endpoint types
\[
\begin{array}{rcl}
  \TMapper(\tvarA, \tvarB) & = &
  \trec\tvarC.
  {?}\msg{\mathtt{Arg}}{\qlin~\tvarA}.
  {!}\msg{\mathtt{Res}}{\qlin~\tvarB}.
  \tvarC
  \\
  \TStream(\tvarA) & = &
  \trec\tvarC.(
  {!}\msg{\mathtt{Data}}{\qlin~\tvarA}.\tvarC
  \oplus
  {!}\msg{\mathtt{Eos}}{}.\SessionEnd
  )
\end{array}
\]
to denote the \Sing{} types
\Code{exp<Mapper<$\tvarA$,$\tvarB$>:WAIT_ARG>} and
\Code{exp<Stream<$\tvarA$>:START>} respectively.
Recursion models loops in the contracts and each state of a contract
corresponds to a particular subterm of $\TMapper(\tvarA,\tvarB)$ and
$\TStream(\tvarA)$. For instance, the \Sing{} type
\Code{exp<Mapper<$\tvarA$,$\tvarB$>:SEND_RES>} is denoted by the
endpoint type
${!}\msg{\mathtt{Res}}{\qlin~\tvarB}.\TMapper(\tvarA,\tvarB)$.
The type of message arguments are embedded within the endpoint types,
like in session types but unlike \Sing{} where they are specified in
separate \Code{message} directives.
The $\qlin$ qualifiers correspond to the \Code{in ExHeap} annotations
and indicate that these message arguments are \emph{linear} values.

Observe that both endpoint types are open, as the type variables
$\tvarA$ and $\tvarB$ occur free in them. We will see how to embed
these endpoint types into a properly closed type for \Code{map} in
Example~\ref{ex:map_type}.
\eoe
\end{example}

\iffalse
\begin{example}[polymorphic function types]
\label{ex:polymorphic_function_types}
Polymorphic endpoint types accommodate naturally the representation of
polymorphic functions. For example, the polymorphic equality function
might be represented by the following code template:
\begin{SingSharp}
  $\tbool$ eq<$\tvar$>($\qun~\tvar$ arg1, $\qun~\tvar$ arg2) { ... }
\end{SingSharp}

The type argument $\tvar$ indicates that the function is parametric
over the type of the arguments \Code{arg1} and \Code{arg2} being
compared, the only constraint being that \Code{arg1} and \Code{arg2}
must have the same type.
%
Using the same idea of Example~\ref{ex:function_types} we can
represent the type of \Code{eq} as follows:
\[
  \Type_{\mathtt{eq}} =
  \qun~\trec\tvarB.{!}\xmsg{\mathtt{invoke}}{\tvar}{\qlin~{?}\msg{\mathtt{arg1}}{\qun~\tvarA}.{?}\msg{\mathtt{arg2}}{\qun~\tvarA}.{!}\msg{\mathtt{res}}{\tbool}.\SessionEnd}.\tvarB
\]

Observe that the caller chooses how to instantiate the type variable
$\tvar$ upon invocation of the function, and thereafter it commits to
sending two arguments of the appropriate type.
%
In the case of equality, it makes sense to qualify both arguments as
unrestricted (two distinct linear values must necessarily be different
for linearity reasons) but in general other combinations are possible,
as shown in the types of Example~\ref{ex:function_types}.
%
\eoe
\end{example}
\fi

Duality is a binary relation between endpoint types that describe
complementary actions. Peer endpoints will be given dual endpoint
types, so that processes accessing peer endpoints will interact
without errors: if one of the two processes sends a message of some
kind, the other process is able to receive a message of that kind; if
one process has finished using an endpoint, the other process has
finished too.

% \begin{definition}[duality]
%   The \emph{dual} of an endpoint type $\SessionType$, denoted by
%   $\co\SessionType$, is inductively defined thus:
% \[
% \begin{array}{rcl}
%   \co{\SessionEnd} & = & \SessionEnd \\
%   \co{\tvar} & = & \co{\tvar} \\
%   \co{\co{\tvar}} & = & \tvar \\
%   \co{\trec\tvar.\SessionType} & = & \trec\tvar.\co\SessionType \\
%   \co{\ExternalChoice{\xmsg{\Tag_i}{\tvar_i}{\Type_i}.\SessionType_i}_{i\in
%       I}}
%   & = &
%   \InternalChoice{\xmsg{\Tag_i}{\tvar_i}{\Type_i}.\co{\SessionType_i}}_{i\in
%     I} \\
%   \co{\InternalChoice{\xmsg{\Tag_i}{\tvar_i}{\Type_i}.\SessionType_i}_{i\in
%       I}}
%   & = &
%   \ExternalChoice{\xmsg{\Tag_i}{\tvar_i}{\Type_i}.\co{\SessionType_i}}_{i\in
%     I} \\
% \end{array}
% \]
% \end{definition}

\begin{definition}[duality]
\label{def:duality}
We say that $\drel$ is a \emph{duality relation} if $(\SessionTypeT,
\SessionTypeS) \in {\drel}$ implies either
\begin{iteMize}{$\bullet$}
\item $\SessionTypeT = \SessionTypeS = \SessionEnd$, or

\item $\SessionTypeT =
  \ExternalChoice{\tmsg{\Tag_i}{\tvar_i}{\Type_i}.\SessionTypeT_i}_{i\in
    I}$ and $\SessionTypeS =
  \InternalChoice{\tmsg{\Tag_i}{\tvar_i}{\Type_i}.\SessionTypeS_i}_{i\in I}$
  and $(\SessionTypeT_i, \SessionTypeS_i) \in {\drel}$ for every $i\in
  I$, or

\item $\SessionTypeT =
  \InternalChoice{\tmsg{\Tag_i}{\tvar_i}{\Type_i}.\SessionTypeT_i}_{i\in
    I}$ and $\SessionTypeS =
  \ExternalChoice{\tmsg{\Tag_i}{\tvar_i}{\Type_i}.\SessionTypeS_i}_{i\in
    I}$ and $(\SessionTypeT_i, \SessionTypeS_i) \in {\drel}$ for every
  $i\in I$.
\end{iteMize}

\noindent We write $\dual$ for the largest duality relation and we say that
$\SessionTypeT$ and $\SessionTypeS$ are \emph{dual} if $\SessionTypeT
\dual \SessionTypeS$.
\end{definition}

\REVISION{
  We will see that every well-formed endpoint type $\SessionTypeT$ has
  a dual -- that we denote by $\co\SessionTypeT$ -- which is
  intuitively obtained from $\SessionTypeT$ by swapping $?$'s with
  $!$'s. The formal definition of $\co\SessionTypeT$, however, is
  complicated by the possible occurrence of recursion variables within
  prefixes.
  As an example, the dual of the endpoint type $\SessionTypeT =
  \trec\tvarA.{!}\tmsg{\Tag}{\tvarB}{\tvarA}.\SessionEnd$ is
  \emph{not} $\SessionTypeS =
  \trec\tvarA.{?}\tmsg{\Tag}{\tvarB}{\tvarA}.\SessionEnd$ but rather
  $\trec\tvarA.{?}\tmsg{\Tag}{\tvarB}{\SessionTypeT}.\SessionEnd$.
  This is because, by unfolding the recursion in $\SessionTypeT$, we
  obtain $\SessionTypeT =
  {!}\tmsg{\Tag}{\tvarB}{\SessionTypeT}.\SessionEnd$ whose dual,
  ${?}\tmsg{\Tag}{\tvarB}{\SessionTypeT}.\SessionEnd$, is clearly
  different from $\SessionTypeS =
  {?}\tmsg{\Tag}{\tvarB}{\SessionTypeS}.\SessionEnd$ (duality does not
  change the type of message arguments).

  To provide a syntactic definition of dual endpoint type, we use an
  \emph{inner substitution operator} $\psubst{\cdot}{\cdot}$ such that
  $\SessionTypeT\psubst{\SessionTypeS}{\tvar}$ denotes $\SessionTypeT$
  where every free occurrence of $\tvar$ \emph{within the prefixes of
    $\SessionTypeT$} has been replaced by $\SessionTypeS$. Free
  occurrences of $\tvar$ that do not occur within a prefix of
  $\SessionTypeT$ are not substituted.
  For example, we have
  $({!}\tmsg\Tag\tvarB\tvarA.\tvarA)\psubst{\SessionTypeS}{\tvarA} =
  {!}\tmsg\Tag\tvarB\SessionTypeS.\tvarA$.
  Then, the \emph{dual} of an endpoint type $\SessionTypeT$ is defined
  inductively on the structure of $\SessionTypeT$, thus:
\[
\begin{array}{rcl}
  \co{\SessionEnd} & = & \SessionEnd \\
  \co{\tvar} & = & \tvar \\
  \co{\trec\tvar.\SessionType} & = &
  \trec\tvar.\co{\SessionType\psubst{\trec\tvar.\SessionType}{\tvar}}
  \\
  \co{\InternalChoice{\tmsg{\Tag_i}{\tvar_i}{\Type_i}.\SessionTypeT_i}_{i\in
      I}} & = & \ExternalChoice{\tmsg{\Tag_i}{\tvar_i}{\Type_i}.\co{\SessionTypeT_i}}_{i\in
    I} \\
  \co{\ExternalChoice{\tmsg{\Tag_i}{\tvar_i}{\Type_i}.\SessionTypeT_i}_{i\in
      I}} & = & \InternalChoice{\tmsg{\Tag_i}{\tvar_i}{\Type_i}.\co{\SessionTypeT_i}}_{i\in
    I}
\end{array}
\]
} 

Here are some important facts about well-formed endpoint types and
duality:

\begin{proposition}
  \label{prop:wf_types}
  The following properties hold:
\begin{enumerate}[\em(1)]
\item $\co{\co{\SessionType}} = \SessionType$.

\item $\emptyset \wfdash \SessionType$ implies that $\SessionType
  \dual \co\SessionType$ and $\emptyset \wfdash \co\SessionType$.

\item $\BoundContext; \{ \tvar \} \wfdash \SessionTypeT$ and
  $\BoundContext \wfdash \SessionTypeS$ imply $\BoundContext \wfdash
  \SessionTypeT\subst\SessionTypeS\tvar$.

\item $\emptyset; \{\tvar\} \wfdash \SessionTypeT$ and $\emptyset
  \wfdash \SessionTypeS$ imply
  $\co{\SessionTypeT\subst{\SessionTypeS}{\tvar}} =
  \co{\SessionTypeT}\subst{\SessionTypeS}{\tvar}$.
\end{enumerate}
\end{proposition}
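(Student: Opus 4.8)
The plan is to prove the four items in an order that factors out two shared pieces of machinery: substitution lemmas for the well-formedness judgment (feeding items (2) and (3)), and a commutation lemma between the dual operation and the inner substitution $\psubst{\cdot}{\cdot}$ (feeding items (1) and (4)).

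I would begin with item (3), which is self-contained and reusable. I would prove a slightly more general statement by induction on the derivation of $\BoundContext;\{\tvar\} \wfdash \SessionTypeT$ (equivalently on the structure of $\SessionTypeT$), paired with an obvious weakening lemma stating that enlarging $\outervars$ preserves derivability: substituting a type $\SessionTypeS$ with $\BoundContext \wfdash \SessionTypeS$ for the variable removed from the context preserves well-formedness. The only delicate case is \rulename{WF-Prefix}, where the judgment for each argument flattens the contexts into $(\outervars \cup \innervars),\tvar_i;\emptyset$, so the inner variable $\tvar$ becomes outer; there I would appeal to weakening to absorb the freshly bound $\tvar_i$'s before invoking the induction hypothesis. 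Since $\tvar \in \innervars$ forbids $\SessionTypeT$ from being the bare variable $\tvar$ (rule \rulename{WF-Var} requires the variable to lie in $\outervars \setminus \innervars$), no degenerate case arises. Item (3) is the instance with $\innervars = \{\tvar\}$, and I would record by the same induction the analogous statement for $\psubst{\cdot}{\cdot}$, which item (2) needs.

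Next I would prove the key commutation lemma $\co{\SessionTypeT\psubst{\SessionTypeS}{\tvar}} = \co{\SessionTypeT}\psubst{\SessionTypeS}{\tvar}$ by structural induction on $\SessionTypeT$. The intuition is that $\co{\cdot}$ rewrites only the choice skeleton (swapping internal with external and recursing into continuations) and never touches message-argument types, whereas $\psubst{\cdot}{\cdot}$ edits only prefix positions, so the two operations act on disjoint parts of a term and commute. The recursion case is where they genuinely interleave, since the dual of a $\trec$ is itself defined through an inner substitution; I expect to need an auxiliary idempotence fact, namely that after $\psubst{U}{\tvar}$ with $U$ closed in $\tvar$ no free prefix-occurrence of $\tvar$ survives, so a subsequent inner substitution for $\tvar$ is the identity. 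With this in hand, item (4) is short: since $\emptyset;\{\tvar\} \wfdash \SessionTypeT$ forces every free occurrence of $\tvar$ to sit inside a prefix, the ordinary capture-avoiding substitution $\subst{\SessionTypeS}{\tvar}$ and the inner substitution $\psubst{\SessionTypeS}{\tvar}$ coincide on $\SessionTypeT$ (a small induction tracking, via well-formedness, where $\tvar$ may appear), and item (4) is then exactly the commutation lemma. For item (2), I would verify the duality half by exhibiting the candidate $\drel = \{(\SessionTypeT,\co{\SessionTypeT}) : \emptyset \wfdash \SessionTypeT\} \cup \{(\co{\SessionTypeT},\SessionTypeT) : \emptyset \wfdash \SessionTypeT\}$ and checking it is a duality relation per Definition~\ref{def:duality}: a closed well-formed type is $\SessionEnd$, an internal or external choice, or a $\trec$ which, by contractivity, unfolds (under the equality $=$) to one of these, so each clause is matched after swapping $!$ with $?$; the well-formedness half $\emptyset \wfdash \co{\SessionTypeT}$ follows by induction, using that the well-formedness rules are insensitive to the $!/?$ distinction and using the inner-substitution well-formedness lemma for the $\trec$ clause.

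Finally I would attack item (1), $\co{\co{\SessionTypeT}} = \SessionTypeT$, by structural induction. The base and choice cases are immediate from the commutation lemma and the induction hypothesis. The recursion case is the main obstacle: writing $U = \trec\tvar.\SessionTypeT'$ and $\co{U} = \trec\tvar.V$ with $V = \co{\SessionTypeT'\psubst{U}{\tvar}}$, I would compute $\co{\co{U}} = \trec\tvar.\co{V\psubst{\co{U}}{\tvar}}$, use the commutation lemma and the idempotence observation to reduce the body to $\co{\co{\SessionTypeT'}}\psubst{U}{\tvar}$, and then the induction hypothesis to reach $\trec\tvar.(\SessionTypeT'\psubst{U}{\tvar})$. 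The last step is to identify $\trec\tvar.(\SessionTypeT'\psubst{\trec\tvar.\SessionTypeT'}{\tvar})$ with $\trec\tvar.\SessionTypeT'$ under the recursion-law equality $=$; this is precisely where reasoning modulo $=$ rather than $\seq$ is essential, since the two terms differ only by a partial unfolding at prefix positions. I anticipate this reconciliation between the syntactically defined dual and the unfolding equality to be the most delicate part of the whole proposition.
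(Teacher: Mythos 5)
Your handling of items (1), (3) and (4) matches the paper's proof essentially step for step. The paper also proves (1) by structural induction whose only interesting case is $\SessionTypeT \seq \trec\tvar.\SessionTypeS$, resolved exactly as you propose: the commutation of $\co{\,\cdot\,}$ with inner substitution, the observation that a second inner substitution for the same variable acts as the identity (your ``idempotence'' fact), the induction hypothesis, and a final folding of the recursion modulo the law $\trec\tvar.\SessionType = \SessionType\subst{\trec\tvar.\SessionType}{\tvar}$. For (3) and (4) the paper only records the hint you rely on -- that $\BoundContext \wfdash \SessionTypeT$ forces free type variables to occur solely within prefixes, so that full and inner substitution coincide there -- and leaves the rest to the reader; your more detailed plan (induction on the well-formedness derivation with weakening, then reduction of (4) to the commutation lemma) is a faithful expansion of that hint.

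The gap is in item (2). Your coinductive candidate $\drel$ is the right object, but the closure check for $\SessionTypeT \seq \trec\tvar.\SessionTypeT'$ is not just ``swapping $!$ with $?$'': you must show that $\co{\SessionTypeT}$, modulo the recursion law, is the choice obtained from the unfolding $\SessionTypeT'\subst{\SessionTypeT}{\tvar}$ by dualizing the skeleton while keeping the argument types fixed, with continuation pairs again of the form $(A, \co{A})$. This amounts to proving that duality commutes with unfolding, $\co{\SessionTypeT'\subst{\trec\tvar.\SessionTypeT'}{\tvar}} = \co{\trec\tvar.\SessionTypeT'}$, which is precisely the computation the paper carries out for item (2). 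That computation needs a \emph{second} distribution law, namely $\co{A\subst{B}{\tvar}} = \co{A}\subst{\co{B}}{\tvar}$ for occurrences of $\tvar$ at top level -- note that the substituted type gets dualized. This is \emph{not} an instance of your commutation lemma $\co{A\psubst{B}{\tvar}} = \co{A}\psubst{B}{\tvar}$, where $B$ is deliberately left undualized because prefixes are never dualized; the top-level law requires its own induction distinguishing prefix positions from skeleton positions, and it cannot be conjured from the inner-substitution lemma plus idempotence. With this lemma added, your closure check goes through and your proof of (2) coincides with the paper's; without it, the step ``each clause is matched'' is exactly the unproved content of the item.
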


\REVISION{Item~(1) states that $\co{\,\cdot\,}$ is an involution.}
Item~(2) states that $\co\SessionTypeT$ is well formed and dual of
$\SessionTypeT$ when $\SessionTypeT$ is well formed. Item~(3) states
the expected property of well-formedness preservation under
substitution of well-formed endpoint types. Finally, item~(4) shows
that duality does not affect the inner variables of an endpoint type
and that, in fact, duality and substitution commute.

\begin{example}
  In Example~\ref{ex:contracts} we have defined the endpoint types
  $\TMapper(\tvarA, \tvarB)$ and $\TStream(\tvarA)$ denoting the
  \Code{exp<Mapper<$\tvarA$,$\tvarB$>:WAIT_ARG>} and
  \Code{exp<Stream<$\tvarA$>:START>} types in \Sing{}.
  The dual endpoint types of $\TMapper(\tvarA, \tvarB)$ and
  $\TStream(\tvarA)$ are
\[
\begin{array}{rcl}
  \co{\TMapper(\tvarA, \tvarB)} & = &
  \trec\tvarC.
  {!}\msg{\mathtt{Arg}}{\qlin~\tvarA}.
  {?}\msg{\mathtt{Res}}{\qlin~\tvarB}.
  \tvarC
  \\
  \co{\TStream(\tvarA)} & = &
  \trec\tvarC.(
  {?}\msg{\mathtt{Data}}{\qlin~\tvarA}.\tvarC
  +
  {?}\msg{\mathtt{Eos}}{}.\SessionEnd
  )
\end{array}
\]
and they denote the \Code{imp<Mapper<$\tvarA$,$\tvarB$>:WAIT_ARG>} and
\Code{imp<Stream<$\tvarA$>:START>} types in \Sing{}.
\eoe
\end{example}

\begin{example}[function types]
\label{ex:map_type}
While \Sing{} is a procedural language, our formalization is based on
a process algebra. Therefore, some \Sing{} entities like functions and
function types that are not directly representable must be encoded. A
function can be encoded as a process that waits for the arguments and
sends the result of the computation. Callers of the function will
therefore send the arguments and receive the result. Following this
intuition, the type
\[
  \Tmap(\tvarA,\tvarB)
  =
  {!}\msg{\mathtt{Arg}}{\qlin~\co{\TMapper(\tvarA,\tvarB)}}.
  {!}\msg{\mathtt{Arg}}{\qlin~\co{\TStream(\tvarA)}}.
  {!}\msg{\mathtt{Arg}}{\qlin~\TStream(\tvarB)}.
  {?}\msg{\mathtt{Res}}{}.
  \SessionEnd
\]
seems like a good candidate for denoting the type of \Code{map} in
Figure~\ref{fig:example}. This type allows a caller of the function to
supply (send) three arguments having type
$\co{\TMapper(\tvarA,\tvarB)}$, $\co{\TStream(\tvarA)}$, and
$\TStream(\tvarB)$ in this order. The $\qlin$ qualifiers indicates
that all the arguments are linear.
Since \Code{map} returns nothing, the $\mathtt{Res}$-tagged message
does not carry any useful value, but it models the synchronous
semantics of function invocation.

This encoding of the type of \Code{map} does not distinguish arguments
that are \emph{claimed} by \Code{map} from others that are not. The
use of the $\qlin$ qualifier in the encoding is mandated by the fact
that the arguments are allocated in the exchange heap, but in this way
the caller process permanently loses the ownership of the
\Code{mapper} argument, and this is not the intended semantics of
\Code{map}.
We can model the temporary ownership transfer as a pair of linear
communications, by letting the (encoded) \Code{map} function return
any argument that is not claimed. Therefore, we patch the above
endpoint type as follows:
\[
  \Tmap(\tvarA,\tvarB)
  =
  {!}\msg{\mathtt{Arg}}{\qlin~\co{\TMapper(\tvarA,\tvarB)}}.
  [{\cdots}].
  {?}\msg{\mathtt{Arg}}{\qlin~\co{\TMapper(\tvarA,\tvarB)}}.
  {?}\msg{\mathtt{Res}}{}.
  \SessionEnd
\]

The endpoint type $\Tmap(\tvarA,\tvarB)$ describes the protocol for
one particular invocation of the \Code{map} function.
A proper encoding of the type of \Code{map}, which allows for multiple
invocations and avoids interferences between independent invocations,
is the following:
\[
\Type_{\mathtt{map}}
=
  \qun
  ~
  \trec\tvarC.
  {!}\xmsg{\mathtt{Invoke}}{\tvarA,\tvarB}{\qlin~\co{\Tmap(\tvarA,\tvarB)}}.
  \tvarC
\]
Prior to invocation, a caller is supposed to create a fresh channel
which is used for communicating with the process modeling the
function. One endpoint, of type $\Tmap(\tvarA,\tvarB)$, is retained by
the caller, the other one, of type $\co{\Tmap(\tvarA,\tvarB)}$, is
sent upon invocation to the process modeling \Code{map}.
The recursion in $\Type_{\mathtt{map}}$ permits multiple invocation of
\Code{map}, and the $\qun$ qualifier indicates that \Code{map} is
\emph{unrestricted} and can be invoked simultaneously and
independently by multiple processes in the system.
\eoe
\end{example}

The most common way to increase flexibility of a type system is to
introduce a \emph{subtyping} relation $\subt$ that establishes an
(asymmetric) compatibility between different types: any value of type
$\TypeT$ can be safely used where a value of type $\TypeS$ is expected
when $\TypeT \subt \TypeS$.
In the flourishing literature on session types several notions of
subtyping have been put
forward~\cite{GayHole05,Gay08,CastagnaDezaniGiachinoPadovani09,Vasconcelos09,Padovani09}.
We define subtyping in pretty much the same way as
in~\cite{GayHole05,Gay08}.

\begin{definition}[subtyping]
\label{def:subt}
Let $\leq$ be the least preorder on qualifiers such that $\qun \leq
\qlin$.
We say that $\srel$ is a \emph{coinductive subtyping}
if:
\begin{iteMize}{$\bullet$}
\item $(\Qualifier~\SessionTypeT, \Qualifier'~\SessionTypeS) \in
  {\srel}$ implies $\Qualifier \leq \Qualifier'$ and $(\SessionTypeT,
  \SessionTypeS) \in {\srel}$, and

\item $(\SessionTypeT, \SessionTypeS) \in {\srel}$ implies either:
\begin{enumerate}[(1)]
\item $\SessionTypeT = \SessionTypeS = \SessionEnd$, or

\item $\SessionTypeT = \SessionTypeS = \tvar$, or

\item $\SessionTypeT =
  \ExternalChoice{\tmsg{\Tag_i}{\tvar_i}{\TypeT_i}.\SessionTypeT_i}_{i\in
    I}$ and $\SessionTypeS =
  \ExternalChoice{\tmsg{\Tag_i}{\tvar_i}{\TypeS_i}.\SessionTypeS_i}_{i\in
    J}$ with $I\subseteq J$ and $(\TypeT_i, \TypeS_i) \in {\srel}$ and
  $(\SessionTypeT_i, \SessionTypeS_i) \in {\srel}$ for every $i\in I$,
  or

\item $\SessionTypeT =
  \InternalChoice{\tmsg{\Tag_i}{\tvar_i}{\TypeT_i}.\SessionTypeT_i}_{i\in
    I}$ and $\SessionTypeS =
  \InternalChoice{\tmsg{\Tag_i}{\tvar_i}{\TypeS_i}.\SessionTypeS_i}_{i\in
    J}$ with $J\subseteq I$ and $(\TypeS_i, \TypeT_i) \in {\srel}$ and
  $(\SessionTypeT_i, \SessionTypeS_i) \in {\srel}$ for every $i\in J$.
\end{enumerate}
\end{iteMize}
We write $\subt$ for the largest coinductive subtyping.
\end{definition}

Items~(1) and~(2) account for reflexivity of subtyping when
$\SessionTypeT$ and $\SessionTypeS$ are both $\SessionEnd$ or the same
type variable;
items~(3) and~(4) are the usual covariant and contravariant rules for
inputs and outputs respectively. Observe that subtyping is always
covariant with respect to the continuations.
Two types $\Qualifier_1~\SessionTypeT$ and
$\Qualifier_2~\SessionTypeS$ are related by subtyping if so are
$\SessionTypeT$ and $\SessionTypeS$ and if $\Qualifier_1$ is no more
stringent than $\Qualifier_2$. In particular, it is safe to use an
unrestricted value where a linear one is expected.

The reader may verify that subtyping is a pre-order:

\begin{proposition}
\label{prop:subt_transitive}
$\subt$ is reflexive and transitive.
\end{proposition}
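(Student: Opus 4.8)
The plan is to use the standard coinductive technique: to establish a property of the largest coinductive subtyping $\subt$, it suffices to exhibit a relation witnessing that property and to verify that this relation is itself a coinductive subtyping in the sense of Definition~\ref{def:subt}; maximality of $\subt$ then gives containment, which is the desired conclusion.

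For reflexivity I would take the identity relation $\mathcal{I}$, pairing each (qualified) type with itself and each endpoint type with itself, and check the clauses of Definition~\ref{def:subt}. On qualified types, the pair $(\Qualifier~\SessionTypeT, \Qualifier~\SessionTypeT)$ requires $\Qualifier \leq \Qualifier$, which holds because $\leq$ is a preorder, together with $(\SessionTypeT, \SessionTypeT) \in \mathcal{I}$, which holds by construction. On endpoint types, a case analysis on the top-level shape of $\SessionTypeT$ (after unfolding any leading $\trec$, which contractivity makes well defined) lands in clause~(1) when $\SessionTypeT = \SessionEnd$, clause~(2) when $\SessionTypeT$ is a variable, and clauses~(3)/(4) for external/internal choices, taking $I = J$ and appealing to $\mathcal{I}$ for the message-argument types and the continuations. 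Hence $\mathcal{I} \subseteq {\subt}$, giving reflexivity.

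For transitivity I would show that the relational composition $\srel = {\subt} \circ {\subt}$, namely the set of pairs $(\SessionTypeT, R)$ for which there is an intermediate $\SessionTypeS$ with $\SessionTypeT \subt \SessionTypeS$ and $\SessionTypeS \subt R$ (defined uniformly on types and endpoint types), is a coinductive subtyping; then $\srel \subseteq {\subt}$ is exactly transitivity. Given such a pair, I would invert both hypotheses. The two outer premises are forced into the \emph{same} clause: if $\SessionTypeT \subt \SessionTypeS$ falls under clause~(1) or~(2) then $\SessionTypeS$ is $\SessionEnd$ or a variable, so the only clause applicable to $\SessionTypeS \subt R$ is the same one, forcing $R$ to coincide and closing the case. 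For external choices the index inclusions chain as $I \subseteq J \subseteq K$, hence $I \subseteq K$, the continuation pairs compose into $\srel$, and covariance of message arguments composes in the same direction; for internal choices the inclusions chain in the opposite direction, $K \subseteq J \subseteq I$, and the \emph{contravariant} message-argument pairs compose because both steps reverse alike. On qualifiers, $\Qualifier_1 \leq \Qualifier_2 \leq \Qualifier_3$ yields $\Qualifier_1 \leq \Qualifier_3$ by transitivity of $\leq$.

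The argument is essentially routine; the point demanding the most care is the bookkeeping in the internal-choice case, where one must verify that the contravariant ordering on message-argument types composes in the correct direction and that the reversed index inclusions still chain to $K \subseteq I$. A secondary subtlety is that every case analysis is implicitly carried out on the (possibly infinite) unfolding of recursive types: contractivity guarantees that each endpoint type has a well-defined top-level constructor after unfolding, so the coinductive reasoning over these regular trees is sound.
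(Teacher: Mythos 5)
Your proof is correct and takes essentially the same approach as the paper: the paper's own sketch establishes transitivity by exhibiting exactly the relation you propose -- the union of the composed pairs of types and of endpoint types, $\{ (\TypeT_1, \TypeT_2) \mid \exists \TypeS: \TypeT_1 \subt \TypeS \land \TypeS \subt \TypeT_2 \} \cup \{ (\SessionTypeT_1, \SessionTypeT_2) \mid \exists \SessionTypeS: \SessionTypeT_1 \subt \SessionTypeS \land \SessionTypeS \subt \SessionTypeT_2 \}$ -- and checking it is a coinductive subtyping, while treating reflexivity as an easy exercise. Your case analysis (same-clause forcing for $\SessionEnd$ and variables, chained inclusions $I \subseteq J \subseteq K$ and $K \subseteq J \subseteq I$, and composition of the contravariant argument pairs) simply fills in the details the paper leaves to the reader.
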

\begin{proof}[Proof sketch]
  The proofs of both properties are easy exercises. In the case of
  transitivity it suffices to show that
\[
{\srel}
\eqdef
\{ (\TypeT_1, \Type_2) \mid \exists \TypeS: \TypeT_1 \subt \TypeS \land \TypeS \subt \TypeT_2 \}
\cup
\{ (\SessionTypeT_1, \SessionType_2) \mid \exists \SessionTypeS: \SessionTypeT_1 \subt \SessionTypeS \land \SessionTypeS \subt \SessionTypeT_2 \}
\]
is a coinductive subtyping.
\end{proof}

% Sometimes we shall need to state that two endpoint types are related
% by subtyping and they are also well formed with respect to some
% context: we will therefore write $\BoundContext \vdash \SessionTypeT
% \subt \SessionTypeS$ when $\BoundContext \vdash \SessionTypeT$ and
% $\BoundContext \vdash \SessionTypeS$ and $\SessionTypeT \subt
% \SessionTypeS$.
%
The following property shows that duality is contravariant with
respect to subtyping. It is a standard property of session type
theories, except that in our case it holds only when the two endpoint
types being related have no free type variables occurring at the top
level (outside any prefix), for otherwise their duals are undefined
(Proposition~\ref{prop:wf_types}).

\begin{proposition}
\label{prop:subt_dual}
Let $\emptyset \wfdash \SessionTypeT$ and $\emptyset \wfdash
\SessionTypeS$. Then $\SessionTypeT \subt \SessionTypeS$ if and only
if $\co\SessionTypeS \subt \co\SessionTypeT$.
\end{proposition}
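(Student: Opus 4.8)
The plan is to prove only the forward implication, namely that $\SessionTypeT \subt \SessionTypeS$ entails $\co\SessionTypeS \subt \co\SessionTypeT$ for closed $\SessionTypeT,\SessionTypeS$, and to recover the converse by duality. Indeed, suppose $\co\SessionTypeS \subt \co\SessionTypeT$. By Proposition~\ref{prop:wf_types}(2) both $\co\SessionTypeT$ and $\co\SessionTypeS$ are again well formed, so I may instantiate the forward implication at the pair $(\co\SessionTypeS,\co\SessionTypeT)$, obtaining $\co{\co\SessionTypeT} \subt \co{\co\SessionTypeS}$; by the involution property (Proposition~\ref{prop:wf_types}(1)) this is exactly $\SessionTypeT \subt \SessionTypeS$. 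Hence the two directions are symmetric and it suffices to settle the forward one.

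For the forward implication I would argue coinductively. Since $\subt$ is the largest coinductive subtyping (Definition~\ref{def:subt}), it is enough to exhibit a coinductive subtyping containing every pair $(\co\SessionTypeS,\co\SessionTypeT)$ with $\SessionTypeT \subt \SessionTypeS$. The natural candidate is
\[
\srel \eqdef {\subt} \cup \{ (\co\SessionTypeS,\co\SessionTypeT) \mid \SessionTypeT \subt \SessionTypeS \},
\]
where the embedded copy of $\subt$ lets me discharge the conditions on message arguments directly; note also that the qualifier clause of Definition~\ref{def:subt} is inherited verbatim, since every qualified-type pair of $\srel$ already lies in $\subt$. I would pick an arbitrary pair $(\co\SessionTypeS,\co\SessionTypeT) \in \srel$ and analyse the clause witnessing $\SessionTypeT \subt \SessionTypeS$. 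The $\SessionEnd$ and type-variable cases are immediate because $\co\SessionEnd = \SessionEnd$ and $\co\tvar = \tvar$. In the remaining cases $\SessionTypeT,\SessionTypeS$ are both external (resp. internal) choices, and duality turns them into internal (resp. external) choices with the very same tags, type variables, and message arguments, merely dualizing the continuations. The crucial observation is that the variance flip of Definition~\ref{def:subt}---covariant arguments for external choices versus contravariant arguments for internal choices---is precisely compensated by this swap of choice polarity: the inclusion $I \subseteq J$ required between the index sets of two external choices is exactly the inclusion demanded of their dual internal choices, the unchanged arguments $(\Type_i,\Type_i')$ move from a covariant to a contravariant slot without reordering and so stay in ${\subt}\subseteq\srel$, and each continuation pair $(\co{\SessionTypeS_i},\co{\SessionTypeT_i})$ lies in $\srel$ by construction. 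Thus $\srel$ is a coinductive subtyping, whence $\srel \subseteq {\subt}$ and the claim follows.

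The main obstacle is recursion. Definition~\ref{def:subt} reasons modulo the unfolding law $\trec\tvar.\SessionType = \SessionType\subst{\trec\tvar.\SessionType}{\tvar}$, so the case analysis above implicitly unfolds recursive types to expose a choice, an $\SessionEnd$, or a variable at the head; for the dual pairs to remain in $\srel$ I must know that $\co{\,\cdot\,}$ commutes with unfolding. Expanding the definition yields $\co{\trec\tvar.\SessionType} = \trec\tvar.\co{\SessionType\psubst{\trec\tvar.\SessionType}{\tvar}}$, and a single unfolding of its right-hand side must be shown equal to $\co{\SessionType\subst{\trec\tvar.\SessionType}{\tvar}}$. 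Separating the occurrences of $\tvar$ inside prefixes (already frozen by the inner substitution $\psubst{\,\cdot\,}{\,\cdot\,}$) from those outside prefixes, this reduces to the commutation of duality with substitution: for the frozen inner occurrences it is exactly Proposition~\ref{prop:wf_types}(4), and for the outer occurrences one needs the companion fact $\co{W\subst{\SessionTypeS}{\tvar}} = \co{W}\subst{\co\SessionTypeS}{\tvar}$ (provable by a routine structural induction, the dual being applied to $\SessionTypeS$ because outer occurrences sit in dualized spine positions). This is the one place where well formedness is genuinely used---maintained relative to the inner variables currently in scope along the coinduction---and where the careful design of $\co{\,\cdot\,}$ through the inner substitution operator pays off; the surrounding bookkeeping is then entirely routine.
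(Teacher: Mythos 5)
Your overall strategy is the same as the paper's: the paper also proves the forward direction by exhibiting a relation of the form ${\subt} \cup \{ (\co\SessionTypeS, \co\SessionTypeT) \mid \SessionTypeT \subt \SessionTypeS \land \cdots \}$ and checking it is a coinductive subtyping, with the same case analysis, the same observation that the covariance/contravariance flip of Definition~\ref{def:subt} is exactly compensated by the swap of choice polarity (the argument pairs are reused unchanged from the embedded copy of $\subt$), and the converse recovered from involution (which the paper leaves implicit where you spell it out). The difference lies in the ``$\cdots$'': the paper's relation additionally records $\emptyset; \BoundContext \wfdash \SessionTypeT$ and $\emptyset; \BoundContext \wfdash \SessionTypeS$ for some inner context $\BoundContext$, and that is not cosmetic.

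This is the genuine gap in your write-up: you dropped the well-formedness constraints from $\srel$ but then use well-formedness anyway. In a coinductive proof the defining clauses must be verified for \emph{every} pair of the candidate relation; an invariant that is ``maintained along the coinduction,'' as you put it, is only usable if it is recorded in the relation itself. Your $\srel$ contains $(\co\SessionTypeS, \co\SessionTypeT)$ for \emph{all} $\SessionTypeT \subt \SessionTypeS$, including ill-formed ones (since $\subt$ is reflexive, every ill-formed endpoint type contributes such a pair), and for those pairs your case analysis cannot be completed: concluding that $\co\SessionTypeT$ is, modulo unfolding, the internal choice matching the external choice $\SessionTypeT$ is precisely the commutation of $\co{\,\cdot\,}$ with unfolding, and that fact --- like Proposition~\ref{prop:wf_types}(2) and the substitution property (4) you invoke to derive it --- is established only for well-formed endpoint types; indeed the paper remarks just before the proposition that duals are not meaningful when free type variables occur at top level. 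The repair is exactly the paper's proof: put the well-formedness side conditions into the dual-pairs component of $\srel$, and observe that rule \rulename{WF-Prefix} passes them down to the continuations (with the inner context extended by the prefix variable $\tvar_i$), so the continuation pairs $(\co{\SessionTypeS_i}, \co{\SessionTypeT_i})$ re-enter $\srel$ with the invariant intact; the same hypothesis also disposes of the type-variable case outright (it becomes impossible, rather than ``immediate''). With that one change your argument and the paper's proof coincide.
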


\begin{example}
\label{ex:subtyping}
\newcommand{\enc}[1]{\llbracket#1\rrbracket} In
Example~\ref{ex:map_type} we have suggested a representation for the
function type $\TypeS \to \TypeT$ as the type $\enc{\TypeS \to
  \TypeT}$ defined thus:
\[
  \enc{\TypeS \to \TypeT}
  =
  \qun~\trec\tvar.{!}\msg{\mathtt{Invoke}}{
    \qlin~{?}\msg{\mathtt{Arg}}{\TypeS}.
    {!}\msg{\mathtt{Res}}{\TypeT}.
    \SessionEnd
  }.\tvar
\]
It is easy to verify that $\enc{\TypeS_1 \to \TypeT_1} \subt
\enc{\TypeS_2 \to \TypeT_2}$ if and only if $\TypeS_2 \subt \TypeS_1$
and $\TypeT_1 \subt \TypeT_2$. That is, the subtyping relation between
encoded function types is consistent with the standard subtyping
between function types, which is contravariant in the domain and
covariant in the co-domain.

Another way to interpret an endpoint having type
\[
  \trec\tvar.{!}\msg{\mathtt{Invoke}}{\Type}.\tvar
\]
is as an object with one method $\mathtt{Invoke}$. Sending a
$\mathtt{Invoke}$-tagged message on the endpoint means invoking the
method (incidentally, this is the terminology adopted in SmallTalk),
and after the invocation the object is available again with the same
interface. We can generalize the type above to
\[
  \trec\tvar.\InternalChoice{\msg{\Tag_i}{\Type_i}.\tvar}_{i\in I}
\]
for representing objects with multiple methods $\Tag_i$. According to
the definition of subtyping we have
\[
  \trec\tvar.\InternalChoice{\msg{\Tag_i}{\Type_i}.\tvar}_{i\in I}
  \subt
  \trec\tvar.\InternalChoice{\msg{\Tag_j}{\Type_j}.\tvar}_{j\in J}
\]
whenever $J \subseteq I$, which corresponds the same notion of
subtyping used in object-oriented language (it is safe to use an
object offering more methods where one offering fewer methods is
expected).
\eoe
\end{example}

%%% Local Variables: 
%%% mode: latex
%%% TeX-master: "main"
%%% End: 

\section{Syntax and Semantics of Processes}
\label{sec:processes}

We assume the existence of an infinite set $\PointerSet$ of
\emph{linear pointers} (or simply \emph{pointers}) ranged over by
$\PointerA$, $\PointerB$, $\dots$, of an infinite set $\VarSet$ of
\emph{variables} ranged over by $\VarX$, $\VarY$, $\dots$, and of an
infinite set of \emph{process variables} ranged over by $\RecVarX$,
$\RecVarY$, $\dots$.
We define the set $\shared\PointerSet$ of \emph{unrestricted pointers}
as $\shared\PointerSet = \{ \shared\Pointer \mid \Pointer \in
\PointerSet \}$.
We assume $\PointerSet$, $\shared\PointerSet$, and $\VarSet$ be
pairwise disjoint, we let $\NameA$, $\NameB$, $\dots$ range over
\emph{names}, which are elements of $\PointerSet \cup
\shared\PointerSet \cup \VarSet$, and we let $\ValueV$, $\ValueW$,
$\dots$ range over \emph{values}, which are elements of $\PointerSet
\cup \shared\PointerSet$.
%
% In the following, we will write $\tilde\Name$, $\tilde\Value$, etc.
% for denoting sequences of names, values, etc. respectively. We will
% sometimes treat sequences as sets.

Processes, ranged over by $\ProcessP$, $\ProcessQ$, $\dots$, are
defined by the grammar in Table~\ref{tab:syntax}.
The calculus of processes is basically a monadic pi calculus equipped
with tag-based message dispatching and primitives for handling
heap-allocated endpoints. The crucial aspect of the calculus is that
names are pointers to the heap and channels are concretely represented
as structures allocated on the heap.
Pointers can be either linear or unrestricted: a linear pointer must
be owned by exactly one process at any given point in time; an
unrestricted pointer can be owned by several (possibly zero) processes
at any time.  In practice the two kinds of pointers are
indistinguishable and range over the same address space, but in the
calculus we decorate unrestricted pointers with a bar to reason
formally on the different ownership invariants.
The term $\idle$ denotes the idle process that performs no action.
The term $\openChannel(\PointerA:\SessionTypeT,
\PointerB:\SessionTypeS).\Process$ denotes a process that creates a
\emph{linear channel}, represented as a pair of endpoints $\PointerA$
of type $\SessionTypeT$ and $\PointerB$ of type $\SessionTypeS$, and
continues as $\Process$. We will say that $\PointerB$ is the peer
endpoint of $\PointerA$ and vice-versa.
The term $\openChannel(\Pointer:\SessionTypeT).\Process$ denotes a
process that creates an \emph{unrestricted channel}, represented as an
endpoint $\Pointer$ of type $\SessionTypeT$ along with an unrestricted
pointer $\shared\Pointer$ of type $\co\SessionTypeT$, and continues as
$\Process$.
The term $\closeChannel(\Name)$ denotes a process closing and
deallocating the endpoint $\Name$.
The term $\xsend\NameA\Tag\SessionType\NameB.\Process$ denotes a
process that sends a message $\xmsg\Tag\SessionType\NameB$ on the
endpoint $\NameA$ and continues as $\Process$. The message is made of
a \emph{tag} $\Tag$ along with its \emph{parameter} $\NameB$. The
endpoint type $\SessionType$ instantiates the type variable in the
type of $\NameA$. For consistency with the type language we only
consider monadic communications where every message has exactly one
type/value parameter. The generalization to polyadic communications,
which we will occasionally use in the examples, does not pose
substantial problems.
The term $\sum_{i\in I}
\xreceive\Name{\Tag_i}{\tvar_i}{\Var_i:\Type_i}.\Process_i$ denotes a
process that waits for a message from the endpoint $\Name$. The tag
$\Tag_i$ of the received message determines the continuation
$\Process_i$ where the variable $\Var_i$ is instantiated with the
parameter of the message.
Sometimes we will write
$\xreceive\Name{\Tag_1}{\tvar_1}{\Var_1:\Type_1}.\Process_1 + \cdots +
\xreceive\Name{\Tag_n}{\tvar_n}{\Var_n:\Type_n}.\Process_n$ in place
of $\sum_{i=1}^n
\xreceive\Name{\Tag_i}{\tvar_i}{\Var_i:\Type_i}.\Process_i$.\footnote{\REVISION{We
    require the endpoint $\Name$ to be the same in all branches of the
    receive construct, while \Code{switch receive} in \Sing{} allows
    waiting for messages coming from \emph{different} endpoints. This
    generalization would not affect our formalization in any
    substantial way, save for slightly more complicated typing
    rules.}}
The term $\ProcessP\choice\ProcessQ$ denotes a process that internally
decides whether to behave as $\ProcessP$ or as $\ProcessQ$. We do not
specify the actual condition that determines the decision, as this is
irrelevant for our purposes. To improve readability, in some of the
examples we will use a more concrete syntax.
As usual, terms $\rec\RecVar.\Process$ and $\RecVar$ serve to denote
recursive processes, while $\ProcessP \parop \ProcessQ$ denotes the
parallel composition of $\ProcessP$ and $\ProcessQ$.

\begin{table}
  \caption{\label{tab:syntax}\strut Syntax of processes.}
\framebox[\textwidth]{
\begin{math}
\displaystyle
\begin{array}[t]{@{}r@{\quad}rcl@{\quad}l@{}}
  \textbf{Process} & \Process
    & ::= & \idle & \text{(idle)} \\
  & & | & \closeChannel(\Name) & \text{(close endpoint)} \\
  & & | & \openChannel(\Pointer:\SessionType, \Pointer:\SessionType).\Process & \text{(open linear channel)} \\
  & & | & \openChannel(\Pointer:\SessionType).\Process & \text{(open unrestricted channel)} \\
  & & | & \xsend\Name\Tag\SessionType\Name.\Process & \text{(send)} \\
  & & | & \sum_{i\in I} \xreceive\Name{\Tag_i}{\tvar_i}{\Var_i:\Type_i}.\Process_i & \text{(receive)} \\
  & & | & \Process\choice\Process & \text{(conditional process)} \\
  & & | & \Process \parop \Process & \text{(parallel composition)} \\
  & & | & \RecVar & \text{(process variable)} \\
  & & | & \rec\RecVar.\Process & \text{(recursive process)} \\
\end{array}
\end{math}
}
\end{table}

Table~\ref{tab:names} collects the definitions of \emph{free names}
$\fn(\cdot)$ and \emph{bound names} $\bn(\cdot)$ for processes.
Beware that a process $\openChannel(\Pointer:\SessionType).\Process$
implicitly binds $\shared\Pointer$ in addition to $\Pointer$ in
$\Process$.
In the same table we also define the sets of \emph{free type
  variables} $\ftv(\cdot)$ and of \emph{bound type variables} of a
process. Note that the set of bound type variables only includes those
variables occurring in input prefixes of the process, not the type
variables bound within endpoint types occurring in the process.
The construct $\rec\RecVar.\Process$ is the only binder for process
variables. The sets of \emph{free process variables} $\fpv(\cdot)$ and
of \emph{bound process variables} $\bpv(\cdot)$ are standard.
We identify processes up to renaming of bound names/type
variables/process variables and let $\Process\subst{\Value}{\Var}$,
$\Process\subst{\SessionType}{\tvar}$, and
$\Process\subst{\ProcessQ}{\RecVar}$ denote the standard
capture-avoiding substitutions of variables/type variables/process
variables with values/endpoint types/processes.

% We restrict recursive processes so that in every term
% $\rec\RecVar.\Process$ we have $\fn(\Process) \cap \bn(\Process) =
% \ftv(\Process) \cap \btv(\Process) = \fpv(\Process) \cap
% \bpv(\Process) = \emptyset$. This makes sure that in the unfolding
% $\Process\subst{\rec\RecVar.\Process}{\RecVar}$ of a recursive process
% no name/type variable/process variable occurring free in $\Process$ is
% accidentally captured.

\begin{table}
  \caption{\label{tab:names}\strut Free and bound names/type variables in processes.}
\framebox[\textwidth]{
\begin{math}
\displaystyle
\begin{array}{r@{~}c@{~}l}
  \fn(\idle) = \fn(\RecVar) & = & \emptyset \\
  \fn(\closeChannel(\Name)) & = & \{ \Name \} \\
  \fn(\openChannel(\PointerA:\SessionTypeT, \PointerB:\SessionTypeS).\Process) & = &
  \fn(\Process) \setminus \{ \PointerA, \PointerB \} \\
  \fn(\openChannel(\Pointer:\SessionTypeT).\Process) & = &
  \fn(\Process) \setminus \{ \Pointer, \shared\Pointer \} \\
  \fn(\xsend\NameA\Tag\SessionType\NameB.\Process) & = &
  \{ \NameA, \NameB \} \cup \fn(\Process) \\
  \fn(\sum_{i\in I} \xreceive\Name{\Tag_i}{\tvar_i}{\Var_i:\Type_i}.\Process_i) & = &
  \{ \Name \} \cup \bigcup_{i\in I} (\fn(\Process_i) \setminus \{ \Var_i \}) \\
  \fn(\ProcessP\choice\ProcessQ) =
  \fn(\ProcessP \parop \ProcessQ) & = & \fn(\ProcessP) \cup \fn(\ProcessQ) \\
  \fn(\rec\RecVar.\Process) & = & \fn(\Process) \\
  \\
  \bn(\idle) = \bn(\closeChannel(\Name)) = \bn(\RecVar) & = & \emptyset \\
  \bn(\openChannel(\PointerA:\SessionTypeT, \PointerB:\SessionTypeS).\Process) & = &
  \{ \PointerA, \PointerB \} \cup \bn(\Process) \\
  \bn(\openChannel(\Pointer:\SessionTypeT).\Process) & = &
  \{ \Pointer, \shared\Pointer \} \cup \bn(\Process) \\
  \bn(\xsend\NameA\Tag\SessionType\NameB.\Process) = \bn(\rec\RecVar.\Process) & = &
  \bn(\Process) \\
  \bn(\sum_{i\in I} \xreceive\Name{\Tag_i}{\tvar_i}{\Var_i:\Type_i}.\Process_i) & = &
  \bigcup_{i\in I} (\{\Var_i\} \cup \bn(\Process_i)) \\
  \bn(\ProcessP \choice \ProcessQ) = \bn(\ProcessP \parop \ProcessQ) & = &
  \bn(\ProcessP) \cup \bn(\ProcessQ) \\
  \\
  \ftv(\idle) = \ftv(\closeChannel(\Name)) = \ftv(\RecVar) & = & \emptyset \\
  \ftv(\openChannel(\PointerA:\SessionTypeT, \PointerB:\SessionTypeS).\Process) & = &
  \ftv(\SessionTypeT) \cup \ftv(\SessionTypeS) \cup \ftv(\Process) \\
  \ftv(\openChannel(\Pointer:\SessionTypeT).\Process) =
  \ftv(\xsend\NameA\Tag\SessionType\NameB.\Process) & = &
  \ftv(\SessionType) \cup \ftv(\Process) \\
  \ftv(\sum_{i\in I} \xreceive\Name{\Tag_i}{\tvar_i}{\Var_i:\Type_i}.\Process_i) & = &
  \bigcup_{i\in I} ((\ftv(\Type_i) \cup \ftv(\Process_i)) \setminus \{ \tvar_i \}) \\
  \ftv(\ProcessP\choice\ProcessQ) =
  \ftv(\ProcessP \parop \ProcessQ) & = & \ftv(\ProcessP) \cup \ftv(\ProcessQ) \\
  \ftv(\rec\RecVar.\Process) & = & \ftv(\Process) \\
  \\
  \btv(\idle) = \btv(\closeChannel(\Name)) = \btv(\RecVar) & = & \emptyset \\
  \btv(\openChannel(\PointerA:\SessionTypeT, \PointerB:\SessionTypeS).\Process) =
  \btv(\openChannel(\Pointer:\SessionTypeT).\Process) & = &
  \btv(\Process) \\
  \btv(\xsend\NameA\Tag\SessionType\NameB.\Process) = \btv(\rec\RecVar.\Process) & = &
  \btv(\Process) \\
  \btv(\sum_{i\in I} \xreceive\Name{\Tag_i}{\tvar_i}{\Var_i:\Type_i}.\Process_i) & = &
  \bigcup_{i\in I} (\{\tvar_i\} \cup \btv(\Process_i)) \\
  \btv(\ProcessP \choice \ProcessQ) = \btv(\ProcessP \parop \ProcessQ) & = &
  \btv(\ProcessP) \cup \btv(\ProcessQ) \\
\end{array}
\end{math}
}
\end{table}

\newcommand{\ESource}{\mathit{source}}
\newcommand{\ETarget}{\mathit{target}}
\newcommand{\EMapper}{\mathit{mapper}}

\begin{example}
\label{ex:map_process}
\newcommand{\TagInvoke}{\Code{Invoke}}
Let us encode the \Code{map} function in Figure~\ref{fig:example}
using the syntax of our process calculus. As anticipated in
Example~\ref{ex:map_type}, the idea is to represent \Code{map} as a
process that permanently accepts invocations and handles them. For
this reason we need an endpoint, say $\PointerC$, to which invocation
requests are sent and we define the $\mathrm{MAP}(\PointerC)$ process
thus:
\[
\begin{array}{@{}rcl@{}}
  \mathrm{MAP}(\PointerC) & = &
  \rec\RecVarX.
  \xreceive\PointerC{\mathtt{Invoke}}{\tvarA,\tvarB}{\VarZ : \qlin~\co{\Tmap(\tvarA,\tvarB)}}.
  (\RecVarX \parop \mathrm{BODY}(\tvarA,\tvarB,\VarZ))
  \\
  \mathrm{BODY}(\tvarA,\tvarB,\VarZ) & = &
  \receive\VarZ{\mathtt{Arg}}{\EMapper:\qlin~\co{\TMapper(\tvarA,\tvarB)}}. \\ & &
  \receive\VarZ{\mathtt{Arg}}{\ESource:\qlin~\co{\TStream(\tvarA)}}. \\ & &
  \receive\VarZ{\mathtt{Arg}}{\ETarget:\qlin~\TStream(\tvarB)}. \\ & &
  \rec\RecVarY.
  \begin{array}[t]{@{}l@{}l@{}}
    ( & \receive\ESource{\mathtt{Data}}{\VarX : \qlin~\tvarA}.
    \send\EMapper{\mathtt{Arg}}\VarX. \\
    & \receive\EMapper{\mathtt{Res}}{\VarY : \qlin~\tvarB}.
    \send\ETarget{\mathtt{Data}}\VarY.
    \RecVarY
    \\
    + {\,} &
    \receive\ESource{\mathtt{Eos}}{}.
    \send\ETarget{\mathtt{Eos}}{}. \\
    & \send\VarZ{\mathtt{Arg}}\EMapper.
    \send\VarZ{\mathtt{Res}}{}. \\
    & (\closeChannel(\VarZ) \parop \closeChannel(\ESource) \parop \closeChannel(\ETarget)))
  \end{array}
\end{array}
\]
The process $\mathrm{MAP}(\PointerC)$ repeatedly reads
\TagInvoke-tagged messages from $\PointerC$. Each message carries
another endpoint $\VarZ$ that represents a private session established
between the caller and the callee, whose purpose is to make sure that
no interference occurs between independent invocations of the
service. Note that $\VarZ$ has type $\co{\Tmap(\tvarA,\tvarB)}$, the
dual of $\Tmap(\tvarA,\tvarB)$, since it is the endpoint handed over
by the caller from which the callee will \emph{receive} the arguments
and \emph{send} the result.
The body of the \Code{map} function is encoded by the
$\mathrm{BODY}(\tvarA,\tvarB,\VarZ)$ process, which begins by reading
the three arguments $\EMapper$, $\ESource$, and $\ETarget$.
Then, the process enters its main loop where messages are received
from $\ESource$, processed through $\EMapper$, and finally sent on
$\ETarget$. Overall the structure of the process closely follows that
of the code in Figure~\ref{fig:example}, where the branch operator is
used for modeling the \Code{switch receive} construct.
The only remarkable difference occurs after the input of a
\TagEos-tagged message, where the $\EMapper$ argument is returned to
the caller so as to model the temporary ownership transfer that was
implicitly indicated by the lack of the \Code{[Claims]} annotation in
\Code{map}.
At this point the $\VarZ$ endpoint serves no other purpose and is
closed along with $\ESource$ and $\ETarget$.
\eoe
\end{example}

\begin{table}
  \caption{\label{tab:runtime}\strut Syntax of heaps and queues.}
\framebox[\textwidth]{
\begin{math}
\displaystyle
\begin{array}{cc}
\begin{array}[t]{@{}r@{\quad}rcl@{\quad}l@{}}
  \textbf{Heap} & \Memory
    & ::= & \EmptyMemory & \text{(empty)} \\
  & & | & \Pointer \mapsto [\Pointer, \Queue] & \text{(endpoint)} \\
  & & | & \Memory, \Memory & \text{(composition)} \\
\end{array}
&
\begin{array}[t]{@{}r@{\quad}rcl@{\quad}l@{}}
  \textbf{Queue} & \Queue
    & ::= & \EmptyQueue & \text{(empty)} \\
  & & | & \xmsg\Tag\SessionType\Value & \text{(message)} \\
  & & | & \Queue :: \Queue & \text{(composition)} \\
\end{array}
\end{array}
\end{math}
}
\end{table}

To state the operational semantics of processes we need a formal
definition of the \emph{exchange heap} (or simply \emph{heap}), which
is given in Table~\ref{tab:runtime}.
\emph{Heaps}, ranged over by $\Memory$, are term representations of
finite maps from pointers to heap objects:
the term $\EmptyMemory$ denotes the empty heap, in which no object is
allocated;
the term $\PointerA \mapsto [\PointerB, \Queue]$ denotes a heap made
of an endpoint located at $\PointerA$. The endpoint is a structure
containing another pointer $\PointerB$ and a \emph{queue} $\Queue$ of
messages waiting to be read from $\PointerA$.
Heap compositions $\Memory,\Memory'$ are defined only when the domains
of the heaps being composed, which we denote by $\dom(\Memory)$ and
$\dom(\Memory')$, are disjoint.
We assume that heaps are equal up to commutativity and associativity
of composition and that $\EmptyMemory$ is neutral for composition.
\emph{Queues}, ranged over by $\Queue$, are finite ordered sequences
of messages $\xmsg{\Tag_1}{\SessionType_1}{\Value_1} :: \cdots ::
\xmsg{\Tag_n}{\SessionType_n}{\Value_n}$, where a message
$\xmsg\Tag\SessionType\Value$ is identified by its tag $\Tag$, the
endpoint type $\SessionType$ with which its type argument has been
instantiated, and its value argument $\Value$.  We build queues from
the empty queue $\EmptyQueue$ and concatenation of messages by means
of $::$. We assume that queues are equal up to associativity of $::$
and that $\EmptyQueue$ is neutral for $::$.
The $\SessionType$ component in the enqueued messages must be
understood as a technical annotation that helps reasoning on the
formal properties of the model. In particular, it does not imply that
a practical implementation of the calculus must necessarily provide a
runtime representation of endpoint types.\footnote{\Sing{} \emph{does}
  require a runtime representation of endpoint types because its
  expression language is equipped with a dynamic cast operator.}

We define the operational semantics of processes as the combination of
a structural congruence relation, which equates processes we do not
want to distinguish, and a reduction relation.
Structural congruence, denoted by $\equiv$, is the least congruence
relation defined by the axioms in Table~\ref{tab:cong} and closed
under parallel composition.
Essentially, the axioms state that~$\parop$ is commutative,
associative, and has $\idle$ as neutral element.

\begin{table}
\caption{\label{tab:cong}\strut Structural congruence.}
\framebox[\textwidth]{
\begin{math}
\displaystyle
\begin{array}{c}
\inferrule[\rulename{S-Idle}]{}{
  \Process \parop \idle \equiv \Process
}
\qquad
\inferrule[\rulename{S-Comm}]{}{
  \ProcessP \parop \ProcessQ \equiv \ProcessQ \parop \ProcessP
}
\qquad
\inferrule[\rulename{S-Assoc}]{}{
  \ProcessP \parop (\ProcessQ \parop \ProcessR)
  \equiv
  (\ProcessP \parop \ProcessQ) \parop \ProcessR
}
\end{array}
\end{math}
}
\end{table}

\begin{table}
\caption{\label{tab:reduction}\strut Operational semantics of processes.}
\framebox[\textwidth]{
\begin{math}
\displaystyle
\begin{array}{@{}c@{}}
\inferrule[\rulename{R-Open Linear Channel}]{}{
  \system\Memory{\openChannel(\PointerA:\SessionTypeT,\PointerB:\SessionTypeS).\Process}
  \red{}
  \system{
    \Memory,
    \PointerA\mapsto[\PointerB,\EmptyQueue],
    \PointerB\mapsto[\PointerA,\EmptyQueue]
  }{
    \Process
  }
}
\\\\
\inferrule[\rulename{R-Open Unrestricted Channel}]{}{
  \system\Memory{\openChannel(\Pointer:\SessionType).\Process}
  \red{}
  \system{
    \Memory,
    \Pointer\mapsto[\Pointer,\EmptyQueue]
  }{
    \Process
  }
}
\qquad
\inferrule[\rulename{R-Choice Left}]{}{
  \system\Memory{\ProcessP \choice \ProcessQ}
  \red{}
  \system\Memory\ProcessP
}
\qquad
\inferrule[\rulename{R-Choice Right}]{}{
  \system\Memory{\ProcessP \choice \ProcessQ}
  \red{}
  \system\Memory\ProcessQ
}
\\\\
\inferrule[\rulename{R-Send Linear}]{}{
  \system{
    \Memory,
    \PointerA \mapsto [\PointerB, \Queue],
    \PointerB \mapsto [\PointerA, \Queue']
  }{
    \xsend\PointerA\Tag\SessionType\Value.\Process
  }
  \red{}
  \system{
    \Memory,
    \PointerA \mapsto [\PointerB, \Queue],
    \PointerB \mapsto [\PointerA, \Queue'::\xmsg\Tag\SessionType\Value]
  }{
    \Process
  }
}
\\\\
\inferrule[\rulename{R-Send Unrestricted}]{}{
  \system{
    \Memory,
    \Pointer \mapsto [\Pointer, \Queue]
  }{
    \xsend{\shared\Pointer}\Tag\SessionType\Value.\Process
  }
  \red{}
  \system{
    \Memory,
    \Pointer \mapsto [\Pointer, \Queue::\xmsg\Tag\SessionType\Value]
  }{
    \Process
  }
}
\\\\
\inferrule[\rulename{R-Receive}]{
  k\in I
}{
  \textstyle
  \system{
    \Memory,
    \PointerA \mapsto [\PointerB, \xmsg{\Tag_k}\SessionType\Value::\Queue]
  }{
    \sum_{i\in I} \xreceive\PointerA{\Tag_i}{\tvar_i}{\Var_i:\Type_i}.\Process_i
  }
  \red{}
  \system{
    \Memory,
    \PointerA \mapsto [\PointerB, \Queue]
  }{
    \Process_k\subst{\SessionType}{\tvar_k}\subst{\Value}{\Var_k}
  }
}
\\\\
\inferrule[\rulename{R-Rec}]{}{
  \system\Memory{\rec\RecVar.\Process}
  \red{}
  \system\Memory{\Process\subst{\rec\RecVar.\Process}\RecVar}
}
\\\\
\inferrule[\rulename{R-Par}]{
  \system\Memory\ProcessP \red{} \system{\Memory'}{\ProcessP'}
}{
  \system\Memory{\ProcessP \parop \ProcessQ}
  \red{}
  \system{\Memory'}{\ProcessP' \parop \ProcessQ}
}
\qquad
\inferrule[\rulename{R-Struct}]{
  \ProcessP \equiv \ProcessP'
  \\
  \system\Memory{\ProcessP'} \red{} \system{\Memory'}{\ProcessQ'}
  \\
  \ProcessQ' \equiv \ProcessQ
}{
  \system\Memory\ProcessP \red{} \system{\Memory'}\ProcessQ
}
\end{array}
\end{math}
}
\end{table}

Processes communicate by means of endpoints that are allocated on the
heap. Consequently, the reduction relation defines the transitions of
\emph{systems} rather than of processes, where a system is a pair
$\system\Memory\Process$ of a heap $\Memory$ and a process $\Process$.
The reduction relation $\red{}$ is inductively defined in
Table~\ref{tab:reduction}; we comment on the rules in the following
paragraphs.
Rule~\rulename{R-Open Linear Channel} creates a new linear channel,
which consists of two fresh endpoints with empty queues and mutually
referring to each other. The mutual references are needed since the
messages sent using one of the endpoints will be enqueued into the
other peer.
Rule~\rulename{R-Open Unrestricted Channel} creates a new unrestricted
channel, which consists of a single endpoint with empty queue. The
reference in the endpoint is initialized with a pointer to itself.
This way, by inspecting the $\PointerB$ component of an endpoint
$\PointerA \mapsto [\PointerB, \Queue]$ it is possible to understand
whether the endpoint belongs to a linear or to an unrestricted
channel, as we respectively have either $\PointerA \ne \PointerB$ or
$\PointerA = \PointerB$. This distinction is necessary in the
reductions defining the semantics of outputs, as we will see shortly.
In both~\rulename{R-Open Linear Channel} and~\rulename{R-Open
  Unrestricted Channel} we implicitly rename bound names to make sure
that the newly introduced pointers do not already occur in
$\dom(\Memory)$, for otherwise the heap in the resulting system would
be undefined.
Rules~\rulename{R-Choice Left} and~\rulename{R-Choice Right} describe
the standard reduction of conditional processes.
Rules~\rulename{R-Send Linear} and~\rulename{R-Send Unrestricted}
describe the output of a message $\xmsg\Tag\SessionType\Value$ on the
endpoint $\PointerA$ of a linear channel and on the endpoint
$\shared\PointerA$ of an unrestricted channel, respectively. In the
former case, the message is enqueued at the end of $\PointerA$'s peer
endpoint queue. In the latter case, the message is enqueued in the
only available queue.
Rule~\rulename{R-Receive} describes the input of a message from the
endpoint $\PointerA$. The message at the front of $\PointerA$'s queue
is removed from the queue, its tag is used for selecting some branch
$k\in I$, and its type and value arguments instantiate the type
variable $\tvar_k$ and variable $\Var_k$. If the queue is not empty
and the first message in the queue does not match any of the tags $\{
\Tag_i \mid i \in I\}$, then no reduction occurs and the process is
stuck.
Rule~\rulename{R-Rec} describes the usual unfolding of a recursive
process.
Rule~\rulename{R-Par} closes reductions under parallel
composition. Observe that the heap is treated globally, even when it
is only a sub-process to reduce.
Finally, rule~\rulename{R-Struct} describes reductions modulo
structural congruence. % It plays an essential role in ensuring that
% $\openChannel(\PointerA:\SessionTypeT,\PointerB:\SessionTypeS).\Process$
% and $\openChannel(\PointerA:\SessionType).\Process$ processes are
% never stuck, because $\PointerA$ and $\PointerB$ can always be alpha
% converted to some pointers not occurring in $\dom(\Memory)$.
%
There is no reduction for $\closeChannel(\PointerA)$ processes. In
principle, $\closeChannel(\PointerA)$ should deallocate the endpoint
located at $\PointerA$ and remove the association for $\PointerA$ from
the heap. In the formal model it is technically convenient to treat
$\closeChannel(\PointerA)$ processes as persistent because, in this
way, we keep track of the pointers that have been properly
deallocated. We will see that this information is crucial in the
definition of well-behaved processes (Definition~\ref{def:wb}).  A
process willing to deallocate a pointer $\PointerA$ and to continue as
$\Process$ afterwards can be modeled as
$\closeChannel(\PointerA) \parop \Process$.
In the following we write $\wred{}$ for the reflexive, transitive
closure of $\red{}$ and we write $\system\Memory\Process \nred{}$ if
there exist no $\Memory'$ and $\Process'$ such that
$\system\Memory\Process \red{} \system{\Memory'}{\Process'}$.

In this work we characterize well-behaved systems as those that are
free from faults, leaks, and communication errors: a \emph{fault} is
an attempt to use a pointer not corresponding to an allocated object
or to use a pointer in some way which is not allowed by the object it
refers to; a \emph{leak} is a region of the heap that some process
allocates and that becomes unreachable because no reference to it is
directly or indirectly available to the processes in the system; a
\emph{communication error} occurs if some process receives a message
of unexpected type.
We conclude this section formalizing these properties.  To do so, we
need to define the reachability of a heap object with respect to a set
of \emph{root} pointers. Intuitively, a process $\Process$ may
directly reach any object located at some pointer in the set
$\fn(\Process)$ (we can think of the pointers in $\fn(\Process)$ as of
the local variables of the process stored on its stack); from these
pointers, the process may reach other heap objects by reading messages
from the endpoints it can reach, and so forth.

\newcommand{\Pointers}{A}

\newcommand{\xreach}[1][]{\prec_{#1}}
\newcommand{\reach}[1][]{\preccurlyeq_{#1}}

\begin{definition}[reachable pointers]
\label{def:reachable}
We say that $\PointerC$ is \emph{reachable} from $\PointerA$ in
$\Memory$, notation $\PointerC \xreach[\Memory] \PointerA$, if
$\PointerA \mapsto [\PointerB, \Queue ::
\xmsg\Tag\SessionType\PointerC :: \Queue'] \in \Memory$.
We write $\reach[\Memory]$ for the reflexive, transitive closure of
$\xreach[\Memory]{}{}$.
Let $\reachable{\Pointers}{\Memory} = \{ \PointerC \in \PointerSet
\mid \exists\PointerA\in\Pointers: \PointerC \reach[\Memory] \PointerA
\}$.
\end{definition}

Observe that $\reachable\Pointers\Memory \subseteq \PointerSet$ for
every $\Pointers \subseteq \PointerSet \cup \shared\PointerSet$ and
$\Memory$. Also, according to this definition nothing is reachable
from an unrestricted pointer. The rationale is that we will use
$\reachable\cdot\cdot$ only to define the ownership invariant, for
which the only pointers that matter are the linear ones.
We now define well-behaved systems formally.

\begin{definition}[well-behaved process]
\label{def:wb}
We say that $\ProcessP$ is \emph{well behaved} if
$\system\EmptyMemory\ProcessP \wred{} \system\Memory{\ProcessQ}$
implies:
\begin{enumerate}[(1)]
\item $\dom(\Memory) = \reachable{\fn(\ProcessQ)}\Memory$;

\item $\ProcessQ \equiv \Process_1 \parop \Process_2$ implies
  $\reachable{\fn(\Process_1)}\Memory \cap
  \reachable{\fn(\Process_2)}\Memory = \emptyset$;

\item $\ProcessQ \equiv \Process_1 \parop \Process_2$ and
  $\system\Memory{\Process_1} \nred{}$ where $\Process_1$ does not
  have unguarded parallel compositions imply either $\Process_1 =
  \idle$ or $\Process_1 = \closeChannel(\Pointer)$ or $\Process_1 =
  \sum_{i\in I}
  \xreceive\Pointer{\Tag_i}{\tvar_i}{\Var_i:\Type_i}.\Process_i$ and,
  \REVISION{in the last two cases}, $\Pointer \mapsto [\PointerB,
  \EmptyQueue] \in \Memory$.
\end{enumerate}
\end{definition}

In words, a process $\ProcessP$ is well behaved if every residual of
$\ProcessP$ reachable from a configuration where the heap is empty
satisfies a number of conditions.
Conditions~(1) and~(2) require the absence of faults and leaks.
Indeed, condition~(1) states that every allocated pointer in the heap
is reachable by one process, and that every reachable pointer
corresponds to an object allocated in the heap.
Condition~(2) states that processes are isolated, namely that no
linear pointer is reachable from two or more distinct
processes. Because of the definition of reachable pointers, though, it
may be possible that two or more processes share the same unrestricted
pointer. Since processes of the form $\closeChannel(\Pointer)$ are
persistent, this condition also requires the absence of faults
deriving from multiple deallocations of the same endpoint or from the
use of deallocated endpoints.
Condition~(3) requires the absence of communication errors, namely
that if $\system\Memory\ProcessQ$ is stuck (no reduction is possible),
then it is because every non-terminated process in $\ProcessQ$ is
waiting for a message on an endpoint having an empty queue. This
configuration corresponds to a genuine deadlock where every process in
some set is waiting for a message that is to be sent by another
process in the same set.
\REVISION{Condition~(3) also ensures the absence of so-called
  \emph{orphan messages}: no message accumulates in the queue of
  closed endpoints.}
We only consider initial configurations with an empty heap for two
reasons: first, we take the point of view that initially there are no
allocated objects; second, since we will need a well-typed predicate
for heaps and we do not want to verify heap well-typedness at runtime,
we will make sure that the empty heap is trivially well typed.

We conclude this section with a few examples of ill-behaved processes
to illustrate the sort of errors we aim to avoid with our static type
system:
\begin{iteMize}{$\bullet$}
\item The process $\openChannel(\PointerA : \SessionTypeT, \PointerB :
  \SessionTypeS).\idle$ violates condition~(1), since it allocates two
  endpoints $\PointerA$ and $\PointerB$ and forgets them, thus
  generating a leak.

\item The process $\openChannel(\PointerA : \SessionTypeT, \PointerB :
  \SessionTypeS).(\closeChannel(\PointerA) \parop
  \closeChannel(\PointerA) \parop \closeChannel(\PointerB))$ violates
  condition~(2), since it deallocates the same endpoint $\PointerA$
  twice. This is an example of fault.

\item The process $\openChannel(\PointerA : \SessionTypeT, \PointerB :
  \SessionTypeS).(\send\PointerA{\Tag}{}.\closeChannel(\PointerA)
  \parop \receive\PointerB{\Tag'}{}.\closeChannel(\PointerB))$
  violates condition~(3), since it reduces to a parallel composition
  of subprocesses where one has sent an $\Tag$-tagged message, but the
  other one was expecting an $\Tag'$-tagged message.

\item The process $\openChannel(\PointerA : \SessionTypeT, \PointerB :
  \SessionTypeS).\send{\shared\PointerA}\Tag{}.\receive\PointerB\Tag{}.(\closeChannel(\PointerA) \parop
  \closeChannel(\PointerB))$ violates condition~(3), since it reduces
  to a stuck process that attempts at sending an $\Tag$-tagged message
  using the unrestricted pointer $\shared\PointerA$, while in fact
  $\PointerA$ is a linear pointer.
\end{iteMize}

%%% Local Variables: 
%%% mode: latex
%%% TeX-master: "main"
%%% End: 

\section{Type System}
\label{sec:type_system}

\subsection{Weighing Types}

We aim at defining a type system such that well-typed processes are
well behaved. In session type systems, from which we draw inspiration,
each action performed by a process using a certain endpoint must be
matched by a corresponding action in the type associated with the
endpoint, and the continuation process after that action must behave
according the continuation in the endpoint type.
Following this intuition, the reader may verify that the process
$\mathrm{BODY}$ (Example~\ref{ex:map_process}) uses the endpoint
$\VarZ$ correctly with respect to the endpoint type
$\co{\Tmap(\tvarA,\tvarB)}$ (Example~\ref{ex:map_type}). Analogous
observations can be made for the other endpoints ($\EMapper$,
$\ESource$, $\ETarget$) received from $\VarZ$ and subsequently used in
$\mathrm{BODY}$.
Linearity makes sure that a process owning an endpoint \emph{must} use
the endpoint (according to its type), or it must delegate it to
another process. Endpoints cannot be simply forgotten and this is
essential in guaranteeing the absence of leaks. In
Example~\ref{ex:map_process} there is a number of endpoints involved:
$\PointerC$ is owned permanently by $\mathrm{MAP}$; $\VarZ$ is owned
by $\mathrm{BODY}$ until an \TagEos-tagged message is received, at
which point it is deallocated; $\ESource$ and $\ETarget$ are acquired
by $\mathrm{BODY}$ and deallocated when no longer in use; finally,
$\EMapper$ is acquired by $\mathrm{BODY}$ from the caller and returned
to the caller when $\mathrm{BODY}$ ends. Overall, $\mathrm{MAP}$ is
evenly balanced as far as the ownership of linear endpoints is
concerned.

Nonetheless, as we have anticipated in Section~\ref{sec:singularity},
there are apparently well-typed processes that lead to a violation of
the ownership invariant. A first example is the process
\begin{equation}
\label{eq:micidiale}
  \ProcessP
  =
  \openChannel(\PointerA : \SessionTypeT_1, \PointerB : \SessionTypeT_2).
  \send\PointerA\Tag\PointerB.
  \closeChannel(\PointerA)
\end{equation}
where
\[
  \SessionTypeT_1 = {!}\msg\Tag{\qlin~\SessionTypeT_2}.\SessionEnd
  \text{\qquad and\qquad}
  \SessionTypeT_2 = \trec\tvar.{?}\msg\Tag{\qlin~\tvar}.\SessionEnd
  \,.
\]

The process $\ProcessP$ begins by creating two endpoints $\PointerA$
and $\PointerB$ with dual endpoint types. The fact that
$\SessionTypeT_1 = \co{\SessionTypeT_2}$ ensures the absence of
communication errors, as each action performed on one endpoint is
matched by a corresponding co-action performed on the corresponding
peer. After its creation, endpoint $\PointerB$ is sent over endpoint
$\PointerA$. Observe that, according to $\SessionTypeT_1$, the process
is entitled to send an $\Tag$-tagged message with argument of type
$\SessionTypeT_2$ on $\PointerA$ and $\PointerB$ has precisely that
type. After the output operation, the process no longer owns endpoint
$\PointerB$ and endpoint $\PointerA$ is deallocated. Apparently,
$\ProcessP$ behaves correctly while in fact it generates a leak, as we
can see from its reduction:
\[
  \system\EmptyMemory\ProcessP
  \red{}
  \system{
    \PointerA \mapsto [\PointerB, \EmptyQueue],
    \PointerB \mapsto [\PointerA, \EmptyQueue]
  }{
    \send\PointerA\Tag\PointerB.
    \closeChannel(\PointerA)
  }
  \red{}
  \system{
    \PointerA \mapsto [\PointerB, \EmptyQueue],
    \PointerB \mapsto [\PointerA, \msg{\Tag}{\PointerB}]
  }{
    \closeChannel(\PointerA)
  }
\]

In the final, stable configuration we have
$\reachable{\fn(\closeChannel(\PointerA))}\Memory =
\reachable{\{\PointerA\}}\Memory = \{ \PointerA \}$ (recall that
$\PointerB$ is not reachable from $\PointerA$ even though its peer is)
while $\dom(\Memory) = \{ \PointerA, \PointerB \}$. In particular, the
endpoint $\PointerB$ is no longer reachable and this configuration
violates condition~(1) of Definition~\ref{def:wb}.
Additionally, if there were some mechanism for accessing $\PointerB$
(for example, by peeking into the endpoint located at $\PointerA$) and
for reading the message from $\PointerB$'s queue, this would
compromise the typing of $\PointerB$: the endpoint type associated
with $\PointerB$ is $\SessionTypeT_2$, but as we remove the message
from its queue it turns to $\SessionEnd$. The $\PointerB$ in the
message, however, would retain the now obsolete type
$\SessionTypeT_2$, with potentially catastrophic consequences.
A closer look at the heap in the reduction above reveals that the
problem lies in the cycle involving $\PointerB$: it is as if the
$\PointerB \mapsto [\PointerA, \msg\Tag\PointerB]$ region of the heap
needs not be owned by any process because it ``owns itself''.  With
respect to other type systems for session types, we must tighten our
typing rules and make sure that no cycle involving endpoint queues is
created in the heap. In the process above this problem would not be
too hard to detect, as the fact that $\PointerA$ and $\PointerB$ are
peer endpoints is apparent from the syntax of the process.
In general, however, $\PointerA$ and $\PointerB$ might have been
acquired in previous communications (think of the \Code{foo} and
\Code{bar} functions in Section~\ref{sec:singularity}, where nothing
is known about the arguments \Code{e} and \Code{f} save for their
type) and they may not even be peers. For example, the process
\[
  \openChannel(\PointerA : \SessionTypeT_1, \PointerC: \SessionTypeT_2).
  \openChannel(\PointerB : \SessionTypeT_1, \PointerD: \SessionTypeT_2).
  \send\PointerA\Tag\PointerD.
  \send\PointerB\Tag\PointerC.
  (\closeChannel(\PointerA) \parop \closeChannel(\PointerB))
\]
creates a leak with a cycle of length 2 even though no endpoint is
ever sent over its own peer.

Our approach for attacking the problem stems from the observation that
infinite values (once the leak configuration has been reached the
endpoint $\PointerB$ above fits well in this category) usually inhabit
recursive types and the endpoint type $\SessionTypeT_2$ indeed
exhibits an odd form of recursion, as the recursion variable $\tvar$
occurs within the only prefix of $\SessionTypeT_2$.  Forbidding this
form of recursion in general, however, would \hypo1 unnecessarily
restrict our language and \hypo2 it would not protect us completely
against leaks.
Regarding \hypo1, we can argue that an endpoint type $\SessionTypeT_2'
= \trec\tvar.{!}\msg\Tag\tvar.\SessionEnd$ (which begins with an
output action) would never allow the creation of cycles in the heap
despite its odd recursion. The reason is that, if we are sending an
endpoint $\PointerB : \SessionTypeT_2'$ over $\PointerA :
\SessionTypeT_2'$, then the peer of $\PointerA$ must have the dual
type $\co{\SessionTypeT_2'} =
{?}\msg\Tag{\SessionTypeT_2'}.\SessionEnd$ (which begins with an input
action) and therefore must be different from~$\PointerB$.
Regarding \hypo2, consider the following variation of the process
$\ProcessP$ above
\begin{equation}
\label{eq:micidiale2}
  \ProcessQ = \openChannel(\PointerA : \SessionTypeS_1, \PointerB : \SessionTypeS_2).
              \xsend{\PointerA}\Tag{\SessionTypeS_2}{\PointerB}.
              \closeChannel(\PointerA)
\end{equation}
where
\[
  \SessionTypeS_1 = {!}\xmsg\Tag\tvar{\qlin~\tvar}.\SessionEnd
  \text{\qquad and\qquad}
  \SessionTypeS_2 = {?}\xmsg\Tag\tvar{\qlin~\tvar}.\SessionEnd
  \,.
\]

Once again, $\SessionTypeS_1$ and $\SessionTypeS_2$ are dual endpoint
types and process $\ProcessQ$ behaves correctly with respect to
them. Notice that neither $\SessionTypeS_1$ nor $\SessionTypeS_2$ is
recursive, and yet $\ProcessQ$ yields the same kind of leak that we
have observed in the reduction of $\ProcessP$.

What do $\SessionTypeT_2$ and $\SessionTypeS_2$ have in common that
$\SessionTypeT_2'$ and $\SessionTypeS_1$ do not and that makes them
dangerous?
First of all, both $\SessionTypeT_2$ and $\SessionTypeS_2$ begin with
an input action so they denote endpoints in a \emph{receive state},
and only endpoints in a receive state can have a non-empty queue.
Second, the type of the arguments in $\SessionTypeT_2$ and
$\SessionTypeS_2$ \emph{may} denote other endpoints with a non-empty
queue: in $\SessionTypeT_2$ this is evident as the type of the
argument is $\SessionTypeT_2$ itself; in $\SessionTypeS_2$ the type of
the argument is the existentially quantified type variable $\tvar$,
which can be instantiated with \emph{any} endpoint type and, in
particular, with an endpoint type beginning with an input action. If
we think of the chain of pointers originating from the queue of an
endpoint, we see that both $\SessionTypeT_2$ and $\SessionTypeS_2$
allow for chains of arbitrary length and the leak originates when this
chain becomes in fact infinite, meaning that a cycle has formed in the
heap.
Our idea to avoid these cycles uses the fact that it is possible to
compute, for each endpoint type, a value in the set $\natset \cup \{
\infty \}$, that we call \emph{weight}, representing the upper bound
of the length of any chain of pointers originating from the queue of
the endpoints it denotes. A weight equal to $\infty$ means that there
is no such upper bound.
Then, the idea is to restrict the type system so that:
\begin{quote}
  \vspace{0.5ex}
  \framebox{Only endpoints having a finite-weight type can be sent as
    messages.}
  \vspace{0.5ex}
\end{quote}

A major issue in defining the weight of types is how to deal with type
variables. If type variables can be instantiated with arbitrary
endpoint types, hence with endpoint types having arbitrary weight, the
weight of type variables cannot be estimated to be finite. At the same
time, assigning an infinite weight to \emph{every} type variable can
be overly restrictive. To see why, consider the following fragment of
the $\mathrm{MAP}$ process defined in Example~\ref{ex:map_process}:
\[
[{\cdots}].
\receive\ESource{\mathtt{Data}}{\VarX : \qlin~\tvarA}.
\send\EMapper{\mathtt{Arg}}\VarX.
[{\cdots}]
\]

The process performs an output operation
$\send\EMapper{\mathtt{Arg}}{\VarX}$ which, according to our idea,
would be allowed only if the type of argument $\VarX$ had a finite
weight. It turns out that $\VarX$ has type $\qlin~\tvarA$ and is bound
by the preceding input action $\receive\ESource{\mathtt{Data}}{\VarX
  : \qlin~\tvarA}$.
If we estimate the weight to $\tvarA$ to be infinite, a simple process
like $\mathrm{MAP}$ would be rejected by our type system.
By looking at the process more carefully one realizes that, since
$\Var$ has been received from a message, its actual type \emph{must
  be} finite-weight, for otherwise the sender (the process using
$\ESource$'s peer endpoint) would have been rejected by the type
system.
In general, since type variables denote values that can only be passed
around and these must have a finite-weight type, it makes sense to
impose a further restriction:
\begin{center}
  \vspace{0.5ex}
  \framebox{Only finite-weight endpoint types can instantiate type
    variables.}
  \vspace{0.5ex}
\end{center}
Then, in computing the weight of a type, we should treat its free and
bound type variables differently: free type variables are placeholders
for a finite-weight endpoint type and are given a finite weight; bound
type variables are yet to be instantiated with some unknown endpoint
type of arbitrary weight and therefore their weight cannot be
estimated to be finite. We will thus define the weight
$\xweight\BoundContext\Type$ of a type $\Type$ with respect to a set
$\BoundContext$ of free type variables:

\begin{definition}[type weight]
\label{def:type_weight}
We say that $\wrel$ is a \emph{coinductive weight bound} if
$(\BoundContext, \SessionType, n) \in \wrel$ implies either:
\begin{iteMize}{$\bullet$}
\item $\SessionType = \SessionEnd$, or

\item $\SessionType = \tvar \in \BoundContext$, or

\item $\SessionType =
  \InternalChoice{\tmsg{\Tag_i}{\tvar_i}{\TypeT_i}.\SessionType_i}_{i\in
    I}$, or

\item $\SessionType =
  \ExternalChoice{\tmsg{\Tag_i}{\tvar_i}{\Qualifier_i~\SessionTypeS_i}.\SessionTypeT_i}_{i\in
    I}$ and $n > 0$ and $\tvar_i \not\in \BoundContext$ and
  $(\BoundContext, \SessionTypeS_i, n-1) \in \wrel$ and
  $(\BoundContext, \SessionTypeT_i, n) \in \wrel$ for every $i\in I$.\smallskip
\end{iteMize}

\noindent We write $\BoundContext \vdash \SessionType \wbound n$ if
$(\BoundContext, \SessionType, n) \in {\wrel}$ for some coinductive
weight bound $\wrel$.
The \emph{weight} of an endpoint type $\SessionType$ with respect to
$\BoundContext$, denoted by $\xweight\BoundContext\SessionType$, is
defined by $\xweight\BoundContext\SessionType = \min\{n\in\natset \mid
\BoundContext \vdash \SessionType \wbound n\}$ where we let
$\min\emptyset = \infty$. We simply write $\weight\SessionType$ in
place of $\xweight\EmptyBoundContext\SessionType$ and we extend
weights to types so that $\weight{\Qualifier~\SessionType} =
\weight\SessionType$.
When comparing weights we extend the usual total orders $<$ and $\leq$
over natural numbers so that $n < \infty$ for every $n\in\natset$ and
$\infty \leq \infty$.
\end{definition}

The weight of $\Type$ is defined as the least of its weight bounds, or
$\infty$ if there is no such weight bound.
A few weights are straightforward to compute, for example we have
$\weight\SessionEnd =
\weight{\InternalChoice{\tmsg{\Tag_i}{\tvar_i}{\TypeT_i}.\SessionType_i}_{i\in
    I}} = 0$.
Indeed, the queues of endpoints with type $\SessionEnd$ and those in a
send state are empty and therefore the chains of pointers originating
from them has zero length.
A type variable $\tvar$ can have a finite or infinite weight depending
on whether it occurs free or bound. So we have
$\xweight{\{\tvar\}}\tvar = 0$ and $\weight\tvar = \infty$.
Note that $\xweight{\{\tvar\}}\tvar = 0$ although $\tvar$ may be
actually instantiated with a type that has a strictly positive, but
finite weight.
Endpoint types in a receive state have a strictly positive
weight. For instance we have
$\weight{{?}\Tag(\SessionEnd).\SessionEnd} = 1$ and
$\weight{{?}\Tag({?}\Tag(\SessionEnd).\SessionEnd).\SessionEnd} = 2$.
If we go back to the examples of endpoint types that we used to
motivate this discussion, we have $\weight{\SessionTypeT_2'} =
\weight{\SessionTypeS_1} = 0$ and $\weight{\SessionTypeT_2} =
\weight{\SessionTypeS_2} = \infty$, from which we deduce that
endpoints with type $\SessionTypeT_2'$ or $\SessionTypeS_1$ are safe
to be sent as messages, while endpoints with type $\SessionTypeT_2$ or
$\SessionTypeS_2$ are not.

Before we move on to illustrating the type system, we must discuss one
last issue that has to do with subtyping. Any type system with
subtyping normally allows to use a value having type $\TypeT$ where a
value having type $\TypeS$ with $\TypeT \subt \TypeS$ is expected.
For example, in the $\mathrm{MAP}$ process we have silently made the
assumption that the value $\VarX$ received with the
$\mathtt{Data}$-tagged message had \emph{exactly} the (finite-weight)
type with which $\tvarA$ has been instantiated while in fact $\VarX$
might have a \emph{smaller} type. Therefore, the restrictions we have
designed work provided that, if $\TypeT \subt \TypeS$ and $\TypeS$ is
finite-weight, then $\TypeT$ is finite-weight as well.
This is indeed the case, and in fact we can express an even stronger
correspondence between weights and subtyping:

\begin{proposition}
\label{prop:subt_weight}
$\TypeT \subt \TypeS$ implies $\xweight\BoundContext\TypeT \leq
\xweight\BoundContext\TypeS$.
\end{proposition}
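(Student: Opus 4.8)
The plan is to reduce the statement to endpoint types and then prove it by a subset argument on weight bounds. Since $\xweight\BoundContext{\Qualifier~\SessionType} = \xweight\BoundContext\SessionType$ and, by Definition~\ref{def:subt}, $\TypeT \subt \TypeS$ entails $\SessionTypeT \subt \SessionTypeS$ on the underlying endpoint types (the qualifier comparison $\Qualifier \leq \Qualifier'$ being irrelevant to weights), it suffices to establish that $\SessionTypeT \subt \SessionTypeS$ implies $\xweight\BoundContext\SessionTypeT \leq \xweight\BoundContext\SessionTypeS$ for all endpoint types. Recalling that $\xweight\BoundContext\SessionType = \min\{ n \in \natset \mid \BoundContext \vdash \SessionType \wbound n \}$, I would derive this inequality from the set inclusion $\{ n \mid \BoundContext \vdash \SessionTypeS \wbound n \} \subseteq \{ n \mid \BoundContext \vdash \SessionTypeT \wbound n \}$: taking minima reverses inclusion, and the convention $\min\emptyset = \infty$ handles the case where $\SessionTypeS$ admits no finite bound.

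The heart of the proof is therefore the claim that, for every $n$, if $\SessionTypeT \subt \SessionTypeS$ and $\BoundContext \vdash \SessionTypeS \wbound n$, then $\BoundContext \vdash \SessionTypeT \wbound n$. To prove it coinductively I would exhibit the witnessing relation
\[
  \wrel = \{ (\BoundContext, \SessionTypeT, n) \mid \exists \SessionTypeS : \SessionTypeT \subt \SessionTypeS \text{ and } \BoundContext \vdash \SessionTypeS \wbound n \}
\]
and verify that it is a coinductive weight bound. So fix $(\BoundContext, \SessionTypeT, n) \in \wrel$ with witness $\SessionTypeS$, let $\wrel_0$ be a coinductive weight bound containing $(\BoundContext, \SessionTypeS, n)$, and inspect the clause of Definition~\ref{def:subt} justifying $\SessionTypeT \subt \SessionTypeS$ (working, as usual, up to unfolding of recursive types, so that both sides expose a top-level constructor).

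The degenerate cases are immediate: if $\SessionTypeT = \SessionEnd$ or $\SessionTypeT = \tvar$ then $\SessionTypeS$ coincides with it, and $\SessionTypeT$ fits the first or second clause of Definition~\ref{def:type_weight} directly; in the variable case, membership of $(\BoundContext, \tvar, n)$ in the coinductive weight bound $\wrel_0$ forces $\tvar \in \BoundContext$, as required. If $\SessionTypeT$ is an internal choice, it satisfies the third clause of Definition~\ref{def:type_weight} \emph{unconditionally}, so $(\BoundContext, \SessionTypeT, n)$ is admissible with no appeal to $\SessionTypeS$ at all. The substance lies in the external-choice case, where $\SessionTypeT = \ExternalChoice{\tmsg{\Tag_i}{\tvar_i}{\TypeT_i}.\SessionTypeT_i}_{i\in I}$ and $\SessionTypeS = \ExternalChoice{\tmsg{\Tag_i}{\tvar_i}{\TypeS_i}.\SessionTypeS_i}_{i\in J}$ with $I \subseteq J$. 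Since $\SessionTypeS$ is an external choice, its weight bound $n$ must have been derived through the fourth clause, which supplies $n > 0$, $\tvar_i \notin \BoundContext$, and the bounds $\BoundContext \vdash \cdot \wbound (n{-}1)$ on each argument's endpoint type and $\BoundContext \vdash \SessionTypeS_i \wbound n$ on each continuation, for all $i \in J$. Restricting to $i \in I$ and invoking the componentwise subtyping $(\TypeT_i,\TypeS_i) \in \srel$ and $(\SessionTypeT_i,\SessionTypeS_i) \in \srel$ guaranteed by Definition~\ref{def:subt}, each continuation and (qualifier-stripped) argument of $\SessionTypeT$ is a subtype of the corresponding component of $\SessionTypeS$ whose $n$- or $(n{-}1)$-bound is already known; by the very definition of $\wrel$ these pairs then lie in $\wrel$, which is precisely what the fourth clause demands of $\SessionTypeT$.

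The main obstacle will be the bookkeeping in this last case: one must use the index inclusion $I \subseteq J$ in the right direction, so that every side condition required of $\SessionTypeT$ is among those the fourth clause already grants for $\SessionTypeS$; one must confirm that the bound variables $\tvar_i$ of the two choices are correctly identified (they are, by the shared indexing in Definition~\ref{def:subt} and working up to $\alpha$-renaming); and one must be sure that $\SessionTypeS$'s bound could only have come from the fourth clause, so that the decrement to $n{-}1$ on arguments and the preservation of $n$ on continuations are genuinely available to feed back into $\wrel$. Everything else is routine, and the inclusion of weight-bound sets then yields the desired inequality $\xweight\BoundContext\TypeT \leq \xweight\BoundContext\TypeS$.
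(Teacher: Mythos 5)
Your proposal is correct and follows essentially the same route as the paper: the paper's proof consists precisely of exhibiting the relation $\wrel = \{ (\BoundContext, \SessionTypeT, n) \mid \exists\SessionTypeS: \SessionTypeT \subt \SessionTypeS \land \BoundContext \vdash \SessionTypeS \wbound n \}$ (together with its clause for qualified types, which you instead dispatch by stripping qualifiers up front) and asserting it is a coinductive weight bound. Your case analysis simply fills in the verification that the paper leaves as "easy to show," and it does so correctly, including the key external-choice case where $I \subseteq J$ is used in the right direction.
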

\begin{proof}
  It is easy to show that ${\wrel} = \{ (\BoundContext, \SessionTypeT,
  n) \mid \exists\SessionTypeS: \BoundContext \vdash \SessionTypeT
  \subt \SessionTypeS \land \wbb\BoundContext\SessionTypeS{n} \} \cup
  \{ (\BoundContext, \TypeT, n) \mid \exists \TypeS: \BoundContext
  \vdash \TypeT \subt \TypeS \land \wbb\BoundContext\TypeS{n} \}$ is a
  coinductive weight bound.
\end{proof}
  
\subsection{Typing the Heap}

The heap plays a primary role because inter-process communication
utterly relies on heap-allocated structures; also, most properties of
well-behaved processes are direct consequences of related properties
of the heap.
Therefore, just as we will check well typedness of a process
$\Process$ with respect to a type environment that associates the
pointers occurring in $\Process$ with the corresponding types, we will
also need to check that the heap is consistent with respect to the
same environment. This leads to a notion of well-typed heap that we
develop in this section.
The mere fact that we have this notion does not mean that we need to
type-check the heap at runtime, because well-typed processes will only
create well-typed heaps and the empty heap will be trivially well
typed.
We shall express well-typedness of a heap $\Memory$ with respect to a
pair $\Context_0;\Context$ of type environments where $\Context$
contains the type of unrestricted pointers and the type of the
\emph{roots} of $\Memory$ (the pointers that are not referenced by any
other structure allocated on the heap), while $\Context_0$ contains
the type of the pointers to allocated structures that are reachable
from the roots of $\Memory$.

Among the properties that a well-typed heap must enjoy is the
complementarity between the endpoint types associated with peer
endpoints. This notion of complementarity does not coincide with
duality because of the communication model that we have adopted, which
is asynchronous: since messages can accumulate in the queue of an
endpoint before they are received, the types of peer endpoints can be
misaligned. The two peers are guaranteed to have dual types only when
both their queues are empty. In general, we need to compute the actual
endpoint type of an endpoint by taking into account the messages in
its queue.
To this end we introduce a $\tail(\cdot,\cdot)$ function for endpoint
types such that
\[
\tail(\SessionTypeT, \xmsg\Tag\SessionTypeS\TypeS) = \SessionTypeT'
\]
indicates that a message with tag $\Tag$, type argument
$\SessionTypeS$, and argument of type $\TypeS$ can be received from an
endpoint with type $\SessionTypeT$ which can be used according to type
$\SessionTypeT'$ thereafter. The function is defined by the rule:
\[
\inferrule{
  k\in I
  \\
  \TypeS \subt \TypeT_k\subst{\SessionTypeS}{\tvar_k}
}{
  \tail(\ExternalChoice{\tmsg{\Tag_i}{\tvar_i}{\Type_i}.\SessionTypeT_i}_{i\in I},
        \xmsg{\Tag_k}\SessionTypeS\TypeS)
  =
  \SessionTypeT_k\subst{\SessionTypeS}{\tvar_k}
}
\]

Note that $\tail(\SessionTypeT, \xmsg\Tag\SessionTypeS\TypeS)$ is
undefined when $\SessionTypeT = \SessionEnd$ or $\SessionTypeT$ is an
internal choice. This is consistent with the observation that it is
not possible to receive messages from endpoints having these types.
We extend $\tail(\cdot,\cdot)$ to possibly empty sequences of message
specifications thus:
\[
\begin{array}{r@{~}c@{~}l}
  \tail(\SessionTypeT, \varepsilon) & = & \SessionTypeT
  \\
  \tail(\SessionTypeT, \tmsg{\Tag_1}{\SessionTypeS_1}{\TypeS_1}\cdots\tmsg{\Tag_n}{\SessionTypeS_n}{\TypeS_n})
  & = &
  \tail(\tail(\SessionTypeT,
  \tmsg{\Tag_1}{\SessionTypeS_1}{\TypeS_1}),
  \tmsg{\Tag_2}{\SessionTypeS_2}{\TypeS_2}\cdots\tmsg{\Tag_n}{\SessionTypeS_n}{\TypeS_n})
\end{array}
\]

We now have all the notions to express the well-typedness of a heap
$\Memory$ with respect to a pair $\Context_0; \Context$ of type
environments.
A \emph{type environment} is a finite map $\Context = \{ \Name_i :
\Qualifier_i~\SessionType_i \}_{i\in I}$ from names to types. We adopt
the following notation regarding type environments:
\begin{iteMize}{$\bullet$}
\item We write $\dom(\Context)$ for the domain of $\Context$, namely
  the set $\{ \Name_i \mid i \in I \}$;

\item we write $\Context,\Context'$ for the union of $\Context$ and
  $\Context'$ when $\dom(\Context) \cap \dom(\Context') = \emptyset$;

\item we write $\Qualifier(\Context)$ if $\Qualifier = \Qualifier_i$
  for every $i\in I$ and we say that $\Context$ is \emph{linear} if
  $\qlin(\Context)$ and \emph{unrestricted} if $\qun(\Context)$;

\item we define the $\Qualifier$-\emph{restriction} of $\Context$ as
  $\Context|_\Qualifier = \{ \Name_i : \Qualifier~\SessionType_i \mid
  i\in I \land \Qualifier_i = \Qualifier \}$;

\item finally, we write $\Context \vdash \Name : \Type$ if
  $\Context(\Name) = \Type$.
\end{iteMize}

\begin{definition}[well-typed heap]
\label{def:wth}
Let $\qlin(\Context_0)$ and $\dom(\Context_0) \cap \dom(\Context) =
\emptyset$ where every endpoint type in $\Context_0, \Context$ is well
formed. We write $\Context_0;\Context \vdash \Memory$ if all of the
following conditions hold:
\begin{enumerate}[(1)]
\item For every $\PointerA \mapsto [\PointerB, \Queue] \in \Memory$ we
  have $\PointerB \mapsto [\PointerA, \Queue'] \in \Memory$ and either
  $\PointerA = \PointerB$ or $\Queue = \EmptyQueue$ or $\Queue' =
  \EmptyQueue$.

\item For every
  $\PointerA \mapsto [\PointerB, \EmptyQueue] \in
  \Memory$
  and
  $\PointerB \mapsto [\PointerA,
  \xmsg{\Tag_1}{\SessionTypeS_1}{\Value_1}::\cdots::\xmsg{\Tag_n}{\SessionTypeS_n}{\Value_n}]
  \in \Memory$
with $\PointerA \ne \PointerB$ we have
\[
\co\SessionTypeT
=
\tail(\SessionTypeS,
      \tmsg{\Tag_1}{\SessionTypeS_1}{\TypeS_1}\cdots\tmsg{\Tag_n}{\SessionTypeS_n}{\TypeS_n})
\]
where $\Context_0,\Context \vdash \PointerA : \qlin~\SessionTypeT$ and
$\Context_0,\Context \vdash \PointerB : \qlin~\SessionTypeS$ and
$\Context_0,\Context \vdash \Value_i : \TypeS_i$ and
$\max\{\weight{\SessionTypeS_i}, \weight{\TypeS_i}\} < \infty$ for
$1\le i\le n$.

\item For every $\Pointer \mapsto [\Pointer,
  \xmsg{\Tag_1}{\SessionTypeS_1}{\Value_1}::\cdots::\xmsg{\Tag_n}{\SessionTypeS_n}{\Value_n}] \in \Memory$
  we have
\[
\co\SessionTypeT
=
\tail(\SessionTypeS,
      \tmsg{\Tag_1}{\SessionTypeS_1}{\TypeS_1}\cdots\tmsg{\Tag_n}{\SessionTypeS_n}{\TypeS_n})
\]
where $\Context_0,\Context \vdash \shared\Pointer :
\qun~\SessionTypeT$ and $\Context_0,\Context \vdash \Pointer :
\qlin~\SessionTypeS$ and $\Context_0,\Context \vdash \Value_i :
\TypeS_i$ and $\max\{\weight{\SessionTypeS_i}, \weight{\TypeS_i}\} <
\infty$ for $1\le i\le n$.
  
\item $\dom(\Memory) = \dom(\Context_0,\Context|_\qlin) =
  \reachable{\dom(\Context)}\Memory$;

\item $\reachable{\{\PointerA\}}\Memory \cap
  \reachable{\{\PointerB\}}\Memory = \emptyset$ for every
  $\PointerA,\PointerB \in \dom(\Context)$ with
  $\PointerA\ne\PointerB$.
\end{enumerate}
\end{definition}

Condition~(1) requires that in a well-typed heap every endpoint comes
along with its peer and that at least one of the queues of peer
endpoints be empty. This invariant is ensured by duality, since a
well-typed process cannot send messages on an endpoint until it has
read all the pending messages from the corresponding queue.
Condition~(2) requires that the endpoint types of peer endpoints are
dual. More precisely, for every endpoint $\PointerA$ with an empty
queue, the dual $\co\SessionTypeT$ of its type coincides with the
residual $\tail(\SessionTypeS,
\xmsg{\Tag_1}{\SessionTypeS_1}{\TypeS_1}\cdots\xmsg{\Tag_n}{\SessionTypeS_n}{\TypeS_n})$
of the peer's type $\SessionTypeS$. Additionally, every
$\SessionTypeS_i$ and $\TypeS_i$ has finite weight.
Condition~(3) is similar to condition~(2), but deals with unrestricted
endpoints. The only difference is that $\PointerA$ has no peer
endpoint, and the (unrestricted) dual endpoint type is associated
instead with $\shared\PointerA$.
Condition~(4) states that the type environment $\Context_0,\Context$
must specify a type for all of the allocated objects in the heap and,
in addition, every object (located at) $\PointerA$ in the heap must be
reachable from a root $\PointerB \in \dom(\Context)$.
Finally, condition~(5) requires the uniqueness of the root for every
allocated object.
Overall, since the roots will be distributed linearly to the processes
of the system, conditions~(4) and~(5) guarantee the ownership
invariant, namely that every allocated object belongs to one and only
one process.

\subsection{Typing Processes}

First of all we define an operation on type environments to add new
associations:
\[
\begin{array}{rcl}
  \Context + \Name : \Type =
  \begin{cases}
    \Context
    & \text{if $\Context \vdash \Name : \Type$ and $\qun(\Type)$} \\
    \Context, \Name : \Type
    & \text{if $\Name \not\in \dom(\Context)$} \\
    \text{undefined} & \text{otherwise}
  \end{cases}
\end{array}
\]

In plain words, an association $\Name : \Type$ where $\Type$ is linear
can be added to $\Context$ only if $\Name$ does not already occur in
$\Context$. An association $\Name : \Type$ where $\Type$ is
unrestricted can be added to $\Context$ in two cases: either $\Name$
does not occur in $\Context$, in which case the association is simply
added, or the \emph{same} association already occurs in $\Context$, in
which case the operation has no effect on the environment. In all the
other cases the result is undefined.
We generalize $+$ to pairs of arbitrary environments $\Context +
\Context'$ in the natural way.

\begin{table}
\caption{\label{tab:typing_processes}\strut Typing rules for processes.}
\framebox[\textwidth]{
\begin{math}
\displaystyle
\begin{array}{@{}c@{}}
  \inferrule[\rulename{T-Idle}]{
    \qun(\Context)
  }{
    \RecContext; \BoundContext; \Context \vdash \idle
  }
  \qquad
  \inferrule[\rulename{T-Close}]{
    \qun(\Context)
  }{
    \RecContext; \BoundContext;
    \Context, \Name : \qlin~\SessionEnd \vdash \closeChannel(\Name)
  }
  \\\\
  \inferrule[\rulename{T-Open Linear Channel}]{
    \BoundContext \wfdash \SessionTypeT
    \\
    \RecContext;
    \BoundContext;
    \Context, \PointerA : \qlin~\SessionTypeT, \PointerB : \qlin~\co\SessionTypeT
    \vdash
    \Process
  }{
    \RecContext; \BoundContext; \Context
    \vdash
    \openChannel(\PointerA:\SessionTypeT, \PointerB:\co\SessionTypeT).
    \Process
  }
  \qquad
  \inferrule[\rulename{T-Open Unrestricted Channel}]{
%     \SessionTypeT =
%     \ExternalChoice{\tmsg{\Tag_i}{\tvarA_i}{\TypeT_i}.\SessionTypeT}
%     \\
    \BoundContext \wfdash \SessionTypeT
    \\
    \RecContext;
    \BoundContext;
    \Context,
    \Pointer : \qlin~\SessionTypeT,
    \shared\Pointer : \qun~\co\SessionTypeT
    \vdash
    \Process
  }{
    \RecContext; \BoundContext; \Context
    \vdash
    \openChannel(\Pointer: \SessionTypeT).\Process
  }
  \\\\
  \inferrule[\rulename{T-Send}]{
    \BoundContext \wfdash \SessionTypeS
    \\
    k\in I
    \\
    \TypeS \subt \TypeT_k\subst{\SessionTypeS}{\tvar_k}
    \\
    \max\{\xweight{\BoundContext}{\SessionTypeS}, \xweight{\BoundContext}{\TypeS}\} < \infty
    \\
    \RecContext; \BoundContext;
    \Context, \NameA : \Qualifier~\SessionTypeT_k\subst{\SessionTypeS}{\tvar_k}
    \vdash
    \Process
  }{
    \RecContext; \BoundContext;
    (\Context, \NameA : \Qualifier~\InternalChoice{\tmsg{\Tag_i}{\tvar_i}{\Type_i}.\SessionTypeT_i}_{i\in I}) + \NameB : \TypeS
    \vdash
    \xsend\NameA{\Tag_k}{\SessionTypeS}{\NameB}.\Process
  }
  \\\\
  \inferrule[\rulename{T-Receive}]{
    \BoundContext \wfdash \TypeT_i~{}^{(i\in I)}
    \\
    \TypeS_i \subt \TypeT_i~{}^{(i\in I)}
    \\
    \RecContext;
    \BoundContext, \tvar_i;
    \Context, \NameA : \qlin~\SessionType_i, \Var_i : \TypeT_i
    \vdash
    \Process_i
    ~{}^{(i\in I)}
  }{
    \textstyle
    \RecContext;
    \BoundContext;
    \Context,
    \NameA : \qlin~\ExternalChoice{\tmsg{\Tag_i}{\tvar_i}{\TypeS_i}.\SessionTypeT_i}_{i\in I}
    \vdash
    \sum_{i\in I \cup J} \xreceive\NameA{\Tag_i}{\tvar_i}{\Var_i:\TypeT_i}.\Process_i
  }
  \\\\
%   \inferrule[\rulename{T-Send Shared}]{
%     \weight{\TypeS\subst{\TypeT}{\tvar}} < \infty
%     \\
%     \RecContext; \BoundContext;
%     \Context
%     \vdash
%     \Process
%   }{
%     \RecContext; \BoundContext;
%     \Context
%     + \NameA : \tmsg{\Tag}{\tvar}{\TypeT}
%     + \NameB : \TypeS\subst{\TypeT}{\tvar}
%     \vdash
%     \xsend\NameA{\Tag}{\TypeT}{\NameB}.\Process
%   }
%   \qquad
  \inferrule[\rulename{T-Choice}]{
    \RecContext; \BoundContext; \Context \vdash \ProcessP
    \\
    \RecContext; \BoundContext; \Context \vdash \ProcessQ
  }{
    \RecContext; \BoundContext; \Context \vdash \ProcessP \choice \ProcessQ
  }
  \qquad
  \inferrule[\rulename{T-Par}]{
    \RecContext; \BoundContext; \Context_1 \vdash \ProcessP
    \\
    \RecContext; \BoundContext; \Context_2 \vdash \ProcessQ
  }{
    \RecContext; \BoundContext; \Context_1 + \Context_2 \vdash \ProcessP \parop \ProcessQ
  }
  \\\\
  \inferrule[\rulename{T-Rec}]{
    \RecContext, \{ \RecVar \mapsto (\BoundContext; \Context) \};
    \BoundContext;
    \Context \vdash \Process
    \\
    \dom(\Context|_\qlin) \subseteq \fn(\Process)
  }{
    \RecContext;
    \BoundContext;
    \Context
    \vdash \rec\RecVar.\Process
  }
  \qquad
  \inferrule[\rulename{T-Var}]{
    \qun(\Context')
  }{
    \RecContext, \{ \RecVar \mapsto (\BoundContext; \Context) \};
    \BoundContext, \BoundContext';
    \Context, \Context'
    \vdash
    \RecVar
  }
\end{array}
\end{math}
}
\end{table}

The typing rules for processes are inductively defined in
Table~\ref{tab:typing_processes}. Judgments have the form
$\RecContext; \BoundContext; \Context \vdash \Process$ and state that
process $\Process$ is well typed under the specified environments.
The additional environment $\RecContext$ is a map from process
variables to pairs $(\BoundContext; \Context)$ and is used for typing
recursive processes.
We describe the typing rules in the following paragraphs:
\begin{iteMize}{$\bullet$}
\item Rule~\rulename{T-Idle} states that the idle process is well
  typed in every unrestricted type environment. Since we impose a
  correspondence between the free names of a process and the roots of
  the heap, this rule states that the terminated process has no leaks.

\item Rule~\rulename{T-Close} states that a process
  $\closeChannel(\Name)$ is well typed provided that $\Name$
  corresponds to an endpoint with type $\SessionEnd$, on which no
  further interaction is possible. Also, the remaining type
  environment must be unrestricted.

\item Rule~\rulename{T-Open Linear Channel} deals with the creation of
  a new linear channel, which is visible in the continuation process
  as two peer endpoints typed by dual endpoint types.
  The premise $\BoundContext \wfdash \SessionTypeT$ requires
  $\SessionTypeT$ to be well formed with respect to the type variables
  in $\BoundContext$.
  In addition, the rule implicitly requires that no type variable, not
  even those in $\BoundContext$, can occur at the top level in
  $\SessionTypeT$, for otherwise its dual $\co\SessionTypeT$ would be
  undefined.

\item Rule~\rulename{T-Open Unrestricted Channel} deals with the
  creation of a new unrestricted channel, which is accessible in the
  continuation process by means of two names: $\Pointer$ is the linear
  pointer used for receiving messages while $\shared\Pointer$ is the
  unrestricted pointer used for sending messages. Note that
  $\co\SessionTypeT$ is qualified by `$\qun$', therefore it must be
  $\co\SessionTypeT =
  \InternalChoice{\tmsg{\Tag_i}{\tvar_i}{\Type_i}.\co\SessionTypeT}_{i\in
    I}$ and $\SessionTypeT =
  \ExternalChoice{\tmsg{\Tag_i}{\tvar_i}{\Type_i}.\SessionTypeT}_{i\in
    I}$.

\item Rule~\rulename{T-Send} states that a process
  $\xsend\NameA\Tag\SessionTypeS\NameB.\Process$ is well typed if
  $\NameA$ (which can be either linear or unrestricted according to
  $\Qualifier$) is associated with an endpoint type $\SessionTypeT$
  that permits the output of $\Tag$-tagged messages (second
  premise). The endpoint type $\SessionTypeS$ instantiates the type
  argument of the message, while the type of the argument $\NameB$
  must be a subtype of the expected type in the endpoint type where
  $\tvar$ has been instantiated with $\SessionTypeS$ (third premise).
  Both $\SessionTypeS$ and $\TypeS$ must be finite-weight (fourth
  premise).  Since the peer of $\NameA$ must be able to accept a
  message with an argument of type $\TypeS$, its weight will be
  strictly larger than that of $\TypeS$. This is to make sure that the
  the output operation does not create any cycle in the heap. Observe
  that the weights of $\SessionTypeS$ and $\TypeS$ are computed with
  respect to the environment $\BoundContext$, containing all the free
  type variables that can possibly occur in $\SessionTypeS$ and
  $\TypeS$.
  Finally, the continuation $\Process$ must be well typed in a
  suitable type environment where the endpoint $\NameA$ is typed
  according to a properly instantiated continuation of $\SessionTypeT$
  (fifth premise).
  Beware of the use of $+$ in the type environments of the rule: if
  $\TypeS$ is linear, then $\NameB$ is no longer accessible in the
  continuation $\Process$; if $\TypeS$ is unrestricted, then $\NameB$
  may or may not be available in $\Process$ depending on whether
  $\Process$ uses $\NameB$ again or not.
  Note also that every endpoint type occurring in the process is
  verified to be well formed with respect to $\BoundContext$ (first
  premise).

% \item Rule~\rulename{T-Send Unrestricted} is very similar to
%   \rulename{T-Send Linear}, except that it deals with unrestricted
%   endpoints. The most notable difference is that the unrestricted
%   endpoint type $\SessionTypeT$ associated with $\Name$ must have a
%   continuation that is $\SessionTypeT$ itself. As we have explained
%   for rule~\rulename{T-Open Unrestricted Channel} this invariance is
%   the characteristic property of classical (non-behavioral) types.

\item Rule~\rulename{T-Receive} deals with inputs: a process waiting
  for a message from an endpoint $\NameA : \Qualifier~\SessionTypeT$
  is well typed if it can deal with at least all of the message tags
  in the topmost inputs of $\SessionTypeT$. The continuation processes
  may use the endpoint $\NameA$ according to the endpoint type
  $\SessionType_i$ and can access the message argument $\Var_i$. The
  context $\BoundContext$ is enriched with the type variable $\tvar_i$
  denoting the fact that $\Process_i$ does not know the exact type
  with which $\tvar_i$ has been instantiated.
  Like for the previous typing rule, there is an explicit premise
  demanding well-formedness of the types occurring in the process.

\item Rules~\rulename{T-Choice} and~\rulename{T-Par} are standard. In
  the latter, the type environment is split into two environments to
  type the processes being composed. According to the definition of
  $+$, $\Context_1$ and $\Context_2$ can only share associations with
  unrestricted types and, if they do, the associations in $\Context_1$
  and in $\Context_2$ for the same name must be equal.

\item Rule~\rulename{T-Rec} is a nearly standard rule for recursive
  processes, except for the premise $\dom(\Context|_\qlin) \subseteq
  \fn(\Process)$ that enforces a weak form of contractivity in
  processes. It states that $\rec\RecVar.\Process$ is well typed under
  $\Context$ only if $\Process$ actually uses the linear names in
  $\dom(\Context)$. Normally, divergent processes such as
  $\rec\RecVar.\RecVar$ are well typed in every type environment.  If
  this were the case, however, the process $\openChannel(\PointerA :
  \SessionTypeT, \PointerB : \co\SessionTypeT).\rec\RecVar.\RecVar$,
  which leaks $\PointerA$ and $\PointerB$, would be well typed.

\item We conclude with the familiar rule~\rulename{T-Var} that deals
  with recursion variables. The rule takes into account the
  possibility that new type variables and (unrestricted) associations
  have accumulated in $\BoundContext$ and $\Context$ since the binding
  of $\RecVar$.
\end{iteMize}

\noindent Systems $\system\Memory\Process$ are well typed if so are their
components:

\begin{definition}[well-typed system]
  We write $\Context_0;\Context \vdash (\Memory;\Process)$ if
  $\Context_0;\Context \vdash \Memory$ and $\Context \vdash \Process$.
\end{definition}

Let us present the two main results about our framework:
well-typedness is preserved by reduction, and well-typed processes are
well behaved.  Subject reduction takes into account the possibility
that types in the environment may change as the process reduces, which
is common in behavioral type theories.

\begin{theorem}[subject reduction]
\label{thm:sr}
Let $\Context_0; \Context \vdash (\Memory; \Process)$ and $(\Memory;
\Process) \red{} (\Memory'; \Process')$. Then $\Context_0'; \Context'
\vdash (\Memory'; \Process')$ for some $\Context_0'$ and $\Context'$.
\end{theorem}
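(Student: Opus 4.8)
The plan is to argue by induction on the derivation of $(\Memory; \Process) \red{} (\Memory'; \Process')$, with a case analysis on the last reduction rule and, in each base case, an inversion of the typing derivation of $\Process$ (whose shape is fixed by the rule). Before the induction I would prove the usual auxiliary lemmas. Three are substitution lemmas: (i) replacing a variable $\Var : \TypeT$ by a value of a subtype of $\TypeT$ preserves typing; (ii) substituting a well-formed endpoint type for an endpoint-type variable preserves typing, which rests on items~(3) and~(4) of Proposition~\ref{prop:wf_types} so that both well-formedness and the commutation of duality with substitution are maintained; and (iii) substituting $\rec\RecVar.\Process$ for $\RecVar$ preserves typing, using the association recorded in $\RecContext$. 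A fourth lemma is \emph{subject congruence}, that $\Context \vdash \Process$ and $\Process \equiv \ProcessQ$ imply $\Context \vdash \ProcessQ$, which discharges rule~\rulename{R-Struct}. The last ingredient is a frame-style strengthening of the statement in which the root environment is split as $\Context_1 + \Context_2$, the process is typed only against $\Context_1$, and $\Context_2$ records the roots contributed by the sibling components of a parallel composition; the claim is that a step by the $\Context_1$-component preserves the linear part of the frame $\Context_2$ untouched. This generalisation is what makes rule~\rulename{R-Par} go through.

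The structural and creation cases are routine. For~\rulename{R-Choice Left}/\rulename{R-Choice Right} I keep $\Context$ and $\Context_0$ unchanged, since~\rulename{T-Choice} types both branches identically. For the two channel-creation rules I extend $\Context$ with the fresh roots carrying dual endpoint types (and, in the unrestricted case, the bar-pointer with the $\qun$ dual), leaving $\Context_0$ unchanged; the new cells have empty queues, so conditions~(1)--(3) of Definition~\ref{def:wth} hold vacuously and the reachability equations~(4)--(5) are preserved. Rule~\rulename{R-Receive} is handled by inverting~\rulename{T-Receive}, selecting the branch $k$, and applying substitution lemmas~(i) and~(ii) to type $\Process_k\subst{\SessionType}{\tvar_k}\subst{\Value}{\Var_k}$: the dequeued value migrates from the internal environment $\Context_0$ to the roots $\Context$ (it is now owned by the receiver), the type of $\NameA$ shrinks from the external choice to its continuation $\SessionType_k$, and the $\tail$ side condition of Definition~\ref{def:wth}(2) furnishes exactly the subtyping needed to match the runtime annotation $\SessionType$ against the expected argument type. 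Rule~\rulename{R-Rec} follows immediately from substitution lemma~(iii).

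The delicate cases are the two \textbf{send} rules, and I expect~\rulename{R-Send Linear} to be the main obstacle. Inverting~\rulename{T-Send} yields the finite-weight premise $\max\{\xweight{\BoundContext}{\SessionTypeS}, \xweight{\BoundContext}{\TypeS}\} < \infty$ together with $\TypeS \subt \TypeT_k\subst{\SessionTypeS}{\tvar_k}$. After the step the sent value $\Value$ leaves the sender's roots and is appended to the queue of $\PointerA$'s peer, so I move $\Value$'s association from $\Context$ into $\Context_0$ and replace the type of $\NameA$ by its continuation. Re-establishing condition~(2) of Definition~\ref{def:wth} reduces to verifying the new $\tail$ equation for the lengthened queue, where the annotation $\SessionTypeS$ recorded in the message and Proposition~\ref{prop:subt_dual} are used. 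The genuinely non-routine point is preserving conditions~(4) and~(5): I must show that enqueuing $\Value$ introduces \emph{no cycle} in $\Memory'$, for a cycle is exactly the leak configuration exhibited by the process of~\eqref{eq:micidiale}, where $\dom(\Memory') \ne \reachable{\dom(\Context')}{\Memory'}$ and ownership is lost. Finiteness of $\weight{\TypeS}$ bounds the length of every chain of pointers rooted in $\Value$'s queue; since a receive-state endpoint carries strictly positive weight and, by the output discipline, the peer's weight strictly exceeds that of the argument, no such chain can close into a loop. Proposition~\ref{prop:subt_weight} ensures this bound survives the subtyping $\TypeS \subt \TypeT_k\subst{\SessionTypeS}{\tvar_k}$, so the argument is robust under the slack that subtyping introduces. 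The unrestricted send is analogous, with condition~(3) playing the role of condition~(2).

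Finally, rule~\rulename{R-Par} is dispatched using the frame lemma: typing $\ProcessP \parop \ProcessQ$ with~\rulename{T-Par} splits the roots as $\Context_1 + \Context_2$, I apply the induction hypothesis to the reduction of $\ProcessP$ treating $\Context_2$ as a frame, and I recompose with the unchanged $\ProcessQ$. The only point needing care is that a send performed by $\ProcessP$ may enqueue a message into an endpoint whose peer is a root of $\Context_2$; this is harmless because the step appends to that endpoint's queue without altering its endpoint type, so the linear part of the frame is genuinely preserved and~\rulename{T-Par} reapplies verbatim. Gathering the cases yields $\Context_0'; \Context' \vdash (\Memory'; \Process')$, with the weight-based acyclicity argument of the send cases being the heart of the proof.
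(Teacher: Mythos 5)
Your proposal is correct and follows essentially the same route as the paper: the authors prove the same auxiliary lemmas (subject congruence, type/value/process substitution, weakening) and then establish a framed generalization of the statement (Lemma~\ref{lem:sr}, with a linear frame $\Context_R$ for sibling parallel components and the side condition $\Context|_\qun \subseteq \Context'|_\qun$ that lets \rulename{R-Par} recompose), proving it by induction on the reduction derivation exactly as you outline. In particular, their treatment of \rulename{R-Send Linear} is your weight argument made precise: the sent value migrates from the roots into $\Context_0$, and reachability of the peer's region is preserved because $\weight{\TypeS} < \weight{\SessionTypeT_\PointerB} \leq \weight{\SessionTypeT_0}$ forces the sent pointer to differ from the root $\PointerB_0$, ruling out the cycle.
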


\begin{theorem}[safety]
\label{thm:safety}
Let $\vdash \Process$. Then $\Process$ is well behaved.
\end{theorem}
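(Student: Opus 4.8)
The plan is to derive well-behavedness (Definition~\ref{def:wb}) from well-typedness, letting subject reduction carry typing along reductions. The hypothesis $\vdash\Process$ says that $\Process$ is typed under empty environments, and since the empty heap is trivially well typed we have $\emptyset;\emptyset \vdash \system\EmptyMemory\Process$. Given any $\system\EmptyMemory\Process \wred{} \system\Memory\ProcessQ$, I would apply Theorem~\ref{thm:sr} once per reduction step to obtain environments with $\Context_0;\Context \vdash \Memory$ and $\Context \vdash \ProcessQ$. This reduces the theorem to a configuration-local claim: \emph{every} well-typed system satisfies the three conditions of Definition~\ref{def:wb}, which I would then establish one at a time from this single well-typed residual.

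Two auxiliary lemmas feed the argument. First, typing should be invariant under structural congruence, so that $\ProcessQ \equiv \Process_1 \parop \Process_2$ yields, via \rulename{T-Par}, a splitting $\Context = \Context_1 + \Context_2$ with $\Context_i \vdash \Process_i$; by the definition of $+$ the linear domains $\dom(\Context_1|_\qlin)$ and $\dom(\Context_2|_\qlin)$ are disjoint. Second, for a closed well-typed process a straightforward induction on the typing derivation (the \rulename{T-Rec} case using its contractivity premise $\dom(\Context|_\qlin) \subseteq \fn(\Process)$) shows that reachability from $\dom(\Context)$ and from $\fn(\Process)$ coincide in $\Memory$; this is where restricting to configurations reachable from a closed process matters, since \rulename{R-Rec} keeps residuals free of process variables. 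Because reachable sets are always subsets of $\PointerSet$ and nothing is reachable from an unrestricted pointer, these lemmas let me pass freely between $\reachable{\dom(\Context)}\Memory$ and $\reachable{\fn(\ProcessQ)}\Memory$.

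With these in place, conditions~(1) and~(2) are short. Condition~(1), namely $\dom(\Memory) = \reachable{\fn(\ProcessQ)}\Memory$, is condition~(4) of Definition~\ref{def:wth} rewritten through the free-name lemma. Condition~(2) follows by combining the \rulename{T-Par} splitting with condition~(5) of Definition~\ref{def:wth}: the disjoint linear roots $\dom(\Context_1|_\qlin)$ and $\dom(\Context_2|_\qlin)$ have disjoint reachable regions, and these coincide with $\reachable{\fn(\Process_1)}\Memory$ and $\reachable{\fn(\Process_2)}\Memory$.

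The hard part will be condition~(3), the error-freedom (progress) component. Taking a stuck $\Process_1$ with no unguarded parallel composition, I would do a case analysis on its top-level constructor and use inversion of the typing rules to rule out every shape except $\idle$, $\closeChannel(\Pointer)$, and a blocked receive. The channel-opening, conditional ($\ProcessP \choice \ProcessQ$), and recursion ($\rec\RecVar.\Process$) forms always reduce, and a bare process variable cannot appear because \rulename{R-Rec} substitutes the recursive body into its residual. For a send, typing forces the target to be an endpoint offering the relevant internal choice, and condition~(4) of Definition~\ref{def:wth} guarantees that endpoint is allocated and of the right (linear or unrestricted) kind, so one of the send rules fires; in particular the type system excludes the pathological ``send on $\shared\Pointer$ while $\Pointer$ names a linear channel'' situation, since such a $\shared\Pointer$ is never in scope. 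The genuinely delicate subcase is the receive on $\NameA : \qlin~\ExternalChoice{\tmsg{\Tag_i}{\tvar_i}{\TypeS_i}.\SessionTypeT_i}_{i\in I}$: if $\NameA$'s queue were non-empty, then condition~(1) of Definition~\ref{def:wth} makes the peer's queue empty, and condition~(2) gives $\co\SessionTypeS = \tail(\SessionTypeT, \tmsg{\Tag_1}{\SessionTypeS_1}{\TypeS_1}\cdots)$ with $\SessionTypeT$ the external choice of $\NameA$; since $\tail$ is defined only when the leading tag lies in the choice's index set, $\Tag_1 \in I$, so \rulename{R-Receive} fires and $\Process_1$ is not stuck. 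Hence a stuck receive must sit on an empty queue, and the undefinedness of $\tail(\SessionEnd,\cdots)$ likewise forces $\closeChannel(\Name)$ to have $\Name \mapsto [\PointerB,\EmptyQueue]$, exactly as condition~(3) demands. I expect the two real obstacles to be stating the free-name/environment lemma cleanly in the presence of $+$ and unrestricted associations, and extracting the tag match in this receive subcase from the duality invariant of Definition~\ref{def:wth}; the rest is routine inversion of the rules in Table~\ref{tab:typing_processes}.
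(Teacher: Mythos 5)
Your proposal is correct and is essentially the paper's own proof: iterated subject reduction (Theorem~\ref{thm:sr}) to type the residual configuration, the two auxiliary lemmas you posit (which are exactly the paper's Lemma~\ref{lem:cong} and Proposition~\ref{prop:dom_context}), conditions~(1) and~(2) of Definition~\ref{def:wb} obtained from conditions~(4) and~(5) of Definition~\ref{def:wth} together with the \rulename{T-Par} splitting, and condition~(3) by inversion, where the heap's $\tail$/duality invariant forces a tag match on any non-empty queue (so a receive fires) and rules out $\closeChannel(\Pointer)$ on a non-empty queue. The paper treats the send and close subcases at the same level of detail as you do, so there is nothing to flag.
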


\subsection{Examples}

We conclude this section with a few extended examples: the first one
is meant to show a typing derivation; the second one presents a
scenario in which it would be natural to send around endpoints with
infinite weight, and shows a safe workaround to circumvent the
finite-weight restriction; the last example demonstrates the
expressiveness of our calculus in modeling some advanced features of
\Sing{}, namely the ability to safely share linear pointers between
several processes.

\begin{example}[forwarder]
\label{ex:forwarder}
\newcommand{\homo}{1}
\newcommand{\hetero}{2}
We illustrate a type derivation for a simple forwarder process that
receives two endpoints with dual types and forwards the stream of
$\Tag$-tagged messages coming from the first endpoint to the second
one. We have at least two ways to implement the forwarder, depending
on whether the stream is homogeneous (all the $\Tag$-tagged messages
carry an argument of the same type) or heterogeneous (different
$\Tag$-tagged messages may carry arguments of possibly different
types). Considering the latter possibility we have:
\[
\begin{array}{@{}r@{~}c@{~}l@{}}
%   \mathrm{FWD}_\homo(\PointerA) & = &
%     \xreceive\Pointer{\mathtt{Src}}\tvar{\VarX : \qlin~\SessionType_\homo}.
%     \receive\Pointer{\mathtt{Dest}}{\VarY :
%       \qlin~\co{\SessionType_\homo}}.
%     \\
%     & & (\closeChannel(\Pointer) \parop
%     \rec\RecVar.
%     \receive\VarX\Tag{\VarZ:\qlin~\tvar}.
%     \send\VarY\Tag\VarZ.\RecVar)
%     \\
    \mathrm{FWD}(\PointerA) & = &
    \receive\Pointer{\mathtt{Src}}{\VarX : \qlin~\SessionType}.
    \receive\Pointer{\mathtt{Dest}}{\VarY :
      \qlin~\co{\SessionType}}.
    \\
    & & \quad
    (\closeChannel(\Pointer) \parop
    \rec\RecVar.
    \xreceive\VarX\Tag\tvar{\VarZ:\qlin~\tvar}.
    \xsend\VarY\Tag\tvar\VarZ.\RecVar)
\end{array}
\]
where
\[
%   \SessionType_\homo =
%   \trec\tvarB.{?}\msg\Tag{\qlin~\tvarA}.\tvarB
%   \text{\qquad and\qquad}
  \SessionType =
  \trec\tvarB.{?}\xmsg\Tag\tvarA{\qlin~\tvarA}.\tvarB
  \,.
\]
\newcommand{\ETC}{[\,{\cdots}\,]}%
Below we show the derivation proving that $\mathrm{FWD}$ is
well typed. To keep the derivation's size manageable, we elide some
subprocesses with $\ETC$ and we define $\Context = \VarX :
\qlin~\SessionType, \VarY : \qlin~\co{\SessionType}$.
\begin{prooftree}
  \AxiomC{}
  \RightLabel{\rulename{T-Close}}
  \UnaryInfC{$
    \Pointer : \qlin~\SessionEnd
    \vdash
    \closeChannel(\Pointer)
  $}
  \AxiomC{}
  \RightLabel{\rulename{T-Var}}
  \UnaryInfC{$
    \{ \RecVar \mapsto (\EmptyBoundContext; \Context) \};
    \tvar;
    \Context
    \vdash    
    \RecVar
  $}
  \RightLabel{\rulename{T-Send}}
  \UnaryInfC{$
    \{ \RecVar \mapsto (\EmptyBoundContext; \Context) \};
    \tvar;
    \Context,
    \VarZ : \qlin~\tvar
    \vdash    
    \xsend\VarY\Tag\tvar\VarZ.\RecVar
  $}
  \RightLabel{\rulename{T-Receive}}
  \UnaryInfC{$
    \{ \RecVar \mapsto (\EmptyBoundContext; \Context) \};
    \EmptyBoundContext;
    \Context
    \vdash    
    \xreceive\VarX\Tag\tvar{\VarZ:\qlin~\tvar}.\ETC
  $}
  \RightLabel{\rulename{T-Rec}}
  \UnaryInfC{$
    \VarX : \qlin~\SessionType,
    \VarY : \qlin~\co{\SessionType}
    \vdash
    \rec\RecVar.\ETC
  $}
  \RightLabel{\rulename{T-Par}}
  \BinaryInfC{$
    \Pointer : \qlin~\SessionEnd,
    \VarX : \qlin~\SessionType,
    \VarY : \qlin~\co{\SessionType}
    \vdash
    \closeChannel(\Pointer) \parop
    \rec\RecVar.\ETC
  $}
  \RightLabel{\rulename{T-Receive}}
  \UnaryInfC{$
    \Pointer : \qlin~{?}\msg{\mathtt{Dest}}{\qlin~\co{\SessionType}}.
               \SessionEnd,
    \VarX : \qlin~\SessionType
    \vdash
    \receive\Pointer{\mathtt{Dest}}{\VarY : \qlin~\co{\SessionType}}.
    \ETC
  $}
  \RightLabel{\rulename{T-Receive}}
  \UnaryInfC{$
    \Pointer : \qlin~{?}\msg{\mathtt{Src}}{\qlin~\SessionType}.
               {?}\msg{\mathtt{Dest}}{\qlin~\co{\SessionType}}.
               \SessionEnd
               \vdash \mathrm{FWD}(\Pointer)
  $}
\end{prooftree}\smallskip

\noindent Observe that, by the time rule \rulename{T-Var} is applied for the
process variable $\RecVar$, a type variable $\tvar$ has accumulated
into the bound type variables context which was empty when $\RecVar$
was introduced in \rulename{T-Rec}. Therefore, it is essential for
rule \rulename{T-Var} to discharge extra type variables in the bound
type variable context for declaring this process well typed.
% %
% By contrast, no discharge is needed in the similar derivation for
% $\mathrm{FWD}_\homo$, since the bound type variables context that is
% met at $\RecVar$ is exactly the same encountered at
% $\rec\RecVar.\ETC$.
%
\eoe
\end{example}

\begin{example}[linear lists]
\label{ex:lists}
\newcommand{\List}[1]{\mathtt{List}(#1)}
\newcommand{\TagPrompt}{\mathtt{Prompt}}
\newcommand{\TagNil}{\mathtt{Nil}}
\newcommand{\TagCons}{\mathtt{Cons}}
In most of the examples we have presented so far the type of services
begins with an output action, suggesting that it is the consumers of
these services that play the first move and \emph{invoke} them by
sending a message. There are cases, in particular with the modeling of
datatypes, where it is more natural to adopt the dual point of view,
in which the reception of a message indicates the consumption of the
data type.
In this example we represent a linear list as an endpoint from which
one of two kinds of messages can be received: a $\TagNil$-tagged
message indicates that the list is empty; a $\TagCons{}$-tagged
message indicates that the list has at least one element, and the
parameters of the message are the head of the list and its tail, which
is itself a list. Reading a message from the endpoint corresponds to
deconstructing the list and the tag-based dispatching of messages
implements pattern matching.
Along these lines, the type of lists with elements of type $\tvar$
would be encoded as the endpoint type
\[
\List\tvarA =
\trec\tvarB.
({?}\msg\TagNil{}.\SessionEnd
+
{?}\msg\TagCons{\qlin~\tvarA, \qlin~\tvarB}.\SessionEnd)
\]

Note that, just as this type denotes lists of arbitrary length, the
encoding of lists in terms of messages within endpoints may yield
chains of pointers of arbitrary length because of the recursion of
$\tvarB$ through an input prefix. As a consequence we have
$\xweight{\{\tvarA\}}{\List\tvarA} = \infty$, meaning that our type
system would reject any output operation sending a list over an
endpoint. Incidentally, since a non-empty list is encoded as a
$\TagCons$-tagged message containing another list, the finite-weight
restriction on the type of message arguments would in fact prevent the
construction of any non-trivial list, rendering the type $\List\tvarA$
useless.

It is possible to fix this by requiring the consumers of the list to
signal the imminent deconstruction via a ``prompt'' message. This
corresponds to defining
\[
\List\tvarA =
\trec\tvarB.{!}\msg\TagPrompt{}.
({?}\msg\TagNil{}.\SessionEnd
+
{?}\msg\TagCons{\qlin~\tvarA, \qlin~\tvarB}.\SessionEnd)
\]

The insertion of an output action between the binding of $\tvarB$ and
its occurrence among the arguments of $\TagCons$ nullifies the weight
of $\List\tvarA$, that is $\xweight{\{\tvarA\}}{\List\tvarA} = 0$.
\REVISION{To see why this is sufficient for preventing the creation of
  cycles in the heap consider a process
\[
   \send\PointerB\TagCons{\VarX, \PointerA}.\Process
\]
where we assume that $\PointerA : \List\tvar$ and $\PointerB :
{!}\msg\TagNil{}.\SessionEnd \oplus {!}\msg\TagCons{\qlin~\tvar,
  \List\tvar})$. The intention here is to yield a leak like the one
generated by the process in~\eqref{eq:micidiale}. Note however that
the peer endpoint of $\PointerB$ must have already been used for
sending the $\TagPrompt$-tagged message, while $\PointerA$ has type
$\List\tvar$ and therefore no $\TagPrompt$-tagged message has been
sent on $\PointerA$ yet. We conclude that $\PointerA$ cannot be
$\PointerB$'s peer.}

As an example of list-manipulating function we can now define the
polymorphic consing service on channel $\PointerC$, that creates a
list from a head and a tail, thus:
\[
\mathrm{CONS}(\PointerC) =
\begin{array}[t]{@{}l@{}}
  \rec\RecVar.
  \xreceive\PointerC{\mathtt{Invoke}}\tvar{\VarX : \qlin~\SessionType}. \\
  \receive\VarX{\mathtt{Arg}}{\VarY : \qlin~\tvar}.
  \receive\VarX{\mathtt{Arg}}{\VarZ : \qlin~\List\tvar}. \\
  \openChannel(\PointerA : \List\tvar, \PointerB : \co{\List\tvar}). \\
  (\receive\PointerB\TagPrompt{}.
   \send\PointerB\TagCons{\VarY, \VarZ}.
   \closeChannel(\PointerB)
  \parop
  \send\VarX{\mathtt{res}}\PointerA.\RecVar)
\end{array}
\]
where
\[
  \SessionType =
  {?}\msg{\mathtt{Arg}}{\qlin~\tvar}.
  {?}\msg{\mathtt{Arg}}{\qlin~\List\tvar}.
  {!}\msg{\mathtt{Res}}{\qlin~\List\tvar}.
  \SessionEnd
\]

The interested reader can verify that
\[
  \PointerC : \trec\tvarB.{?}\xmsg{\mathtt{Invoke}}\tvarA{\qlin~\SessionType}.\tvarB
  \vdash
  \mathrm{CONS}(\PointerC)
\]
is derivable.
\eoe
\end{example}

\begin{example}
\label{ex:tref}
\newcommand{\acquire}{\mathit{acquire}}
\newcommand{\release}{\mathit{release}}
\newcommand{\buffer}{\mathit{buffer}}
\newcommand{\MKCELL}{\mathrm{MKCELL}}
\newcommand{\FULL}{\mathrm{FULL}}
\newcommand{\EMPTY}{\mathrm{EMPTY}}
\newcommand{\invokeT}{\mathtt{Invoke}}
\newcommand{\acquireT}{\mathtt{Acquire}}
\newcommand{\releaseT}{\mathtt{Release}}
\newcommand{\argT}{\mathtt{Arg}}
\newcommand{\okT}{\mathtt{Ok}}
\newcommand{\resT}{\mathtt{Res}}
\newcommand{\inT}{\mathtt{In}}
\newcommand{\SessionCell}{\SessionTypeT_{\mathtt{TCell}}(\tvar)}
\newcommand{\SessionBuffer}{\SessionTypeT_{\mathtt{Buffer}}(\tvar)}
\newcommand{\SessionAcquire}{\SessionTypeT_{\mathtt{Acquire}}(\tvar)}
\newcommand{\SessionRelease}{\SessionTypeT_{\mathtt{Release}}(\tvar)}

Development of the Singularity OS prototype has suggested that there
are many scenarios in which the ownership invariant, requiring that a
given object -- an endpoint -- can be owned exclusively by one sole
process at any given time, easily leads to convoluted code. For this
reason, \Sing{} provides a \Code{TCell<$\tvar$>} class that permits
the unrestricted sharing of exchange heap pointers at the expense of
some runtime checks. In practice, an instance of \Code{TCell<$\tvar$>}
acts like a 1-place buffer for a linear pointer of type $\tvar$ and
can be shared non-linearly among different processes. A process
willing to use the pointer must explicitly \emph{acquire} it, while a
process that has finished using the pointer must \emph{release}
it. The internal implementation of \Code{TCell<$\tvar$>} makes sure
that, once the pointer has been acquired, all subsequent acquisition
requests will be blocked until a release is performed. The interface
of \Code{TCell<$\tvar$>} is as follows:

\begin{SingSharp}
  class TCell<$\tvar$> {
    TCell([Claims] $\tvar$ in ExHeap);
    $\tvar$ in ExHeap Acquire();
    void Release([Claims] $\tvar$ in ExHeap);
  }
\end{SingSharp}

\begin{table}
\caption{\label{tab:supercell}\strut Modeling of a shared mutable cell.}
\framebox[\textwidth]{
\begin{math}
\displaystyle
\begin{array}{@{}r@{~}c@{~}l@{}}
  \MKCELL(\PointerA) & = &
  \xreceive\PointerA\invokeT\tvar{\VarX : \qlin~{!}\msg\resT{\qun~\co{\SessionCell}}.\SessionEnd}. \\
  & &
  \openChannel(\PointerC : \SessionCell).
  \openChannel(\buffer : \SessionBuffer). \\
  & &
  \openChannel(\acquire : \SessionAcquire).
  \openChannel(\release : \SessionRelease).
  \\
  & & \send\VarX\resT{\shared\PointerC}.(\EMPTY(\tvar,\PointerC) \parop 
       \closeChannel(\VarX) \parop \MKCELL(\PointerA))
  \\
  \EMPTY(\tvar,\PointerC) & = &
  \receive\PointerC\invokeT{\VarX : \qlin~\SessionTypeT}. \\
  & &
  \begin{array}[t]{@{}l@{~}l@{}}
    ( & \receive\Var\acquireT{}.
      \send{\shared\acquire}\inT\VarX.
      \EMPTY(\tvar,\PointerC)
    \\
    + & \receive\VarX\releaseT{}.
        \send\VarX\okT{}.
        \receive\VarX\argT{\VarY : \qlin~\tvar}. \\
      & \iteb{\mathtt{empty}(\acquire)}{
          (\send{\shared\buffer}\inT\VarY \parop \closeChannel(\VarX) \parop \FULL(\tvar,\PointerC))
        }{
          \begin{array}[t]{@{}l@{}}
            \receive\acquire\inT{\VarZ : \qlin~{!}\msg\resT{\qlin~\tvar}.\SessionEnd}.
            \send\VarZ\resT{\VarY}. \\
            (\closeChannel(\VarX) \parop \closeChannel(\VarZ) \parop \EMPTY(\tvar,\PointerC)))
          \end{array}
        }
  \end{array}
  \\
  \FULL(\tvar,\PointerC) & = &
  \receive\PointerC\invokeT{\VarX : \qlin~\SessionTypeT}. \\
  & &
  \begin{array}[t]{@{}l@{~}l@{}}
    ( & \receive\VarX\acquireT{}.
        \receive\buffer\inT{\VarY : \qlin~\tvar}.
        \send\VarX\resT\VarY. \\
      & \iteb{\mathtt{empty}(\release)}{
          (\closeChannel(\VarX) \parop \EMPTY(\tvar,\PointerC))
        }{
          \begin{array}[t]{@{}l@{}}
            \receive\release\inT{\VarZ : \qlin~{!}\msg\okT{}.{?}\msg\resT{\qlin~\tvar}.\SessionEnd}. \\
            \send\VarZ\okT{}.
            \receive\VarZ\argT{\VarY : \qlin~\tvar}.
            \send{\shared\buffer}\inT\VarY.
            \\
            (\closeChannel(\VarX) \parop \closeChannel(\VarZ) \parop \FULL(\tvar,\PointerC)))
          \end{array}
        }
    \\
    + & \receive\VarX\releaseT{}.
        \send{\shared\release}\inT\VarX.
        \FULL(\tvar,\PointerC))
  \end{array}
  \\\\
  \SessionTypeT(\tvar) & = &
  {?}\msg\acquireT{}.{!}\msg\resT{\qlin~\tvar}.\SessionEnd
  +
  {?}\msg\releaseT{}.{!}\msg\okT{}.{?}\msg\argT{\qlin~\tvar}.\SessionEnd
  \\
  \SessionCell & = & \trec\tvarB.{?}\msg\invokeT{\qlin~\SessionTypeT(\tvar)}.\tvarB
  \\
  \SessionBuffer & = & \trec\tvarB.{?}\msg\inT{\qlin~\tvar}.\tvarB
  \\
  \SessionAcquire & = & \trec\tvarB.{?}\msg\inT{\qlin~{!}\msg\resT{\qlin~\tvar}.\SessionEnd}.\tvarB
  \\
  \SessionRelease & = & \trec\tvarB.{?}\msg\inT{\qlin~{!}\msg\okT{}.{?}\msg\resT{\qlin~\tvar}.\SessionEnd}.\tvarB
\end{array}
\end{math}
}
\end{table}

\noindent Table~\ref{tab:supercell} presents an implementation of the \Sing{}
class \Code{TCell<$\tvar$>} in our process calculus. For readability,
we have defined the $\MKCELL(\PointerA)$ process in terms of
(mutually) recursive equations that can be folded into a proper term
as by~\cite{Courcelle83}. Below we describe the process from a bird's
eye point of view and expect the reader to fill in the missing details.

$\MKCELL(\PointerA)$ waits for invocations on endpoint
$\PointerA$. Each invocation creates a new cell represented as a
linear endpoint $\PointerC$ (retained by the implementation) and an
unrestricted pointer $\shared\PointerC$ (that can be shared by the
users of the cell). The cell consists of three unrestricted endpoints:
$\buffer$ is the actual buffer that contains the pointer to be shared,
while $\acquire$ and $\release$ are used to enqueue pending requests
for acquisition and release of the cell content. The implementation
ensures that $\buffer$ always contains at most one message (of type
$\tvar$), that $\acquire$ can have pending requests only when
$\buffer$ is empty, and that $\release$ can have pending requests only
when $\buffer$ is full.
Users of the cell send invocation requests on endpoint $\PointerC$.
When the cell is empty, any acquisition request is enqueued into
$\acquire$ while a release request checks whether there are pending
acquisition requests by means of the $\mathtt{empty}(\acquire)$
primitive: if there is no pending request, the released pointer
$\VarY$ is stored within $\buffer$ and the cell becomes full; if there
are pending requests, the first one ($\VarZ$) is dequeued and served,
and the cell stays empty.
When the cell is full, any release request is enqueued into $\release$
while the first acquisition request is served immediately. Then, the
cell may become empty or stay full depending on whether there are
pending release requests.

One aspect of this particular implementation which is highlighted by
the endpoint type $\SessionType(\tvar)$ is the handling of multiple
release requests. In principle, it could be reasonable for the
$\releaseT$-tagged message to carry an argument of type $\tvar$, the
pointer being released. However, if this were the case a process
releasing a pointer would \emph{immediately} transfer the ownership of
the pointer to the cell, even in case the cell is in a full
state. This is because communication is asynchronous and send
operations are non-blocking, so the message with the pointer would be
enqueued into $\release$, which is permanently owned by the cell,
regardless of whether the cell is already full. In our modeling, the
$\releaseT$-tagged message carries no argument, its only purpose being
to signal the \emph{intention} for a process to release a pointer. If
the cell is empty, then the cell answers the requester with an
$\okT$-tagged message, and only at that point the pointer (and its
ownership) is transferred from the requester to the cell with an
$\argT$-tagged message. If however the cell is full when the release
request is made, the $\okT$-tagged message is deferred and the
requester remains the formal owner of the pointer being released until
the cell becomes empty again.
\eoe
\end{example}

%%% Local Variables: 
%%% mode: latex
%%% TeX-master: "main"
%%% End: 

\section{Algorithms}
\label{sec:algorithms}

In this section we define algorithms for deciding subtyping and for
computing the weight of (endpoint) types. We also argue how the typing
rules in Table~\ref{tab:typing_processes} can be easily turned into a
type checking algorithm using a technique explained elsewhere.

\subsection{Subtyping}
\label{sec:algorithmic_subtyping}

The algorithm for deciding the subtyping relation $\SessionTypeT \subt
\SessionTypeS$ is more easily formulated if we make a few assumptions
on the variables occurring in $\SessionTypeT$ and $\SessionTypeS$. The
reason is that $\subt$ (Definition~\ref{def:subt}) implicitly uses
alpha renaming in order to match the bound type variables occurring in
one endpoint type with the bound type variables occurring in the other
endpoint type. However, termination of the subtyping algorithm can be
guaranteed only if we perform these renamings in a rather controlled
way, and the assumptions we are going to make are aimed at this.

\begin{definition}[independent endpoint types]
  We say that $\SessionTypeT$ and $\SessionTypeS$ are
  \emph{independent} if:
\begin{enumerate}[(1)]
\item $\ftv(\SessionTypeT) \cap \btv(\SessionTypeT) = \emptyset$;

\item $\ftv(\SessionTypeS) \cap \btv(\SessionTypeS) = \emptyset$;

\item no type variable in $\SessionTypeT$ or in $\SessionTypeS$ is
  bound more than once;

\item $\btv(\SessionTypeT) \cap \btv(\SessionTypeS) = \emptyset$.
\end{enumerate}
\end{definition}

Informally, conditions~(1--3) state that $\SessionTypeT$ and
$\SessionTypeS$ obey the so-called Barendregt convention for type
variables, by stating that free and bound type variables are disjoint
and that every type variable is bound at most once.
Condition~(4) makes sure that there is no shared bound type variable
between $\SessionTypeT$ and $\SessionTypeS$.

We will restrict the subtyping algorithm to independent endpoint
types. This is bearable as every pair of endpoint types can be easily
rewritten into an equivalent pair of independent endpoint types:

\begin{proposition}
  For every $\SessionTypeT$ and $\SessionTypeS$ there exist
  independent $\SessionTypeT'$ and $\SessionTypeS'$ such that
  $\SessionTypeT = \SessionTypeT'$ and $\SessionTypeS =
  \SessionTypeS'$.
\end{proposition}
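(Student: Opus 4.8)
The plan is to exploit the fact, already established in the excerpt, that endpoint types are identified modulo renaming of bound type variables: whenever $\SessionTypeT'$ is obtained from $\SessionTypeT$ by a capture-avoiding renaming of bound variables, we have $\SessionTypeT = \SessionTypeT'$ in the sense used here. Thus it suffices to produce, by such renamings alone, representatives $\SessionTypeT'$ and $\SessionTypeS'$ that jointly satisfy the four independence conditions, without ever appealing to the recursion-unfolding law.

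First I would fix a supply of fresh type variables, disjoint from $\ftv(\SessionTypeT) \cup \ftv(\SessionTypeS) \cup \btv(\SessionTypeT) \cup \btv(\SessionTypeS)$. Since both terms are finite (recall that $\trec\tvar.\SessionTypeT$ is a finite syntactic form, with $\tvar$ required to be guarded by a prefix), each contains only finitely many binder occurrences, so finitely many fresh names suffice. I would then rename \emph{every} bound occurrence---whether introduced by a $\trec$ binder or by a prefix binder $\tmsg\Tag\tvar\Type$---to a distinct fresh variable, proceeding by structural induction on the type and threading through the set of names already consumed.

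Concretely, the induction would process $\SessionTypeT$ and then $\SessionTypeS$, and at each binder it would allocate the next unused fresh name $\tvarB$, substitute $\tvarB$ for the bound variable throughout the body (a capture-avoiding substitution, unproblematic because $\tvarB$ is fresh), and recurse. Once finished, the free variables of each type are unchanged (bound renaming never touches free occurrences), every bound variable is one of the pairwise-distinct fresh names, and the fresh names chosen for $\SessionTypeT$ are disjoint from those chosen for $\SessionTypeS$. Conditions~(1) and~(2) then follow since the fresh bound names avoid the free variables; condition~(3) follows since all chosen names are distinct; and condition~(4) follows since the two allocation phases draw from disjoint pools. The equalities $\SessionTypeT = \SessionTypeT'$ and $\SessionTypeS = \SessionTypeS'$ hold because a renaming of bound variables preserves $=$.

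The only genuinely delicate point is the bookkeeping of capture-avoidance across the two kinds of nested binders: a single variable name might originally be reused as a binder at several depths and also occur free in some subterm, so I must argue that substituting a globally fresh $\tvarB$ at each binder commits no capture and preserves $=$. This is the routine core of the Barendregt-convention argument, and I expect no real obstacle beyond stating the freshness invariant precisely and checking that the inductive step maintains it.
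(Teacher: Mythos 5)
Your proposal is correct and follows essentially the same route as the paper's own proof sketch: a structural induction over $\SessionTypeT$ and then over $\SessionTypeS$, renaming every bound variable (both $\trec$-bound and prefix-bound) to a globally fresh name, which preserves syntactic identity modulo alpha-conversion and yields all four independence conditions. The extra bookkeeping you spell out (the fresh-name supply and the freshness invariant) is exactly the routine detail the paper elides with ``renaming bound variables with fresh ones.''
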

\begin{proof}[Proof sketch]
  A structural induction on $\SessionTypeT$ followed by a structural
  induction on $\SessionTypeS$, in both cases renaming bound variables
  with fresh ones.
\end{proof}

\begin{table}
\caption{\label{tab:algorithmic_subtyping}\strut Algorithmic subtyping rules.}
\framebox[\textwidth]{
\begin{math}
\displaystyle
\begin{array}{@{}c@{}}
  \inferrule[\rulename{S-Var}]{}{
    \srel \vdash_\varmap \tvar \asubt \tvar
  }
  \qquad
  \inferrule[\rulename{S-End}]{}{
    \srel \vdash_\varmap \SessionEnd \asubt \SessionEnd
  }
  \\\\
  \inferrule[\rulename{S-Axiom}]{
    (\SessionTypeT, \SessionTypeS) \in \srel
  }{
    \srel \vdash_\varmap \SessionTypeT \asubt \SessionTypeS
  }
  \qquad
  \inferrule[\rulename{S-Rec Left}]{
    \srel \cup \{( \trec\tvar.\SessionTypeT, \SessionTypeS )\}
    \vdash_\varmap \SessionTypeT\subst{\trec\tvar.\SessionTypeT}{\tvar} \asubt \SessionTypeS
  }{
    \srel \vdash_\varmap \trec\tvar.\SessionTypeT \asubt \SessionTypeS
  }
  \\\\
  \inferrule[\rulename{S-Rec Right}]{
    \srel \cup \{( \SessionTypeT, \trec\tvar.\SessionTypeS )\}
    \vdash_\varmap \SessionTypeT \asubt \SessionTypeS\subst{\trec\tvar.\SessionTypeS}{\tvar}
  }{
    \srel \vdash_\varmap \SessionTypeT \asubt \trec\tvar.\SessionTypeS
  }
  \qquad
  \inferrule[\rulename{S-Type}]{
    \Qualifier \leq \Qualifier'
    \\
    \srel \vdash_\varmap \SessionTypeT \asubt \SessionTypeS
  }{
    \srel \vdash_\varmap
    \Qualifier~\SessionTypeT
    \asubt
    \Qualifier'~\SessionTypeS
  }
  \\\\
  \inferrule[\rulename{S-Input}]{
    \srel' = \srel \cup \{ (\SessionTypeT, \SessionTypeS) \}
    \\
    I \subseteq J
    \\
    \srel' \vdash_\varmap
    \TypeT_i\subst{\varmap(\tvarA_i, \tvarB_i)}{\tvarA_i}
    \asubt
    \TypeS_i\subst{\varmap(\tvarA_i, \tvarB_i)}{\tvarB_i}
    ~{}^{(i\in I)}
    \\
    \srel' \vdash_\varmap
    \SessionTypeT_i\subst{\varmap(\tvarA_i, \tvarB_i)}{\tvarA_i}
    \asubt
    \SessionTypeS_i\subst{\varmap(\tvarA_i, \tvarB_i)}{\tvarB_i}
    ~{}^{(i\in I)}
  }{
    \srel \vdash_\varmap
    \SessionTypeT \seq \ExternalChoice{\xmsg{\Tag_i}{\tvarA_i}{\TypeT_i}.\SessionTypeT_i}_{i\in I}
    \asubt
    \ExternalChoice{\xmsg{\Tag_j}{\tvarB_j}{\TypeS_j}.\SessionTypeS_j}_{j\in J} \seq \SessionTypeS
  }
  \\\\
  \inferrule[\rulename{S-Output}]{
    \srel' = \srel \cup \{ (\SessionTypeT, \SessionTypeS) \}
    \\
    J \subseteq I
    \\
    \srel' \vdash_\varmap
    \TypeS_j\subst{\varmap(\tvarA_j, \tvarB_j)}{\tvarB_j}
    \asubt
    \TypeT_j\subst{\varmap(\tvarA_j, \tvarB_j)}{\tvarA_j}
    ~{}^{(j\in J)}
    \\
    \srel' \vdash_\varmap
    \SessionTypeT_j\subst{\varmap(\tvarA_j, \tvarB_j)}{\tvarA_j}
    \asubt
    \SessionTypeS_j\subst{\varmap(\tvarA_j, \tvarB_j)}{\tvarB_j}
    ~{}^{(j\in J)}
  }{
    \srel \vdash_\varmap
    \SessionTypeT \seq \InternalChoice{\xmsg{\Tag_i}{\tvarA_i}{\TypeT_i}.\SessionTypeT_i}_{i\in I}
    \asubt
    \InternalChoice{\xmsg{\Tag_j}{\tvarB_j}{\TypeS_j}.\SessionTypeS_j}_{j\in J} \seq \SessionTypeS
  }
\end{array}
\end{math}
}
\end{table}

The subtyping algorithm is defined using the rules in
Table~\ref{tab:algorithmic_subtyping}, thus:

\begin{definition}[subtyping algorithm]
\label{def:asubt}
Let $\SessionTypeT$ and $\SessionTypeS$ be independent endpoint types.
Let $\varmap$ be a map from unordered pairs of type variables to type
variables such that $\varmap(\tvarA, \tvarB) \not\in
\ftv(\SessionTypeT) \cup \btv(\SessionTypeT) \cup \ftv(\SessionTypeS)
\cup \btv(\SessionTypeS)$ for every $\tvarA \in \btv(\SessionTypeT)$
and $\tvarB \in \btv(\SessionTypeS)$.
We write $\vdash_\varmap \SessionTypeT \asubt \SessionTypeS$ if and
only if $\emptyset \vdash_\varmap \SessionTypeT \asubt \SessionTypeS$
is derivable with the axioms and rules in
Table~\ref{tab:algorithmic_subtyping}, where we give
rule~\rulename{S-Axiom} the highest priority, followed by
rule~\rulename{S-Rec Left}, followed by rule~\rulename{S-Rec Right},
followed by all the remaining rules which are
syntax-directed.\footnote{The relative priority of
  rules~\rulename{S-Rec Left} and~\rulename{S-Rec Right} is irrelevant
  since they are confluent.}
\end{definition}

The algorithm derives judgments of the form $\srel \vdash_\varmap
\SessionTypeT \asubt \SessionTypeS$, where $\srel$ is a memoization
context that records pairs of endpoint types that are assumed to be
related by subtyping. The map $\varmap$ is used for unifying
consistently the bound type variables of the endpoint types being
related. The same (unordered) pair of bound type variables $(\tvarA,
\tvarB)$ is always unified to the same fresh type variable
$\varmap(\tvarA, \tvarB)$, which is essential for guaranteeing that
the memoization context $\srel$ does not grow unwieldy. The fact that
we work with unordered pairs simply means that $\varmap(\tvarA,
\tvarB) = \varmap(\tvarB, \tvarA)$ for every $\tvarA \in
\btv(\SessionTypeT)$ and $\tvarB \in \btv(\SessionTypeS)$.
The axioms and rules in Table~\ref{tab:algorithmic_subtyping} are
mostly unremarkable, since they closely mimic the coinductive
definition of subtyping (Definition~\ref{def:subt}), therefore we only
comment on the peculiar features of this deduction system:
Axiom~\rulename{S-Axiom} allows one to immediately deduce $\srel
\vdash_\varmap \SessionTypeT \asubt \SessionTypeS$ whenever the pair
$(\SessionTypeT, \SessionTypeS)$ occurs in $\srel$. This prevents the
algorithm to loop forever when comparing recursive endpoint types.
A pair $(\SessionTypeT, \SessionTypeS)$ is added to $\srel$ whenever a
constructor is crossed, which happens in rules~\rulename{S-Rec Left},
\rulename{S-Rec Right}, \rulename{S-Input}, and~\rulename{S-Output}.
Rules~\rulename{S-Rec Left} and~\rulename{S-Rec Right} unfold
recursive endpoint types in order to expose their outermost proper
constructor (an internal/external choice or
$\SessionEnd$). Contractivity of endpoint types guarantees that a
finite number of applications of these rules is always enough to
achieve this exposure.  In Definition~\ref{def:subt} recursive
endpoint types are not treated explicitly since equality `$=$' is
defined modulo folding/unfolding of recursions.
Rules~\rulename{S-Input} and~\rulename{S-Output} deal with inputs and
outputs. Note that the pairs of endpoint types being compared in the
conclusions of the rules have distinct sets $\{ \tvarA_i \}_{i\in I}$
and $\{ \tvarB_j \}_{j\in J}$ of bound type variables that are unified
in the premises by means of the map $\varmap$.

The following result establishes the correctness and completeness of
the subtyping algorithm with respect to $\subt$ for independent
endpoint types.

\begin{theorem}[correctness and completeness]
\label{thm:subtyping_algorithm}
Let $\SessionTypeT_0$ and $\SessionTypeS_0$ be independent endpoint
types and $\varmap$ be a map as by Definition~\ref{def:asubt}. Then
$\vdash_\varmap \SessionTypeT_0 \asubt \SessionTypeS_0$ if and only if
$\SessionTypeT_0 \subt \SessionTypeS_0$.
\end{theorem}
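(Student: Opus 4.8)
The plan is to establish the two implications separately, following the standard pattern for memoizing subtyping algorithms over regular (recursive) types, with the extra care demanded by the explicit renaming of message-bound type variables through $\varmap$.

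For soundness ($\vdash_\varmap \SessionTypeT_0 \asubt \SessionTypeS_0$ implies $\SessionTypeT_0 \subt \SessionTypeS_0$), I would fix a derivation $D$ of $\emptyset \vdash_\varmap \SessionTypeT_0 \asubt \SessionTypeS_0$ and extract from it a coinductive subtyping in the sense of Definition~\ref{def:subt}. Concretely, let $\srel$ collect every pair $(\SessionTypeT, \SessionTypeS)$ that occurs as the conclusion of some node of $D$ (equivalently, every pair that is ever inserted into a memoization context). I would then verify that $\srel$ satisfies the closure conditions of Definition~\ref{def:subt} by a case analysis on the last rule applied at each node: \rulename{S-Var} and \rulename{S-End} match clauses~(2) and~(1) directly; \rulename{S-Input} and \rulename{S-Output} produce exactly the sub-pairs required by clauses~(3) and~(4), the point being that after the substitutions $\subst{\varmap(\tvarA_i,\tvarB_i)}{\tvarA_i}$ and $\subst{\varmap(\tvarA_i,\tvarB_i)}{\tvarB_i}$ the two continuations share the \emph{same} bound variable $\varmap(\tvarA_i,\tvarB_i)$, which is precisely the alignment of bound variables that Definition~\ref{def:subt} performs silently via alpha-renaming; \rulename{S-Rec Left} and \rulename{S-Rec Right} relate a type to its unfolding, and since we read $\subt$ modulo the law $\trec\tvar.\SessionType = \SessionType\subst{\trec\tvar.\SessionType}{\tvar}$ these steps are identities in the coinductive world; finally, \rulename{S-Axiom} is handled by tracing the pair back to the node at which it was inserted into the context --- a recursion or choice step --- whose premises already place the relevant structural sub-pairs in $\srel$. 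Hence $\srel$ is a coinductive subtyping containing $(\SessionTypeT_0, \SessionTypeS_0)$, giving $\SessionTypeT_0 \subt \SessionTypeS_0$.

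For completeness I would first argue termination and then partial completeness. Termination rests on the regularity of endpoint types: by contractivity, each maximal chain of \rulename{S-Rec Left}/\rulename{S-Rec Right} steps is finite and exposes an outermost $\SessionEnd$ or choice constructor, and the endpoint types reachable by unfolding from $\SessionTypeT_0$ and $\SessionTypeS_0$ form a finite set up to the equality used in $\subt$. Because $\varmap$ has finite domain and the substituted variables $\varmap(\tvarA_i,\tvarB_i)$ are drawn from this finite range, only finitely many distinct pairs $(\SessionTypeT,\SessionTypeS)$ can ever appear at constructor-crossing nodes; thus the memoization context can grow only finitely before \rulename{S-Axiom} becomes applicable and cuts off every branch. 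Given termination, I would prove by induction along the well-founded ordering underlying the termination measure that whenever the goal $\srel \vdash_\varmap \SessionTypeT \asubt \SessionTypeS$ corresponds, under the bookkeeping of $\varmap$, to a pair in $\subt$ and every pair recorded in $\srel$ likewise lies in $\subt$, the algorithm succeeds: if the goal is already in $\srel$ then \rulename{S-Axiom} closes it; otherwise the priorities force an \rulename{S-Rec} step (reducing to an equal coinductive goal) or, at a choice, \rulename{S-Input}/\rulename{S-Output}, whose subgoals are again in $\subt$ by the defining clauses~(3) and~(4) of Definition~\ref{def:subt} and inherit an enlarged but still sound context. Specialising the invariant to the root goal with $\srel = \emptyset$ yields $\vdash_\varmap \SessionTypeT_0 \asubt \SessionTypeS_0$.

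I expect the main obstacle to be the careful management of the renaming performed by $\varmap$: one must maintain a precise invariant relating the algorithmic types, in which corresponding message-bound variables $\tvarA_i$ and $\tvarB_i$ have been merged into a single fresh variable, with the coinductive types, in which the same alignment is achieved implicitly by alpha-renaming. Making this correspondence robust under the substitutions applied at every \rulename{S-Input}/\rulename{S-Output} step --- and checking that the freshness conditions on $\varmap$ from Definition~\ref{def:asubt}, together with independence of $\SessionTypeT_0$ and $\SessionTypeS_0$, preclude any variable capture --- is what couples the two directions and is the technically delicate part; the regularity and termination argument, while essential for completeness, is comparatively routine.
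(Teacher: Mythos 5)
Your overall architecture matches the paper's: soundness by extracting a coinductive subtyping (in the sense of Definition~\ref{def:subt}) from algorithmic derivability, and completeness by a well-founded induction on a finite set of pairs whose exhaustion forces \rulename{S-Axiom} to fire. For completeness your finiteness argument is exactly the paper's, made precise there via the auxiliary notions of $\trees(\cdot)$ and $\instances(\varmap,\SessionTypeT_0,\SessionTypeS_0)$ and Proposition~\ref{prop:instances} (every endpoint type occurring in a derivation is one of finitely many instances); the paper's induction is literally on the set $\instances(\varmap,\SessionTypeT,\SessionTypeS)\setminus\srel$. Where you genuinely diverge is in the soundness direction: you take $\srel$ to be the set of \emph{all} pairs concluded anywhere in one fixed derivation (with arbitrary memoization contexts), whereas the paper takes $\srel$ to be the pairs derivable \emph{from the empty context} (restricted to instances), and then needs a cut-style lemma (Lemma~\ref{lem:cut}: a derivation under context $\{(\SessionTypeT,\SessionTypeS)\}$ can be turned into an empty-context derivation by splicing in a proof of $\emptyset\vdash_\varmap\SessionTypeT\asubt\SessionTypeS$) to see that the premises of \rulename{S-Input}/\rulename{S-Output} again yield pairs of $\srel$. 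Your choice buys you freedom from that lemma, at the price of having to discharge \rulename{S-Axiom} nodes directly; the paper's choice buys a clean structural case analysis in which, starting from the empty context, the first non-recursion rule can never be \rulename{S-Axiom}.

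That trade is also where your sketch has a real gap. You claim an \rulename{S-Axiom} node can be discharged by tracing its pair back to the insertion node, ``whose premises already place the relevant structural sub-pairs in $\srel$.'' That is true only when the insertion node is an \rulename{S-Input}/\rulename{S-Output} step. When the pair was inserted by \rulename{S-Rec Left}/\rulename{S-Rec Right}, the premise of the insertion node concludes only the \emph{unfolded} pair, which may itself be closed by another \rulename{S-Axiom}; you must iterate the trace and prove that this iteration terminates in a structural node. Termination is not automatic from finiteness of the derivation alone: you need that iterated syntactic unfolding never revisits a pair (otherwise the trace cycles among \rulename{S-Axiom} nodes and rec nodes forever), which requires a separate argument about contractive recursive types. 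The paper's cut lemma is precisely the device that makes this issue disappear. Relatedly, the capture-freeness of the $\varmap$-renamings, which you correctly identify as delicate but leave open, is exactly what the $\instances$ machinery is for: membership of all intermediate types in $\instances(\varmap,\SessionTypeT_0,\SessionTypeS_0)$ is what lets the paper conclude that $\varmap(\tvarA_i,\tvarB_i)$ is fresh for the components being related, so the silent alpha-alignment of Definition~\ref{def:subt} is legitimate. With those two points repaired, your proof goes through.
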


\begin{example}
\newcommand{\TagA}{\mathtt{a}}
\newcommand{\TagB}{\mathtt{b}}
\newcommand{\TagC}{\mathtt{c}}
Consider the endpoint types
\[
\begin{array}{rcl}
  \SessionTypeT & \seq &
  \trec\tvarA.
  {!}\xmsg{\TagA}{\tvarA_1}{{!}\xmsg{\TagB}{\tvarA_2}{\tvarA_2}.\tvarA_1}.\tvarA
  \\
  \SessionTypeS & \seq &
  \trec\tvarB.
  {!}\xmsg{\TagA}{\tvarB_1}{
    {!}\xmsg{\TagB}{\tvarB_2}{\tvarB_2}.\tvarB_1
    \oplus
    {!}\xmsg{\TagC}{\tvarB_3}{\tvarB_3}.\SessionEnd
  }.\tvarB
\end{array}
\]
and observe that they are independent.
The following derivation, together with
Theorem~\ref{thm:subtyping_algorithm}, shows that $\SessionTypeT \subt
\SessionTypeS$:
\begin{prooftree}
  \AxiomC{}
  \RightLabel{\rulename{S-Var}}
  \UnaryInfC{$
    \srel_3 \vdash_\varmap
    \tvarC_2 \asubt \tvarC_2
  $}
  \AxiomC{}
  \RightLabel{\rulename{S-Var}}
  \UnaryInfC{$
    \srel_3 \vdash_\varmap
    \tvarC_1 \asubt \tvarC_1
  $}
  \RightLabel{\rulename{S-Output}}
  \BinaryInfC{$
    \srel_2 \vdash_\varmap
    {!}\xmsg{\TagB}{\tvarB_2}{\tvarB_2}.\tvarC_1 \oplus
    {!}\xmsg{\TagC}{\tvarB_3}{\tvarB_3}.\SessionEnd
    \asubt
    {!}\xmsg{\TagB}{\tvarA_2}{\tvarA_2}.\tvarC_1
  $}
  \AxiomC{}
  \RightLabel{\rulename{S-Axiom}}
  \UnaryInfC{$
    \srel_2 \vdash_\varmap
    \SessionTypeT \asubt \SessionTypeS
  $}
  \RightLabel{\rulename{S-Output}}
  \BinaryInfC{$
    \srel_1 \vdash_\varmap
    {!}\xmsg{\TagA}{\tvarA_1}{{!}\xmsg{\TagB}{\tvarA_2}{\tvarA_2}.\tvarA_1}.\SessionTypeT
    \asubt
    {!}\xmsg{\TagA}{\tvarB_1}{
    {!}\xmsg{\TagB}{\tvarB_2}{\tvarB_2}.\tvarB_1 \oplus
    {!}\xmsg{\TagC}{\tvarB_3}{\tvarB_3}.\SessionEnd}.\SessionTypeS
  $}
  \RightLabel{\rulename{S-Rec Right}}
  \UnaryInfC{$
    \{(\SessionTypeT, \SessionTypeS)\} \vdash_\varmap
    {!}\xmsg{\TagA}{\tvarA_1}{{!}\xmsg{\TagB}{\tvarA_2}{\tvarA_2}.\tvarA_1}.\SessionTypeT
    \asubt
    \SessionTypeS
  $}
  \RightLabel{\rulename{S-Rec Left}}
  \UnaryInfC{$
    \emptyset \vdash_\varmap
    \SessionTypeT \asubt \SessionTypeS
  $}
\end{prooftree}
where we have used the abbreviations:
\begin{iteMize}{$\bullet$}
\item $\tvarC_1 = \varmap(\tvarA_1, \tvarB_1)$ and $\tvarC_2 =
  \varmap(\tvarA_2, \tvarB_2)$;

\item $\srel_1 = \{ (\SessionTypeT, \SessionTypeS),
  ({!}\xmsg{\TagA}{\tvarA_1}{{!}\xmsg{\TagB}{\tvarA_2}{\tvarA_2}.\tvarA_1}.\SessionTypeT,
  \SessionTypeS) \}$;

\item $\srel_2 = \srel_1 \cup \{ (
    {!}\xmsg{\TagA}{\tvarA_1}{{!}\xmsg{\TagB}{\tvarA_2}{\tvarA_2}.\tvarA_1}.\SessionTypeT,
    {!}\xmsg{\TagA}{\tvarB_1}{
    {!}\xmsg{\TagB}{\tvarB_2}{\tvarB_2}.\tvarB_1 \oplus
    {!}\xmsg{\TagC}{\tvarB_3}{\tvarB_3}.\SessionEnd}.\SessionTypeS
) \}$;

\item $\srel_3 = \srel_2 \cup \{
  ({!}\xmsg{\TagB}{\tvarB_2}{\tvarB_2}.\tvarC_1 \oplus
  {!}\xmsg{\TagC}{\tvarB_3}{\tvarB_3}.\SessionEnd,
  {!}\xmsg{\TagB}{\tvarA_2}{\tvarA_2}.\tvarC_1) \}$.
\eoe
\end{iteMize}
\end{example}

\subsection{Type Weight}
\label{sec:algorithmic_weight}

We now address the computation of the weight of an (endpoint) type,
which is the least of its weight bounds or $\infty$ if it has no
weight bound. Unlike the definition of weight bound
(Definition~\ref{def:type_weight}), the algorithm avoids unfoldings of
recursive endpoint types in order to terminate. This imposes a
refinement in the strategy we use for weighing type variables. Recall
that, according to Definition~\ref{def:type_weight}, when determining
$\xweight{\BoundContext_0}\SessionType$ type variables are weighed
either $0$ or $\infty$ according to whether they occur in the context
$\BoundContext_0$ or in $\btv(\SessionType)$ when they are bound in an
input or output prefix. If we avoid unfoldings of recursions, we must
also deal with type variables that are bound by recursive terms
$\trec\tvar.\SessionType$. The idea is that these variables must be
weighed differently, depending on whether they occur \emph{within an
  input prefix} of $\SessionType$ or not. For this reason, we use
another context $\BoundContext$ that contains the subset of type
variables bound by a recursive term and that can be weighed $0$.

\newcommand{\zvars}{\BoundContext_0}
\newcommand{\ivars}{\BoundContext_\infty}

Ultimately, we define a function $\aweight(\BoundContext_0,
\BoundContext, \SessionType)$ by induction on the structure of
$\SessionType$, thus:
\[
\begin{array}{@{}r@{~}c@{~}ll@{}}
  \aweight(\zvars, \BoundContext, \SessionEnd)
  & = &
  0
  \\
  \aweight(\zvars, \BoundContext, \tvar)
  & = &
  \begin{cases}
    0 & \text{if $\tvar \in \zvars \cup \BoundContext$} \\
    \infty & \text{otherwise}
  \end{cases}
  \\
  \aweight(\zvars, \BoundContext, \trec\tvar.\SessionType)
  & = &
  \aweight(\zvars, \BoundContext \cup \{ \tvar \}, \SessionType)
  \\
  \aweight(\zvars, \BoundContext, \InternalChoice{\xmsg{\Tag_i}{\tvar_i}{\Type_i}.\SessionType_i}_{i\in I})
  & = &
  0
  \\
  \aweight(\zvars, \BoundContext, \ExternalChoice{\xmsg{\Tag_i}{\tvar_i}{\Type_i}.\SessionType_i}_{i\in I})
  & = &
  \max
  \{ 1 + \aweight(\zvars, \emptyset, \Type_i),
     \aweight(\zvars, \BoundContext \setminus \{ \tvar_i \},
     \SessionType_i)
  \}_{i\in I}
  \\
  \aweight(\zvars, \BoundContext, \Qualifier~\SessionType)
  & = &
  \aweight(\zvars, \BoundContext, \SessionType)
\end{array}
\]

The first and fourth equations give a null weight to $\SessionEnd$ and
endpoint types in a send state, as expected.
The third equation weighs a recursive term $\trec\tvar.\SessionType$
by weighing the body $\SessionType$ and recording the fact that
$\tvar$ can be given a null weight, as long as $\tvar$ does not occur
in a prefix of $\SessionType$.
The second equation weighs a type variable $\tvar$: if $\tvar$ occurs
in $\BoundContext_0 \cup \BoundContext$, then it means that either
$\tvar$ occurs free in the original endpoint type being weighed and
therefore must be given a null weight, or $\tvar$ is bound in a
recursive term $\trec\tvar.\SessionTypeS$ but it does not occur within
an input prefix of $\SessionTypeS$; if $\tvar$ does not occur in
$\BoundContext_0 \cup \BoundContext$, then it means that either
$\tvar$ was bound in an prefix of an endpoint type in send/receive
state, or it was bound in a recursive term $\trec\tvar.\SessionTypeS$
and it occurs within an input prefix of $\SessionTypeS$.
The fifth equation determines the weight of an endpoint type in
receive state. The rule essentially mimics the corresponding condition
of Definition~\ref{def:type_weight}, but notice that when weighing the
types $\Type_i$ in the prefixes the context $\BoundContext$ is
emptied, since if any of the type variables in it is encountered, then
it must be given an infinite weight.
The last equation simply determines the weight of a qualified endpoint
type to be the weight of the endpoint type itself.

We work out a few simple examples to help clarifying the algorithm:
\begin{iteMize}{$\bullet$}
\item $\aweight(\emptyset, \emptyset,
  {?}\xmsg\Tag\tvar\SessionEnd.\SessionEnd) = \max \{ 1 +
  \aweight(\emptyset, \emptyset, \SessionEnd), \aweight(\emptyset,
  \emptyset, \SessionEnd) \} = 1$;

\item $\aweight(\emptyset, \emptyset,
  {?}\xmsg\Tag\tvar\tvar.\SessionEnd) = \max \{ 1 +
  \aweight(\emptyset, \emptyset, \tvar), \aweight(\emptyset,
  \emptyset, \SessionEnd) \} = \infty$;

\item $\aweight(\emptyset, \emptyset,
  \trec\tvar.{?}\msg\Tag\tvar.\SessionEnd) = \aweight(\emptyset, \{
  \tvar \}, {?}\msg\Tag\tvar.\SessionEnd) = \max \{ 1 +
  \aweight(\emptyset, \emptyset, \tvar), \aweight(\emptyset, \{ \tvar
  \}, \SessionEnd) \} = \infty$;

\item $\aweight(\emptyset, \emptyset,
  \trec\tvar.{?}\msg\Tag\SessionEnd.\tvar) = \aweight(\emptyset, \{
  \tvar \}, {?}\msg\Tag\SessionEnd.\tvar) = \max \{ 1 +
  \aweight(\emptyset, \emptyset, \SessionEnd), \aweight(\emptyset, \{
  \tvar \}, \tvar)\} = 1$.
\end{iteMize}

In the last example, note that the type variable $\tvar$ that
virtually represents the recursive term $\trec\tvar.\SessionType$ is
weighed $0$ even though the whole term turns out to have weight
$1$. The idea is that the proper weight of the whole term will be
computed anyway according to the structure of the term in which
$\tvar$ occurs, and therefore we can safely approximate the weight of
$\tvar$ to $0$.
This property of the algorithm, which is also one of the key
ingredients for proving its correctness, can be formalized as the fact
that the weight of a recursive term and of its unfolding are the same:

\begin{proposition}
\label{prop:unfold_weight}
$\aweight(\BoundContext_0, \emptyset, \trec\tvarA.\SessionType) =
\aweight(\BoundContext_0, \emptyset,
\SessionType\subst{\trec\tvarA.\SessionType}{\tvarA})$.
\end{proposition}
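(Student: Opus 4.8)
The plan is to reduce the statement, via the defining equation for recursion, to a substitution lemma about $\aweight$ that is then settled by structural induction. Write $R = \trec\tvarA.\SessionType$, and let $\BoundContext_0$ be the fixed set of genuinely free variables, which after alpha-renaming we may assume does not contain $\tvarA$. The recursion equation gives $\aweight(\BoundContext_0,\emptyset,R) = \aweight(\BoundContext_0,\{\tvarA\},\SessionType)$, so the goal becomes
\[
\aweight(\BoundContext_0,\{\tvarA\},\SessionType) = \aweight(\BoundContext_0,\emptyset,\SessionType\subst{R}{\tvarA}).
\]
First I would record two easy auxiliary facts, each by a routine structural induction: (i) for any endpoint type the value of $\aweight(\BoundContext_0,\BoundContext,\cdot)$ depends on $\BoundContext$ only through the free type variables of its argument, so in particular $\aweight(\BoundContext_0,\BoundContext,R) = \aweight(\BoundContext_0,\BoundContext\setminus\{\tvarB\},R)$ whenever $\tvarB$ is bound in $R$; and (ii) since $\tvarA\notin\BoundContext_0$ and the effective safe set is reset to $\BoundContext_0$ upon entering a message argument, whenever $\tvarA$ is reachable inside an argument the corresponding argument weight is already $\infty$.

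The core is a single generalized claim, proved by induction on the structure of an arbitrary subterm $S$ of $\SessionType$:
\[
\aweight(\BoundContext_0,\BoundContext,S\subst{R}{\tvarA}) = \max\{\aweight(\BoundContext_0,\BoundContext\cup\{\tvarA\},S),\ d_\BoundContext(S)\},
\]
where $d_\BoundContext(S) = \aweight(\BoundContext_0,\BoundContext,R)$ if $\tvarA$ occurs in $S$ along some path crossing no message argument and not blocked by an internal choice (a \emph{continuation-reachable} occurrence), and $d_\BoundContext(S)=0$ otherwise. The delicate base case is $S=\tvarA$, where the left-hand weight collapses to $\aweight(\BoundContext_0,\BoundContext,R)$ while the first summand on the right is $0$; the correction term $d_\BoundContext$ is tailored precisely to absorb this discrepancy. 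In the inductive step for an external choice $\ExternalChoice{\xmsg{\Tag_i}{\tvar_i}{\Type_i}.\SessionType_i}_{i\in I}$ I would apply the induction hypothesis to each argument $\Type_i$ (with empty recursion context) and each continuation $\SessionType_i$ (with $\BoundContext\setminus\{\tvar_i\}$), and then check that the extra $d$-terms recombine: the $1+d_\emptyset(\Type_i)$ contributions are dominated by the $1+\aweight(\BoundContext_0,\emptyset,\Type_i)$ terms already present in the first summand (using fact (ii), since a continuation-reachable $\tvarA$ inside $\Type_i$ forces that argument weight to $\infty$), while the continuation contributions $d_{\BoundContext\setminus\{\tvar_i\}}(\SessionType_i)$ reassemble into $d_\BoundContext(S)$ after invoking fact (i) to replace $\BoundContext\setminus\{\tvar_i\}$ by $\BoundContext$ inside $R$ (legitimate because $\tvar_i$ is bound in $R$). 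The cases $\SessionEnd$, internal choice, $\trec\tvarB.S'$, and the qualifier are straightforward, the internal-choice case exploiting that such types are weighed $0$ without descending. Instantiating the claim with $S=\SessionType$ and $\BoundContext=\emptyset$ then yields the reduced goal, since $d_\emptyset(\SessionType)$ is either $0$ or exactly $\aweight(\BoundContext_0,\emptyset,R)=\aweight(\BoundContext_0,\{\tvarA\},\SessionType)$, hence subsumed by the first summand.

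The main obstacle is exactly these continuation-reachable occurrences of the recursion variable: the naive guess that substitution preserves weight for \emph{every} subterm is false at $S=\tvarA$, because unfolding one step replaces a $0$-weighted safe variable by the full recursive type $R$, whose weight may be strictly positive. What rescues the top-level equality is that such an occurrence is guarded (contractivity), so the receive prefix guarding the recursion always contributes, through its own argument, a branch at least as large, while the reset of the safe set on entering arguments keeps argument-position occurrences uniformly infinite on both sides. Pinning down the correct correction term $d_\BoundContext$ and verifying that fact (i) lets it survive the $\tvar_i$-rebinding in the external-choice step is where the real work lies; everything else is bookkeeping.
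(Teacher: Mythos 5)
Your high-level strategy coincides with the paper's (reduce via the recursion equation, then prove a generalized substitution statement by structural induction), but the invariant you induct on is false, and the flaw surfaces exactly in the external-choice step. Take $\BoundContext_0=\emptyset$ and
\[
\SessionTypeS = {?}\xmsg{\Tag'}{\tvarC}{\qlin~\tvarA}.\SessionEnd,
\qquad
\SessionType = {!}\xmsg{\Tag}{\tvarB}{\qlin~\SessionTypeS}.\SessionEnd,
\qquad
R = \trec\tvarA.\SessionType.
\]
Since $\SessionType$ is an internal choice, $w=\aweight(\emptyset,\emptyset,R)=\aweight(\emptyset,\{\tvarA\},\SessionType)=0$. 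The only occurrence of $\tvarA$ in the subterm $\SessionTypeS$ lies inside a message argument, so it is not continuation-reachable and $d_\emptyset(\SessionTypeS)=0$; your claim at $\SessionTypeS$ therefore asserts $\aweight(\emptyset,\emptyset,\SessionTypeS\subst{R}{\tvarA})=\aweight(\emptyset,\{\tvarA\},\SessionTypeS)$. But the left-hand side is $1+\aweight(\emptyset,\emptyset,R)=1$, while the right-hand side is $1+\aweight(\emptyset,\emptyset,\qlin~\tvarA)=\infty$, because the algorithm empties the second context on entering the argument and $\tvarA\notin\BoundContext_0$.

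This is not a boundary pathology you can ignore: it is precisely the sub-case your external-choice step must close. When some argument $\Type_i$ contains a continuation-reachable occurrence of $\tvarA$, your fact~(ii) makes the unsubstituted argument weight $\aweight(\BoundContext_0,\emptyset,\Type_i)=\infty$, so the right-hand side of your claimed equality is $\infty$; ``domination'' then yields only the inequality $\mathrm{LHS}\leq\mathrm{RHS}$. To get equality you would also need the substituted side to be infinite, i.e.\ $\aweight(\BoundContext_0,\emptyset,R)=\infty$, and this neither follows from your invariant nor holds in general (in the example above $w=0$). The fact that rescues the top-level statement is global, not local: whenever such an argument-position occurrence of $\tvarA$ sits at a position that the weight algorithm actually reaches from the root of $\SessionType$, the computation of $w$ itself evaluates that occurrence with $\tvarA$ outside the (emptied) context, forcing $w=\infty$. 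Your structural induction over arbitrary subterms and arbitrary contexts carries no trace of this. The paper threads it through by stating the substitution lemma only under the side condition $\aweight(\BoundContext_0,\BoundContext,\SessionTypeS)\leq w$, and as a two-sided bound rather than an exact formula: if $\tvarA\in\BoundContext$ then
\[
\aweight(\BoundContext_0,\BoundContext,\SessionTypeS)\ \leq\ \aweight(\BoundContext_0,\BoundContext\setminus\{\tvarA\},\SessionTypeS\subst{R}{\tvarA})\ \leq\ \max\{w,\aweight(\BoundContext_0,\BoundContext,\SessionTypeS)\},
\]
with equality of the two weights if $\tvarA\notin\BoundContext$. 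In the problematic variable case the side condition reads $\infty\leq w$ and forces $w=\infty$; the condition propagates through the induction because argument and continuation weights are bounded by the weight of the enclosing choice; and at the top level ($\SessionTypeS=\SessionType$, $\BoundContext=\{\tvarA\}$, where the hypothesis holds with equality) the sandwich collapses to the desired identity. Either add this side condition (or an equivalent restriction to subterms reachable by the algorithm) to your claim, or adopt the paper's sandwich; as written, your inductive step does not go through.
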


\newcommand{\substitution}{\sigma}

We conclude with the formal statement saying that the algorithm for
computing weights is correct. Its termination is guaranteed as it
works by structural induction over finite terms.

\begin{theorem}
\label{thm:weight_algorithm}
$\xweight\BoundContext\SessionType = \aweight(\BoundContext,
\emptyset, \SessionType)$.
\end{theorem}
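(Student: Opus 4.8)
The plan is to establish the two inequalities $\xweight{\BoundContext}{\SessionType} \le \aweight(\BoundContext, \emptyset, \SessionType)$ and $\aweight(\BoundContext, \emptyset, \SessionType) \le \xweight{\BoundContext}{\SessionType}$ separately. In both directions the decisive tool is Proposition~\ref{prop:unfold_weight}, which lets me pass freely between $\aweight$---a structural recursion that short-circuits recursive types through its bookkeeping argument---and the judgment $\wbb{\BoundContext}{\SessionType}{n}$, which by Definition~\ref{def:type_weight} is taken modulo the unfolding law $\trec\tvar.\SessionType = \SessionType\subst{\trec\tvar.\SessionType}{\tvar}$. Recall that $\xweight{\BoundContext}{\SessionType}$ is the least $n$ with $\wbb{\BoundContext}{\SessionType}{n}$, or $\infty$ if none exists, so the first inequality is a \emph{soundness} claim (the algorithm's value is attained by some weight bound) while the second is a \emph{completeness} claim (the algorithm's value lies below every weight bound).

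For soundness I would show that $\wrel = \{ (\BoundContext, \SessionTypeS, n) \mid n \ge \aweight(\BoundContext, \emptyset, \SessionTypeS) \}$ is a coinductive weight bound; taking $\SessionTypeS = \SessionType$ and $n = \aweight(\BoundContext, \emptyset, \SessionType)$ (when finite) then yields $\wbb{\BoundContext}{\SessionType}{\aweight(\BoundContext, \emptyset, \SessionType)}$ and hence the inequality, the infinite case being trivial. To check closure of $\wrel$, given $(\BoundContext, \SessionTypeS, n)$ I use contractivity to unfold $\SessionTypeS$ into its outermost guard, which by Proposition~\ref{prop:unfold_weight} leaves $\aweight(\BoundContext, \emptyset, \cdot)$ unchanged \emph{and} resets the second argument of $\aweight$ to $\emptyset$ at the top. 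The cases $\SessionEnd$ and internal choice are immediate. For an external choice $\ExternalChoice{\tmsg{\Tag_i}{\tvar_i}{\Qualifier_i~\SessionTypeS_i}.\SessionTypeT_i}_{i\in I}$ the defining equation gives $\aweight(\BoundContext, \emptyset, \SessionTypeS) = \max_i\{ 1 + \aweight(\BoundContext, \emptyset, \SessionTypeS_i), \aweight(\BoundContext, \emptyset, \SessionTypeT_i)\}$ (using $\emptyset \setminus \{\tvar_i\} = \emptyset$), so $n \ge \aweight(\BoundContext, \emptyset, \SessionTypeS)$ forces $n > 0$, $(\BoundContext, \SessionTypeS_i, n-1) \in \wrel$, and $(\BoundContext, \SessionTypeT_i, n) \in \wrel$, while $\tvar_i \notin \BoundContext$ may be assumed by the Barendregt convention on the prefix-bound variables. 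These are precisely the premises required by Definition~\ref{def:type_weight}.

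Completeness---that $\wbb{\BoundContext}{\SessionType}{n}$ implies $\aweight(\BoundContext, \emptyset, \SessionType) \le n$---is the harder direction and the main obstacle, precisely because the coinductive relation chases the infinite unfolding of recursive types whereas the algorithm terminates by assigning weight $0$ to a recursion variable reached outside any input prefix. A bare induction on $n$ fails because the continuation $\SessionTypeT_i$ of an input carries the \emph{same} bound $n$ rather than $n-1$, and a bare structural induction fails because exposing a guard by unfolding increases term size. My plan is therefore a lexicographic induction on a generalized statement carrying the algorithm's bookkeeping set, of the form ``if $\wbb{\BoundContext}{\SessionType\,\sigma}{n}$ then $\aweight(\BoundContext, \BoundContext', \SessionType) \le n$'', where $\sigma$ closes the variables of $\BoundContext'$ by the surrounding recursions; the primary measure is $n$ and the secondary measure is the structure of the finite term $\SessionType$ taken \emph{without} unfolding. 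At $\SessionEnd$ and internal choices the algorithm returns $0 \le n$; at a recursion variable $\tvar \in \BoundContext'$ it again returns $0$, which is legitimate because the hypothesis on $\SessionType\,\sigma$ supplies a bound $n \ge 0$ for the corresponding unfolded recursion; at $\trec\tvar.\SessionType$ we move $\tvar$ into $\BoundContext'$ (justified by Proposition~\ref{prop:unfold_weight}) and recurse structurally; and at an external choice the argument bound drops to $n-1$, feeding the primary hypothesis, while the continuations are covered by the secondary descent. Combining the two inequalities then gives $\xweight{\BoundContext}{\SessionType} = \aweight(\BoundContext, \emptyset, \SessionType)$.
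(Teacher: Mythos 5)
Your proposal matches the paper's own proof essentially step for step: your soundness direction is exactly the paper's Lemma~\ref{lem:weight_correctness} (the same coinductive weight bound $\{(\BoundContext,\SessionTypeS,n) \mid \aweight(\BoundContext,\emptyset,\SessionTypeS) \le n\}$, closed by unfolding to the outermost guard via contractivity and Proposition~\ref{prop:unfold_weight}), and your generalized completeness statement with $\sigma$ and $\BoundContext'$ is exactly the paper's Lemma~\ref{lem:weight_completeness}, which states that $\wbb{\BoundContext}{\SessionTypeT\substitution}{n}$ implies $\aweight(\BoundContext,\dom(\substitution),\SessionTypeT) \le n$ for an arbitrary substitution $\substitution$. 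The only cosmetic difference is that the paper proves completeness by plain structural induction on the finite term --- the generalization over $\substitution$ already eliminates all unfolding, since the recursion case just moves $\tvar$ into $\dom(\substitution)$ and recurses on the structurally smaller body --- so your primary lexicographic component on $n$ is valid but unnecessary.
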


\subsection{Type Checking}

In Sections~\ref{sec:algorithmic_subtyping}
and~\ref{sec:algorithmic_weight} we have already presented algorithms
for deciding whether two (endpoint) types are related by subtyping and
for computing the weight of (endpoint) types.
Therefore, there is just one aspect left that makes the type checking
rules in Table~\ref{tab:typing_processes} non-algorithmic, which is
the decomposition of the type environment $\Context$ into $\Context_1
+ \Context_2$ when attempting to derive the judgment $\RecContext;
\BoundContext; \Context \vdash \ProcessP \parop \ProcessQ$ by means of
rule~\rulename{T-Par}. The idea is to look at the free names of
$\ProcessP$ and $\ProcessQ$ that have linear types in $\Context$ and
to split $\Context$ in such a way that $\dom(\Context_1|_\qlin)
\subseteq \fn(\ProcessP)$ and $\dom(\Context_2|_\qlin) \subseteq
\fn(\ProcessQ)$ and $\dom(\Context_1|_\qun) = \dom(\Context_2|_\qun) =
\dom(\Context|_\qun)$. Clearly, if $\ProcessP$ and $\ProcessQ$ share a
free name that has a linear type in $\Context$ there is no way to
derive the judgment $\RecContext; \BoundContext; \Context \vdash
\ProcessP \parop \ProcessQ$.
We omit a formal definition of this splitting since it can be worked
out precisely as explained in~\cite{GayHole05}.

%%% Local Variables: 
%%% mode: latex
%%% TeX-master: "main"
%%% End: 

\section{Related work}
\label{sec:related}

\subsection*{Singularity OS}
Copyless message passing is one of the key features adopted by the
Singularity OS~\cite{SingularityOverview05} to compensate the overhead
of communication-based interactions between isolated
processes. Communication safety is enforced by checking processes
against \emph{channel contracts} that are \emph{deterministic},
\emph{autonomous}, and
\emph{synchronizing}~\cite{StengelBultan09,VillardLozesCalcagno09}.
A contract is deterministic if there cannot be two transitions that
differ only for the target state, autonomous if every two transitions
departing from the same state are either two sends or two receives,
and synchronizing if every loop that goes through a final state has at
least one input and one output action.
As argued in~\cite{Fahndrich06}, session types can model channel
contracts quite well because they always correspond by construction to
contracts that are deterministic and autonomous. Session types like
those adopted in this work have just one final state $\SessionEnd$ and
therefore are trivially synchronizing, but this implies that we are
unable to model contracts where a final state has outgoing
transitions. This is not an intrinsic limit of session types (it is
possible to extend session types with more general ``final states'' as
shown in~\cite{CastagnaDezaniGiachinoPadovani09}) and plausibly this
restriction is quite natural in practice (for example, all the channel
contracts in the source code of Singularity OS have final states
without outgoing transitions).

Interestingly, already in~\cite{Fahndrich06} it was observed that
special attention must be deserved to the type of endpoints that are
sent as messages to avoid inconsistencies. In Singularity OS,
endpoints (as well as any other memory block) allocated in the
exchange heap are explicitly tagged with the identifier of their owner
process, and when a block changes owner (because its pointer is sent
in a message) it is the sender's responsibility to update the tag with
the identifier of the receiver process. If this update is not
performed atomically (and it cannot be, for efficiency reasons) the
following can happen: a process sends a message $m$ on an endpoint
$\PointerA$ whose peer $\PointerB$ is owned by some process
$\ProcessP_1$; the sender therefore tags $m$ with $\ProcessP_1$;
simultaneously, $\ProcessP_1$ sends $\PointerB$ away to some other
process $\ProcessP_2$; message $m$ is now formally owned by
$\ProcessP_1$, while in fact it is enqueued in an endpoint that is
owned by $\ProcessP_2$. The authors of~\cite{Fahndrich06} argue that
this inconsistency is avoided if only endpoint in a ``send state''
(those whose type begins with an internal choice) can be sent as
messages. The reason is that, if $\PointerB$ is in a ``send state'',
then $\PointerA$, which must have a dual type, is in a ``receive
state'', and therefore it is not possible to send message $m$ on
it.
\REVISION{
  In this respect, our work shows that the ``send state'' restriction
  has deeper motivations that go beyond the implementation details of
  ownership transfer, it gives formal evidence that the restriction
  devised in~\cite{Fahndrich06} is indeed safe, because endpoints in a
  ``send state'' always have a null weight, and it shows how to handle
  a more expressive type system with polymorphic endpoint types.
}

\subsection*{Early Type-Theoretic Formalizations of Singularity OS}
This work improves previous formalizations of Singularity OS presented
in~\cite{BonoMessaPadovani11,BonoPadovani11}.
The main differences regard polymorphic and unrestricted endpoint
types and the modeling of \Sing{}'s \Code{expose}.

Polymorphic endpoint types increase the flexibility of the type system
and are one of the features of Singularity OS, in the form of
polymorphic contracts, documented in the design note dedicated to
channels~\cite{SDN5}. The most interesting aspect of polymorphic
endpoint types is their interaction with the ownership invariant (see
the example~\eqref{eq:micidiale2}) and with the computation of type
weights.
Polymorphism was not considered in~\cite{BonoMessaPadovani11}, and
in~\cite{BonoPadovani11} we have introduced a \emph{bounded} form of
polymorphism, along the lines of~\cite{Gay08}, but we did not impose
any constraint on the instantiation of type variables without bound
which were all estimated to have infinite weight. This proved to be
quite restrictive (a simple forwarder process like the one in
Example~\ref{ex:forwarder} would be ill-typed).  The crucial
observation of the present type system is that type variables denote
``abstract'' values that can only be passed around. So, just as values
that are passed around must have a finite-weight type, it makes sense
to impose the same restriction when instantiating type variables.
For the sake of simplicity, in the present work we have dropped type
bounds for type variables. This allowed us to define the subtyping
algorithm as a relatively simple extension of the standard subtyping
algorithm for session types~\cite{GayHole05}. It should be possible to
work out a subtyping algorithm for bounded, polymorphic, recursive
endpoint types, possibly adapting related algorithms defined for
functional types~\cite{Jeffrey01,ColazzoGhelli05}, although the
details might be quite involved.

In~\cite{BonoMessaPadovani11,BonoPadovani11} only linear endpoint
types were considered. However, as pointed out by some referees, a
purely linear type system is quite selective on the sort of constructs
that can be effectively modeled with the calculus. For this reason, in
the present version we have introduced unrestricted endpoint types in
addition to linear ones, with the understanding that other kinds of
unrestricted data types (such as the primitive types of boolean or
integer values) can be accommodated just as easily. We have shown that
unrestricted endpoint types can be used for representing the type of
non-linear resources such as permanent services and functions and we
have also been able to implement the \Code{TCell} type constructor of
\Sing{} (Example~\ref{ex:tref}).
Interestingly, the introduction of unrestricted endpoint types
required very little change to the process language (only a different
$\openChannel$ primitive) and no change at all to the heap model.

The remaining major difference between~\cite{BonoMessaPadovani11} and
this work is the lack of any \Code{expose} primitive in the process
calculus, which is used in the \Sing{} compiler to keep track of
memory ownership.
To illustrate the construct, consider the code fragment
\begin{SingSharp}
  expose (a) {
    b.Arg(*a);
    *a = new[ExHeap] T();
  }
\end{SingSharp}
which dereferences a cell \Code{a} and sends its content on endpoint
\Code{b}. After the \Code{b.Arg(*a)} operation the process no longer
owns \Code{*a} but it still owns \Code{a}. Therefore, the ownership
invariant could be easily violated if the process were allowed to
access \Code{*a} again. To prevent this, the \Sing{} compiler allows
(linear) pointer dereferentiation only within \Code{expose}
blocks. The \Code{expose (a)} block temporarily transfers the
ownership of \Code{*a} from \Code{a} to the process exposing \Code{a}
and is well-typed if the process still owns \Code{*a} at the end of
block. In this example, the only way to regain ownership of \Code{*a}
is to assign it with the pointer to another object that the process
owns.
In~\cite{BonoMessaPadovani11} we showed that all we need to capture
the static semantics of \Code{expose} blocks is to distinguish cells
with type $\Ref\Type$ (whose content, of type $\Type$, is owned by the
cell) from cells with type $\Ref\Open$ (whose content is owned
directly by the process). At the beginning of the expose block, the
type of \Code{a} turns from $\Ref\Type$ to $\Ref\Open$; within the
block it is possible to (linearly) use \Code{*a}; at the end of the
block, \Code{*a} is assigned with the pointer to a newly allocated
object that the process owns, thus turning \Code{a}'s type from
$\Ref\Open$ back to some $\Ref\TypeS$.
In other words, cell types (and other object types) are simple
behavioral types that can be easily modeled in terms of polymorphic
endpoint types. In~\cite{BonoPadovani11} we have shown that the
endpoint type
\[
  \mathtt{CellT} =
  \trec\tvarA.
  ({!}\tmsg{\mathtt{Set}}{\tvarB}{\qlin~\tvarB}.{?}\msg{\mathtt{Get}}{\qlin~\tvarB}.\tvarA
  \oplus
  {!}\msg{\mathtt{Free}}{}.\SessionEnd)
\]
corresponds to the open cell type $\Ref\Open$ that allows for setting
a cell with a value of arbitrary type and for freeing the cell. Once
the cell has been set, its type turns to some
\[
  {?}\msg{\mathtt{Get}}\Type.\mathtt{CellT}
\]
corresponding to the cell type $\Ref\Type$ that only allows for
retrieving its content. The cell itself can be easily modeled as a
process that behaves according to $\co{\mathtt{CellT}}$, as shown
in~\cite{BonoPadovani11}.

As a final note, in~\cite{BonoMessaPadovani11} we have shown how to
accommodate the possibility of closing endpoints ``in advance'' (when
their type is different from $\SessionEnd$), since this feature is
available in \Sing{}. Overall, it seems like the issues it poses
exclusively concern the implementation details rather than the
peculiar characteristics of the formal model. Consequently, we have
decided to drop this feature in the present paper.

\subsection*{Type Weight}
Other works~\cite{Fahndrich06,GayVasconcelos10} introduce apparently
similar, finite-size restrictions on session types. In these cases,
the size estimates the maximum number of enqueued messages in an
endpoint and it is used for efficient, static allocation of endpoints
with finite-size type. Our weights are unrelated to the size of queues
and concern the length of chains of pointers involving queues. For
example, in~\cite{GayVasconcelos10} the session type $\SessionTypeT =
\trec\tvar.{?}\msg\Tag{\qlin~\tvar}.\SessionEnd$ has size 1 (there can
be at most one message of type $\qlin~\SessionTypeT$ in the queue of
an endpoint with type $\SessionTypeT$) and the session type
$\SessionTypeS = \trec\tvar.{?}\msg\Tag{\qlin~\SessionEnd}.\tvar$ has
size $\infty$ (there can be any number of messages, each of type
$\qlin~\SessionEnd$, in the queue of an endpoint with type
$\SessionTypeS$). In our theory we have just the opposite, that is
$\weight\SessionTypeT = \infty$ and $\weight\SessionTypeS = 1$.
Despite these differences, the workaround we have used to bound the
weight of endpoint types (Example~\ref{ex:lists}) can also be used to
bound the size of session types as well, as pointed out
in~\cite{GayVasconcelos10}.

\subsection*{Logic-Based Analysis}
A radically different approach for the static analysis of Singularity
processes is given
by~\cite{VillardLozesCalcagno09,VillardLozesCalcagno10}, where the
authors develop a proof system based on a variant of \emph{separation
  logic}~\cite{OHearnReynoldsYang01}.  The proof system permits the
derivation of Hoare triples of the form $\{A\}~P~\{B\}$ where $P$ is a
program and $A$ and $B$ are logical formulas describing the state of
the heap before and after the execution of $P$. A judgment $\{
\mathsf{emp} \}~P~\{ \mathsf{emp} \}$ indicates that if $P$ is
executed in the empty heap (the pre-condition $\mathsf{emp}$), then it
leaks no memory (the post-condition $\mathsf{emp}$).  However, leaks
in~\cite{VillardLozesCalcagno09} manifest themselves only when both
endpoints of any channel have been closed. In particular, it is
possible to prove that the function~\Code{foo} in
Section~\ref{sec:singularity} is safe, although it may indeed leak
some memory.
This problem has been subsequently recognized and solved
in~\cite{Villard11}. Roughly, the solution consists in forbidding the
output of a message unless it is possible to prove (in the logic) that
the queue that is going to host the message is reachable from the
content of the message itself. In principle this condition is optimal,
in the sense that it should permit every safe output. However, it
relies on the knowledge of the identity of endpoints, that is a very
precise information that is not always available. For this reason,
\cite{Villard11} also proposes an approximation of this condition,
consisting in tagging endpoints of a channel with distinct
\emph{roles} (basically, what are called \emph{importing} and
\emph{exporting} views in Singularity). Then, an endpoint can be
safely sent as a message only if its role matches the one of the
endpoint on which it is sent. This solution is incomparable to the one
we advocate -- restricting the output to endpoints with finite-weight
type -- suggesting that it may be possible to work out a combination
of the two.
In any case, neither~\cite{VillardLozesCalcagno09}
nor~\cite{Villard11} take into account polymorphism.

\subsection*{Global Progress}
There exist a few works on session
types~\cite{BCDDDY08,CastagnaDezaniGiachinoPadovani09} that guarantee
a global progress property for well-typed systems where the basic idea
is to impose an order on channels to prevent circular dependencies
that could lead to a deadlock.
Not surprisingly, the critical processes such as~\eqref{eq:micidiale}
that we rule out thanks to the finite-weight restriction on the type
of messages are ill typed in these works.
It turns out that a faithful encoding of~\eqref{eq:micidiale} into the
models proposed in these works is impossible, because the
$\openChannel({\cdot},{\cdot})$ primitive we adopt (and that mimics
the corresponding primitive operation in Singularity OS) creates
\emph{both} endpoints of a channel within the same process, while the
session initiation primitives
in~\cite{BCDDDY08,CastagnaDezaniGiachinoPadovani09} associate the
fresh endpoints of a newly opened session to different processes
running in parallel. This invariant -- that the same process cannot
own more than one endpoint of the same channel -- is preserved in
well-typed processes because of a severe restriction: whenever an
endpoint $\PointerC$ is received, the continuation process cannot use
any endpoint other than $\PointerC$ and the one from which $\PointerC$
was received.

%%% Local Variables: 
%%% mode: latex
%%% TeX-master: "main"
%%% End: 

\section{Conclusions}
\label{sec:conclusions}

We have defined the static analysis for a calculus where processes
communicate through the exchange of \emph{pointers}. Verified
processes are guaranteed to be free from memory faults, they do not
leak memory, and do not fail on input actions.  Our type system has
been inspired by session type theories. The basic idea of session
types, and of behavioral types in general, is that operating on a
(linearly used) value may change its type, and thus the capabilities
of that value thereafter. Endpoint types express the capabilities of
endpoints, in terms of the type of messages that can be sent or
received and in which order.
We have shown that, in the copyless message passing paradigm,
linearity alone is not enough for preventing memory leaks, but also
that endpoint types convey enough information -- their \emph{weight}
-- to devise a manageable type system that detects potentially
dangerous processes: it is enough to restrict send operations so that
only endpoint with a finite-weight type can be sent as messages and
only finite-weight endpoint types can instantiate type variables. This
restriction can be circumvented in a fairly easy and general way at
the cost of a few extra communications, still preserving all the nice
properties of the type system (Example~\ref{ex:lists}).

We claim that our calculus provides a fairly comprehensive
formalization of the peculiar features of \Sing{}, among which are the
explicit memory management of the exchange heap, the controlled
ownership of memory allocated on the exchange heap, and channel
contracts. We have also shown how to accommodate some advanced
features of the \Sing{} type system, namely (the lack of)
\Code{[Claims]} annotations, the \Code{TCell} type constructor that
allows for the sharing of linear pointers, and polymorphic channel
contracts. In prior work~\cite{BonoPadovani11} we had already shown
how polymorphic endpoint types permit the encoding of \Code{expose}
blocks for accessing linear pointers stored within other objects
allocated on the exchange heap.
Interestingly, previous studies on Singularity channel
contracts~\cite{Fahndrich06} had already introduced a restriction on
send operations so that only endpoints in a \emph{send-state}, those
whose type begins with an internal choice, can be safely sent as
messages.  There the restriction was motivated by the implementation
of ownership transfer in Singularity, where it is the sender's
responsibility to explicitly tag sent messages with their new
owner. We have shown that there are more reasons for being careful
about which endpoints can be sent as messages and that the send-state
restriction is a sound approximation of our finite-weight restriction,
because endpoints in a send-state always have a null weight.

On a more technical side, we have also developed a decidable theory of
polymorphic, recursive behavioral types. Our theory is incomparable
with that developed in~\cite{Gay08}: we handle recursive behavioral
types, whereas \cite{Gay08} only considers finite ones; polymorphism
in~\cite{Gay08} is bounded, while it is unrestricted in our case.
The subtyping relation that takes into account both recursive
behaviors and bounds is in fact quite straightforward to define
(see~\cite{BonoPadovani11}), but its decision algorithm appears to be
quite challenging. As observed in~\cite{Gay08}, bounded polymorphic
session types share many properties with the type language in system
\Fsub~\cite{CardelliMartiniMitchellScedrov94}, and subtyping
algorithms for extensions of \Fsub{} with recursive types are well
known for their complexity~\cite{Jeffrey01,ColazzoGhelli05}. We leave
the decision algorithm for subtyping of behavioral types with
recursion and bounded polymorphism as future work.

%%% Local Variables: 
%%% mode: latex
%%% TeX-master: "main"
%%% End: 

\paragraph{Acknowledgments.} We are grateful to Lorenzo Bettini for
discussions on the notion of memory leak, to Nobuko Yoshida for
comments on an early version of this paper, and to the anonymous
referees for the detailed and useful reviews.

\bibliographystyle{plain}
\bibliography{main}

\appendix

\section{Supplement to Section~\ref{sec:types}}
\label{sec:extra_types}

\begin{proposition}[Proposition~\ref{prop:wf_types}]
  The following properties hold:
\begin{enumerate}[\em(1)]
\item $\co{\co{\SessionType}} = \SessionType$.

\item $\emptyset \wfdash \SessionType$ implies that $\SessionType
  \dual \co\SessionType$ and $\emptyset \wfdash \co\SessionType$.

\item $\BoundContext; \{ \tvar \} \wfdash \SessionTypeT$ and
  $\BoundContext \wfdash \SessionTypeS$ imply $\BoundContext \wfdash
  \SessionTypeT\subst\SessionTypeS\tvar$.

\item $\emptyset; \{\tvar\} \wfdash \SessionTypeT$ and $\emptyset
  \wfdash \SessionTypeS$ imply
  $\co{\SessionTypeT\subst{\SessionTypeS}{\tvar}} =
  \co{\SessionTypeT}\subst{\SessionTypeS}{\tvar}$.
\end{enumerate}
\end{proposition}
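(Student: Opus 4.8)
The plan is to prove all four items by induction on the structure of the endpoint type, isolating the recursion case as the only genuine difficulty. The non-recursive cases are uniform across items~(1), (2) and~(4): for $\SessionEnd$ and for a bare variable $\tvar$ the two sides of each equation coincide syntactically (using $\co{\SessionEnd}=\SessionEnd$ and $\co{\tvar}=\tvar$), and for an internal or external choice the definition of $\co{\cdot}$ only swaps the polarity while leaving each tag $\Tag_i$, each bound variable $\tvar_i$ and each argument type $\Type_i$ untouched. Hence the desired equalities follow by applying the induction hypothesis to every continuation $\SessionTypeT_i$, and for~(2) the matching choice clause of Definition~\ref{def:duality} is read off directly from the definition of $\co{\cdot}$, with the continuation pairs $(\SessionTypeT_i,\co{\SessionTypeT_i})$ supplied by the coinductive hypothesis.

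Item~(3) is of a different nature, so I would treat it first and independently, by induction on the derivation of $\BoundContext;\{\tvar\}\wfdash\SessionTypeT$ (Table~\ref{tab:wf}). The cases \rulename{WF-End} and \rulename{WF-Var} are immediate (note that $\tvar$, being inner, cannot appear via \rulename{WF-Var}, so the substitution only touches prefixes). The real content is in \rulename{WF-Prefix}, where the substituted variable migrates into the combined context $\outervars\cup\innervars$ used to check the argument types: to close the induction there I would prove a slightly generalized statement that permits substituting a variable drawn from either context by a replacement $\SessionTypeS$ that is well formed with respect to the ambient outer context. This generalization also requires a routine weakening lemma for $\wfdash$ (monotonicity in $\outervars$), which I would state and dispatch up front; with it, the \rulename{WF-Rec} case goes through after moving the bound variable $\tvar$ into the outer set and weakening $\SessionTypeS$ accordingly.

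The crux for items~(1) and~(4) is the case $\trec\tvar.\SessionType$, because $\co{\trec\tvar.\SessionType}=\trec\tvar.\co{\SessionType\psubst{\trec\tvar.\SessionType}{\tvar}}$ mixes ordinary substitution, the inner substitution $\psubst{\cdot}{\cdot}$, and duality. I would therefore develop a small toolkit of commutation lemmas, all proved by structural induction: (A) inner substitution of a well-formed type preserves well-formedness; (B) ordinary and inner substitution commute when the two variables are distinct and the replacements are closed; and (C) duality commutes with inner substitution on an inner variable, i.e.\ $\co{\SessionTypeT\psubst{\SessionTypeS}{\tvar}}=\co{\SessionTypeT}\psubst{\SessionTypeS}{\tvar}$ when $\tvar$ occurs only within prefixes. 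Lemma~(C) is the key: since inner occurrences of $\tvar$ sit inside argument types, which duality leaves unchanged, substituting before or after $\co{\cdot}$ agrees. With~(A)--(C) in hand, item~(1) follows by unwinding $\co{\co{\trec\tvar.\SessionType}}$ and rewriting the nested $\psubst$ via~(C), and item~(4) follows by the analogous rewriting, using~(B) to push $\subst{\SessionTypeS}{\tvar}$ past the $\psubst{\trec\tvar.\SessionType}{\tvar}$ introduced by the definition of $\co{\cdot}$.

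Finally I would return to item~(2). For $\emptyset\wfdash\co\SessionType$ I would argue by structural induction that $\co{\cdot}$ preserves well-formedness, the recursion case appealing to Lemma~(A) to see that $\SessionType\psubst{\trec\tvar.\SessionType}{\tvar}$ remains well formed. For $\SessionType\dual\co\SessionType$ I would exhibit the relation $\drel=\{(\SessionTypeT,\co\SessionTypeT)\mid \SessionTypeT \text{ well formed, no free top-level variable}\}$ and check it satisfies the clauses of Definition~\ref{def:duality}; recursive terms are handled by the convention that types are taken modulo the unfolding law $\trec\tvar.\SessionType=\SessionType\subst{\trec\tvar.\SessionType}{\tvar}$, so I must first verify that $\co{\cdot}$ respects this unfolding, namely $\co{\trec\tvar.\SessionType}=\co{\SessionType\subst{\trec\tvar.\SessionType}{\tvar}}$, which reduces to Lemmas~(B) and~(C). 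I expect the main obstacle throughout to be precisely this bookkeeping of free, inner and outer variable occurrences as they pass through the two substitution operators and the recursion clause of duality; once the commutation lemmas~(A)--(C) are correctly formulated and proved, each of the four items reduces to a short rewriting argument.
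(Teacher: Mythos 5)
Your proposal is correct and takes essentially the same route as the paper's (sketched) proof: the paper likewise proceeds by structural induction with the recursion case as the crux, resolving it for items~(1) and~(2) by appealing to exactly your Lemma~(C) (duality commutes with inner substitution) and to the commutation of duality with unfolding, while noting that items~(2--4) also rest on the fact that well-formed types have free variables only within prefixes. The remaining details — your weakening lemma, the generalized substitution statement for item~(3), and the explicit coinductive relation witnessing $\SessionType \dual \co\SessionType$ — are precisely what the paper leaves to the reader, and your plan fills them in consistently.
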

\begin{proof}[Proof sketch]
  \REVISION{Item~(1) is proved by induction on $\SessionTypeT$. The
    only interesting case is when $\SessionTypeT \seq
    \trec\tvar.\SessionTypeS$. Then, by definition of dual, we have
    $\co\SessionTypeT \seq
    \trec\tvar.\co{\SessionTypeS\psubst{\SessionTypeT}{\tvar}}$ and
    now:
\[
\begin{array}{r@{~}c@{~}l@{\qquad}l}
  \co{\co\SessionTypeT} & \seq &
  \trec\tvar.\co{\co{\SessionTypeS\psubst{\SessionTypeT}{\tvar}}\psubst{\co\SessionTypeT}{\tvar}}
  & \text{(by definition of dual)}
  \\
  & = &
  \trec\tvar.\co{\co{\SessionTypeS\psubst{\SessionTypeT}{\tvar}}} 
  & \text{(by definition of inner substitution)}
  \\
  & = &
  \trec\tvar.(\co{\co\SessionTypeS}\psubst{\SessionTypeT}{\tvar})
  & \text{(because inner substitution and dual commute)}
  \\
  & = &
  \trec\tvar.(\SessionTypeS\psubst{\SessionTypeT}{\tvar})
  & \text{(by induction hypothesis)}
  \\
  & = &
  \trec\tvar.\SessionTypeS \seq \SessionTypeT
  & \text{(by folding the recursion)}
\end{array}
\]

Item~(2) relies on the fact that duality and unfolding commute. Indeed
we have~\hypo* $\co{\trec\tvar.\SessionTypeT} \seq
\trec\tvar.\co{\SessionTypeT\psubst{\trec\tvar.\SessionTypeT}{\tvar}}$
and now:
\[
\begin{array}{@{}r@{~}c@{~}l@{\quad}l@{}}
  \co{\SessionTypeT\subst{\trec\tvar.\SessionTypeT}{\tvar}} & = &
  \co{\SessionTypeT\psubst{\trec\tvar.\SessionTypeT}{\tvar}\subst{\trec\tvar.\SessionTypeT}{\tvar}}
  & \text{(def. of inner substitution)}
  \\
  & = & 
  \co{\SessionTypeT\psubst{\trec\tvar.\SessionTypeT}{\tvar}}\subst{\co{\trec\tvar.\SessionTypeT}}{\tvar}
  & \text{(by def. of dual)}
  \\
  & = &
  \co{\SessionTypeT\psubst{\trec\tvar.\SessionTypeT}{\tvar}}\subst{
    \trec\tvar.\co{\SessionTypeT\psubst{\trec\tvar.\SessionTypeT}{\tvar}}
  }{
    \tvar
  }
  & \text{(by \hypo*)}
  \\
  & = &
  \trec\tvar.\co{\SessionTypeT\psubst{\trec\tvar.\SessionTypeT}{\tvar}}
  & \text{(by folding the recursion)}
  \\
  & \seq &
  \co{\trec\tvar.\SessionTypeT}
  & \text{(by \hypo*)}
\end{array}
\]
}

In proving items~(2--4) it is also needed the fact that $\BoundContext
\wfdash \SessionTypeT$ implies $\ftv(\SessionTypeT) \subseteq
\BoundContext$ and these free type variables can only occur within
prefixes of $\SessionTypeT$.
\REVISION{We let the reader fill in the remaining details.}
\end{proof}

\begin{proposition}[Proposition~\ref{prop:subt_dual}]
  Let $\emptyset \wfdash \SessionTypeT$ and $\emptyset \wfdash
  \SessionTypeS$. Then $\SessionTypeT \subt \SessionTypeS$ if and only
  if $\co\SessionTypeS \subt \co\SessionTypeT$.
\end{proposition}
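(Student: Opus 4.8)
The plan is to reduce the biconditional to a single implication and then establish that implication by coinduction on $\subt$. First I would observe that, since $\emptyset \wfdash \SessionTypeT$ and $\emptyset \wfdash \SessionTypeS$, Proposition~\ref{prop:wf_types}(2) gives $\emptyset \wfdash \co\SessionTypeT$ and $\emptyset \wfdash \co\SessionTypeS$, while Proposition~\ref{prop:wf_types}(1) gives $\co{\co\SessionTypeT} = \SessionTypeT$ and $\co{\co\SessionTypeS} = \SessionTypeS$. Hence it suffices to prove, for all closed well-formed $\SessionTypeT,\SessionTypeS$, the one implication
\[
\SessionTypeT \subt \SessionTypeS \implies \co\SessionTypeS \subt \co\SessionTypeT,
\]
because applying it to the (again closed, well-formed) pair $\co\SessionTypeS \subt \co\SessionTypeT$ and folding back the two double-duals yields the converse direction for free.

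To witness this implication I would exhibit a coinductive subtyping containing all the flipped pairs. Concretely I set
\[
\srel = {\subt} \cup \{ (\co\SessionTypeS, \co\SessionTypeT) \mid \SessionTypeT \subt \SessionTypeS \},
\]
where in the second component $\SessionTypeT,\SessionTypeS$ range over endpoint types whose free type variables occur only within prefixes, so that their duals are defined; the types from the statement have no free variables at all, and I will check that this invariant is preserved as the coinduction descends. The reason $\srel$ must keep the plain relation $\subt$ alongside the flipped pairs is that duality does \emph{not} alter the types of message arguments: the argument conditions of Definition~\ref{def:subt} will be discharged by the $\subt$-component, whereas the continuation conditions will be discharged by the flipped component. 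By maximality of $\subt$, showing that $\srel$ satisfies the clauses of Definition~\ref{def:subt} gives $\srel \subseteq {\subt}$, hence the implication.

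The clause check proceeds by case analysis on how $\SessionTypeT \subt \SessionTypeS$ was derived, working modulo unfolding of $\trec$ so that both types expose their outermost constructor (contractivity makes this terminate). The pairs inherited from $\subt$ satisfy the clauses trivially. For a flipped pair the case $\SessionTypeT = \SessionTypeS = \SessionEnd$ is immediate, and the case $\SessionTypeT = \SessionTypeS = \tvar$ simply cannot arise: by well-formedness every free variable we meet is confined to a prefix, so no type encountered is a bare top-level variable, which is exactly where the hypothesis $\emptyset \wfdash$ is used. In the external-choice case $\SessionTypeT = \ExternalChoice{\tmsg{\Tag_i}{\tvar_i}{\TypeT_i}.\SessionTypeT_i}_{i\in I}$ and $\SessionTypeS = \ExternalChoice{\tmsg{\Tag_i}{\tvar_i}{\TypeS_i}.\SessionTypeS_i}_{i\in J}$ with $I \subseteq J$, $\TypeT_i \subt \TypeS_i$ and $\SessionTypeT_i \subt \SessionTypeS_i$; dualizing turns both into internal choices, and clause~(4) for $\co\SessionTypeS \subt \co\SessionTypeT$ demands precisely the same inclusion $I \subseteq J$, the argument condition $\TypeT_i \subt \TypeS_i$ (unchanged, supplied by the $\subt$-component), and the continuation condition $\co\SessionTypeS_i \subt \co\SessionTypeT_i$, which holds since $(\co\SessionTypeS_i,\co\SessionTypeT_i)\in\srel$ follows from $\SessionTypeT_i \subt \SessionTypeS_i$. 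The internal-choice case is entirely symmetric.

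The step I expect to be the main obstacle is the interaction between duality and recursion. Since $\co{\trec\tvar.\SessionType}$ is defined through the inner substitution operator, I must guarantee that taking duals commutes with the unfoldings performed while exposing constructors, and that the continuations stay within the class on which duals are defined. Here the hard part will be the well-formedness bookkeeping: the rule \rulename{WF-Prefix} gives $\emptyset;\{\tvar_i\} \wfdash \SessionTypeT_i$ for the continuations, so $\tvar_i$ remains confined to prefixes, and I would combine this with Proposition~\ref{prop:wf_types}(4) (duality commutes with substitution of well-formed endpoint types) to move freely between a recursive term and its unfolding on both sides of $\co{\,\cdot\,}$ without ever producing an undefined dual.
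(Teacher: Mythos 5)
Your proposal is correct and follows essentially the same route as the paper: the paper's proof also exhibits the relation ${\subt} \cup \{(\co\SessionTypeS,\co\SessionTypeT) \mid \SessionTypeT \subt \SessionTypeS\}$ (restricted to types whose free variables occur only within prefixes, i.e.\ $\emptyset;\BoundContext \wfdash \SessionTypeT$ and $\emptyset;\BoundContext \wfdash \SessionTypeS$) as a coinductive subtyping, discharges the argument conditions via the $\subt$-component and the continuation conditions via the flipped component, and kills the bare-variable case by well-formedness, exactly as you do. Your explicit reduction of the biconditional to one implication via involution of $\co{\,\cdot\,}$ just spells out what the paper compresses into ``it is enough to show.''
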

\begin{proof}
  It is enough to show that
  \[ {\srel} \eqdef {\subt} \cup \{ (\co\SessionTypeS,
  \co\SessionTypeT) \mid \SessionTypeT \subt \SessionTypeS \land
  \emptyset; \BoundContext \wfdash \SessionTypeT \land \emptyset;
  \BoundContext \wfdash \SessionTypeS \}
\]
is a coinductive subtyping.
Suppose $(\co\SessionTypeS, \co\SessionTypeT) \in {\srel}$ where
\hypo1 $\SessionTypeT \subt \SessionTypeS$ and \hypo2 $\emptyset;
\BoundContext \wfdash \SessionTypeT$ and $\emptyset; \BoundContext
\vdash \SessionTypeS$. We reason by cases on the shape of
$\SessionTypeT$ and $\SessionTypeS$:
\begin{iteMize}{$\bullet$}
\item ($\SessionTypeT = \SessionTypeS = \SessionEnd$) We conclude
  immediately since $\co\SessionEnd = \SessionEnd$.

\item ($\SessionTypeT = \SessionTypeS = \tvar$) This case is
  impossible because of the hypothesis \hypo2.

\item ($\SessionTypeT =
  \ExternalChoice{\xmsg{\Tag_i}{\tvar_i}{\TypeT_i}.\SessionTypeT_i}_{i\in
    I}$ and $\SessionTypeS =
  \ExternalChoice{\xmsg{\Tag_j}{\tvar_j}{\TypeS_j}.\SessionTypeS_j}_{j\in
    J}$)
  Then $\co\SessionTypeT =
  \InternalChoice{\xmsg{\Tag_i}{\tvar_i}{\TypeT_i}.\co{\SessionTypeT_i}}_{i\in
    I}$ and $\SessionTypeS =
  \InternalChoice{\xmsg{\Tag_i}{\tvar_i}{\TypeS_i}.\co{\SessionTypeS_i}}_{i\in
    J}$.
  From \hypo1 we deduce $I\subseteq J$ and $\TypeT_i \subt \TypeS_j$
  and $\SessionTypeT_i \subt \SessionTypeS_i$ for every $i\in I$.
  From \hypo2 we deduce $\emptyset; \BoundContext, \tvar_i \wfdash
  \SessionTypeT_i$ and $\emptyset; \BoundContext, \tvar_i \wfdash
  \SessionTypeS_i$.
  By definition of $\srel$ we conclude $(\co{\SessionTypeS_i},
  \co{\SessionTypeT_i}) \in \srel$ for every $i\in I$.

\item ($\SessionTypeT =
  \InternalChoice{\xmsg{\Tag_i}{\tvar_i}{\TypeT_i}.\SessionTypeT_i}_{i\in
    I}$ and $\SessionTypeS =
  \InternalChoice{\xmsg{\Tag_j}{\tvar_j}{\TypeS_j}.\SessionTypeS_j}_{j\in
    J}$) Dual of the previous case.
\qedhere
\end{iteMize}
\end{proof}

%%% Local Variables: 
%%% mode: latex
%%% TeX-master: "main"
%%% End: 

\section{Supplement to Section~\ref{sec:type_system}}
\label{sec:extra_type_system}

Before addressing subject reduction and soundness we prove a series of
auxiliary results. The first one states an expected property of
endpoint types, namely that the weight
$\xweight{\{\tvar\}}\SessionType$ where we take the free occurrences
of $\tvar$ to have null weight remains finite if we replace the same
occurrences of $\tvar$ with an arbitrary, but finite-weight endpoint
type $\SessionTypeS$ (recall that
$\SessionTypeT\subst\SessionTypeS\tvar$ is a capture-avoiding
substitution).

\begin{proposition}
\label{prop:weight_subst}
Let $\max\{\xweight{\{\tvar\}}\SessionTypeT, \weight{\SessionTypeS}\}
< \infty$.  Then $\weight{\SessionTypeT\subst\SessionTypeS\tvar} <
\infty$.
\end{proposition}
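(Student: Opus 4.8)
The plan is to establish a quantitative refinement of the statement and then read off finiteness as a corollary. First I would invoke Theorem~\ref{thm:weight_algorithm} to replace every occurrence of $\xweight{\cdot}{\cdot}$ by the structurally-defined function $\aweight$, so that the goal becomes $\aweight(\emptyset, \emptyset, \SessionTypeT\subst\SessionTypeS\tvar) < \infty$ under the hypotheses $\aweight(\{\tvar\}, \emptyset, \SessionTypeT) < \infty$ and $\aweight(\emptyset, \emptyset, \SessionTypeS) = \weight\SessionTypeS < \infty$. The advantage of switching to $\aweight$ is that it is defined by induction on the syntax of the endpoint type (and is stable under unfolding by Proposition~\ref{prop:unfold_weight}), which matches the structural nature of the substitution $\subst\SessionTypeS\tvar$; the coinductive Definition~\ref{def:type_weight} would instead force reasoning modulo unfolding of recursions, interacting badly with a structural induction.

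The heart of the argument is the generalized inequality
\[
  \aweight(\BoundContext_0, \BoundContext, \SessionTypeT\subst\SessionTypeS\tvar)
  \leq
  \aweight(\BoundContext_0 \cup \{\tvar\}, \BoundContext, \SessionTypeT) + \weight\SessionTypeS,
\]
to be proved for \emph{arbitrary} contexts $\BoundContext_0, \BoundContext$ with $\tvar \notin \BoundContext_0 \cup \BoundContext$, under the standing Barendregt assumption that the bound type variables of $\SessionTypeT$---both those introduced by $\trec$ and those bound in prefixes---are chosen fresh with respect to $\tvar$, $\ftv(\SessionTypeS)$, $\BoundContext_0$, and $\BoundContext$, as permitted since substitution is capture-avoiding. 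I would prove this by induction on the structure of $\SessionTypeT$. The cases $\SessionTypeT = \SessionEnd$ and $\SessionTypeT = \InternalChoice{\cdots}$ (a send state) are immediate, because the left-hand side is $0$. For $\SessionTypeT = \trec\tvarB.\SessionTypeT'$ one simply pushes $\tvarB$ into $\BoundContext$ and applies the induction hypothesis. The external-choice case $\SessionTypeT = \ExternalChoice{\xmsg{\Tag_i}{\tvar_i}{\Type_i}.\SessionTypeT_i}_{i\in I}$ is the only computational one: applying the induction hypothesis to each argument type $\Type_i$ (weighed with recursion context reset to $\emptyset$) and to each continuation $\SessionTypeT_i$ (with context $\BoundContext \setminus \{\tvar_i\}$), the additive constant $\weight\SessionTypeS$ factors out of the outer $\max$, since adding a constant commutes with $\max$, yielding exactly the required bound.

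The delicate case is $\SessionTypeT = \tvar$, and this is where I expect the main friction. Here $\SessionTypeT\subst\SessionTypeS\tvar = \SessionTypeS$, so the left-hand side is $\aweight(\BoundContext_0, \BoundContext, \SessionTypeS)$, whereas $\aweight(\BoundContext_0 \cup \{\tvar\}, \BoundContext, \tvar) = 0$ makes the right-hand side equal to $\weight\SessionTypeS = \aweight(\emptyset, \emptyset, \SessionTypeS)$. To close the gap I would prove a separate \emph{monotonicity lemma}: $\aweight(\BoundContext_0', \BoundContext', \SessionType) \leq \aweight(\BoundContext_0, \BoundContext, \SessionType)$ whenever $\BoundContext_0 \subseteq \BoundContext_0'$ and $\BoundContext \subseteq \BoundContext'$, that is, enlarging either context can only lower the weight, as it can only turn an $\infty$-weighted variable occurrence into a $0$-weighted one. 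This lemma, itself a routine structural induction, gives $\aweight(\BoundContext_0, \BoundContext, \SessionTypeS) \leq \aweight(\emptyset, \emptyset, \SessionTypeS) = \weight\SessionTypeS$ and settles the variable case; the remaining sub-case $\SessionTypeT = \tvarB \neq \tvar$ is immediate, since then $\SessionTypeT\subst\SessionTypeS\tvar = \tvarB$ and membership of $\tvarB$ in the contexts is unaffected by adjoining $\tvar$.

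Finally, I would instantiate the generalized inequality at $\BoundContext_0 = \BoundContext = \emptyset$ to obtain $\weight{\SessionTypeT\subst\SessionTypeS\tvar} = \aweight(\emptyset,\emptyset,\SessionTypeT\subst\SessionTypeS\tvar) \leq \xweight{\{\tvar\}}\SessionTypeT + \weight\SessionTypeS$, which is finite by hypothesis, and conclude. The only points requiring genuine care are the bookkeeping of the two contexts through the prefix case and the capture-avoidance conventions that legitimize the structural induction; everything else is mechanical.
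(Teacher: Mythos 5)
Your proof is correct, and the bound you extract, $\weight{\SessionTypeT\subst\SessionTypeS\tvar} \le \xweight{\{\tvar\}}\SessionTypeT + \weight\SessionTypeS$, is exactly the quantitative content of the paper's own argument; but your route to it is genuinely different. The paper never leaves the coinductive framework of Definition~\ref{def:type_weight}: fixing $m$ and $n$ with $\{\tvar\} \vdash \SessionTypeT \wbound m$ and $\SessionTypeS \wbound n$, it exhibits the relation $\wrel = \{(\EmptyBoundContext, \SessionTypeT'\subst{\SessionTypeS}{\tvar}, m+n) \mid \{\tvar\} \vdash \SessionTypeT' \wbound m\}$ and checks, by case analysis on the shape of $\SessionTypeT'$, that it is a coinductive weight bound; the case $\SessionTypeT' = \tvar$ is discharged by upward closure of weight bounds ($\SessionTypeS \wbound n$ implies $\SessionTypeS \wbound m+n$), which plays the role your monotonicity lemma plays, and the case of a variable other than $\tvar$ is impossible because it would have no bound at all relative to $\{\tvar\}$. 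You instead transport the problem through the algorithmic characterization of Theorem~\ref{thm:weight_algorithm} and replace coinduction by structural induction on $\SessionTypeT$ (the generalized inequality over arbitrary contexts, plus monotonicity of $\aweight$ in both contexts). What this buys is a mechanical induction over finite syntax, in which recursion is absorbed once and for all into the context $\BoundContext$ and no coinductive relation has to be invented; what it costs is a forward dependency on a result stated in Section~\ref{sec:algorithms} and proved in Appendix~\ref{sec:extra_algorithms}. That dependency is legitimate: the proofs of Proposition~\ref{prop:unfold_weight} and Lemmas~\ref{lem:weight_correctness} and~\ref{lem:weight_completeness} nowhere use Proposition~\ref{prop:weight_subst}, so there is no circularity. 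It does, however, invert the paper's expository order, and since Proposition~\ref{prop:weight_subst} is needed already for Lemma~\ref{lem:type_subst} and hence for subject reduction, the authors' self-contained coinductive proof is the one that keeps the development stratified. Your one load-bearing extra ingredient, the monotonicity lemma, is true and routine (enlarging either context can only turn weight-$\infty$ variable occurrences into weight-$0$ ones, and this propagates through the $\max$ in the external-choice clause), so the argument goes through as you sketched it.
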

\begin{proof}
We show that $\{ \tvar \} \vdash \SessionTypeT \wbound m$ and
  $\SessionTypeS \wbound n$ imply
  $\SessionTypeT\subst\SessionTypeS\tvar \wbound m + n$.
It is enough to show that
\[
  {\wrel} \eqdef \{ (\EmptyBoundContext, \SessionTypeT'\subst{\SessionTypeS}{\tvar}, m + n)
  \mid
  \exists m\in\natset : \{ \tvar \} \vdash \SessionTypeT' \wbound m
  \}
\]
is a coinductive weight bound.
Observe that $\SessionTypeT'' \wbound n$ implies $(\EmptyBoundContext,
\SessionTypeT'', n) \in {\wrel}$.
Let $(\EmptyBoundContext, \SessionTypeT'', k) \in {\wrel}$. Then there
exist $\SessionTypeT'$ and $m$ such that $\SessionTypeT'' =
\SessionTypeT'\subst\SessionTypeS\tvar$ and $k = m + n$ and \hypo* $\{
\tvar \} \vdash \SessionTypeT' \wbound m$.
We reason by cases on $\SessionTypeT'$ assuming, without loss of
generality, that $(\{ \tvar \} \cup \ftv(\SessionTypeS)) \cap
\btv(\SessionTypeT') = \emptyset$:
\begin{iteMize}{$\bullet$}
\item ($\SessionTypeT' = \SessionEnd$) Trivial.

\item ($\SessionTypeT' = \tvar$) Then
  $\SessionTypeT'\subst\SessionTypeS\tvar = \SessionTypeS$ and from
  the hypothesis $\SessionTypeS \wbound n$ we conclude $\SessionTypeS
  \wbound m + n$.

\item ($\SessionTypeT' = \tvarB \ne \tvarA$) This case is impossible
  for it contradicts \hypo*.

\item ($\SessionTypeT' =
  \InternalChoice{\xmsg{\Tag_i}{\tvar_i}{\Type_i}.\SessionTypeT_i}_{i\in
    I}$) Trivial.

\item ($\SessionTypeT' =
  \ExternalChoice{\xmsg{\Tag_i}{\tvar_i}{\Qualifier_i~\SessionTypeS_i}.\SessionTypeT_i}_{i\in
    I}$)
  From \hypo* we deduce $m > 0$ and $\{ \tvar \} \vdash
  \SessionTypeS_i \wbound m - 1$ and $\{ \tvar \} \vdash
  \SessionTypeT_i \wbound m$ for every $i\in I$.
  By definition of $\wrel$ we conclude $(\EmptyBoundContext,
  \SessionTypeS_i\subst\SessionTypeS\tvar, (m - 1) + n) \in {\wrel}$
  and $(\EmptyBoundContext, \SessionTypeT_i\subst\SessionTypeS\tvar, m
  + n) \in {\wrel}$ for every $i\in I$.
\qedhere
\end{iteMize}
\end{proof}

\noindent Type variable instantiation does not affect the subtyping relation:

\begin{proposition}
\label{prop:subt_subst}
The following properties hold:
\begin{enumerate}[\em(1)]
\item $\SessionTypeT_1 \subt \SessionTypeT_2$ implies
  $\SessionTypeT_1\subst\SessionTypeS\tvar \subt
  \SessionTypeT_2\subst\SessionTypeS\tvar$;

\item $\TypeT_1 \subt \TypeT_2$ implies
  $\TypeT_1\subst\SessionTypeS\tvar \subt
  \TypeT_2\subst\SessionTypeS\tvar$.
\end{enumerate}
\end{proposition}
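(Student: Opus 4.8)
The plan is to prove both items at once by the standard coinductive technique used elsewhere in the paper (cf.\ the proofs of Propositions~\ref{prop:subt_transitive}, \ref{prop:subt_weight}, and~\ref{prop:subt_dual}): exhibit a single relation $\srel$ on qualified types and endpoint types, show that it is a \emph{coinductive subtyping} in the sense of Definition~\ref{def:subt}, and then appeal to the fact that $\subt$ is the \emph{largest} such relation to conclude ${\srel} \subseteq {\subt}$. Because subtyping is defined simultaneously on qualified types and on endpoint types, bundling both statements into one relation lets a single coinduction discharge (1) and (2) together.

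Concretely, for fixed $\SessionTypeS$ and $\tvar$ I would take
\[
{\srel} \eqdef {\subt}
\cup \{ (\SessionTypeT_1\subst\SessionTypeS\tvar, \SessionTypeT_2\subst\SessionTypeS\tvar) \mid \SessionTypeT_1 \subt \SessionTypeT_2 \}
\cup \{ (\TypeT_1\subst\SessionTypeS\tvar, \TypeT_2\subst\SessionTypeS\tvar) \mid \TypeT_1 \subt \TypeT_2 \}.
\]
Including $\subt$ as a summand is deliberate slack: it makes $\srel$ reflexive on the types we may need, and it is harmless because $\subt$ already satisfies the closure conditions of Definition~\ref{def:subt}, so every sub-pair those conditions demand lands back inside $\srel$. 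Once ${\srel}\subseteq{\subt}$ is established, instantiating the defining hypothesis of $\srel$ gives exactly (1) and (2).

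The verification checks the two clauses of Definition~\ref{def:subt}. For a qualified pair $(\Qualifier_1~\SessionTypeT_1\subst\SessionTypeS\tvar, \Qualifier_2~\SessionTypeT_2\subst\SessionTypeS\tvar)$ coming from $\Qualifier_1~\SessionTypeT_1 \subt \Qualifier_2~\SessionTypeT_2$, substitution leaves qualifiers untouched, so $\Qualifier_1 \leq \Qualifier_2$ is inherited and the residual endpoint-type pair lies in $\srel$ by construction. For an endpoint-type pair I would reason by cases on which clause (1)--(4) witnesses $\SessionTypeT_1 \subt \SessionTypeT_2$, assuming up to alpha-renaming (legitimate since types are identified modulo renaming of bound variables) that every prefix-bound $\tvar_i$ differs from $\tvar$ and avoids $\ftv(\SessionTypeS)$, so that $\subst\SessionTypeS\tvar$ commutes with the choice constructors without capture. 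The $\SessionEnd$ case and the two choice cases are then mechanical: substitution distributes over the choice, the index inclusion $I\subseteq J$ (resp.\ $J\subseteq I$) is preserved verbatim, and each required component pair — the argument types, in the appropriate variance, and the continuations — lands in $\srel$ directly from the defining hypotheses of $\SessionTypeT_1 \subt \SessionTypeT_2$.

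The only genuinely delicate point, and the sole reason the $\subt$ summand is needed, is clause (2), where $\SessionTypeT_1 = \SessionTypeT_2 = \tvarB$. When $\tvarB \neq \tvar$ substitution fixes it and clause (2) holds directly; when $\tvarB = \tvar$ both sides collapse to $\SessionTypeS$, so I must exhibit $(\SessionTypeS, \SessionTypeS) \in \srel$. Here I invoke reflexivity of $\subt$ (Proposition~\ref{prop:subt_transitive}): $\SessionTypeS \subt \SessionTypeS$, whence $(\SessionTypeS, \SessionTypeS) \in {\subt} \subseteq {\srel}$. I expect this variable/reflexivity case to be essentially the only obstacle; the choice cases are routine once capture-avoidance is arranged, and the qualifier condition is immediate.
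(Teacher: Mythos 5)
Your proposal is correct and takes essentially the same route as the paper: the paper's own proof is only a sketch stating that the result ``follows from the fact that a free type variable can only be related to itself,'' which is precisely the delicate variable case you isolate and discharge via reflexivity of $\subt$. The rest of your argument (the coinductive relation including $\subt$ as slack, capture-avoidance by alpha-renaming, and the mechanical choice cases) is exactly the routine verification the paper leaves as an exercise.
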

\begin{proof}[Proof sketch]
  Follows from the fact that a free type variable $\tvar$ can only be
  related to itself. The details are left as an technical exercise.
\end{proof}

We now turn to a series of standard auxiliary results of type
preservation under structural congruence and various forms of
substitutions.

\begin{lemma}
  \label{lem:cong}
  Let $\Context \vdash \ProcessP$ and $\ProcessP \equiv
  \ProcessQ$. Then $\Context \vdash \ProcessQ$.
\end{lemma}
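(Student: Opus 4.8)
The plan is to proceed by induction on the derivation of $\ProcessP \equiv \ProcessQ$. Since structural congruence is an equivalence relation, it is cleanest to strengthen the statement to the biconditional $\RecContext; \BoundContext; \Context \vdash \ProcessP$ iff $\RecContext; \BoundContext; \Context \vdash \ProcessQ$, keeping the environments $\RecContext$ and $\BoundContext$ fixed throughout (neither is altered by rule \rulename{T-Par}). With the biconditional formulation the reflexivity, symmetry and transitivity cases of the inductive definition of $\equiv$ are immediate: symmetry merely swaps the two sides of the iff, and transitivity composes two instances of the induction hypothesis. The closure-under-parallel-composition case, where $\ProcessP \parop \ProcessR \equiv \ProcessQ \parop \ProcessR$ is derived from $\ProcessP \equiv \ProcessQ$, is equally routine: from a derivation of $\Context \vdash \ProcessP \parop \ProcessR$, inversion of \rulename{T-Par} (the only rule whose conclusion types a parallel composition) yields $\Context = \Context_1 + \Context_2$ with $\Context_1 \vdash \ProcessP$ and $\Context_2 \vdash \ProcessR$; applying the induction hypothesis to the left component gives $\Context_1 \vdash \ProcessQ$, and reassembling with \rulename{T-Par} gives $\Context \vdash \ProcessQ \parop \ProcessR$.

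The real content sits in the three axioms, and there everything reduces to simple algebraic facts about the environment-combination operator $+$. For \rulename{S-Comm} I would invert \rulename{T-Par} to obtain $\Context = \Context_1 + \Context_2$ with both components typed, and then use commutativity of $+$ to retype $\ProcessQ \parop \ProcessP$ under $\Context_2 + \Context_1 = \Context$. For \rulename{S-Assoc} the argument is identical, applying \rulename{T-Par} inversion twice and then associativity of $+$ to move between $\Context_1 + (\Context_2 + \Context_3)$ and $(\Context_1 + \Context_2) + \Context_3$. Both commutativity and associativity of $+$ follow directly from its definition, since the only admissible overlap between combined environments is an identical unrestricted association.

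The one step that needs more than bookkeeping is \rulename{S-Idle}. For the direction $\Context \vdash \ProcessP$ implies $\Context \vdash \ProcessP \parop \idle$ I would observe that $\qun(\Context|_\qun)$ and $\Context + \Context|_\qun = \Context$ (adding already-present unrestricted associations has no effect), so \rulename{T-Idle} types $\idle$ under $\Context|_\qun$ and \rulename{T-Par} recombines. The converse direction is where I expect the main obstacle: inversion of \rulename{T-Par} gives $\Context = \Context_1 + \Context_2$ with $\Context_1 \vdash \ProcessP$ and $\qun(\Context_2)$, and I must recover $\Context \vdash \ProcessP$, that is, re-introduce the purely unrestricted component $\Context_2$. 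This requires an unrestricted-weakening lemma: if $\RecContext; \BoundContext; \Context_1 \vdash \ProcessP$ and $\Context_2$ is unrestricted with $\Context_1 + \Context_2$ defined, then $\RecContext; \BoundContext; \Context_1 + \Context_2 \vdash \ProcessP$. I plan to establish this separately by a straightforward induction on the typing derivation of $\ProcessP$: unrestricted associations satisfy the $\qun(\cdot)$ side conditions of \rulename{T-Idle} and \rulename{T-Close}, pass unchanged through the environment updates of \rulename{T-Send} and \rulename{T-Receive}, can be duplicated across the split in \rulename{T-Par}, do not enlarge the linear domain checked in \rulename{T-Rec}, and are absorbed by \rulename{T-Var}. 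Granting this auxiliary lemma, the \rulename{S-Idle} case closes and the whole induction goes through.
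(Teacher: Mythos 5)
Your proof is correct and takes essentially the same route as the paper, whose own proof is just the one-line ``by case analysis on the derivation of $\ProcessP \equiv \ProcessQ$''. The details you fill in are exactly the ones the paper's infrastructure supplies: commutativity and associativity of $+$ handle \rulename{S-Comm} and \rulename{S-Assoc}, and your auxiliary unrestricted-weakening lemma is precisely the paper's Lemma~\ref{lem:weakening}, which is proved there independently by induction on the typing derivation, so no circularity arises.
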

\begin{proof}
  By case analysis on the derivation of $\ProcessP \equiv \ProcessQ$.
\end{proof}

\begin{lemma}[type substitution]
\label{lem:type_subst}
If $\RecContext; \BoundContext, \tvar; \Context \vdash \Process$ and
$\emptyset \wfdash \SessionTypeS$ and $\weight\SessionTypeS < \infty$,
then $\RecContext; \BoundContext; \Context\subst{\SessionTypeS}{\tvar}
\vdash \Process\subst\SessionTypeS\tvar$.
\end{lemma}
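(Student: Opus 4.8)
The plan is to argue by induction on the derivation of $\RecContext; \BoundContext, \tvar; \Context \vdash \Process$, with a case analysis on the last typing rule, exploiting that $\SessionTypeS$ is closed so that substituting it introduces no fresh free type variables and commutes with every binder occurring in $\Process$ and in the types of $\Context$. Since the induction must descend under the recursion binder, I would first strengthen the statement so that the substitution is applied to $\RecContext$ as well, replacing every stored pair $(\BoundContext_j; \Context_j)$ by $(\BoundContext_j \setminus \{\tvar\}; \Context_j\subst{\SessionTypeS}{\tvar})$; the lemma as stated is then the instance in which $\tvar \notin \ftv(\RecContext)$ — in particular the case $\RecContext = \emptyset$, which is the only way the lemma is invoked from subject reduction (there the received message instantiates a freshly introduced variable at top level, so the recursion context is empty). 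Throughout I would adopt the Barendregt convention, choosing all variables bound in $\Process$ and in the choice prefixes of the types fresh with respect to $\tvar$ and $\ftv(\SessionTypeS)$, so that the standard commutation identity $(\TypeT\subst{\SessionTypeS_\star}{\tvar_k})\subst{\SessionTypeS}{\tvar} = (\TypeT\subst{\SessionTypeS}{\tvar})\subst{\SessionTypeS_\star\subst{\SessionTypeS}{\tvar}}{\tvar_k}$ is available.

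The leaf cases \rulename{T-Idle}, \rulename{T-Close} and \rulename{T-Var} are immediate, once one observes that substitution preserves qualifiers (hence $\qun(\Context)$) and fixes $\SessionEnd$. For the two $\openChannel$ rules the first premise $\BoundContext, \tvar \wfdash \SessionTypeT$ must become $\BoundContext \wfdash \SessionTypeT\subst{\SessionTypeS}{\tvar}$; since $\SessionTypeS$ is closed and (because its dual must be defined) $\tvar$ occurs in $\SessionTypeT$ only inside prefixes, this follows from Proposition~\ref{prop:wf_types}(3) after a trivial weakening giving $\BoundContext \wfdash \SessionTypeS$. I would then invoke Proposition~\ref{prop:wf_types}(4) to push the substitution through the dual, $\co{\SessionTypeT}\subst{\SessionTypeS}{\tvar} = \co{\SessionTypeT\subst{\SessionTypeS}{\tvar}}$, so that the substituted process is again an instance of the same rule. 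The \rulename{T-Choice} and \rulename{T-Par} cases are routine — note that $+$ commutes with substitution because substitution changes neither domains nor qualifiers — and \rulename{T-Rec} goes through via the strengthened hypothesis together with $\dom((\Context\subst{\SessionTypeS}{\tvar})|_\qlin) = \dom(\Context|_\qlin) \subseteq \fn(\Process) = \fn(\Process\subst{\SessionTypeS}{\tvar})$, since a type substitution alters neither the domain of an environment nor the free names of a process.

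The two load-bearing cases are \rulename{T-Send} and \rulename{T-Receive}. In both, the subtyping premises are preserved by Proposition~\ref{prop:subt_subst} after rewriting nested substitutions with the identity above. In \rulename{T-Receive} the continuation $\Process_i$ is typed under $(\BoundContext, \tvar), \tvar_i$; reorganising this as $(\BoundContext, \tvar_i), \tvar$ and applying the induction hypothesis (with the same, still closed and finite-weight, $\SessionTypeS$) yields the required judgment, the external-choice type in the conclusion being substituted branchwise. The delicate premise, specific to this type system, is the finite-weight side condition of \rulename{T-Send}, namely $\max\{\xweight{\BoundContext, \tvar}{\SessionTypeS_\star}, \xweight{\BoundContext, \tvar}{\TypeS}\} < \infty$ (where $\SessionTypeS_\star$ is the instantiating type written $\SessionTypeS$ in the rule, renamed here to avoid clashing with the substituted type): after substitution these weights are computed with respect to $\BoundContext$ and must remain finite. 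This is exactly where Proposition~\ref{prop:weight_subst} enters — since $\weight{\SessionTypeS} < \infty$, replacing the null-weight free variable $\tvar$ by $\SessionTypeS$ keeps both weights finite.

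I expect this weight step to be the main point to get right, both because it is the non-standard ingredient and because it needs the mild generalisation of Proposition~\ref{prop:weight_subst} from the singleton context $\{\tvar\}$ to the arbitrary finite context $\BoundContext, \tvar$ — the remaining free variables of $\BoundContext$ simply retaining their null weight. This generalisation is established by the very same coinductive-weight-bound construction used for Proposition~\ref{prop:weight_subst}, so no new idea is required, only a careful restatement. Once the weight premise is secured, the continuation judgment comes from the induction hypothesis and rule \rulename{T-Send} is reassembled, completing the case and the induction.
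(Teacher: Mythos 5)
Your proposal follows essentially the same route as the paper's own proof, which is given only as a sketch: induction on the typing derivation, invoking Proposition~\ref{prop:subt_subst} for the subtyping premises and Proposition~\ref{prop:weight_subst} for the finite-weight premise of \rulename{T-Send}. The two refinements you single out --- strengthening the statement so that the substitution also acts on $\RecContext$ (needed to descend through \rulename{T-Rec}/\rulename{T-Var}), and generalising Proposition~\ref{prop:weight_subst} from the singleton context $\{\tvar\}$ to $\BoundContext, \tvar$ --- are precisely the details the paper's ``wherever necessary'' glosses over, and you handle both correctly.
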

\begin{proof}[Proof sketch]
  Straightforward induction on the derivation of $\RecContext;
  \BoundContext, \tvar; \Context \vdash \Process$, using
  Propositions~\ref{prop:weight_subst} and~\ref{prop:subt_subst}
  wherever necessary.
\end{proof}

\begin{lemma}[value substitution]
\label{lem:value_subst}
If $\RecContext; \BoundContext; \Context, \Var : \TypeT \vdash
\Process$ and $\Context + \Value : \TypeS$ is defined and well formed
and $\TypeS \subt \TypeT$, then $\RecContext; \BoundContext; \Context
+ \Value : \TypeS \vdash \Process\subst{\Value}{\Var}$.
\end{lemma}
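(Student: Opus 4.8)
The plan is to proceed by induction on the derivation of $\RecContext; \BoundContext; \Context, \Var : \TypeT \vdash \Process$, with a case analysis on the last typing rule. Because the substitution replaces a \emph{variable} by a \emph{value}, it never touches the bound names of $\Process$, so I would assume throughout (by alpha-renaming) that $\Var$ and $\Value$ are distinct from every name bound in $\Process$ and from the $\Var_i$ of any receive prefix; then $\Process\subst{\Value}{\Var}$ commutes with the premises of each rule. The qualifier ordering $\qun \leq \qlin$ keeps the two kinds of type aligned: when $\TypeT$ is unrestricted, $\TypeS \subt \TypeT$ forces $\TypeS$ unrestricted too (since $\qun$ is minimal), which dispatches \rulename{T-Idle} and the non-subject case of \rulename{T-Close}; when $\Var$ is the subject of \rulename{T-Close}, $\TypeT = \qlin~\SessionEnd$ and the discipline confining the $\qun$ qualifier to non-terminating internal-choice types pins down $\TypeS = \qlin~\SessionEnd$, so \rulename{T-Close} reapplies verbatim.

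The core of the argument is the cases where $\Var$ is the subject of a communication. In \rulename{T-Send}, if $\Var$ is the message argument $\NameB$ then $\TypeT$ is exactly the declared type of $\NameB$, and from $\TypeS \subt \TypeT \subt \TypeT_k\subst{\SessionTypeS}{\tvar_k}$ I would recover the third premise by transitivity (Proposition~\ref{prop:subt_transitive}), while the finite-weight premise for $\TypeS$ follows from $\TypeS \subt \TypeT$ together with the fact that subtyping is weight-decreasing (Proposition~\ref{prop:subt_weight}). If instead $\Var$ is the sending endpoint $\NameA$, then $\TypeS \subt \TypeT$ with $\TypeT$ an internal choice forces $\TypeS$ to be an internal choice over a \emph{larger} index set, so the branch $k$ survives; after selecting it I would rebuild \rulename{T-Send} with $\Value$ as subject and invoke the induction hypothesis on the continuation, using Proposition~\ref{prop:subt_subst} to transport the continuation subtyping through the instantiation $\subst{\SessionTypeS}{\tvar_k}$. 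The \rulename{T-Receive} case is dual: when $\Var = \NameA$, $\TypeS \subt \TypeT$ shrinks the external choice to an index set $I' \subseteq I$ still covered by the process branches, and each continuation is handled by the induction hypothesis after composing the subtyping on arguments and on residuals (again through Propositions~\ref{prop:subt_transitive} and~\ref{prop:subt_subst}).

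For the structural rules I would apply the induction hypothesis to the sub-derivations. In \rulename{T-Par} the environment $\Context, \Var : \TypeT$ is split as $\Context_1 + \Context_2$: if $\TypeT$ is linear then $\Var : \TypeT$ occurs on exactly one side and $\Var$ is not free in the other component (so the substitution is vacuous there), whereas if $\TypeT$ is unrestricted the association may be shared; in either case the definedness hypothesis on $\Context + \Value : \TypeS$ lets me recombine the two results. Rule \rulename{T-Choice} is immediate from the induction hypothesis on both branches.

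The genuinely delicate point, which I expect to be the main obstacle, is the pair \rulename{T-Rec}/\rulename{T-Var}: the environment recorded in $\RecContext$ for a recursion variable may itself mention $\Var$. I would therefore run the induction on the slightly strengthened statement in which $\subst{\Value}{\Var}$ is applied simultaneously to every environment recorded in $\RecContext$. With that strengthening, \rulename{T-Rec} threads the substituted recorded environment into its premise, the side condition $\dom(\Context|_\qlin) \subseteq \fn(\Process)$ is preserved because a linear $\Var \in \fn(\Process)$ is replaced by $\Value \in \fn(\Process\subst{\Value}{\Var})$, and at a leaf \rulename{T-Var} the current environment must match the substituted recorded one. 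Checking that this equality survives the interplay with subtyping on a linearly-carried subject is the step requiring the most care: it is here that the contractivity of endpoint types and the folding/unfolding law $\trec\tvar.\SessionType = \SessionType\subst{\trec\tvar.\SessionType}{\tvar}$ are used to keep the looping type produced along the body aligned with the type recorded at the enclosing recursion.
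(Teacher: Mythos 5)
Your skeleton --- induction on the typing derivation, case analysis on the last rule, with transitivity (Proposition~\ref{prop:subt_transitive}), weight monotonicity (Proposition~\ref{prop:subt_weight}) and stability of subtyping under instantiation (Proposition~\ref{prop:subt_subst}) doing the work in the communication cases --- is exactly the paper's, which details only \rulename{T-Send} and calls the remaining cases ``simpler or trivial''. One sub-case of \rulename{T-Send} that the paper isolates is absent from your plan: the aliasing case $\Value = \NameA$, where the substituting \emph{value} is itself the subject of the output. There, definedness of $\Context + \Value : \TypeS$ forces $\TypeS$ to be the unrestricted type already assigned to $\NameA$, and one needs the invariance of $\qun$-qualified endpoint types to see that the continuation type $\SessionTypeT_k\subst{\SessionTypeS}{\tvar_k}$ coincides with $\TypeS$, so that the environment operation in the premise stays defined. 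This omission is minor; the real issue is below.

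Your instinct that \rulename{T-Rec}/\rulename{T-Var} is the delicate point is correct, and here you are sharper than the paper: the unstrengthened induction does get stuck at \rulename{T-Rec}, whose recorded environment still mentions $\Var : \TypeT$ while the conclusion environment must carry $\Value : \TypeS$, so substituting inside $\RecContext$ is indeed forced. But the claim with which you close --- that contractivity and the law $\trec\tvar.\SessionType = \SessionType\subst{\trec\tvar.\SessionType}{\tvar}$ keep the evolved type of $\Value$ aligned with the recorded one --- is not merely unproven; it fails. \rulename{T-Var} demands that the current environment contain the recorded one \emph{verbatim}; in the original derivation this forces the type of $\Var$ to return exactly to $\TypeT$ after the usages in the recursion body, but a proper subtype $\TypeS \subt \TypeT$ need not return to $\TypeS$ along the same usages, because subtyping of regular trees does not preserve the period of a recursion. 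Concretely, let
\[
\TypeT = \qlin~\trec\tvar.{!}\msg\Tag{\qun~U}.\tvar
\qquad\text{and}\qquad
\TypeS = \qlin~\trec\tvarB.{!}\msg\Tag{\qun~U}.{!}\msg\Tag{\qun~U'}.\tvarB
\]
where $U \neq U'$ are unrestricted endpoint types with $\qun~U \subt \qun~U'$ (say $U'$ drops one branch of $U$). By contravariance of output argument types (Definition~\ref{def:subt}) we get $\TypeS \subt \TypeT$. The process $\Process = \rec\RecVar.\send\Var\Tag\NameC.\RecVar$ is well typed under $\NameC : \qun~U, \Var : \TypeT$, because the continuation of $\TypeT$ after the output is $\TypeT$ itself, so \rulename{T-Var} matches the environment recorded by \rulename{T-Rec}. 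After the substitution, however, the environment reaching $\RecVar$ assigns to $\Value$ the one-step continuation of $\TypeS$, a regular tree different from $\TypeS$; yet re-applying \rulename{T-Rec} at the top forces the recorded environment to contain $\Value : \TypeS$, so \rulename{T-Var} cannot fire no matter how you substitute inside $\RecContext$, and $\Process\subst{\Value}{\Var}$ is in fact untypable under $\NameC : \qun~U, \Value : \TypeS$. Hence your last step cannot be closed by fold/unfold reasoning: one must either weaken the exact-match requirement of \rulename{T-Var} (e.g.\ up to subtyping) or restrict the statement to values whose types make the relevant continuations coincide. Be aware that the paper's own proof, by dismissing these cases as trivial, silently contains the same gap.
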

\begin{proof}
  By induction on the derivation of $\RecContext; \BoundContext;
  \Context, \Var : \TypeT \vdash \Process$ and by cases on the last
  rule applied.
  We only show the proof of the \rulename{T-Send} case, the others
  being simpler or trivial. In the \rulename{T-Send} case we have:
\begin{iteMize}{$\bullet$}
\item $\Process =
  \xsend{\NameA}{\Tag}{\SessionTypeS}{\NameB}.\Process'$;

\item $\Context, \Var : \TypeT = (\Context'', \NameA :
  \Qualifier~\InternalChoice{\tmsg{\Tag_i}{\tvar_i}{\Type_i}.\SessionType_i}_{i\in
    I}) + \NameB : \TypeS''$;

\item $\RecContext; \BoundContext; \Context'', \NameA :
  \Qualifier~\SessionTypeT_k\subst{\SessionTypeS}{\tvar_k} \vdash
  \Process'$.
\end{iteMize}

We can assume $\Var \in \dom(\Context'') \cup \{ \NameA \}$ for
otherwise $\Var \not\in \fn(\Process')$ and there is nothing left to
prove.
Let $\Context'', \NameA :
\Qualifier~\SessionTypeT_k\subst{\SessionTypeS}{\tvar_k} = \Context',
\Var : \TypeT'$ for some $\Context'$ and $\TypeT'$.
In order to apply the induction hypothesis and deduce $\RecContext;
\BoundContext; \Context' + \Value : \TypeS' \vdash
\Process'\subst\Value\Var$, we must find $\TypeS'$ such that \hypo{a}
$\TypeS' \subt \TypeT'$ and \hypo{b} $\Context' + \Value : \TypeS'$ is
defined and well formed.
Observe that the type of $\Var$, $\TypeT'$, may change from the
conclusion to the premise of the rule if $\Var = \NameA$.
We distinguish the following sub-cases:
\begin{iteMize}{$\bullet$}
\item ($\Value \ne \NameA, \NameA \ne \Var$)
  For \hypo{a}, we deduce $\TypeT' = \TypeT$ and we conclude by taking
  $\TypeS' = \TypeS$.
  For \hypo{b}, then either $\Value \not\in \dom(\Context'')$ or
  $\qun(\Context''(\Value))$. In both cases we conclude that
  $\Context' + \Value : \TypeS'$ is defined and well formed.

\item ($\Value \ne \NameA, \NameA = \Var$)
  For \hypo{a}, we deduce $\TypeT =
  \Qualifier~\InternalChoice{\tmsg{\Tag_i}{\tvar_i}{\TypeT_i}.\SessionTypeT_i}_{i\in
    I}$.
  From $\TypeS \subt \TypeT$, we deduce $\TypeS =
  \Qualifier'~\InternalChoice{\tmsg{\Tag_i}{\tvar_i}{\TypeS_i}.\SessionTypeS_i}_{i\in
    I \cup J}$ and $\Qualifier' \leq \Qualifier$ and $\SessionTypeS_i
  \subt \SessionTypeT_i$ for $i\in I$.
  By Proposition~\ref{prop:subt_subst}(1) we obtain
  $\SessionTypeS_k\subst\SessionTypeS{\tvar_k} \subt
  \SessionTypeT_k\subst\SessionTypeS{\tvar_k}$ and we conclude by
  taking $\TypeS' = \SessionTypeS_k\subst\SessionTypeS{\tvar_k}$.
  For \hypo{b} we can reason as for the previous case.

\item ($\Value = \NameA$)
  Since $\NameA \in \dom(\Context)$, then $\TypeS = \Context(\NameA)$
  and $\Qualifier = \qun$.
  Since the $\qun$ qualifier can only be applied to invariant types,
  it must be the case that
  $\SessionTypeT_k\subst\SessionTypeS{\tvar_k} = \TypeS$.
  We conclude \hypo{a} by taking $\TypeS' = \TypeS$ and \hypo{b}
  follows immediately.
\qedhere
\end{iteMize}
\end{proof}

\begin{lemma}[weakening]
\label{lem:weakening}
If $\RecContext; \BoundContext; \Context \vdash \ProcessP$ and
$\qun(\Context')$, then $\RecContext, \RecContext'; \BoundContext,
\BoundContext'; \Context, \Context' \vdash \ProcessP$.
\end{lemma}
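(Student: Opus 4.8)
The plan is to argue by induction on the derivation of $\RecContext; \BoundContext; \Context \vdash \ProcessP$, with a case analysis on the last rule applied. Two monotonicity facts should be recorded up front, since they are used silently in several cases. First, well-formedness is preserved when the outer type-variable context grows: $\BoundContext \wfdash \SessionTypeS$ implies $\BoundContext, \BoundContext' \wfdash \SessionTypeS$, which is immediate from the rules in Table~\ref{tab:wf} because \rulename{WF-Var} only demands membership in the outer set. Second, the weight is non-increasing in the free-variable context: $\xweight{\BoundContext, \BoundContext'}{\SessionTypeS} \leq \xweight{\BoundContext}{\SessionTypeS}$, directly from Definition~\ref{def:type_weight}, since enlarging $\BoundContext$ can only make more variables eligible for null weight. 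Fact two is what guarantees that the finite-weight side conditions of \rulename{T-Send} survive the enlargement of $\BoundContext$ by $\BoundContext'$. Throughout, the hypothesis $\qun(\Context')$ together with the convention that $\Context, \Context'$ presupposes $\dom(\Context) \cap \dom(\Context') = \emptyset$ is exactly what keeps the additions harmless.

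The routine cases are then dispatched quickly. For \rulename{T-Idle} and \rulename{T-Close} it suffices that $\qun(\Context)$ and $\qun(\Context')$ give $\qun(\Context, \Context')$, so the rule reapplies verbatim. For \rulename{T-Open Linear Channel}, \rulename{T-Open Unrestricted Channel}, \rulename{T-Send}, \rulename{T-Receive} and \rulename{T-Choice} I would apply the induction hypothesis to the premises---first alpha-renaming the freshly bound names and type variables so that they avoid $\dom(\Context')$ and $\BoundContext'$---and then reapply the rule, invoking the two monotonicity facts for the well-formedness and weight premises. The only delicate point among these is the interaction of $\Context'$ with the environment operations in \rulename{T-Par} and \rulename{T-Send}: one checks that adding an unrestricted, disjoint $\Context'$ commutes with the splitting operator, namely $(\Context_1 + \Context_2), \Context' = (\Context_1, \Context') + (\Context_2, \Context')$, which holds because $\Context' + \Context' = \Context'$ for unrestricted $\Context'$ and $+$ is commutative and associative.

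The genuinely delicate cases are \rulename{T-Var} and \rulename{T-Rec}, owing to the bindings recorded in $\RecContext$. For \rulename{T-Var} there is nothing to do: the rule already permits the actual context to extend a recorded binding $(\BoundContext_0; \Context_0)$ by extra type variables and \emph{unrestricted} associations, so enlarging the actual context by $\BoundContext'$ and by the unrestricted $\Context'$ keeps its side condition true and the rule reapplies. The real obstacle is \rulename{T-Rec}: its premise records $\RecVar \mapsto (\BoundContext; \Context)$, but to reapply the rule at the enlarged contexts the recorded binding must read $\RecVar \mapsto (\BoundContext, \BoundContext'; \Context, \Context')$, whereas the induction hypothesis only adds $\RecContext'$ and leaves the binding as $(\BoundContext; \Context)$.

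I would resolve this by noting that the derivation obtained from the induction hypothesis, of $\RecContext, \RecContext', \{\RecVar \mapsto (\BoundContext; \Context)\}; \BoundContext, \BoundContext'; \Context, \Context' \vdash \Process$, is \emph{equally} a derivation once the binding is replaced by $(\BoundContext, \BoundContext'; \Context, \Context')$. This adjustment only strengthens the requirement at the \rulename{T-Var} leaves for $\RecVar$, and those leaves already carry $\BoundContext'$ and $\Context'$ in their actual contexts. The key persistence observation justifying this is that, once $\BoundContext'$ and the unrestricted $\Context'$ are present at the root of the derivation of $\Process$, they persist at every node: the typing rules only ever extend $\BoundContext$ (in \rulename{T-Receive}) and only ever remove from $\Context$ the \emph{linear} names of the original derivation (in \rulename{T-Send}), never the fresh $\BoundContext'$ or the unrestricted $\Context'$. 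A straightforward auxiliary induction then shows that enlarging a recorded binding by $(\BoundContext'; \Context')$ is sound under this persistence. With the binding thus adjusted, \rulename{T-Rec} applies---its contractivity premise $\dom((\Context,\Context')|_\qlin) \subseteq \fn(\Process)$ reduces to the original $\dom(\Context|_\qlin) \subseteq \fn(\Process)$ because $(\Context, \Context')|_\qlin = \Context|_\qlin$---and yields the goal. This persistence argument for the \rulename{T-Rec} binding is the crux; everything else is the routine bookkeeping of renamings and of the environment operations described above.
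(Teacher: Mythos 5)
Your overall strategy (induction on the typing derivation, plus the two monotonicity facts) matches the paper's declared proof, and you correctly isolated \rulename{T-Rec} as the only case with real content; but your resolution of that case has a genuine gap. The ``persistence'' claim it rests on is false: while the bound-type-variable context indeed never shrinks from conclusion to premises (so $\BoundContext'$ does persist), the type environment does \emph{not} persist through \rulename{T-Par}, because the operator $+$ does not force an unrestricted association to be passed to both premises. From a conclusion environment $\Context_1+\Context_2$ containing $u:\qun~\SessionTypeT$, a derivation may send $u$ into $\Context_1$ only and omit it from $\Context_2$; for instance, $\{\RecVar \mapsto (\emptyset;\emptyset)\};\emptyset;\,u:\qun~\SessionTypeT \vdash \idle \parop \RecVar$ has a derivation whose \rulename{T-Var} leaf carries the empty environment, so $u$ is present at the root but absent at the leaf. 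Since the induction hypothesis only tells you that the weakened judgment is \emph{derivable} --- it hands you a black-box derivation, not one of a particular shape --- that derivation may contain \rulename{T-Var} leaves for $\RecVar$ whose actual environment omits part of $\Context'$, and there the replacement of the recorded binding $(\BoundContext;\Context)$ by $(\BoundContext,\BoundContext';\Context,\Context')$ is not licensed: \rulename{T-Var} requires the actual contexts to literally extend the recorded ones. Your ``straightforward auxiliary induction'' cannot close this, because its hypothesis (persistence) is precisely what fails for an arbitrary derivation.

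The standard repair is to strengthen the statement proved by induction rather than to operate on the derivation afterwards: show that if $\RecContext;\BoundContext;\Context\vdash\ProcessP$ and $\qun(\Context')$, then for \emph{every} subset $S$ of $\dom(\RecContext)$ one has $\RecContext^S,\RecContext';\,\BoundContext,\BoundContext';\,\Context,\Context'\vdash\ProcessP$, where $\RecContext^S$ replaces each binding $\RecVar\mapsto(\BoundContext_\RecVar;\Context_\RecVar)$ with $\RecVar\in S$ by $\RecVar\mapsto(\BoundContext_\RecVar,\BoundContext';\Context_\RecVar,\Context')$ and leaves the others unchanged. The \rulename{T-Var} case then goes through for both kinds of bindings: if the binding was enlarged, take the leftover extension $(\BoundContext_1;\Context_1)$; if not, take $(\BoundContext_1,\BoundContext';\Context_1,\Context')$, which is still unrestricted. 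The \rulename{T-Rec} case invokes the induction hypothesis with $S\cup\{\RecVar\}$, so the recorded binding in the premise already matches the enlarged actual contexts and the rule reapplies, the side condition surviving because $(\Context,\Context')|_\qlin=\Context|_\qlin$, as you observed. The lemma as stated is the instance $S=\emptyset$. Everything else in your proof --- the two monotonicity facts, the commutation $(\Context_1+\Context_2),\Context'=(\Context_1,\Context')+(\Context_2,\Context')$, and the renaming of bound names and type variables away from $\dom(\Context')$ and $\BoundContext'$ --- is correct and carries over unchanged.
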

\begin{proof}
  Straightforward induction on the derivation of $\RecContext;
  \BoundContext; \Context \vdash \ProcessP$.
\end{proof}

\begin{lemma}[process substitution]
\label{lem:process_subst}
Let \hypo1 $\RecContext, \{ \RecVar \mapsto (\BoundContext; \Context)
\}; \BoundContext; \Context \vdash \ProcessQ$.  Then \hypo{2}
$\RecContext, \{ \RecVar \mapsto (\BoundContext; \Context) \},
\RecContext'; \BoundContext'; \Context' \vdash \ProcessP$ implies
$\RecContext, \RecContext'; \BoundContext'; \Context' \vdash
\ProcessP\subst{\rec\RecVar.\ProcessQ}{\RecVar}$.
\end{lemma}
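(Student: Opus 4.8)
The plan is to argue by induction on the derivation of hypothesis~\hypo{2}, namely $\RecContext, \{ \RecVar \mapsto (\BoundContext; \Context) \}, \RecContext'; \BoundContext'; \Context' \vdash \ProcessP$, keeping hypothesis~\hypo{1} fixed and generalizing over $\RecContext'$, $\BoundContext'$, and $\Context'$ so that the induction hypothesis applies to the (possibly enriched) premises. I would then proceed by cases on the last typing rule applied. For every rule whose conclusion does not concern $\RecVar$ --- that is, \rulename{T-Idle}, \rulename{T-Close}, \rulename{T-Open Linear Channel}, \rulename{T-Open Unrestricted Channel}, \rulename{T-Send}, \rulename{T-Receive}, \rulename{T-Choice}, and \rulename{T-Par} --- the substitution $\subst{\rec\RecVar.\ProcessQ}{\RecVar}$ distributes homomorphically over the process constructor, the recursion environment carrying the binding for $\RecVar$ is threaded unchanged into the premises, and so I apply the induction hypothesis to each premise and reapply the same rule. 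In the binary rules the environment is split (as in \rulename{T-Par}) or shared (as in \rulename{T-Choice}) exactly as before, and deleting the binding for $\RecVar$ from the recursion environment commutes with these operations.

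Two cases require a little care. First, \rulename{T-Var} applied to a process variable $\RecVar_2 \ne \RecVar$: here the substitution is the identity, the binding for $\RecVar_2$ survives the deletion of the binding for $\RecVar$, and I simply reapply \rulename{T-Var}. The \rulename{T-Rec} case (say $\ProcessP = \rec\RecVar_2.\Process''$) is handled by $\alpha$-renaming the bound process variable so that $\RecVar_2 \ne \RecVar$ and $\RecVar_2$ is fresh for $\rec\RecVar.\ProcessQ$; I then apply the induction hypothesis to the premise, regarding its recursion environment as $\RecContext'$ extended with the new binding, and reapply \rulename{T-Rec}. The contractivity side condition $\dom(\Context'|_\qlin) \subseteq \fn(\Process'')$ is preserved, because substituting a process variable can only enlarge the set of free names, so that $\fn(\Process'') \subseteq \fn(\Process''\subst{\rec\RecVar.\ProcessQ}{\RecVar})$.

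The crux is \rulename{T-Var} applied to $\RecVar$ itself, where $\ProcessP = \RecVar$ and hence $\ProcessP\subst{\rec\RecVar.\ProcessQ}{\RecVar} = \rec\RecVar.\ProcessQ$. By the shape of \rulename{T-Var}, the binding $\RecVar \mapsto (\BoundContext; \Context)$ forces $\BoundContext' = \BoundContext, \BoundContext''$ and $\Context' = \Context, \Context''$ for some $\Context''$ with $\qun(\Context'')$, so I must produce $\RecContext, \RecContext'; \BoundContext, \BoundContext''; \Context, \Context'' \vdash \rec\RecVar.\ProcessQ$. I would first use hypothesis~\hypo{1} together with rule~\rulename{T-Rec} to derive $\RecContext; \BoundContext; \Context \vdash \rec\RecVar.\ProcessQ$, and then apply weakening (Lemma~\ref{lem:weakening}) to adjoin $\RecContext'$, the extra type variables $\BoundContext''$, and the unrestricted environment $\Context''$. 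The main obstacle is exactly the mismatch between the environment $(\BoundContext; \Context)$ recorded in the recursion binding and the possibly larger environment $(\BoundContext'; \Context')$ at the point of use; weakening reconciles the two, and it is precisely the unrestrictedness of $\Context''$ that makes weakening applicable. I should also note that forming $\rec\RecVar.\ProcessQ$ via \rulename{T-Rec} relies on the side condition $\dom(\Context|_\qlin) \subseteq \fn(\ProcessQ)$, which is available because the lemma is invoked exactly when $\rec\RecVar.\ProcessQ$ is itself well typed, as in the redex of rule~\rulename{R-Rec} during the subject reduction argument.
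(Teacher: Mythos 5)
Your proposal is correct and is essentially the paper's own proof: the same induction (the paper phrases it as structural induction on $\ProcessP$, which coincides with induction on the typing derivation since the rules are syntax-directed), the same homomorphic treatment of the non-$\RecVar$ cases including the observation $\fn(\ProcessP') \subseteq \fn(\ProcessP'\subst{\rec\RecVar.\ProcessQ}{\RecVar})$ for the \rulename{T-Rec} side condition, and, in the crucial case $\ProcessP = \RecVar$, the same two ingredients --- hypothesis \hypo{1} and weakening (Lemma~\ref{lem:weakening}) --- assembled merely in the opposite order (the paper first weakens \hypo{1} to the larger environments $\BoundContext'$ and $\Context'$ and then closes with \rulename{T-Rec}, whereas you apply \rulename{T-Rec} at $(\BoundContext; \Context)$ and then weaken; your order is in fact slightly cleaner, since your \rulename{T-Rec} instance has its recorded binding equal to the environments of its conclusion, exactly as the rule is written). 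Your closing caveat about the contractivity premise $\dom(\Context|_\qlin) \subseteq \fn(\ProcessQ)$ of \rulename{T-Rec} identifies a real subtlety that the paper's proof passes over in silence (it applies \rulename{T-Rec} in the $\ProcessP = \RecVar$ case without checking that premise, which indeed does not follow from \hypo{1} alone --- consider $\ProcessQ = \RecVar$ with a linear $\Context$), so this is a caveat shared by both arguments rather than a gap in yours.
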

\begin{proof}
\newcommand{\SUBST}{\subst{\rec\RecVar.\ProcessQ}{\RecVar}}
By induction on $\ProcessP$. Whenever we encounter some bound
name/type variable/process variable in $\ProcessP$ we assume, without
loss of generality, that it does not occur free in $\ProcessQ$:
\begin{iteMize}{$\bullet$}
\item ($\ProcessP = \idle$)
  Then $\ProcessP\SUBST = \idle$.
  From \hypo{2} and \rulename{T-Idle} we deduce $\qun(\Context')$. We
  conclude with an application of \rulename{T-Idle}.

\item ($\ProcessP = \RecVar$)
  Then $\ProcessP\SUBST =
  \rec\RecVar.\ProcessQ$.
  From \hypo{2} and \rulename{T-Var} we deduce:
  \STEP{$\BoundContext' = \BoundContext, \BoundContext''$;}
  \STEP{$\Context' = \Context, \Context''$;}
  \STEP{$\qun(\Context'')$.}

  \noindent From \hypo{1} and Lemma~\ref{lem:weakening} we obtain $\RecContext,
  \{ \RecVar \mapsto (\BoundContext; \Context) \}, \RecContext';
  \BoundContext'; \Context' \vdash \ProcessQ$.
  We conclude with an application of \rulename{T-Rec}.

\item ($\ProcessP = \RecVarY \ne \RecVar$)
  Then $\ProcessP\SUBST = \RecVarY$ and we conclude immediately from
  \rulename{T-Var}.

\item ($\ProcessP = \closeChannel(\Name)$)
  Then $\ProcessP\SUBST = \closeChannel(\Name)$ and we conclude
  immediately from \rulename{T-Close}.

\item ($\ProcessP = \ProcessP_1 \choice \ProcessP_2$)
  Then $\ProcessP\SUBST = \ProcessP_1\SUBST \choice
  \ProcessP_2\SUBST$.  From \hypo{2} and \rulename{T-Choice} we
  deduce:
  \STEP{$\RecContext, \{ \RecVar \mapsto (\BoundContext; \Context) \},
    \RecContext'; \BoundContext'; \Context' \vdash \ProcessP_i$ for
    $i=1,2$.}

  By induction hypothesis we obtain: \STEP{$\RecContext, \RecContext';
    \BoundContext'; \Context' \vdash \ProcessP_i\SUBST$ for $i=1,2$.}

  We conclude with an application of \rulename{T-Choice}.

\item ($\ProcessP = \ProcessP_1 \parop \ProcessP_2$)
  Then $\ProcessP\SUBST = \ProcessP_1\SUBST \parop
  \ProcessP_2\SUBST$. From \hypo{2} and \rulename{T-Par} we deduce
  \STEP{$\Context' = \Context_1 + \Context_2$ and}
  \STEP{$\RecContext, \{ \RecVar \mapsto (\BoundContext; \Context) \},
    \RecContext'; \BoundContext'; \Context_i \vdash \ProcessP_i$ for
    $i=1,2$.}

  By induction hypothesis: \STEP{$\RecContext, \RecContext';
    \BoundContext'; \Context_i \vdash \ProcessP_i\SUBST$.}

  We conclude with an application of \rulename{T-Par}.

\item ($\ProcessP = \openChannel(\PointerA : \SessionTypeT, \PointerB
  : \SessionTypeS).\ProcessP'$)
  Then $\ProcessP\SUBST = \openChannel(\PointerA : \SessionTypeT,
  \PointerB : \SessionTypeS).(\ProcessP'\SUBST)$.
  From \hypo{2} and \rulename{T-Open Linear Channel} we deduce:
  \STEP{$\BoundContext' \wfdash \SessionTypeT$;}
  \STEP{$\RecContext, \{ \RecVar \mapsto (\BoundContext; \Context) \},
    \RecContext'; \BoundContext'; \Context', \PointerA :
    \qlin~\SessionTypeT, \PointerB : \qlin~\SessionTypeS \vdash
    \ProcessP'$;}
  \STEP{$\SessionTypeS = \co\SessionTypeT$.}

  By induction hypothesis: \STEP{$\RecContext, \RecContext';
    \BoundContext'; \Context', \PointerA : \qlin~\SessionTypeT,
    \PointerB : \qlin~\SessionTypeS \vdash \ProcessP'\SUBST$.}

  We conclude with an application of \rulename{T-Open Linear Channel}.

\item ($\ProcessP = \openChannel(\Pointer : \SessionType).\ProcessP'$)
  Similar to the previous case.

\item ($\ProcessP = \sum_{i\in I}
  \xreceive\NameA{\Tag_i}{\tvar_i}{\Var_i:\TypeS_i}.\Process_i$) %%%%.
  Then \[\ProcessP\SUBST = \sum_{i\in I}
  \xreceive\NameA{\Tag_i}{\tvar_i}{\Var_i:\Type_i}.(\Process_i\SUBST).\]
  From \hypo{2} and \rulename{T-Receive} we deduce:
  \STEP{$\Context' = \Context'', \NameA : \qlin~\ExternalChoice{\tmsg{\Tag_i}{\tvar_i}{\TypeS_i}.\SessionTypeT_i}_{i\in J}$;}
  \STEP{$J \subseteq I$;}
  \STEP{$\TypeS_i \subt \TypeT_i$ for every $i \in J$;}
  \STEP{$\RecContext, \{ \RecVar \mapsto (\BoundContext; \Context) \},
    \RecContext'; \BoundContext', \tvar_i; \Context', \NameA :
    \qlin~\SessionType_i, \Var_i : \Type_i \vdash \Process_i$ for
    every $i\in J$.}

  By induction hypothesis: \STEP{$\RecContext, \RecContext';
    \BoundContext', \tvar_i; \Context', \NameA : \qlin~\SessionType_i,
    \Var_i : \Type_i \vdash \Process_i\SUBST$ for $i\in J$.}

  We conclude with an application of \rulename{T-Receive}.

\item ($\ProcessP =
  \xsend\NameA{\Tag}{\SessionTypeS}{\NameB}.\ProcessP'$)
  Then $\ProcessP\SUBST =
  \xsend\NameA{\Tag}{\SessionTypeS}{\NameB}.(\ProcessP'\SUBST)$.
%
%   Depending on whether $\qlin(\Context'(\NameA))$ or
%   $\qun(\Context'(\NameA))$ we know that the typing derivation for
%   $\ProcessP$ ends with either \rulename{T-Send Linear} or
%   \rulename{T-Send Unrestricted}. We only show the proof for
%   \rulename{T-Send Linear}, the other being similar.  
  From \hypo2 and \rulename{T-Send} we deduce:
  \STEP{$\Context' = (\Context'', \NameA : \Qualifier~\InternalChoice{\tmsg{\Tag_i}{\tvar_i}{\TypeT_i}.\SessionTypeT_i}_{i\in I}) +
    \NameB : \TypeS$;}
  \STEP{$\BoundContext' \wfdash \SessionTypeS$;}
  \STEP{$\Tag = \Tag_k$ for some $k\in I$;}
  \STEP{$\TypeS \subt \Type_k\subst{\SessionTypeS}{\tvar_k}$;}
  \STEP{$\max\{\xweight{\BoundContext}{\SessionTypeS},
    \xweight{\BoundContext}{\TypeS}\} < \infty$;}
  \STEP{$\RecContext, \{ \RecVar \mapsto (\BoundContext; \Context) \},
    \RecContext'; \BoundContext'; \Context'', \NameA :
    \Qualifier~\SessionTypeT_k\subst{\SessionTypeS}{\tvar_k} \vdash
    \ProcessP'$.}

  By induction hypothesis: \STEP{$\RecContext, \RecContext';
    \BoundContext'; \Context'', \NameA :
    \qlin~\SessionTypeT'\subst{\SessionTypeS}{\tvar} \vdash
    \ProcessP'\SUBST$.}

  We conclude with an application of \rulename{T-Send}.

\item ($\ProcessP = \rec\RecVarY.\ProcessP'$)
  Then $\ProcessP\SUBST = \rec\RecVarY.(\ProcessP'\SUBST)$.
  From \hypo{2} and \rulename{T-Rec} we deduce:
  \STEP{$\RecContext, \{ \RecVar \mapsto (\BoundContext; \Context),
    \RecVarY \mapsto (\BoundContext'; \Context') \}, \RecContext';
    \BoundContext'; \Context' \vdash \ProcessP'$;}
  \STEP[t2]{$\dom(\Context'|_\qlin) \subseteq \fn(\ProcessP')$.}

  By induction hypothesis:
  \STEP{$\RecContext, \{ \RecVarY \mapsto (\BoundContext'; \Context')
    \}, \RecContext'; \BoundContext'; \Context' \vdash
    \ProcessP'\SUBST$;}

  From \hypo{t2} and by definition of process substitution:
  \STEP{$\dom(\Context'|_\qlin) \subseteq \fn(\ProcessP') \subseteq
    \fn(\ProcessP'\SUBST)$.}

  We conclude with an application of \rulename{T-Rec}.
  \qedhere
\end{iteMize}
\end{proof}

\noindent The following lemma serves as a slight generalization of subject
reduction (Theorem~\ref{thm:sr}). Note that the last condition
$\Context|_\qun \subseteq \Context'|_\qun$ implies that unrestricted
values can only accumulate (they are never removed from the type
environment) and furthermore their type does not change over time.

\begin{lemma}
\label{lem:sr}
Let \hypo1 $\Context_0;\Context_R,\Context \vdash \Memory$ where
$\qlin(\Context_R)$ and \hypo2 $\Context \vdash \Process$ and
$\system\Memory\Process \red{} \system{\Memory'}{\Process'}$.
Then $\Context_0';\Context_R,\Context' \vdash \Memory'$ and $\Context'
\vdash \Process'$ for some $\Context_0'$ and $\Context'$ such that
$\Context|_\qun \subseteq \Context'|_\qun$.
\end{lemma}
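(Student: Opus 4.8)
The plan is to proceed by induction on the derivation of $\system\Memory\Process \red{} \system{\Memory'}{\Process'}$, with a case analysis on the last reduction rule applied. Reading the process hypothesis as $\emptyset;\emptyset;\Context \vdash \Process$, in each case I would first invert this derivation against the reduction rule so as to expose the shape of $\Context$ and of the continuation, then exhibit witnesses $\Context_0'$ and $\Context'$ and verify the five conditions of Definition~\ref{def:wth} for $\Memory'$ together with $\Context' \vdash \Process'$. The side condition $\Context|_\qun \subseteq \Context'|_\qun$ is maintained uniformly, because no rule ever deletes an unrestricted association nor alters its type; unrestricted bindings are only ever added (e.g.\ the binding $\shared\Pointer : \qun~\co\SessionTypeT$ created by \rulename{R-Open Unrestricted Channel}), so $\Context|_\qun$ can only grow.

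The administrative cases are routine. For \rulename{R-Choice Left} and \rulename{R-Choice Right} the heap and environment are unchanged and inversion of \rulename{T-Choice} suffices; for \rulename{R-Rec} I would invert \rulename{T-Rec} and apply the process substitution Lemma~\ref{lem:process_subst}, leaving $\Memory$ untouched; for \rulename{R-Struct} I would transport typability across $\equiv$ on both sides of the inner reduction using Lemma~\ref{lem:cong}. The two open-channel cases add fresh endpoints with empty queues: conditions~(1), (4), and~(5) of Definition~\ref{def:wth} are then immediate, and condition~(2) (or~(3)) reduces to $\SessionTypeS = \co\SessionTypeT$, which holds by construction, with Proposition~\ref{prop:wf_types} guaranteeing that the dual type introduced is well formed and genuinely dual.

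The substance of the argument lies in the parallel, receive, and send cases. In \rulename{R-Par} the generalization to $\Context_R$ is exactly what makes the induction close: inverting \rulename{T-Par} splits $\Context = \Context_1 + \Context_2$ with $\Context_1 \vdash \ProcessP$ and $\Context_2 \vdash \ProcessQ$, and since the heap is typed globally by $\Context_R,\Context$ one can absorb $\Context_2|_\qlin$ into the linear ``spectator'' environment, apply the induction hypothesis to the reduction of $\ProcessP$, and recombine the resulting $\Context_1'$ with $\Context_2$ through $+$. For \rulename{R-Receive} I would invert \rulename{T-Receive} to read off the external-choice type of $\PointerA$ and the branch $k$ selected by the message tag, then use the type substitution Lemma~\ref{lem:type_subst} to instantiate $\tvar_k$ with the annotation $\SessionType$ carried by the message (its finiteness coming from condition~(2) of Definition~\ref{def:wth}) and the value substitution Lemma~\ref{lem:value_subst} to instantiate $\Var_k$ with $\Value$; the type of $\Value$ migrates from $\Context_0$ into $\Context'$, and condition~(2) for the shortened queue is re-established from the definition of $\tail(\cdot,\cdot)$. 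The send cases (\rulename{R-Send Linear} and \rulename{R-Send Unrestricted}) are dual: inverting \rulename{T-Send} supplies the internal-choice type of $\PointerA$, the finite-weight premise on $\SessionType$ and on the argument type $\TypeS$, and the updated continuation type; the type of $\Value$ migrates the other way, from the process roots $\Context$ into $\Context_0'$, and the finite-weight premise is precisely what condition~(2) requires of the newly enqueued message.

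I expect the main obstacle to be re-establishing condition~(4) of Definition~\ref{def:wth}, namely $\dom(\Memory') = \reachable{\dom(\Context_R,\Context')}{\Memory'}$, in the send cases, since enqueuing $\Value$ could in principle create a cycle that detaches some endpoint from every root. This is where the weight discipline pays off. Because the peer of $\PointerA$ must be ready to accept the message, its endpoint type is an external choice whose $\Tag_k$-branch carries the argument type, so by the receive clause of Definition~\ref{def:type_weight} its weight strictly exceeds the weight of that argument type, which by monotonicity of weight under subtyping (Proposition~\ref{prop:subt_weight}) is at least $\weight\TypeS$; hence the peer's weight is strictly greater than $\weight\TypeS$. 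A finite-weight value can therefore never be enqueued into a queue lying on a chain of pointers emanating from the value itself, and in particular $\Value$ cannot be $\PointerA$'s peer. Formalizing this as the preservation of the invariant that every chain of pointers through queues has length bounded by a finite weight, so that no cycle is ever created and every allocated endpoint stays reachable from a root, is the technically delicate step, and the one I would treat with the most care.
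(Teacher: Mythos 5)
Your proposal mirrors the paper's proof in every structural respect: the same induction on the reduction derivation, the same use of the spectator environment $\Context_R$ to close the \rulename{R-Par} case, Lemmas~\ref{lem:cong}, \ref{lem:type_subst}, \ref{lem:value_subst}, and~\ref{lem:process_subst} invoked exactly where the paper invokes them, and the same migration of $\Value$'s type between $\Context_0$ and the process roots in the communication cases. The gap is in the one step you flag as delicate, and it is not a matter of polish: the weight argument you actually give proves the wrong disequality. You show that the weight of the peer's type strictly exceeds $\weight\TypeS$ and conclude that $\Value$ cannot be $\PointerA$'s peer $\PointerB$. But condition~(4) of Definition~\ref{def:wth} does not need $\Value \ne \PointerB$; it needs $\Value \ne \PointerB_0$, where $\PointerB_0$ is the \emph{root} in $\dom(\Context_R,\Context)$ from which $\PointerB$ is reachable. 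After the send, $\Value$ is removed from the roots, and everything formerly reachable from $\Value$ must now be recovered through $\PointerB_0$; if $\Value = \PointerB_0$ this fails and the newly created cycle is a leak. That ruling out ``send over own peer'' is insufficient is demonstrated by the paper itself, with the two-channel process just after~\eqref{eq:micidiale}, which creates a cycle of length~2 even though no endpoint is ever sent over its own peer.

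The missing link is the comparison against the root, which the paper obtains by chaining three inequalities: $\weight{\TypeS} < \weight{\tail(\SessionTypeT_\PointerB, \ldots)} \leq \weight{\SessionTypeT_\PointerB} \leq \weight{\SessionTypeT_0}$, where $\SessionTypeT_0$ is the type of $\PointerB_0$. The first is exactly your receive-clause-plus-\ref{prop:subt_weight} observation; the second says that consuming queued messages via $\tail(\cdot,\cdot)$ can only lower the weight; the third---the one absent from your sketch---comes from heap well-typedness itself, conditions~(2) and~(3) of Definition~\ref{def:wth}, which force weights to decrease weakly along every chain of queued pointers descending from a root, so that the root's weight dominates that of everything reachable from it. Combined with the observation that two names whose types have different weights in a single environment must be distinct pointers, this yields $\Value \ne \PointerB_0$, hence $\PointerB_0$ survives as a root of $\Context'$ and $\Value$, together with everything it reaches, hangs off it. Without this chain, your closing invariant (``no cycle is ever created and every allocated endpoint stays reachable from a root'') is precisely the statement the lemma must establish, asserted rather than proved.
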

\begin{proof}
  By induction on the derivation of $\system\Memory\Process \red{}
  \system{\Memory'}{\Process'}$ and by cases on the last rule applied.
\begin{iteMize}{$\bullet$}
\item\rulename{R-Open Linear Channel}
  In this case:
\begin{iteMize}{$-$}
\item $\Process = \openChannel(\PointerA : \SessionTypeT, \PointerB :
  \SessionTypeS).\ProcessP'$;

\item $\Memory' = \Memory, \PointerA \mapsto [\PointerB, \EmptyQueue],
  \PointerB \mapsto [\PointerA, \EmptyQueue]$.
\end{iteMize}
From the hypothesis~\hypo2 and rule~\rulename{T-Open Linear Channel}
we obtain:
\begin{iteMize}{$-$}
\item $\emptyset \wfdash \SessionTypeT$;

\item $\SessionTypeS = \co\SessionTypeT$;

\item $\Context, \PointerA : \qlin~\SessionTypeT, \PointerB :
  \qlin~\co\SessionTypeT \vdash \ProcessP'$.
\end{iteMize}
From Proposition~\ref{prop:wf_types}(1) we deduce:
\begin{iteMize}{$-$}
\item $\emptyset \wfdash \SessionTypeS$.
\end{iteMize}
We conclude by taking $\Context_0' = \Context_0$ and $\Context' =
\Context, \PointerA : \qlin~\SessionTypeT, \PointerB :
\qlin~\co\SessionTypeT$. The proof that $\Context_0'; \Context_R,
\Context' \vdash \Memory'$ is trivial and $\Context|_\qun =
\Context'|_\qun$.

\item\rulename{R-Open Unrestricted Channel} Similar to the previous
  case, except that a fresh unrestricted pointer is added to
  $\Context'$.

\item\rulename{R-Choice Left/Right} Trivial.

\item\rulename{R-Send Linear}
  In this case:
\begin{iteMize}{$-$}
\item $\Process = \xsend\PointerA\Tag\SessionTypeS\Value.\ProcessP'$;

\item $\Memory = \Memory'', \PointerA \mapsto [\PointerB, \Queue],
  \PointerB \mapsto [\PointerA, \Queue']$;

\item $\Memory' = \Memory'', \PointerA \mapsto [\PointerB, \Queue],
  \PointerB \mapsto [\PointerA,
  \Queue'::\xmsg\Tag\SessionTypeS\Value]$.
\end{iteMize}
From the hypothesis \hypo2 and rule~\rulename{T-Send} we obtain:
\begin{iteMize}{$-$}
\item \hypo{t1} $\Context = (\Context'', \PointerA :
  \qlin~\InternalChoice{\tmsg{\Tag_i}{\tvar_i}{\TypeT_i}.\SessionTypeT_i}_{i\in
    I}) + \Value : \TypeS$;

\item $\emptyset \wfdash \SessionTypeS$;

\item $\Tag = \Tag_k$ for some $k\in I$;

\item $\TypeS \subt \Type_k\subst{\SessionTypeS}{\tvar_k}$;

\item $\weight{\SessionTypeS} < \infty$ and $\weight{\TypeS} <
  \infty$;

\item $\Context'', \PointerA :
  \SessionTypeT_k\subst{\SessionTypeS}{\tvar_k} \vdash \ProcessP'$.
\end{iteMize}
Let $\Context_0' = \Context_0 + (\Value : \TypeS)|_\qlin$ and
$\Context' = (\Context'', \PointerA :
\qlin~\SessionTypeT_k\subst{\SessionTypeS}{\tvar_k}) + (\Value :
\TypeS)|_\qun$.
Since $\Context|_\qun = \Context'|_\qun$ we only have to show that
$\Context_0';\Context_R,\Context' \vdash \Memory'$.

We prove the items of Definition~\ref{def:wth} in order.
\begin{enumerate}[(1)]
\item We only need to show that $\Queue$ is empty. Suppose by
  contradiction that this is not the case. Then the endpoint type
  associated with $\PointerA$ before the reduction occurs must begin
  with an external choice, which contradicts \hypo{t1}.

\item Let $\Queue' =
  \xmsg{\Tag_1}{\SessionTypeT_1}{\Value_1}::\cdots::\xmsg{\Tag_n}{\SessionTypeT_n}{\Value_n}$.
  From hypothesis~\hypo1 and \hypo{t1} we deduce
  $\Context_0,\Context_R,\Context \vdash \PointerB :
  \qlin~\SessionTypeT_\PointerB$ and $\Context_0,\Context_R,\Context
  \vdash \Value_i : \TypeS_i$ where
\[
\ExternalChoice{\tmsg{\Tag_i}{\tvar_i}{\TypeT_i}.\co{\SessionTypeT_i}}_{i\in
  I}
=
\co{\InternalChoice{\tmsg{\Tag_i}{\tvar_i}{\TypeT_i}.\SessionTypeT_i}_{i\in
    I}} 
=
\tail(\SessionTypeT_\PointerB,
\xmsg{\Tag_1}{\SessionTypeT_1}{\TypeS_1}\cdots\xmsg{\Tag_n}{\SessionTypeT_n}{\TypeS_n})
\]
and by Proposition~\ref{prop:wf_types}(3) we conclude
\[
\co{\SessionTypeT_k\subst{\SessionTypeS}{\tvar_k}}
=
\co{\SessionTypeT_k}\subst{\SessionTypeS}{\tvar_k}
=
\tail(\SessionTypeT_\PointerB,
\xmsg{\Tag_1}{\SessionTypeT_1}{\TypeS_1}\cdots\xmsg{\Tag_n}{\SessionTypeT_n}{\TypeS_n}\xmsg\Tag\SessionTypeS{\TypeS})
\,.
\]

\item Immediate from hypothesis~\hypo1.

\item From hypothesis~\hypo1 we have $\dom(\Memory) =
  \dom(\Context_0,\Context_R,\Context|_\qlin)$ and for every
  $\PointerA' \in \dom(\Memory)$ there exists $\PointerB' \in
  \dom(\Context_R,\Context)$ such that $\PointerA' \reach[\Memory]
  \PointerB'$.
  Clearly $\dom(\Memory') = \dom(\Context_0',\Context_R,\Context'|_\qlin)$
  since $\dom(\Memory') = \dom(\Memory)$ and $\dom(\Context_0') \cup
  \dom(\Context') = \dom(\Context_0) \cup \dom(\Context)$.
  Let $\PointerB \reach[\Memory] \PointerB_0$ and $\Context_R,\Context
  \vdash \PointerB_0 : \SessionTypeT_0$ and assume $\Value \in
  \PointerSet$. We have $\Value \xreach[\Memory'] \PointerB
  \reach[\Memory'] \PointerB_0$, namely $\Value \reach[\Memory']
  \PointerB_0$.
  Now
\[
\weight{\TypeS}
<
\weight{\tail(\SessionTypeT_\PointerB,
  \xmsg{\Tag_1}{\SessionTypeT_1}{\TypeS_1}\cdots\xmsg{\Tag_n}{\SessionTypeT_n}{\TypeS_n})}
\leq
\weight{\SessionTypeT_\PointerB}
\leq
\weight{\SessionTypeT_0}
\]
therefore $\Value \ne \PointerB_0$. We conclude $\PointerB_0 \in
\dom(\Context_R, \Context')$.\Luca{Uso il fatto che puntatori con tipi
  che hanno pesi diversi devono essere diversi, e il fatto che la tail
  diminuisce il peso.}

\item Immediate from hypothesis~\hypo1.
\end{enumerate}

\item\rulename{R-Send Unrestricted}
  In this case:
\begin{iteMize}{$-$}
\item $\Process =
  \xsend{\co\PointerA}\Tag\SessionTypeS\Value.\ProcessP'$;

\item $\Memory = \Memory'', \PointerA \mapsto [\PointerA, \Queue]$;

\item $\Memory' = \Memory'', \PointerA \mapsto [\PointerA,
  \Queue::\xmsg\Tag\SessionTypeS\Value]$.
\end{iteMize}
From the hypothesis \hypo2 and rule~\rulename{T-Send} we obtain:
\begin{iteMize}{$-$}
\item \hypo{t1} $\Context = (\Context'', \shared\Pointer :
  \qun~\SessionTypeT) + \Value : \TypeS$ where $\SessionTypeT =
  \InternalChoice{\tmsg{\Tag_i}{\tvar_i}{\TypeT_i}.\SessionTypeT}_{i\in
    I}$;

\item $\emptyset \wfdash \SessionTypeS$;

\item $\Tag = \Tag_k$ for some $k \in I$;

\item $\TypeS \subt \TypeT_k\subst{\SessionTypeS}{\tvar_k}$;

\item $\weight{\SessionTypeS} < \infty$ and $\weight\TypeS < \infty$;

\item $\Context'', \PointerA : \qun~\SessionTypeT \vdash \ProcessQ$.
\end{iteMize}
Let $\Context_0' = \Context_0 + (\Value : \TypeS)|_\qlin$ and
$\Context' = (\Context'', \shared\Pointer : \qun~\SessionTypeT) +
(\Value : \TypeS)|_\qun$.
Since $\Context|_\qun = \Context'|_\qun$ we only have to show that
$\Context_0';\Context_R,\Context' \vdash \Memory'$.

We prove the items of Definition~\ref{def:wth} in order.
\begin{enumerate}[(1)]
\item Trivial since no queue of linear endpoint was affected by the
  reduction.

\item Ditto.

\item Let $\Queue =
  \xmsg{\Tag_1}{\SessionTypeT_1}{\Value_1}::\cdots::\xmsg{\Tag_n}{\SessionTypeT_n}{\Value_n}$
  and $\Context_0,\Context_R,\Context \vdash \Pointer :
  \qlin~\SessionTypeT_\Pointer$ and $\Context_0,\Context_R,\Context
  \vdash \Value_i : \TypeS_i$.
  We deduce
\[
\ExternalChoice{\tmsg{\Tag_i}{\tvar_i}{\TypeT_i}.\co{\SessionTypeT}}_{i\in I}
=
\co{\InternalChoice{\tmsg{\Tag_i}{\tvar_i}{\TypeT_i}.\SessionTypeT}_{i\in I}}
=
\tail(\SessionTypeT_\Pointer,
\xmsg{\Tag_1}{\SessionTypeT_1}{\TypeS_1}\cdots\xmsg{\Tag_n}{\SessionTypeT_n}{\TypeS_n})
\]
and we conclude
\[
{\co\SessionTypeT}
=
\tail(\SessionTypeT_\Pointer,
\xmsg{\Tag_1}{\SessionTypeT_1}{\TypeS_1}\cdots\xmsg{\Tag_n}{\SessionTypeT_n}{\TypeS_n}\xmsg\Tag\SessionTypeS{\TypeS})
\,.
\]

\item Analogous to the case~\rulename{R-Send Linear} with
  $\SessionType_\PointerA$ in place of $\SessionType_\PointerB$.

\item Immediate from hypothesis~\hypo1.
\end{enumerate}

\item\rulename{R-Receive} In this case:
\begin{iteMize}{$-$}
\item $\Process = \sum_{i\in I}
  \xreceive{\PointerA}{\Tag_i}{\tvar_i}{\Var_i:\Type_i}.\Process_i$;

\item $\Memory = \Memory'', \PointerA \mapsto [\PointerB,
  \xmsg\Tag\SessionTypeS\Value :: \Queue]$ where $\Queue =
  \xmsg{\Tag_1}{\SessionTypeS_1}{\Value_1} :: \cdots ::
  \xmsg{\Tag_n}{\SessionTypeS_n}{\Value_n}$;

\item $\Tag = \Tag_k$ for some $k\in I$;

\item $\Process' =
  \Process_k\subst{\SessionTypeS}{\tvar_k}\subst{\Value}{\Var_k}$;

\item $\Memory' = \Memory'', \PointerA \mapsto [\PointerB, \Queue]$.
\end{iteMize}
From the hypothesis~\hypo2 and rule~\rulename{T-Receive} we obtain:
\begin{iteMize}{$-$}
\item $\Context = \Context'', \PointerA :
  \qlin~\ExternalChoice{\tmsg{\Tag_i}{\tvar_i}{\TypeS_i}.\SessionTypeT_i}_{i\in
    J}$ with $J \subseteq I$;

\item $\TypeS_k \subt \TypeT_k$;

\item\hypo{t3} $\tvar_k; \Context'', \PointerA :
  \qlin~\SessionTypeT_k, \Var_k : \TypeT_k \vdash \Process_k$
\end{iteMize}
Let $\Context_0,\Context_R,\Context \vdash \Value : \TypeS$. From
hypothesis~\hypo1 and Proposition~\ref{prop:subt_subst} we obtain:
\begin{iteMize}{$-$}
\item\hypo{c1} $\emptyset \wfdash \SessionTypeS$ and
  $\weight\SessionTypeS < \infty$;

\item\hypo{c2} $\TypeS \subt \TypeS_k\subst{\SessionTypeS}{\tvar_k}
  \subt \TypeT_k\subst{\SessionTypeS}{\tvar_k}$.
\end{iteMize}
From hypothesis~\hypo1 we also deduce that:
\begin{iteMize}{$-$}
\item \hypo{f1} if $\qun(\TypeS)$, then $\Value \in \dom(\Context)$
  and $\Context \vdash \Value : \TypeS$, because all the unrestricted
  values are in $\Context$;

\item \hypo{f2} if $\qlin(\TypeS)$, then $\Value \not\in
  \dom(\Context)$, because $\Value \xreach[\Memory] \PointerA$ and
  therefore it must be $\Value \in \dom(\Context_0)$ (process
  isolation prevents $\PointerA$ from being reachable from any pointer
  in $\dom(\Context_R,\Context)$ and different from $\PointerA$).
\end{iteMize}
From \hypo{t3}, \hypo{c1}, and Lemma~\ref{lem:type_subst} we have:
\begin{iteMize}{$-$}
\item\hypo{t3'} $\Context''\subst\SessionTypeS{\tvar_k}, \PointerA :
  \Qualifier~\SessionTypeT_k\subst{\SessionTypeS}{\tvar_k}, \Var_k :
  \TypeT_k\subst{\SessionTypeS}{\tvar_k} \vdash
  \Process_k\subst\SessionTypeS{\tvar_k}$.
\end{iteMize}
From \hypo{f1} and \hypo{f2} we deduce that $\Context_0 = \Context_0',
(\Value : \TypeS)|_\qlin$ for some $\Context_0'$. Take $\Context' =
(\Context'', \PointerA :
\Qualifier~\SessionTypeT_k\subst{\SessionTypeS}{\tvar_k}) + \Value :
\TypeS$ and observe that $\Context'$ is well defined by~\hypo{f1}
and~\hypo{f2} and also $\Context|_\qun \subseteq \Context'|_\qun$ by
construction of $\Context'$.  From \hypo{t3'}, \hypo{c2}, and
Lemma~\ref{lem:value_subst} we conclude:
\begin{iteMize}{$-$}
\item $\Context' \vdash
  \Process_k\subst{\SessionTypeS}{\tvar_k}\subst{\Value}{\Var_k}$
\end{iteMize}
We have to show
$\Context_0', \Context_R, \Context' \vdash \Memory'$ and we prove the
items of Definition~\ref{def:wth} in order.
\begin{enumerate}[(1)]
\item If $\PointerA = \PointerB$ there is nothing to prove. Suppose
  $\PointerA \ne \PointerB$. Since the queue associated with
  $\PointerA$ is not empty in $\Memory$, the queue associated with its
  peer endpoint $\PointerB$ must be empty. The reduction does not
  change the queue associated with $\PointerB$, therefore
  condition~(1) of Definition~\ref{def:wth} is satisfied.

\item Suppose $\PointerA \ne \PointerB$ for otherwise there is nothing
  to prove. From hypothesis~\hypo1 we deduce
  $\Context_0,\Context_R,\Context \vdash \PointerB :
  \qlin~\SessionTypeT_\PointerB$ and
\[
\begin{array}{rcl}
\co{\SessionTypeT_\PointerB} & = &
  \tail(\ExternalChoice{\tmsg{\Tag_i}{\tvar_i}{\TypeS_i}.\SessionTypeT_i}_{i\in J},
  \xmsg{\Tag}{\SessionTypeS}{\TypeS}\xmsg{\Tag_1}{\SessionTypeS_1}{\TypeS'_1}\cdots\xmsg{\Tag_n}{\SessionTypeS_n}{\TypeS'_n})
  \\
  & = & 
  \tail(\SessionTypeT_k\subst{\SessionTypeS}{\tvar_k},
  \xmsg{\Tag_1}{\SessionTypeS_1}{\TypeS'_1}\cdots\xmsg{\Tag_n}{\SessionTypeS_n}{\TypeS'_n})
\end{array}
\]
where $\Context_0,\Context_R,\Context \vdash \Value_i : \TypeS'_i$ for
$1\le i\le n$.

\item Similar to the previous item, where $\PointerA =
  \PointerB$.\Luca{\`E vero?}

\item Straightforward by definition of $\Context_0'$ and $\Context'$.

\item Immediate from hypothesis~\hypo1.
\end{enumerate}

\item\rulename{R-Par} In this case:
\begin{iteMize}{$-$}
\item $\Process = \Process_1 \parop \Process_2$;

\item $\system\Memory{\Process_1} \red{} \system{\Memory'}{\Process_1'}$;

\item $\Process' = \Process_1' \parop \Process_2$.
\end{iteMize}
From the hypothesis~\hypo2 and rule~\rulename{T-Par} we obtain:
\begin{iteMize}{$-$}
\item $\Context = \Context_1 + \Context_2$;

\item $\Context_i \vdash \Process_i$ for $i\in\{1,2\}$.
\end{iteMize}
In particular, from Lemma~\ref{lem:weakening} we have:
\begin{iteMize}{$-$}
\item $(\Context_0; \Context_R, \Context_2|_\qlin, \Context_1) +
  \Context_2|_\qun \vdash \Memory$;

\item $\Context_1 + \Context_2|_\qun \vdash \Process_1$.
\end{iteMize}
By induction hypothesis we deduce that there exist $\Context_0'$ and
$\Context_1'$ such that:
\begin{iteMize}{$-$}
\item $(\Context_1 + \Context_2|_\qun)|_\qun = (\Context_1 +
  \Context_2)|_\qun \subseteq \Context_1'|_\qun$;

\item $\Context_0'; \Context_R, \Context_2|_\qlin, \Context_1' \vdash
  \Memory'$;

\item $\Context_1' \vdash \Process_1'$.
\end{iteMize}
Now $\Context_2|_\qlin,\Context_1' = \Context_2|_\qlin, (\Context_1' +
\Context_2|_\qun) = \Context_1' + \Context_2$. Therefore, from
rule~\rulename{T-Par} we obtain $\Context_1' + \Context_2 \vdash
\Process'$.
We conclude by taking $\Context' = \Context_1' + \Context_2$.

\item\rulename{R-Rec} In this case:
  \begin{iteMize}{$-$}
  \item $\ProcessP = \rec\RecVar.\ProcessQ$;
  \item $\ProcessP' = \ProcessQ\subst\ProcessP\RecVar$;
  \item $\Memory' = \Memory$.
  \end{iteMize}
  From the hypothesis \hypo2 and rule~\rulename{T-Rec} we obtain:
  \begin{iteMize}{$-$}
  \item \hypo{t3} $\{ \RecVar \mapsto (\EmptyBoundContext; \Context)
    \}; \EmptyBoundContext; \Context \vdash \ProcessQ$;
  \item $\dom(\Context|_\qlin) \subseteq \fn(\ProcessQ)$.
  \end{iteMize}
%   We know that $\fn(\ProcessQ) \cap \bn(\ProcessQ) = \ftv(\ProcessQ)
%   \cap \btv(\ProcessQ) = \fpv(\ProcessQ) \cap \bpv(\ProcessQ) =
%   \emptyset$ because of the restriction we have imposed on recursive
%   processes (Section~\ref{sec:processes}).
  From \hypo{t3} and Lemma~\ref{lem:process_subst} we obtain:
  \begin{iteMize}{$-$}
  \item $\Context \vdash \ProcessP'$.
  \end{iteMize}
%
%   From \hypo{t4} we deduce:\Luca{Usiamo la propriet\`a che i nomi
%     liberi di un processo ben tipato devono trovarsi tutti nel type
%     environment.}
%   \begin{iteMize}
%   \item $\qun(\Context_2)$.
%   \end{iteMize}
%
%   We conclude with an application of WEAKENING by taking $\Context' =
%   \Context$.
  We conclude by taking $\Context_0' = \Context_0$ and $\Context' =
  \Context$.

\item\rulename{R-Struct} Follows from Lemma~\ref{lem:cong} and
  induction.
  \qedhere
\end{iteMize}
\end{proof}

\noindent We conclude with the proofs of subject reduction and soundness.

\begin{theorem}[Theorem~\ref{thm:sr}]
  Let $\Context_0; \Context \vdash (\Memory; \Process)$ and $(\Memory;
  \Process) \red{} (\Memory'; \Process')$. Then $\Context_0';
  \Context' \vdash (\Memory'; \Process')$ for some $\Context_0'$ and
  $\Context'$.
\end{theorem}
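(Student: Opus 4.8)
The plan is to derive Theorem~\ref{thm:sr} as an immediate specialization of the generalized subject reduction Lemma~\ref{lem:sr}, which has already been established. First I would unfold the definition of well-typed system: the hypothesis $\Context_0; \Context \vdash (\Memory; \Process)$ means precisely $\Context_0; \Context \vdash \Memory$ together with $\Context \vdash \Process$, and the desired conclusion $\Context_0'; \Context' \vdash (\Memory'; \Process')$ likewise decomposes into $\Context_0'; \Context' \vdash \Memory'$ and $\Context' \vdash \Process'$.

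Then I would instantiate Lemma~\ref{lem:sr} by taking its auxiliary linear environment $\Context_R$ to be empty. With $\Context_R = \emptyset$, hypothesis~\hypo1 of the lemma becomes $\Context_0; \Context \vdash \Memory$ (since $\emptyset, \Context = \Context$), and the side condition $\qlin(\Context_R)$ holds vacuously because the empty environment satisfies $\qlin(\cdot)$ trivially. Hypothesis~\hypo2 of the lemma, namely $\Context \vdash \Process$ together with the reduction $\system\Memory\Process \red{} \system{\Memory'}{\Process'}$, is supplied directly by the premises of the theorem. The lemma then yields environments $\Context_0'$ and $\Context'$ with $\Context_0'; \Context' \vdash \Memory'$ and $\Context' \vdash \Process'$ (the additional guarantee $\Context|_\qun \subseteq \Context'|_\qun$ is not needed here), and re-folding the definition of well-typed system gives exactly $\Context_0'; \Context' \vdash (\Memory'; \Process')$.

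There is essentially no obstacle at the level of the theorem itself: all the real content --- the induction on the derivation of the reduction and the delicate case analysis for \rulename{R-Send} and \rulename{R-Receive}, where the finite-weight premises and the $\tail$ computation are used to re-establish heap duality and the acyclicity of the heap --- is already packaged inside Lemma~\ref{lem:sr}. The only point worth flagging is \emph{why} the generalization with the context $\Context_R$ was introduced in the first place: it is needed to make the induction go through in the \rulename{R-Par} case, where reducing one parallel component forces one to treat the linear roots owned by the sibling component as a fixed part of the heap's typing environment rather than as part of the reducing process's environment. At the top level no such sibling is distinguished, so instantiating $\Context_R = \emptyset$ simply discards this extra machinery and recovers the statement of the theorem.
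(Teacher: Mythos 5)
Your proposal is correct and coincides with the paper's own proof, which likewise obtains Theorem~\ref{thm:sr} by instantiating Lemma~\ref{lem:sr} with $\Context_R = \emptyset$ and unfolding the definition of a well-typed system. Your remark about the role of $\Context_R$ in the \rulename{R-Par} case accurately explains why the lemma is stated in its generalized form.
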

\begin{proof}
  Follows from Lemma~\ref{lem:sr} by taking $\Context_R = \emptyset$.
\end{proof}

\begin{proposition}
\label{prop:dom_context}
Let $\Context \vdash \Process$. Then $\fn(\Process) \subseteq
\dom(\Context)$ and $\dom(\Context|_\qlin) \subseteq \fn(\Process)$.
\end{proposition}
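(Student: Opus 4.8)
The plan is to prove, by induction on the derivation of the typing judgment, a statement slightly more general than the one in the proposition, so that it applies to the non-empty recursion and bound-type-variable contexts that appear in the premises of the typing rules. Concretely, I would establish that $\RecContext; \BoundContext; \Context \vdash \Process$ implies (i) $\fn(\Process) \subseteq \dom(\Context)$ and (ii) $\dom(\Context|_\qlin) \subseteq \fn(\Process) \cup \bigcup_{\RecVar \in \fpv(\Process)} \dom(\Context_\RecVar|_\qlin)$, where $\RecContext(\RecVar) = (\cdot\,; \Context_\RecVar)$. The proposition is then the special case $\RecContext = \BoundContext = \emptyset$: a derivable judgment with empty $\RecContext$ forces $\fpv(\Process) = \emptyset$ (rule \rulename{T-Var} requires $\RecVar \in \dom(\RecContext)$), so (ii) collapses to $\dom(\Context|_\qlin) \subseteq \fn(\Process)$.

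The inductive cases for inclusion (i) are routine. The axioms \rulename{T-Idle}, \rulename{T-Close}, and \rulename{T-Var} are immediate from $\fn(\idle) = \fn(\RecVar) = \emptyset$ and $\fn(\closeChannel(\Name)) = \{\Name\}$. For \rulename{T-Par} and \rulename{T-Choice} the free names of the composite are the union of those of the components, each contained in $\dom(\Context_1), \dom(\Context_2) \subseteq \dom(\Context_1 + \Context_2)$ by the induction hypothesis. The cases that introduce binders --- \rulename{T-Open Linear Channel}, \rulename{T-Open Unrestricted Channel}, \rulename{T-Send}, and \rulename{T-Receive} --- are handled by applying the induction hypothesis to the continuation(s) in the enriched environment and then discounting the bound names and message variables: the freshly bound pointers $\PointerA, \PointerB$ (resp.\ $\Pointer, \shared\Pointer$, resp.\ the message variable $\Var_i$) lie outside $\dom(\Context)$ precisely because the corresponding environment extension is defined, so removing them from the free names of the continuation lands us back inside $\dom(\Context)$. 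In \rulename{T-Send} one additionally records that both the subject $\NameA$ and the sent value $\NameB$ belong to $\dom(\Context)$ by construction of $(\Context', \NameA : \cdots) + \NameB : \TypeS$.

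For inclusion (ii) the same cases are symmetric bookkeeping: a linear name of $\Context$ is either one of the names just bound (hence a free name of $\Process$) or a linear name surviving from the continuation's environment, for which the induction hypothesis applies --- here it is important that operations such as changing $\NameA$'s type from an internal choice to its continuation in \rulename{T-Send}, or splitting $\Context$ in \rulename{T-Par}, preserve which names are linear (the definition of $+$ only ever shares unrestricted associations). The delicate point, and the main obstacle, is the interplay of \rulename{T-Rec} and \rulename{T-Var}. At a \rulename{T-Var} leaf the conclusion environment is $\Context_\RecVar, \Context'$ with $\qun(\Context')$, so its linear part is exactly $\dom(\Context_\RecVar|_\qlin)$, while $\fn(\RecVar) = \emptyset$; the naive form of (ii) would therefore fail, and it is precisely the extra summand $\bigcup_{\RecVar \in \fpv(\Process)} \dom(\Context_\RecVar|_\qlin)$ that absorbs these names, since $\fpv(\RecVar) = \{\RecVar\}$. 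Dually, in \rulename{T-Rec} the recursion variable leaves $\fpv$, so its contribution to the generalized summand disappears; what rescues (ii) at this point is exactly the side condition $\dom(\Context|_\qlin) \subseteq \fn(\Process)$ built into \rulename{T-Rec}, which re-establishes the plain inclusion for the body before the binder is closed. I would therefore treat \rulename{T-Rec} and \rulename{T-Var} last, verifying that the generalized summand is updated consistently as recursion variables enter and leave $\fpv(\Process)$.
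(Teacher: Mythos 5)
Your proof is correct. At its core it is the same induction on the typing derivation as the paper's, and it hinges on the same key fact: the side condition $\dom(\Context|_\qlin) \subseteq \fn(\Process)$ of \rulename{T-Rec} is what establishes the second inclusion at recursions (the inductive hypothesis is circular there, as you yourself note). Where you genuinely diverge is in the treatment of \rulename{T-Var}. The paper observes that $\Context \vdash \Process$ makes $\Process$ closed with respect to process variables and exploits the fact that \rulename{T-Rec} is a base case for the second inclusion: since that part of the induction never descends below a \rulename{T-Rec}, it never meets a judgment with non-empty $\RecContext$, so \rulename{T-Var} is unreachable there (for the first inclusion \rulename{T-Var} does occur in subderivations, but is trivial since $\fn(\RecVar) = \emptyset$). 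You instead strengthen the invariant with the summand $\bigcup_{\RecVar \in \fpv(\Process)} \dom(\Context_{\RecVar}|_\qlin)$, turning \rulename{T-Var} into an honest, harmless case and letting the induction run uniformly over judgments with arbitrary $\RecContext$; the proposition then follows because typability under empty $\RecContext$ forces $\fpv(\Process) = \emptyset$. Both routes are sound: yours buys a uniform invariant that remains meaningful for processes with free process variables and avoids any unreachability argument; the paper's buys brevity, since nothing stronger than the two stated inclusions has to be threaded through the remaining rules.
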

\begin{proof}
  From the hypothesis $\Context \vdash \Process$ we deduce that
  $\Process$ is closed with respect to process variables. The results
  follows by a straightforward induction on the derivation of
  $\Context \vdash \Process$ and by cases on the last rule applied,
  where the case for \rulename{T-Var} is impossible by hypothesis and
  \rulename{T-Rec} is a base case when proving the second inclusion.
\end{proof}

\begin{theorem}[Theorem~\ref{thm:safety}]
Let $\vdash \Process$. Then $\Process$ is well behaved.
\end{theorem}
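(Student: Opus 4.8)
The plan is to reduce well-behavedness to the two invariants we already control: well-typedness of the running system (via subject reduction) and the structural guarantees on well-typed heaps (Definition~\ref{def:wth}). First I would note that the initial system is well typed. Since $\vdash \Process$ abbreviates $\emptyset; \emptyset; \emptyset \vdash \Process$, and the empty heap trivially satisfies every clause of Definition~\ref{def:wth} (all quantifications are vacuous and $\dom(\EmptyMemory) = \emptyset = \reachable{\emptyset}{\EmptyMemory}$), we have $\emptyset; \emptyset \vdash \system\EmptyMemory\Process$. Given any run $\system\EmptyMemory\Process \wred{} \system\Memory\ProcessQ$, I would then apply Theorem~\ref{thm:sr} once per reduction step to obtain environments with $\Context_0; \Context \vdash \system\Memory\ProcessQ$, hence $\Context_0; \Context \vdash \Memory$ and $\Context \vdash \ProcessQ$. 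The remainder of the argument extracts the three clauses of Definition~\ref{def:wb} from this single well-typed configuration.

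For clause~(1) I would combine clause~(4) of Definition~\ref{def:wth}, namely $\dom(\Memory) = \dom(\Context_0, \Context|_\qlin) = \reachable{\dom(\Context)}{\Memory}$, with Proposition~\ref{prop:dom_context}, which gives $\dom(\Context|_\qlin) \subseteq \fn(\ProcessQ) \subseteq \dom(\Context)$. Because reachability (Definition~\ref{def:reachable}) emanates only from linear endpoints and $\reachable\cdot\Memory \subseteq \PointerSet$, the unrestricted roots contribute nothing, so $\reachable{\dom(\Context)}{\Memory} = \reachable{\dom(\Context|_\qlin)}{\Memory}$; sandwiching by the two inclusions yields $\reachable{\fn(\ProcessQ)}{\Memory} = \dom(\Memory)$. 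For clause~(2), whenever $\ProcessQ \equiv \Process_1 \parop \Process_2$ I would use Lemma~\ref{lem:cong} and invert rule~\rulename{T-Par} to split $\Context = \Context_1 + \Context_2$ with $\Context_i \vdash \Process_i$; the definition of $+$ forces $\dom(\Context_1|_\qlin)$ and $\dom(\Context_2|_\qlin)$ to be disjoint, and clause~(5) of Definition~\ref{def:wth} (distinct roots have disjoint reachable sets) then gives $\reachable{\fn(\Process_1)}{\Memory} \cap \reachable{\fn(\Process_2)}{\Memory} = \emptyset$.

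The main obstacle is clause~(3), which is a progress argument. Taking a subprocess $\Process_1$ with $\ProcessQ \equiv \Process_1 \parop \Process_2$, no unguarded parallel composition in $\Process_1$, and $\system\Memory{\Process_1} \nred{}$, I would case-split on the syntactic shape of the (closed, well-typed) $\Process_1$. The forms $\openChannel(\dots)$, $\Process \choice \Process$, and $\rec\RecVar.\Process$ always reduce, and a bare $\RecVar$ is excluded since $\Process_1$ is closed with respect to process variables (Proposition~\ref{prop:dom_context}), so all of these contradict stuckness. For a send $\xsend\NameA\Tag\SessionType\NameB.\Process$, typing locates $\NameA$ (or the underlying linear endpoint when $\NameA = \shared\Pointer$) in $\dom(\Context)$, whence clause~(4) of Definition~\ref{def:wth} places the corresponding entry in $\dom(\Memory)$ and rule~\rulename{R-Send Linear} or~\rulename{R-Send Unrestricted} fires, again a contradiction. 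This leaves $\idle$, $\closeChannel(\Pointer)$, and a receive $\sum_{i\in I} \xreceive\Pointer{\Tag_i}{\tvar_i}{\Var_i:\Type_i}.\Process_i$ — exactly the admissible cases. In the last two I must still establish $\Pointer \mapsto [\PointerB, \EmptyQueue] \in \Memory$, and here the key is clauses~(1) and~(2) of Definition~\ref{def:wth}. For $\closeChannel(\Pointer)$, rule~\rulename{T-Close} types $\Pointer$ as $\qlin~\SessionEnd$, and since $\tail(\SessionEnd, \cdot)$ is undefined on any non-empty message sequence, clause~(2) forbids $\Pointer$ from having a non-empty queue. For the receive, $\Pointer$ carries an external-choice type whose tags cover the branches $I$; if its queue were non-empty, clause~(1) would force the peer queue to be empty, and clause~(2) (or clause~(3) for the unrestricted variant) together with the definition of $\tail$ would put the head message's tag in $I$, so rule~\rulename{R-Receive} would fire, contradicting $\nred{}$. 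Hence the queue is empty, which discharges clause~(3) and completes the proof.
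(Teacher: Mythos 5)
Your proof follows essentially the same route as the paper's: iterated subject reduction (Theorem~\ref{thm:sr}) yields a well-typed configuration, clause~(1) of Definition~\ref{def:wb} is obtained by the same sandwich between Proposition~\ref{prop:dom_context} and clause~(4) of Definition~\ref{def:wth}, clause~(2) by inverting \rulename{T-Par} (after Lemma~\ref{lem:cong}) and combining disjointness of the linear domains with clause~(5) of Definition~\ref{def:wth}, and clause~(3) by the same case analysis on stuck processes, using heap well-typedness and the $\tail$ function to rule out stuck sends and non-empty queues at close/receive prefixes. Your write-up is in places more explicit than the paper's (the bare-$\RecVar$ case, the queue-emptiness arguments via undefinedness of $\tail$), and where it is terse --- notably the claim that a well-typed unrestricted send can always fire --- it glosses exactly the same point, at the same level of detail, as the paper's own proof.
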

\begin{proof}
  Consider a derivation $\system\EmptyMemory\ProcessP \wred{}
  \system\Memory{\ProcessQ}$.
  From Theorem~\ref{thm:sr} we deduce \hypo* $\Context_0;\Context
  \vdash \system\Memory{\ProcessQ}$ for some $\Context_0$ and
  $\Context$.
  We prove conditions~\hypo{1--3} of Definition~\ref{def:wb} in order:
\begin{enumerate}[(1)]
\item Using Proposition~\ref{prop:dom_context},
  Definition~\ref{def:reachable}, and Definition~\ref{def:wth} we have
  $\reachable{\fn(\ProcessQ)}\Memory \subseteq
  \reachable{\dom(\Context)}\Memory = \dom(\Memory)$ and
  $\dom(\Memory) = \reachable{\dom(\Context)}\Memory =
  \reachable{\dom(\Context|_\qlin)}\Memory \subseteq
  \reachable{\fn(\ProcessQ)}\Memory$.

\item Suppose $\ProcessQ \equiv \Process_1 \parop \Process_2$.
  By Lemma~\ref{lem:cong} we deduce $\Context \vdash \Process_1 \parop
  \Process_2$, namely there exist $\Context_1$ and $\Context_2$ such
  that $\Context = \Context_1 + \Context_2$ and $\Context_i \vdash
  \Process_i$.
  From the definition of $\Context_1 + \Context_2$ we deduce
  $\dom(\Context_1|_\qlin) \cap \dom(\Context_2|_\qlin) = \emptyset$.
  From Proposition~\ref{prop:dom_context} we have $\fn(\Process_i)
  \subseteq \dom(\Context_i)$ for $i\in\{1,2\}$.
  From \hypo* we conclude $\reachable{\fn(\Process_1)}\Memory \cap
  \reachable{\fn(\Process_2)}\Memory \subseteq
  \reachable{\dom(\Context_1)}\Memory \cap
  \reachable{\dom(\Context_2)}\Memory =
  \reachable{\dom(\Context_1|_\qlin)}\Memory \cap
  \reachable{\dom(\Context_2|_\qlin)}\Memory = \emptyset$.

\item
  Suppose $\ProcessQ \equiv \ProcessP' \parop \ProcessQ'$ where
  $\ProcessP'$ has no unguarded parallel composition and
  $\system\Memory{\ProcessQ} \nred{}$.
  Then $\Process'$ contains no unfolded recursion, choice,
  $\openChannel$, output prefix that is not guarded by an input
  prefix, for all these processes reduce. In the case of output
  prefixes, one uses %the fact that 
  $\Context_0; \Context \vdash
  \Memory$ to deduce that either \rulename{R-Send Linear} or
  \rulename{R-Send Unrestricted} can be applied.
  Suppose $\Process' \ne \idle$. \REVISION{Then either $\Process' =
  \closeChannel(\Pointer)$ or $\Process' = \sum_{i\in I}
  \xreceive\Pointer{\Tag_i}{\tvar_i}{\Var_i:\Type_i}.\Process_i$.
  Suppose by contradiction that the queue associated with $\Pointer$
  is not empty, namely that $\Pointer \mapsto [\PointerB,
  \xmsg\Tag\SessionTypeS\Value :: \Queue] \in \Memory$.
  From the hypothesis $\Context \vdash \Memory$ we deduce that the
  endpoint type associated with $\Pointer$ cannot be $\SessionEnd$,
  and therefore $\Process' \ne \closeChannel(\Pointer)$.}
  From the hypothesis $\Context \vdash \ProcessQ$ and
  rule~\rulename{T-Receive} we deduce $\Context \vdash \Pointer :
  \qlin~\ExternalChoice{\xmsg{\Tag_i}{\tvar_i}{\TypeS_i}.\SessionType_i}_{i\in
    J}$ and $J \subseteq I$.
  From the hypothesis $\Context \vdash \Memory$ we deduce $\Tag =
  \Tag_k$ for some $k\in J$, namely $\system\Memory{\ProcessP'}
  \red{}$, which is absurd.
  \qedhere
\end{enumerate}
\end{proof}

%%% Local Variables: 
%%% mode: latex
%%% TeX-master: "main"
%%% End: 

\section{Supplement to Section~\ref{sec:algorithms}}
\label{sec:extra_algorithms}

\subsection{Subtyping}

In order to prove correctness and completeness of the subtyping
algorithm (Definition~\ref{def:asubt}) with respect to the subtyping
relation (Definition~\ref{def:subt}) we need a few more concepts. The
first one is that of \emph{trees} of an endpoint type $\SessionType$,
which is the set of all subtrees of $\SessionType$ where recursive
terms have been infinitely unfolded. We build this set inductively, as
follows:

\begin{definition}[endpoint type trees]
We write $\trees(\SessionType)$ for the least set such that:
\begin{iteMize}{$\bullet$}
\item $\SessionType \in \trees(\SessionType)$;

\item $\trec\tvar.\SessionTypeS \in \trees(\SessionType)$ implies
  $\SessionTypeS\subst{\trec\tvar.\SessionTypeS}{\tvar} \in
  \trees(\SessionType)$;

\item
  $\{\dagger\xmsg{\Tag_i}{\tvar_i}{\Qualifier_i~\SessionTypeS_i}.\SessionTypeT_i\}_{i\in
    I} \in \trees(\SessionType)$ where $\dagger \in \{ {?}, {!} \}$
  implies $\SessionTypeS_i \in \trees(\SessionType)$ and
  $\SessionTypeT_i \in \trees(\SessionType)$ for every $i\in I$.
\end{iteMize}
\end{definition}

\noindent Observe that $\trees(\SessionType)$ is \emph{finite} for every
$\SessionType$, because the infinite unfolding of an endpoint type is
a regular tree~\cite{Courcelle83}.
Also, every free type variable in $\SessionTypeS \in
\trees(\SessionTypeT)$ is either free in $\SessionTypeT$ or it is
bound by a prefix of $\SessionTypeT$. In particular, it cannot be
bound by a recursion.

The next concept we need is that of \emph{instance} of an endpoint
type subtree. The idea is to generate the set of all instances of the
(trees of the) endpoint types that the subtyping algorithm visits, and
to make sure that this set is finite. Looking at the rules in
Table~\ref{tab:algorithmic_subtyping} we see that only type variables
that are bound in a prefix $\tmsg\Tag\tvar\Type$ are ever
instantiated. Also, each variable $\tvarA$ in one of the endpoint
types can be instantiated with $\varmap(\tvarA,\tvarB)$ where $\tvarB$
is some type variable of the other endpoint type. These considerations
lead to the following definition of endpoint type instances:

\begin{definition}[endpoint type instances]
Let $\varmap$ be a map as by Definition~\ref{def:asubt}.  We define
$\instances(\varmap, \SessionTypeT, \SessionTypeS)$ as the smallest
set such that:
\begin{iteMize}{$\bullet$}
\item if $\SessionTypeT' \in \trees(\SessionTypeT)$ and $\{ \tvarA_1,
  \dots, \tvarA_n \} = \ftv(\SessionTypeT') \cap \btv(\SessionTypeT)$
  and $\{ \tvarB_1, \dots, \tvarB_n \} \subseteq \btv(\SessionTypeS)$,
  then $\SessionTypeT'\subst{\varmap(\tvarA_1,
    \tvarB_1)}{\tvarA_1}\cdots\subst{\varmap(\tvarA_n,
    \tvarB_n)}{\tvarA_n} \in \instances(\varmap, \SessionTypeT,
  \SessionTypeS)$;

\item if $\SessionTypeS' \in \trees(\SessionTypeS)$ and $\{ \tvarB_1,
  \dots, \tvarB_n \} = \ftv(\SessionTypeS') \cap \btv(\SessionTypeS)$
  and $\{ \tvarA_1, \dots, \tvarA_n \} \subseteq \btv(\SessionTypeT)$,
  then $\SessionTypeS'\subst{\varmap(\tvarB_1,
    \tvarA_1)}{\tvarB_1}\cdots\subst{\varmap(\tvarB_n,
    \tvarA_n)}{\tvarB_n} \in \instances(\varmap, \SessionTypeT,
  \SessionTypeS)$.
\end{iteMize}
\end{definition}

\noindent Observe that $\SessionTypeT, \SessionTypeS \in \instances(\varmap,
\SessionTypeT, \SessionTypeS)$ and that $\instances(\varmap,
\SessionTypeT, \SessionTypeS)$ is finite, since it contains finitely
many instantiations of finitely many subtrees of $\SessionTypeT$ and
$\SessionTypeS$.

\begin{proposition}
\label{prop:instances}
Every endpoint type occurring in the derivation of $\emptyset
\vdash_\varmap \SessionTypeT \asubt \SessionTypeS$ is in
$\instances(\SessionTypeT, \SessionTypeS)$.
\end{proposition}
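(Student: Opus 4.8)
The plan is to prove a slightly stronger invariant by induction on the structure of the algorithmic derivation. I would split the target set according to its definition: let $L$ be the set described by the first bullet of $\instances(\varmap, \SessionTypeT, \SessionTypeS)$ (the instantiated trees of $\SessionTypeT$) and $R$ the set described by the second bullet (the instantiated trees of $\SessionTypeS$), so that $\instances(\varmap, \SessionTypeT, \SessionTypeS) = L \cup R$. The invariant I would maintain is that in every judgment $\srel \vdash_\varmap \SessionTypeT' \asubt \SessionTypeS'$ occurring in the derivation of $\emptyset \vdash_\varmap \SessionTypeT \asubt \SessionTypeS$ the left endpoint type $\SessionTypeT'$ lies in $L$ and the right endpoint type $\SessionTypeS'$ lies in $R$ (and likewise, for the qualified-type judgments produced by \rulename{S-Type}, the underlying endpoint types lie in $L$ and $R$). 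Since $L \cup R$ is exactly $\instances(\varmap, \SessionTypeT, \SessionTypeS)$, this invariant immediately yields the proposition.

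For the base case the root judgment is $\emptyset \vdash_\varmap \SessionTypeT \asubt \SessionTypeS$, and taking the empty substitution ($n = 0$) in the two bullets gives $\SessionTypeT \in L$ and $\SessionTypeS \in R$. For the inductive step I verify that each rule of Table~\ref{tab:algorithmic_subtyping} propagates the invariant from its conclusion to its premises. The rules \rulename{S-Var}, \rulename{S-End}, and \rulename{S-Axiom} have no endpoint-type premises to check, and \rulename{S-Type} merely strips a qualifier, leaving the underlying endpoint types unchanged. The substantive choice rules are \rulename{S-Input} and \rulename{S-Output}: here the conclusion's left type $\SessionTypeT' \in L$ is a choice, so writing $\SessionTypeT' = \SessionTypeT'' \sigma$ with $\SessionTypeT'' \in \trees(\SessionTypeT)$ and $\sigma$ the associated $\varmap$-substitution, the top-level structure forces $\SessionTypeT''$ to be a choice of the same shape; its argument endpoint types and continuations are then again in $\trees(\SessionTypeT)$ by the third clause of the definition of $\trees$. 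The additional substitution $\subst{\varmap(\tvarA_i, \tvarB_i)}{\tvarA_i}$ appearing in the premises simply extends $\sigma$ by one further prefix-bound variable of $\SessionTypeT$ (note $\tvarA_i \notin \dom(\sigma)$ by the independence hypotheses), so the substituted continuations and argument endpoint types remain in $L$; the right-hand sides land in $R$ by the symmetric argument.

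The main obstacle is the pair of unfolding rules \rulename{S-Rec Left} and \rulename{S-Rec Right}. There the conclusion's left type is $\trec\tvar.\SessionTypeT_0$, which as an element of $L$ can be written $(\trec\tvar.\SessionTypeT_0')\sigma$ with $\trec\tvar.\SessionTypeT_0' \in \trees(\SessionTypeT)$, while the premise replaces it by the unfolding $\SessionTypeT_0 \subst{\trec\tvar.\SessionTypeT_0}{\tvar}$. I would close this case in two steps. First, the second clause of the definition of $\trees$ gives directly that $W \eqdef \SessionTypeT_0' \subst{\trec\tvar.\SessionTypeT_0'}{\tvar} \in \trees(\SessionTypeT)$. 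Second, the standard substitution-composition identity $(\SessionTypeT_0' \subst{U}{\tvar})\sigma = (\SessionTypeT_0'\sigma)\subst{U\sigma}{\tvar}$ with $U = \trec\tvar.\SessionTypeT_0'$, valid because $\tvar \notin \dom(\sigma)$ and $\tvar$ does not occur in the range of $\sigma$ (which holds since the $\varmap$-images are fresh and bound variables are kept distinct by independence), shows that the premise's left type equals $W\sigma \in L$; \rulename{S-Rec Right} is handled identically with $R$ in place of $L$. Beyond this commutation, the remaining verifications reduce to matching bound-variable names under the Barendregt and freshness conventions and are routine.
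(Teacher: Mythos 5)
Your overall strategy---strengthening the statement to an invariant recording which side each type of a judgment comes from, then inducting on the derivation---is the intended flavor of argument (the paper's own proof is literally ``a simple induction on the derivation''), but the invariant you chose is false, and your induction breaks exactly at \rulename{S-Output}. That rule is \emph{contravariant} in the message arguments: its premises include
$\srel' \vdash_\varmap \TypeS_j\subst{\varmap(\tvarA_j,\tvarB_j)}{\tvarB_j} \asubt \TypeT_j\subst{\varmap(\tvarA_j,\tvarB_j)}{\tvarA_j}$,
where $\TypeS_j$ is an argument of the \emph{right-hand} type of the conclusion and $\TypeT_j$ of the left-hand one. So the left type of such a premise is an instance of a tree of $\SessionTypeS$, i.e.\ it lies in your $R$, and in general not in $L$. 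Concretely, take the closed types $\SessionTypeT \seq {!}\msg\Tag{\qlin~\SessionTypeT_a}.\SessionEnd$ and $\SessionTypeS \seq {!}\msg\Tag{\qlin~\SessionTypeS_a}.\SessionEnd$, where $\SessionTypeT_a$ is an external choice with two branches and $\SessionTypeS_a$ is the same choice restricted to one branch; then $\SessionTypeS_a \subt \SessionTypeT_a$, hence $\SessionTypeT \subt \SessionTypeS$, and the algorithmic derivation contains a judgment with $\SessionTypeS_a$ on the left, while $L = \trees(\SessionTypeT) = \{\SessionTypeT, \SessionTypeT_a, \SessionEnd\}$ does not contain $\SessionTypeS_a$. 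Your text treats \rulename{S-Input} and \rulename{S-Output} as if both were covariant (``the right-hand sides land in $R$ by the symmetric argument''), which is precisely the step that fails.

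The repair is small but necessary: weaken the invariant to say that every judgment $\srel \vdash_\varmap \SessionTypeT' \asubt \SessionTypeS'$ in the derivation satisfies either ($\SessionTypeT' \in L$ and $\SessionTypeS' \in R$) or ($\SessionTypeT' \in R$ and $\SessionTypeS' \in L$). Rule \rulename{S-Input}, the continuation premises of \rulename{S-Output}, and the \rulename{S-Rec} and \rulename{S-Type} rules preserve the orientation, while the argument premises of \rulename{S-Output} swap it. In the swapped case the new substitutions have the form $\subst{\varmap(\tvarB,\tvarA)}{\tvarB}$ with $\tvarB \in \btv(\SessionTypeS)$ and $\tvarA \in \btv(\SessionTypeT)$, and it is exactly here that you need the stipulation that $\varmap$ is defined on \emph{unordered} pairs, $\varmap(\tvarA,\tvarB) = \varmap(\tvarB,\tvarA)$, so that these remain legal instantiations in the sense of the definition of $\instances(\varmap,\SessionTypeT,\SessionTypeS)$. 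Since $L \cup R = \instances(\varmap,\SessionTypeT,\SessionTypeS)$, the weakened invariant still yields the proposition, and the rest of your argument---the unfolding case via the second clause of $\trees$ together with the substitution-commutation identity, and the bookkeeping of prefix-bound variables under independence---goes through unchanged.
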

\begin{proof}[Proof sketch]
  A simple induction on the derivation of $\emptyset \vdash_\varmap
  \SessionTypeT \asubt \SessionTypeS$.
\end{proof}

\begin{lemma}
\label{lem:cut}
Let $\emptyset \vdash_\varmap \SessionTypeT \asubt \SessionTypeS$ and
$\{ (\SessionTypeT, \SessionTypeS) \} \vdash_\varmap \SessionTypeT'
\asubt \SessionTypeS'$. Then $\emptyset \vdash_\varmap \SessionTypeT'
\asubt \SessionTypeS'$.
\end{lemma}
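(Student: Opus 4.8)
The plan is to prove a slightly stronger statement by induction on the derivation of the second judgment, keeping the first hypothesis $\emptyset \vdash_\varmap \SessionTypeT \asubt \SessionTypeS$ fixed throughout. The difficulty is that, as one descends into the derivation of $\{ (\SessionTypeT, \SessionTypeS) \} \vdash_\varmap \SessionTypeT' \asubt \SessionTypeS'$, the memoization context \emph{grows}: rules \rulename{S-Rec Left}, \rulename{S-Rec Right}, \rulename{S-Input}, and \rulename{S-Output} each deposit a pair into $\srel$ before recurring. Hence the induction hypothesis must be generalized so that the eliminated pair may appear inside an arbitrary ambient context. Concretely, I would prove: \emph{for every $\srel$, if $\emptyset \vdash_\varmap \SessionTypeT \asubt \SessionTypeS$ and $\{ (\SessionTypeT, \SessionTypeS) \} \cup \srel \vdash_\varmap \SessionTypeT' \asubt \SessionTypeS'$, then $\srel \vdash_\varmap \SessionTypeT' \asubt \SessionTypeS'$}. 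The lemma is the instance $\srel = \emptyset$.

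A preliminary ingredient is a \textbf{weakening} property: $\srel \vdash_\varmap \SessionTypeT \asubt \SessionTypeS$ and $\srel \subseteq \srel'$ imply $\srel' \vdash_\varmap \SessionTypeT \asubt \SessionTypeS$. This follows by an immediate induction on the derivation, since enlarging the context neither invalidates the side condition $(\SessionTypeT, \SessionTypeS) \in \srel$ of \rulename{S-Axiom} nor interferes with any other rule.

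The induction on the derivation of $\{ (\SessionTypeT, \SessionTypeS) \} \cup \srel \vdash_\varmap \SessionTypeT' \asubt \SessionTypeS'$ then proceeds by cases on the last rule applied. The axioms \rulename{S-Var} and \rulename{S-End} do not inspect the context and carry over verbatim. The crucial case is \rulename{S-Axiom}, where $(\SessionTypeT', \SessionTypeS') \in \{ (\SessionTypeT, \SessionTypeS) \} \cup \srel$: if $(\SessionTypeT', \SessionTypeS') \in \srel$ we simply reapply \rulename{S-Axiom} in $\srel$; otherwise $(\SessionTypeT', \SessionTypeS') = (\SessionTypeT, \SessionTypeS)$, and we discharge the goal using the hypothesis $\emptyset \vdash_\varmap \SessionTypeT \asubt \SessionTypeS$ together with weakening (from $\emptyset \subseteq \srel$). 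For the context-enlarging rules the argument is uniform: the premise is derived in a context of the form $\{ (\SessionTypeT, \SessionTypeS) \} \cup (\srel \cup \{ (\SessionTypeT', \SessionTypeS') \})$, so applying the generalized induction hypothesis with $\srel \cup \{ (\SessionTypeT', \SessionTypeS') \}$ in place of $\srel$ yields the premise relative to $\srel \cup \{ (\SessionTypeT', \SessionTypeS') \}$, and reapplying the same rule re-inserts the pair and concludes $\srel \vdash_\varmap \SessionTypeT' \asubt \SessionTypeS'$.

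The main obstacle is getting the generalized invariant exactly right — in particular, recognizing that one must universally quantify over the ambient context $\srel$ rather than fix it to $\emptyset$, so that the extra pair deposited by each recursive rule can be absorbed by the induction hypothesis. Once this is in place, the case analysis is entirely mechanical, the only genuine appeal to a separate fact being the one-line weakening lemma used in the \rulename{S-Axiom} case.
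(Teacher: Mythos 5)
Your proof is correct and takes essentially the same route as the paper's: the paper's one-line argument --- induction on the derivation of $\{(\SessionTypeT,\SessionTypeS)\} \vdash_\varmap \SessionTypeT' \asubt \SessionTypeS'$, replacing every application of \rulename{S-Axiom} for the pair $(\SessionTypeT,\SessionTypeS)$ by a copy of the derivation of $\emptyset \vdash_\varmap \SessionTypeT \asubt \SessionTypeS$ --- is precisely the proof surgery that your generalized invariant makes rigorous. The only difference is explicitness: the paper leaves implicit both the quantification over the ambient memoization context and the weakening needed to replay the grafted derivation under the enlarged contexts, which you state and discharge separately.
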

\begin{proof}[Proof sketch]
  A simple induction on the proof of $\{ (\SessionTypeT,
  \SessionTypeS) \} \vdash_\varmap \SessionTypeT' \asubt
  \SessionTypeS'$ where every application of rule~\rulename{S-Axiom}
  for the pair $(\SessionTypeT, \SessionTypeS)$ is replaced by a copy
  of the proof of $\emptyset \vdash_\varmap \SessionTypeT \asubt
  \SessionTypeS$.
\end{proof}

\begin{theorem}[Theorem~\ref{thm:subtyping_algorithm}]
  Let $\SessionTypeT_0$ and $\SessionTypeS_0$ be independent endpoint
  types and $\varmap$ be a map as by Definition~\ref{def:asubt}. Then
  $\vdash_\varmap \SessionTypeT_0 \asubt \SessionTypeS_0$ if and only
  if $\SessionTypeT_0 \subt \SessionTypeS_0$.
\end{theorem}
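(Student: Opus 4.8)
The plan is to prove the two implications separately, relating in each direction the algorithmic judgment $\srel \vdash_\varmap \cdot \asubt \cdot$ to the coinductive relation $\subt$, and in both cases exploiting the finiteness of $\instances(\varmap, \SessionTypeT_0, \SessionTypeS_0)$, which by Proposition~\ref{prop:instances} bounds the set of pairs the algorithm can ever visit, to obtain well-founded inductions.

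For soundness ($\vdash_\varmap \SessionTypeT_0 \asubt \SessionTypeS_0$ implies $\SessionTypeT_0 \subt \SessionTypeS_0$) I would define the candidate relation $\srel_0 = \{ (\SessionTypeT, \SessionTypeS) \mid \emptyset \vdash_\varmap \SessionTypeT \asubt \SessionTypeS \text{ is derivable} \}$, extended to qualified types through rule~\rulename{S-Type} in the obvious way, and show that $\srel_0$ is a coinductive subtyping in the sense of Definition~\ref{def:subt}. The key step is to analyze, for a pair $(\SessionTypeT, \SessionTypeS) \in \srel_0$, the last rule of a minimal derivation: \rulename{S-End} and \rulename{S-Var} yield cases~(1) and~(2); \rulename{S-Rec Left} and \rulename{S-Rec Right} do not affect membership in $\subt$ because $\subt$ is taken modulo folding/unfolding of recursion; and \rulename{S-Input}, \rulename{S-Output} expose matching choice constructors. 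The premises of the latter two rules carry a memoization context $\srel'$ containing the pair just crossed, so to land the sub-pairs back into $\srel_0$ (which uses the empty context) I would invoke the cut Lemma~\ref{lem:cut}, splicing in the derivation of the crossed pair under the empty context. The consistent renaming enforced by $\varmap$, together with the fact that matched bound type variables relate only to themselves, reconciles the $\varmap(\tvarA_i,\tvarB_i)$-instantiated premises with the requirement on continuations and message arguments.

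For completeness ($\SessionTypeT_0 \subt \SessionTypeS_0$ implies $\vdash_\varmap \SessionTypeT_0 \asubt \SessionTypeS_0$) I would prove the stronger claim: for all $\SessionTypeT, \SessionTypeS \in \instances(\varmap, \SessionTypeT_0, \SessionTypeS_0)$ with $\SessionTypeT \subt \SessionTypeS$ and every $\srel \subseteq {\subt}$, one has $\srel \vdash_\varmap \SessionTypeT \asubt \SessionTypeS$. The induction is on the measure $|\mathcal{I}^2 \setminus \srel|$, where $\mathcal{I} = \instances(\varmap, \SessionTypeT_0, \SessionTypeS_0)$ is finite. If $(\SessionTypeT, \SessionTypeS) \in \srel$, apply \rulename{S-Axiom}; otherwise expose the outermost constructors via \rulename{S-Rec Left}/\rulename{S-Rec Right} (terminating by contractivity) and match the shape dictated by $\SessionTypeT \subt \SessionTypeS$. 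Crossing a prefix enlarges the context to $\srel' = \srel \cup \{ (\SessionTypeT, \SessionTypeS) \}$, which still satisfies $\srel' \subseteq {\subt}$ since $(\SessionTypeT, \SessionTypeS) \in {\subt}$, and strictly decreases the measure, so the induction hypothesis applies to the sub-pairs (which remain in $\mathcal{I}$ by Proposition~\ref{prop:instances}); here Proposition~\ref{prop:subt_subst} transports subtyping of continuations and arguments through the $\varmap$ renamings. The same finiteness argument establishes termination of the algorithm.

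The main obstacle I anticipate is the bookkeeping around bound type variables: ensuring that the $\varmap$-induced renamings in the premises of \rulename{S-Input} and \rulename{S-Output} align exactly with the alpha-conversions implicit in $\subt$, and that $\instances(\varmap, \SessionTypeT_0, \SessionTypeS_0)$ is genuinely closed under the operations the rules perform, so that every visited pair stays inside a fixed finite set. This closure is what simultaneously delivers termination and the well-founded measure for completeness; everything else reduces to a mechanical case analysis mirroring Definition~\ref{def:subt}.
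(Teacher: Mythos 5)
Your proposal follows essentially the same route as the paper's proof: for soundness, the paper likewise exhibits the set of pairs with derivable empty-context judgments (restricted to $\instances(\varmap,\SessionTypeT_0,\SessionTypeS_0)$, which is exactly what supplies the freshness facts for the $\varmap$-variables that you flag as the main obstacle) as a coinductive subtyping, discharging memoization contexts via Lemma~\ref{lem:cut}; for completeness, it proves the same strengthened claim for arbitrary contexts $\srel$ by induction on the finite set of instance pairs not yet in $\srel$, with \rulename{S-Axiom} as base case and contractivity justifying the \rulename{S-Rec Left}/\rulename{S-Rec Right} unfoldings. Your measure $|\mathcal{I}^2\setminus\srel|$ and your appeal to renaming invariance (Proposition~\ref{prop:subt_subst}) for the $\varmap$-unified bound variables correspond directly to the paper's argument, so the proposal is correct and not a different method.
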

\begin{proof}
($\Rightarrow$)
It is enough to show that
\[
\begin{array}{r@{~}c@{~}l}
  \srel
  & \eqdef &
  \{ (\SessionTypeT, \SessionTypeS) \mid \emptyset \vdash_\varmap \SessionTypeT \asubt \SessionTypeS \land \SessionTypeT, \SessionTypeS \in \instances(\varmap, \SessionTypeT_0, \SessionTypeS_0) \}
  \\
  & & \quad{} \cup
  \{ (\Qualifier~\SessionTypeT, \Qualifier'~\SessionTypeS) \mid
  \Qualifier \leq \Qualifier'
  \land
  \SessionTypeT, \SessionTypeS \in \instances(\varmap, \SessionTypeT_0, \SessionTypeS_0)
  \land
  \emptyset \vdash_\varmap \SessionTypeT \asubt \SessionTypeS \}
\end{array}
\]
is a coinductive subtyping.
Let $(\Qualifier~\SessionTypeT, \Qualifier'~\SessionTypeS) \in \srel$.
Then $\Qualifier \leq \Qualifier'$ and $\emptyset \vdash_\varmap
\SessionTypeT \asubt \SessionTypeS$. By definition of $\srel$ we
conclude $(\SessionTypeT, \SessionTypeS) \in \srel$.

Let $(\SessionTypeT, \SessionTypeS) \in \srel$. Then \hypo{J}
$\emptyset \vdash_\varmap \SessionTypeT \asubt \SessionTypeS$. We
reason by induction on the number of topmost applications of
rules~\rulename{S-Rec Left} and~\rulename{S-Rec Right} (which must be
finite because of contractivity of endpoint types) and by cases on the
first (bottom-up) rule different from~\rulename{S-Rec Left}
and~\rulename{S-Rec Right} applied for deriving \hypo{J}, observing
that is cannot be \rulename{S-Axiom} for the context is initially
empty and rules~\rulename{S-Rec Left} and~\rulename{S-Rec Right} only
add pairs of endpoint types where at least one of them begins with a
recursion:
\begin{iteMize}{$\bullet$}
\item \rulename{S-Rec Left}
  Then $\SessionTypeT \seq \trec\tvar.\SessionTypeT'$ and $\{(
  \SessionTypeT, \SessionTypeS )\} \vdash_\varmap
  \SessionTypeT'\subst{\SessionTypeT}{\tvar} \asubt \SessionTypeS$.
  From \hypo{J} and Lemma~\ref{lem:cut} we derive $\emptyset
  \vdash_\varmap \SessionTypeT'\subst{\SessionTypeT}{\tvar} \asubt
  \SessionTypeS$.
  By induction hypothesis we derive
  $\SessionTypeT'\subst{\SessionTypeT}{\tvar} \subt \SessionTypeS$ and
  we conclude by observing that $\SessionTypeT =
  \SessionTypeT'\subst{\SessionTypeT}{\tvar}$.

\item \rulename{S-Rec Right} Symmetric of the previous case.

\item \rulename{S-Var} Then $\SessionTypeT \seq \SessionTypeS \seq
  \tvar$ and there is nothing left to prove.

\item \rulename{S-End} Then $\SessionTypeT \seq \SessionTypeS \seq
  \SessionEnd$ and there is nothing left to prove.

\item \rulename{S-Input}
  Then $\SessionTypeT \seq
  \ExternalChoice{\xmsg{\Tag_i}{\tvarA_i}{\TypeT_i}.\SessionTypeT_i}_{i\in
    I}$ and $\SessionTypeS \seq
  \ExternalChoice{\xmsg{\Tag_j}{\tvarB_j}{\TypeS_j}.\SessionTypeS_j}_{j\in
    I \cup J}$.
  Let $\tvarC_i = \varmap(\tvarA_i, \tvarB_i)$ for $i\in I$.
  From \rulename{S-Input} we deduce:
\begin{iteMize}{$-$}
\item $\{ (\SessionTypeT, \SessionTypeS) \} \vdash_\varmap
  \TypeT_i\subst{\tvarC_i}{\tvarA_i} \asubt
  \TypeS_i\subst{\tvarC_i}{\tvarB_i}$ for every
  $i\in I$;

\item $\{ (\SessionTypeT, \SessionTypeS) \} \vdash_\varmap
  \SessionTypeT_i\subst{\tvarC_i}{\tvarA_i} \asubt
  \SessionTypeS_i\subst{\tvarC_i}{\tvarB_i}$ for
  every $i\in I$.
\end{iteMize}
From Lemma~\ref{lem:cut} we derive:
\begin{iteMize}{$-$}
\item $\emptyset \vdash_\varmap \TypeT_i\subst{\tvarC_i}{\tvarA_i}
  \asubt \TypeS_i\subst{\tvarC_i}{\tvarB_i}$ for every $i\in I$;

\item $\emptyset \vdash_\varmap
  \SessionTypeT_i\subst{\tvarC_i}{\tvarA_i} \asubt
  \SessionTypeS_i\subst{\tvarC_i}{\tvarB_i}$ for
  every $i\in I$.
\end{iteMize}
By definition of $\srel$ we know that $\SessionTypeT, \SessionTypeS
\in \instances(\varmap, \SessionTypeT_0, \SessionTypeS_0)$. For every
$i\in I$, we can deduce that $\tvarC_i \not\in \ftv(\TypeT_i) \cup
\ftv(\SessionTypeT_i) \cup \ftv(\TypeS_i) \cup \ftv(\SessionTypeS_i)$,
because $\tvarC_i$ can only substitute the free occurrences of
$\tvarA_i$ and of $\tvarB_i$ and:
\begin{iteMize}{$-$}
\item $\tvarA_i$ is bound in the $i$-th branch of $\SessionTypeT$ and
  does not occur in $\SessionTypeS$;

\item $\tvarB_i$ is bound in the $i$-th branch of $\SessionTypeS$ and
  does not occur in $\SessionTypeT$.
\end{iteMize}
Therefore, by alpha conversion we obtain:
\begin{iteMize}{$-$}
\item $\SessionTypeT =
  \ExternalChoice{\xmsg{\Tag_i}{\tvarC_i}{\TypeT_i\subst{\tvarC_i}{\tvarA_i}}.\SessionTypeT_i\subst{\tvarC_i}{\tvarA_i}}_{i\in
    I}$;

\item $\SessionTypeS =
  \ExternalChoice{\xmsg{\Tag_j}{\tvarC_i}{\TypeS_i\subst{\tvarC_i}{\tvarB_i}}.\SessionTypeS_i\subst{\tvarC_i}{\tvarB_i}}_{i\in
    I} +
  \ExternalChoice{\xmsg{\Tag_j}{\tvarB_j}{\TypeS_j}.\SessionTypeS_j}_{j\in
    J\setminus I}$.
\end{iteMize}
We conclude $(\TypeS_i\subst{\tvarC_i}{\tvarB_i},
\TypeT_i\subst{\tvarC_i}{\tvarB_i}) \in \srel$ and
$(\SessionTypeS_i\subst{\tvarC_i}{\tvarB_i},
\SessionTypeT_i\subst{\tvarC_i}{\tvarB_i}) \in \srel$ by definition of
$\srel$.

\item \rulename{S-Output} Analogous to the previous case.
\end{iteMize}

($\Leftarrow$)
We prove that $\SessionTypeT, \SessionTypeS \in \instances(\varmap,
\SessionTypeT_0, \SessionTypeS_0)$ and $\SessionTypeT \subt
\SessionTypeS$ imply $\srel \vdash_\varmap \SessionTypeT \asubt
\SessionTypeS$ by induction on $\instances(\varmap, \SessionTypeT,
\SessionTypeS) \setminus \srel$.
In the base case we have $(\SessionTypeT, \SessionTypeS) \in \srel$
and we conclude with an application of \rulename{T-Axiom}.  For the
inductive case we reason by case analysis on the structure of
$\SessionTypeT$ and $\SessionTypeS$, knowing that $\SessionTypeT \subt
\SessionTypeS$:
\begin{iteMize}{$\bullet$}
\item ($\SessionTypeT \seq \SessionTypeS \seq \tvar$) We conclude with
  an application of \rulename{S-Var}.

\item ($\SessionTypeT \seq \SessionTypeS \seq \SessionEnd$) We
  conclude with an application of \rulename{S-End}.

\item ($\SessionTypeT \seq \trec\tvar.\SessionTypeT'$)
  Since $\SessionTypeT = \SessionTypeT'\subst{\SessionTypeT}{\tvar}$
  we have $\SessionTypeT'\subst{\SessionTypeT}{\tvar} \subt
  \SessionTypeS$.
  By induction hypothesis we know that $\srel \cup \{(\SessionTypeT,
  \SessionTypeS)\} \vdash_\varmap
  \SessionTypeT'\subst{\SessionTypeT}{\tvar} \asubt \SessionTypeS$ is
  derivable.
  We conclude with an application of \rulename{S-Rec Left}.

\item ($\SessionTypeS \seq \trec\tvar.\SessionTypeS'$) Symmetric of
  the previous case.

\item ($\SessionTypeT \seq
  \ExternalChoice{\xmsg{\Tag_i}{\tvarA_i}{\Qualifier_i~\SessionTypeT'_i}.\SessionTypeT_i}_{i\in
    I}$ and $\SessionTypeS \seq
  \ExternalChoice{\xmsg{\Tag_j}{\tvarB_j}{\Qualifier'_i~\SessionTypeS'_j}.\SessionTypeS_j}_{j\in
    J}$ and $I \subseteq J$)
  From the hypothesis $\SessionTypeT \subt \SessionTypeS$ we know that
  for every $i\in I$ there exists $\tvarC_i$ such that
  $\Qualifier_i \leq \Qualifier'_i$ and
  $\SessionTypeT'_i\subst{\tvarC_i}{\tvarA_i} \subt
  \SessionTypeS'_i\subst{\tvarC_i}{\tvarB_i}$ and
  $\SessionTypeT_i\subst{\tvarC_i}{\tvarA_i} \subt
  \SessionTypeS_i\subst{\tvarC_i}{\tvarB_i}$.
  Since $\SessionTypeT, \SessionTypeS \in \instances(\varmap,
  \SessionTypeT_0, \SessionTypeS_0)$ we know that $\tvarD_i =
  \varmap(\tvarA_i, \tvarB_i) \not\in \ftv(\SessionTypeT'_i) \cup
  \ftv(\SessionTypeT_i) \cup \ftv(\SessionTypeS'_i) \cup
  \ftv(\SessionTypeS_i)$.\Luca{Serve?}
  We deduce $\SessionTypeT'_i\subst{\tvarD_i}{\tvarA_i} \subt
  \SessionTypeS'_i\subst{\tvarD_i}{\tvarB_i}$ and
  $\SessionTypeT_i\subst{\tvarD_i}{\tvarA_i} \subt
  \SessionTypeS_i\subst{\tvarD_i}{\tvarB_i}$ for every $i\in
  I$.\Luca{Questo \`e un passaggio banale?}
  By induction hypothesis we derive that $\srel \cup \{(\SessionTypeT,
  \SessionTypeS)\} \vdash_\varmap \SessionTypeT'_i \asubt
  \SessionTypeS'_i$ and $\srel \cup \{(\SessionTypeT, \SessionTypeS)\}
  \vdash_\varmap \SessionTypeT_i \asubt \SessionTypeS_i$ are derivable
  for every $i \in I$.
  Also, $\srel \cup \{(\SessionTypeT, \SessionTypeS)\} \vdash_\varmap
  \Qualifier_i~\SessionTypeT'_i \asubt \Qualifier'_i~\SessionTypeS'_i$
  is derivable with an application of \rulename{S-Type} for every
  $i\in I$.
  We conclude $\srel \vdash_\varmap \SessionTypeT \asubt
  \SessionTypeS$ with an application of \rulename{S-Input}.

\item ($\SessionTypeT \seq
  \InternalChoice{\xmsg{\Tag_i}{\tvarA_i}{\TypeT_i}.\SessionTypeT_i}_{i\in
    I}$ and $\SessionTypeS \seq
  \InternalChoice{\xmsg{\Tag_j}{\tvarB_j}{\TypeS_j}.\SessionTypeS_j}_{j\in
    J}$ and $J \subseteq I$) Analogous to the previous case.
  \qedhere
\end{iteMize}
\end{proof}

\subsection{Type Weight}

We begin by proving that the weight algorithm is unaffected by
foldings/unfoldings of recursive terms.

\begin{proposition}[Proposition~\ref{prop:unfold_weight}]
  $\aweight(\BoundContext_0, \emptyset, \trec\tvarA.\SessionType) =
  \aweight(\BoundContext_0, \emptyset,
  \SessionType\subst{\trec\tvarA.\SessionType}{\tvarA})$.
\end{proposition}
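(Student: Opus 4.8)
The plan is to unfold the left-hand side by the third clause in the definition of $\aweight$, which gives $\aweight(\BoundContext_0,\emptyset,\trec\tvarA.\SessionType)=\aweight(\BoundContext_0,\{\tvarA\},\SessionType)$. Writing $R\eqdef\trec\tvarA.\SessionType$ and $m\eqdef\aweight(\BoundContext_0,\emptyset,R)$, the goal becomes
\[ \aweight(\BoundContext_0,\{\tvarA\},\SessionType)=\aweight(\BoundContext_0,\emptyset,\SessionType\subst{R}{\tvarA}). \]
I would open with two easy lemmas. The first is a substitution lemma, proved by a routine structural induction on an arbitrary $\SessionTypeS$: for suitably fresh $\tvarA$, the value $\aweight(\BoundContext_0,\BoundContext,\SessionTypeS\subst{R}{\tvarA})$ is computed exactly as $\aweight(\BoundContext_0,\BoundContext,\SessionTypeS)$ except that each examined occurrence of $\tvarA$ is evaluated as $\aweight(\BoundContext_0,\BoundContext',R)$ in its ambient recursion context $\BoundContext'$. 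The second lemma is that these ambient contexts are irrelevant: since $\tvarA$ is free in $\SessionType$, the Barendregt convention gives $\btv(\SessionType)\cap\ftv(R)=\emptyset$ and $\tvarA\notin\btv(\SessionType)$, so the recursion context accumulated while descending through $\SessionType$ is always disjoint from $\ftv(R)$, and adding such inert variables to the second argument never changes a weight; hence $\aweight(\BoundContext_0,\BoundContext',R)=m$ for every relevant $\BoundContext'$. Combining the two, substituting $R$ for $\tvarA$ is the same as running the weight algorithm on $\SessionType$ with \emph{every} occurrence of $\tvarA$ assigned the single value $m$.

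It then remains to compare this ``$\tvarA\mapsto m$'' computation, started in the empty recursion context, with $\aweight(\BoundContext_0,\{\tvarA\},\SessionType)$. Here I would exploit that, because $\tvarA$ is free in $\SessionType$, it enters the second argument only through the initial context and is discarded as soon as a message argument is entered (where the context is reset to $\emptyset$). Consequently an examined $\tvarA$ lies in a pure \emph{tail} position (reached through external-choice continuations only) exactly when the standard computation values it $0$, and lies inside a message argument exactly when the standard computation values it $\infty$; the ``$\tvarA\mapsto m$'' computation instead values both as $m$, the latter contributing $1+m$.

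I would then split on whether $m$ is finite. Since $\aweight$ is built solely from $\max$, $+1$ and leaf lookups, it is monotone in its leaf values, so raising a leaf can never lower the result. If $m=\infty$, both computations are $\geq\aweight(\BoundContext_0,\{\tvarA\},\SessionType)=\infty$ (a $\tvarA$ inside a message argument is replaced by $R$, again of weight $\infty$, while non-$\tvarA$ leaves are untouched), hence equal. If $m<\infty$, then no examined $\tvarA$ can lie inside a message argument, for it would contribute $1+\infty$ and force $m=\infty$; thus the two computations differ only at tail occurrences of $\tvarA$, which feed into the surrounding $\max$ \emph{without} the $+1$ reserved for message arguments. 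Raising those continuations from $0$ to $m$ therefore propagates upward through $\max$ operations only, yielding $\max\{\aweight(\BoundContext_0,\{\tvarA\},\SessionType),m\}=\max\{m,m\}=m$, as required.

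The main obstacle, and the reason a naive term-by-term induction fails, is precisely this non-local absorption: replacing a tail $\tvarA$ by the whole term $R$ genuinely changes the contribution at that position from $0$ to $m$, and the equality is recovered only globally, because the top-level value $\aweight(\BoundContext_0,\{\tvarA\},\SessionType)$ is itself $m$, which makes the extra $m$ idempotent under $\max$. The delicate point to get right is the interaction with the $+1$ that guards message arguments; the finiteness dichotomy above is exactly what guarantees that the bumped value $m$ never passes under such a $+1$, so that it is absorbed rather than amplified.
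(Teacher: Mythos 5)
Your proposal is correct, and it reaches the result by a genuinely different route than the paper's own proof. The paper proves a strengthened statement by a single structural induction on an arbitrary subterm $\SessionTypeS$: writing $w = \aweight(\BoundContext_0,\emptyset,\trec\tvarA.\SessionType)$, it shows that whenever $\aweight(\BoundContext_0,\BoundContext,\SessionTypeS)\leq w$ and $\btv(\SessionTypeS)\cap\ftv(\SessionType)=\emptyset$, the substituted weight $\aweight(\BoundContext_0,\BoundContext\setminus\{\tvarA\},\SessionTypeS\subst{\trec\tvarA.\SessionType}{\tvarA})$ is sandwiched between $\aweight(\BoundContext_0,\BoundContext,\SessionTypeS)$ and $\max\{w,\aweight(\BoundContext_0,\BoundContext,\SessionTypeS)\}$ when $\tvarA\in\BoundContext$, and equals it when $\tvarA\notin\BoundContext$ --- in which case the induction hypothesis forces $w=\infty$, which is your finiteness dichotomy in localized form; instantiating at $\SessionTypeS=\SessionType$ and $\BoundContext=\{\tvarA\}$ collapses the sandwich. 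You instead argue globally: two auxiliary lemmas (substitution evaluates each examined occurrence of $\tvarA$ as the weight of $\trec\tvarA.\SessionType$ in its ambient context, and inert context variables are irrelevant), then a view of $\aweight$ as a finite tree of $\max$, $+1$ and leaf lookups, a classification of examined occurrences of $\tvarA$ (tail positions valued $0$ with no $+1$ on the path to the root, message-argument positions valued $\infty$ under at least one $+1$), and finally monotonicity in the infinite case and the absorption $\max\{m,m\}=m$ in the finite case. The two arguments rest on the same insight --- substituted occurrences contribute exactly $m$, which is absorbed because the top-level value is itself $m$; the paper's bound $\max\{w,\cdot\}$ is precisely your absorption, made local so it survives an induction. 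What your route buys is transparency: it isolates exactly why the equality holds (when $m<\infty$ no bumped value ever passes under a $+1$), rather than burying this in an invariant. What it costs is that your two lemmas and the path/leaf decomposition, if fully formalized, need their own structural inductions and a careful definition of ``examined occurrence'' (the algorithm never descends under internal choices, and the context is reset at message arguments), so the total formal effort is comparable; the paper packages all of this into one strengthened hypothesis.
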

\begin{proof}
\newcommand{\AWS}{\aweight(\BoundContext_0, \BoundContext, \SessionTypeS)}
\newcommand{\AWSS}{\aweight(\BoundContext_0, \BoundContext \setminus \{ \tvarA \},
  \SessionTypeS\subst{\trec\tvarA.\SessionTypeT}{\tvarA})}

  Let $\aweight(\BoundContext_0, \emptyset, \trec\tvarA.\SessionType)
  = w$.
  We prove a more general statement, namely that for every
  $\SessionTypeS$ and $\BoundContext$ such that
  $\aweight(\BoundContext_0, \BoundContext, \SessionTypeS) \leq w$ and
  $\btv(\SessionTypeS) \cap \ftv(\SessionTypeT) = \emptyset$ we have:
\begin{enumerate}[(1)]
\item $\tvarA \in \BoundContext$ implies $\AWS \leq \AWSS \leq
  \max\{w, \AWS\}$;

\item $\tvarA \not\in \BoundContext$ implies $\AWS = \AWSS$.
\end{enumerate}

The statement then follows from (1) by taking $\SessionTypeS =
\SessionTypeT$ and $\BoundContext = \{ \tvarA \}$ and noting that
$\aweight(\BoundContext_0, \emptyset, \trec\tvarA.\SessionTypeT) =
\aweight(\BoundContext_0, \{ \tvarA \}, \SessionTypeT)$ by definition
of algorithmic weight. We proceed by induction on $\SessionTypeS$
assuming, without loss of generality, that $(\{ \tvarA \} \cup
\ftv(\SessionTypeT)) \cap \btv(\SessionTypeS) = \emptyset$:
\begin{iteMize}{$\bullet$}
\item ($\SessionTypeS \seq \SessionEnd$ or $\SessionTypeS \seq
  \InternalChoice{\xmsg{\Tag_i}{\tvar_i}{\Type_i}{\SessionType_i}}_{i\in
    I}$) Clear as $\AWS = \AWSS = 0$.

\item ($\SessionTypeS \seq \tvarA$)
  We have $\AWSS = \aweight(\BoundContext_0, \BoundContext \setminus
  \{ \tvarA \}, \trec\tvarA.\SessionTypeT) = w$\Luca{qualche
    assunzione su $\BoundContext$ e le variabili libere di
    $\SessionTypeT$} therefore we conclude:
\begin{enumerate}[(1)]
\item $\AWS = 0 \leq w = \AWSS = \max\{w, \AWS\}$;

\item $\AWS = \infty = w = \AWSS$.
\end{enumerate}

\item ($\SessionTypeS \seq \tvarB \ne \tvarA$) Trivial since $\AWS =
  \AWSS$.

\item ($\SessionTypeS \seq \trec\tvarB.\SessionTypeS'$)
  By induction hypothesis we deduce:
\begin{enumerate}[(1)]
\item $\tvarA \in \BoundContext$ implies \[\quad\enspace\aweight(\BoundContext_0,
  \BoundContext \cup \{ \tvarB \}, \SessionTypeS') \leq
  \aweight(\BoundContext_0, (\BoundContext \cup \{ \tvarB \})\!
  \setminus\! \{ \tvarA \},
  \SessionTypeS'\subst{\trec\tvarA.\SessionTypeT}{\tvarA}) \leq
  \max\{w, \aweight(\BoundContext_0, \BoundContext \cup \{ \tvarB \},
  \SessionTypeS')\};\]

\item $\tvarA \not\in \BoundContext$ implies
  $\aweight(\BoundContext_0, \BoundContext \cup \{ \tvarB \},
  \SessionTypeS') = \aweight(\BoundContext_0, (\BoundContext \cup \{
  \tvarB \}) \setminus \{ \tvarA \},
  \SessionTypeS'\subst{\trec\tvarA.\SessionTypeT}{\tvarA})$.
\end{enumerate}
We conclude by definition of algorithmic weight, since:
\begin{iteMize}{$-$}
\item $\AWS = \aweight(\BoundContext_0, \BoundContext,
  \trec\tvarB.\SessionTypeS') = \aweight(\BoundContext_0,
  \BoundContext \cup \{ \tvarB \}, \SessionTypeS')$, and

\item $\AWSS = \aweight(\BoundContext_0, \BoundContext \setminus \{
  \tvarA \},
  (\trec\tvarB.\SessionTypeS')\subst{\trec\tvarA.\SessionTypeT}{\tvarA})
  = \aweight(\BoundContext_0, \BoundContext \setminus \{ \tvarA \},
  \trec\tvarB.(\SessionTypeS'\subst{\trec\tvarA.\SessionTypeT}{\tvarA}))
  = \aweight(\BoundContext_0, (\BoundContext \cup \{ \tvarB \})
  \setminus \{ \tvarA \},
  \SessionTypeS'\subst{\trec\tvarA.\SessionTypeT}{\tvarA})$.
\end{iteMize}

\item ($\SessionTypeS \seq
  \ExternalChoice{\xmsg{\Tag_i}{\tvar_i}{\Type_i}.\SessionType_i}_{i\in
    I}$)
  By induction hypothesis on $\Type_i$ and $\SessionTypeT_i$ for $i\in
  I$ we deduce:
\begin{enumerate}[(1)]
\item $\aweight(\BoundContext_0, \emptyset, \Type_i) =
  \aweight(\BoundContext_0, \emptyset,
  \Type_i\subst{\trec\tvarA.\SessionTypeT}{\tvarA})$;

\item $\tvarA \in \BoundContext$ implies \[\quad\enspace\aweight(\BoundContext_0,
  \BoundContext \setminus \{ \tvar_i \}, \SessionTypeT_i) \leq
  \aweight(\BoundContext_0, \BoundContext \setminus \{ \tvar_i, \tvarA
  \}, \SessionTypeT_i\subst{\trec\tvarA.\SessionTypeT}{\tvarA}) \leq
  \max\{w, \aweight(\BoundContext_0, \BoundContext \setminus \{
  \tvar_i \}, \SessionTypeT_i)\};\]

\item $\tvarA \not\in \BoundContext$ implies
  $\aweight(\BoundContext_0, \BoundContext \setminus \{ \tvar_i \},
  \SessionTypeT_i) = \aweight(\BoundContext_0, \BoundContext \setminus
  \{ \tvar_i, \tvarA \},
  \SessionTypeT_i\subst{\trec\tvarA.\SessionTypeT}{\tvarA})$.
\end{enumerate}
If $\tvarA \in \BoundContext$ we conclude:
\[
\begin{array}{@{}r@{~}c@{~}l@{}}
  \AWS & = &
  \max\{
  1 + \aweight(\zvars, \emptyset, \Type_i),
  \aweight(\zvars, \BoundContext \setminus \{ \tvar_i \}, \SessionType_i)
  \}_{i\in I}
  \\
  & = & 
  \max\{
  1 + \aweight(\BoundContext_0, \emptyset,
  \Type_i\subst{\trec\tvarA.\SessionTypeT}{\tvarA}),
  \aweight(\zvars, \BoundContext \setminus \{ \tvar_i \},
  \SessionType_i)
  \}_{i\in I}
  \\
  & \leq &
  \max\{
  1 + \aweight(\BoundContext_0, \emptyset,
  \Type_i\subst{\trec\tvarA.\SessionTypeT}{\tvarA}),
  \aweight(\BoundContext_0, \BoundContext \setminus \{ \tvar_i, \tvarA
  \}, \SessionTypeT_i\subst{\trec\tvarA.\SessionTypeT}{\tvarA})
  \}_{i\in I}
  \\
  & = & \AWSS
  \\
  & \leq & 
  \max\{
  1 + \aweight(\BoundContext_0, \emptyset,
  \Type_i\subst{\trec\tvarA.\SessionTypeT}{\tvarA}),
  w,
  \aweight(\zvars, \BoundContext \setminus \{ \tvar_i \},
  \SessionType_i)
  \}_{i\in I}
  \\
  & = &
  \max\{
  1 + \aweight(\BoundContext_0, \emptyset,
  \Type_i),
  w,
  \aweight(\zvars, \BoundContext \setminus \{ \tvar_i \},
  \SessionType_i)
  \}_{i\in I}
  \\
  & = & \max\{w, \AWS\}
\end{array}
\]
If $\tvar \not\in \BoundContext$ we conclude:
\[
\begin{array}{@{}r@{~}c@{~}l@{}}
  \AWS & = &
  \max\{
  1 + \aweight(\zvars, \emptyset, \Type_i),
  \aweight(\zvars, \BoundContext \setminus \{ \tvar_i \}, \SessionType_i)
  \}_{i\in I}
  \\
  & = &
  \max\{
  1 + \aweight(\BoundContext_0, \emptyset,
  \Type_i\subst{\trec\tvarA.\SessionTypeT}{\tvarA}),
  \aweight(\BoundContext_0, \BoundContext \setminus \{ \tvar_i, \tvarA
  \}, \SessionTypeT_i\subst{\trec\tvarA.\SessionTypeT}{\tvarA})
  \}_{i\in I}
  \\
  & = & \AWSS
  \,.
\end{array}
\]
\qedhere
\end{iteMize}
\end{proof}

\noindent The next lemma states that, if the weight algorithm determines a
weight $n$ for some endpoint type $\SessionTypeT$, then $n$ is a
weight bound for $\SessionTypeT$.

\begin{lemma}
\label{lem:weight_correctness}
If $\aweight(\BoundContext, \emptyset, \SessionTypeT) = n \in
\natset$, then $\BoundContext \vdash \SessionTypeT \wbound n$.
\end{lemma}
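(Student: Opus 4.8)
The plan is to exhibit a concrete coinductive weight bound $\wrel$ containing the triple $(\BoundContext, \SessionTypeT, n)$. Guided by Proposition~\ref{prop:unfold_weight}, which tells me that $\aweight(\cdot, \emptyset, \cdot)$ is insensitive to folding and unfolding of recursive terms, I take
\[
  \wrel \eqdef \{ (\BoundContext', \SessionType, m) \mid m \in \natset \land \aweight(\BoundContext', \emptyset, \SessionType) \leq m \}.
\]
Because $\aweight(\cdot, \emptyset, \cdot)$ agrees on a recursive term and its unfolding, this definition is well posed on endpoint types taken modulo the unfolding law. Since $\aweight(\BoundContext, \emptyset, \SessionTypeT) = n \leq n$, we immediately get $(\BoundContext, \SessionTypeT, n) \in \wrel$, so the lemma reduces to checking that $\wrel$ is a coinductive weight bound in the sense of Definition~\ref{def:type_weight}. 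I would flag at this point that the use of $\leq$ rather than $=$ in the definition of $\wrel$ is essential: the external-choice clause of Definition~\ref{def:type_weight} requires the continuations to belong to $\wrel$ with a \emph{fixed} bound $m$ (resp.\ $m-1$), even though their true algorithmic weight may be strictly smaller, so $\wrel$ must be upward closed in $m$.

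First I would take an arbitrary $(\BoundContext', \SessionType, m) \in \wrel$ and reduce to the situation in which $\SessionType$ is not headed by a recursion. Indeed, if $\SessionType \seq \trec\tvar.\SessionType'$ then, modulo the unfolding law, $\SessionType = \SessionType'\subst{\trec\tvar.\SessionType'}{\tvar}$, and Proposition~\ref{prop:unfold_weight} guarantees $(\BoundContext', \SessionType'\subst{\trec\tvar.\SessionType'}{\tvar}, m) \in \wrel$ as well; by contractivity finitely many such unfoldings expose a proper top-level constructor. I then split on this constructor. If $\SessionType \seq \SessionEnd$ or $\SessionType$ is an internal choice, we land directly in one of the first clauses of Definition~\ref{def:type_weight}. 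If $\SessionType \seq \tvar$, then $\aweight(\BoundContext', \emptyset, \tvar) \leq m < \infty$ forces the weight to be $0$, hence $\tvar \in \BoundContext'$, which is exactly the variable clause.

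The substantial case is $\SessionType \seq \ExternalChoice{\tmsg{\Tag_i}{\tvar_i}{\Qualifier_i~\SessionTypeS_i}.\SessionTypeT_i}_{i\in I}$. Unwinding the definition of $\aweight$ with empty inner context, where $\emptyset \setminus \{\tvar_i\} = \emptyset$ and the qualifiers are ignored, gives
\[
  \max\{ 1 + \aweight(\BoundContext', \emptyset, \SessionTypeS_i),\ \aweight(\BoundContext', \emptyset, \SessionTypeT_i) \}_{i\in I} = \aweight(\BoundContext', \emptyset, \SessionType) \leq m.
\]
From this single inequality I read off, for every $i \in I$, that $m > 0$, that $\aweight(\BoundContext', \emptyset, \SessionTypeS_i) \leq m-1$, and that $\aweight(\BoundContext', \emptyset, \SessionTypeT_i) \leq m$; hence $(\BoundContext', \SessionTypeS_i, m-1) \in \wrel$ and $(\BoundContext', \SessionTypeT_i, m) \in \wrel$, as demanded by the external-choice clause. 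The remaining proviso $\tvar_i \notin \BoundContext'$ is discharged by the Barendregt-style convention that the variables bound by prefixes are chosen fresh with respect to $\BoundContext'$.

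The step I expect to be the main obstacle is not any single calculation but the bookkeeping that makes the reduction-to-non-recursive-head legitimate: one must verify that iterated unfolding together with the alpha-renaming needed to keep prefix-bound variables disjoint from $\BoundContext'$ preserves membership in $\wrel$. This is precisely where Proposition~\ref{prop:unfold_weight} (for the invariance of $\aweight$ under unfolding) and the freshness convention (for the $\tvar_i \notin \BoundContext'$ side condition) do the real work; the structural case analysis itself is then routine.
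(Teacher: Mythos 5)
Your proposal is correct and follows essentially the same route as the paper's own proof: the same relation $\wrel$ (with the same crucial use of $\leq$ for upward closure), the same reduction to a non-recursion-headed term via Proposition~\ref{prop:unfold_weight} and contractivity, and the same case analysis on the top-level constructor. The only differences are presentational refinements—you make explicit the side condition $\tvar_i \notin \BoundContext'$ and the well-posedness of $\wrel$ modulo the unfolding law, both of which the paper leaves implicit.
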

\begin{proof}
  It is enough to show that
  \[ {\wrel} \eqdef \{ (\BoundContext, \SessionTypeT, n) \mid
  \aweight(\BoundContext, \emptyset, \SessionTypeT) \leq n \in \natset \}
\]
is a coinductive weight bound. Let $(\BoundContext, \SessionTypeT, n)
\in {\wrel}$. Then \hypo{h} $\aweight(\BoundContext, \emptyset,
\SessionTypeT) \leq n \in \natset$.
Without loss of generality we may assume that $\SessionTypeT$ does
\emph{not} begin with a recursion. If this were not the case, by
contractivity of endpoint types we have $\SessionTypeT =
\SessionTypeT'$ where $\SessionTypeT'$ does not begin with a
recursion.
Now, by Proposition~\ref{prop:unfold_weight} we deduce
$\aweight(\BoundContext, \emptyset, \SessionTypeT') =
\aweight(\BoundContext, \emptyset, \SessionTypeT) = n$ and therefore
$(\BoundContext, \SessionTypeT', n) \in {\wrel}$ by definition of
$\wrel$.

We reason by cases on $\SessionTypeT$:
\begin{iteMize}{$\bullet$}
\item ($\SessionTypeT \seq \SessionEnd$ or $\SessionTypeT \seq
  \InternalChoice{\xmsg{\Tag_i}{\tvar_i}{\Type_i}.\SessionTypeT_i}_{i\in
    I}$) There is nothing to prove.

\item ($\SessionTypeT \seq \tvar$) From \hypo{h} we deduce $\tvar \in
  \BoundContext$ and there nothing left to prove.

\item ($\SessionTypeT \seq
  \ExternalChoice{\xmsg{\Tag_i}{\tvar_i}{\Qualifier_i~\SessionTypeS_i}.\SessionTypeT_i}_{i\in
    I}$)
  Then $0 < \aweight(\BoundContext, \emptyset, \SessionTypeT) \leq n$.
  From \hypo{h} we deduce $\aweight(\BoundContext, \emptyset,
  \SessionTypeS_i) \leq n - 1$ and $\aweight(\BoundContext, \emptyset,
  \SessionTypeT_i) \leq n$ for every $i\in I$.
  We conclude $(\BoundContext, \SessionTypeS_i, n - 1) \in {\wrel}$
  and $(\BoundContext, \SessionTypeT_i, n) \in {\wrel}$ for every
  $i\in I$.
  \qedhere
\end{iteMize}
\end{proof}

\noindent The last auxiliary result proves that the weight algorithm computes
the least upper weight bound for an endpoint type. We use
$\substitution$ to range over arbitrary substitutions of endpoint
types in place of type variables, we write
$\SessionTypeT\substitution$ for $\SessionTypeT$ where the
substitutions in $\substitution$ have been applied, and
$\dom(\substitution)$ for the domain of $\substitution$ (the set of
type variables that are instantiated).

\begin{lemma}
\label{lem:weight_completeness}
If $\BoundContext \vdash \SessionTypeT\substitution \wbound n$, then
$\aweight(\BoundContext, \dom(\substitution), \SessionTypeT) \leq n$.
\end{lemma}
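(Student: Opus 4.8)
The plan is to prove the statement by structural induction on $\SessionTypeT$, keeping $\BoundContext$, $\substitution$ and $n$ universally quantified so that the inductive hypothesis can be reinvoked on subterms under a \emph{modified} substitution. The generalisation over $\substitution$ is precisely what makes the recursion case go through: the second argument $\dom(\substitution)$ of $\aweight$ plays the role of the set of recursion variables that the algorithm records as it crosses $\trec$ binders, and the type $\SessionTypeT\substitution$ is the (partial) unfolding on which the coinductive weight bound of Definition~\ref{def:type_weight} is measured. I would carry out the induction not over genuinely arbitrary substitutions but over the \emph{recursion-faithful} ones that actually arise: $\substitution$ is empty at the top level, and in the recursion case it is extended by mapping the bound variable to the very recursive type at which it is bound. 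This invariant is preserved by the induction and, as I explain below, is exactly what the receive case needs.

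I would first dispatch the trivial cases. When $\SessionTypeT = \SessionEnd$ or $\SessionTypeT$ is an internal choice, the algorithm returns $0 \leq n$ outright. When $\SessionTypeT = \tvar$, I split on membership in $\dom(\substitution)$: if $\tvar \in \dom(\substitution)$ then $\aweight(\BoundContext, \dom(\substitution), \tvar) = 0$; if $\tvar \notin \dom(\substitution)$ then $\SessionTypeT\substitution = \tvar$, so the hypothesis $\BoundContext \vdash \tvar \wbound n$ forces $\tvar \in \BoundContext$ and again $\aweight(\BoundContext, \dom(\substitution), \tvar) = 0$. For the recursion case $\SessionTypeT = \trec\tvar.\SessionType$ the defining equation gives $\aweight(\BoundContext, \dom(\substitution), \trec\tvar.\SessionType) = \aweight(\BoundContext, \dom(\substitution) \cup \{\tvar\}, \SessionType)$. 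I would set $\substitution' = \substitution \cup \{\tvar \mapsto \SessionTypeT\substitution\}$ and, using the law $\trec\tvar.\SessionType = \SessionType\subst{\trec\tvar.\SessionType}{\tvar}$ that the coinductive weight bound treats as an identity, observe that $\SessionType\substitution' = \SessionTypeT\substitution$, whence $\BoundContext \vdash \SessionType\substitution' \wbound n$. Since $\dom(\substitution') = \dom(\substitution) \cup \{\tvar\}$, $\SessionType$ is a strict subterm, and $\substitution'$ is still recursion-faithful, the inductive hypothesis yields $\aweight(\BoundContext, \dom(\substitution) \cup \{\tvar\}, \SessionType) \leq n$, as required.

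The external-choice case $\SessionTypeT = \ExternalChoice{\tmsg{\Tag_i}{\tvar_i}{\Qualifier_i~\SessionTypeS_i}.\SessionTypeT_i}_{i\in I}$ is where the real work lies. Here the hypothesis gives $n > 0$ together with $\BoundContext \vdash \SessionTypeS_i\substitution \wbound n-1$ and $\BoundContext \vdash \SessionTypeT_i\substitution \wbound n$ for every $i$, while the algorithm computes $\max\{\, 1 + \aweight(\BoundContext, \emptyset, \SessionTypeS_i),\ \aweight(\BoundContext, \dom(\substitution) \setminus \{\tvar_i\}, \SessionTypeT_i) \,\}_{i\in I}$. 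The continuations are handled immediately: by the renaming convention $\tvar_i \notin \dom(\substitution)$, so $\dom(\substitution)\setminus\{\tvar_i\} = \dom(\substitution)$ and the inductive hypothesis on $\SessionTypeT_i$ gives $\aweight(\BoundContext, \dom(\substitution), \SessionTypeT_i) \leq n$. The delicate point is the argument $\SessionTypeS_i$, where the algorithm \emph{empties} the recursion context: applying the inductive hypothesis with $\substitution$ would only bound $\aweight(\BoundContext, \dom(\substitution), \SessionTypeS_i)$, and emptying the second context can only \emph{increase} the computed value, so this is not directly enough. The key observation I would exploit is recursion-faithfulness: if some $\tvar \in \dom(\substitution)$ occurred in a position of $\SessionTypeS_i$ that the algorithm actually inspects — i.e. reachable through external choices without passing a send state — then, since $\tvar$ stands for the recursive type $\substitution(\tvar)$ at which it is bound, $\substitution(\tvar)$ would exhibit recursion through an input prefix and hence have infinite weight; this would make $\weight{\SessionTypeS_i\substitution} = \infty$, contradicting $\BoundContext \vdash \SessionTypeS_i\substitution \wbound n-1$. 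Therefore every variable of $\dom(\substitution)$ occurring in $\SessionTypeS_i$ is shielded inside a send state, where $\aweight$ returns $0$ regardless of its context, so emptying the second context leaves the value unchanged and the inductive hypothesis delivers $\aweight(\BoundContext, \emptyset, \SessionTypeS_i) = \aweight(\BoundContext, \dom(\substitution), \SessionTypeS_i) \leq n-1$. Combining the two bounds under the $\max$ then gives $\aweight(\BoundContext, \dom(\substitution), \SessionTypeT) \leq n$.

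I expect the step of reconciling the emptying of the recursion context with the unfolding-based coinductive weight bound to be the main obstacle — concretely, making rigorous the claim that finiteness of $\weight{\SessionTypeS_i\substitution}$ forbids a recursion variable of $\dom(\substitution)$ from surfacing in an input-reachable position of an input-prefix argument. Everything else (the base cases, the bookkeeping of bound variables via renaming, and the recombination under $\max$) is routine, and once the lemma is in place it specialises at $\substitution$ empty to the inequality $\aweight(\BoundContext, \emptyset, \SessionTypeT) \leq \xweight\BoundContext\SessionTypeT$ needed, together with Lemma~\ref{lem:weight_correctness}, for Theorem~\ref{thm:weight_algorithm}.
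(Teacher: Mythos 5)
Your proposal takes the same overall route as the paper's proof: a structural induction on $\SessionTypeT$ with $\BoundContext$, $\substitution$ and $n$ kept quantified, identical base cases, the same unfolding trick in the recursion case (your $\substitution' = \substitution \cup \{\tvar \mapsto \SessionTypeT\substitution\}$ is the precise form of the paper's slightly looser $(\substitution \setminus \tvar), \{\tvar \mapsto \SessionTypeT\}$), and the same use of the induction hypothesis on continuations. Where you genuinely diverge is the treatment of the prefix arguments $\SessionTypeS_i$, and there your extra care addresses a step that the paper's own proof glosses over: the paper applies the induction hypothesis with $\substitution_i = \substitution \setminus \{\tvar_i\}$, obtains $\aweight(\BoundContext, \dom(\substitution_i), \SessionTypeS_i) \leq n-1$, and concludes ``by definition of algorithmic weight'', even though the algorithm weighs arguments in the \emph{emptied} context, and emptying the context can only increase the result. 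Your restriction to recursion-faithful substitutions is not optional pedantry: for arbitrary $\substitution$ the statement is in fact false. Take $\SessionTypeT = {?}\tmsg\Tag\tvarB{\qlin~\tvarA}.\SessionEnd$ and $\substitution = \subst\SessionEnd\tvarA$; then $\emptyset \vdash \SessionTypeT\substitution \wbound 1$, while $\aweight(\emptyset, \{\tvarA\}, \SessionTypeT) = \max\{1 + \aweight(\emptyset, \emptyset, \tvarA),\, 0\} = \infty$. So the generalized statement one proves by induction must carry an invariant of exactly the kind you introduce; the lemma as stated survives only because Theorem~\ref{thm:weight_algorithm} instantiates $\substitution = \emptyset$ and the induction only ever produces recursion-faithful extensions.

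Two points of your sketch still need tightening, as you yourself anticipate. First, the dichotomy should not be ``input-reachable versus shielded inside a send state'' but ``positions where the second context is consulted (reachable from the root of $\SessionTypeS_i$ through recursion bodies and external-choice continuations only) versus all others'': occurrences of variables of $\dom(\substitution)$ under a \emph{nested prefix argument} of either polarity are also harmless, since both computations empty the context there; this refinement does not change your conclusion. Second, the claim that $\substitution(\tvar)$ ``exhibits recursion through an input prefix and hence has infinite weight'' needs two ingredients: (a) the additional invariant that the induction path from the binder of $\tvar$ down to the current external choice crosses no internal choice --- this holds because the induction never descends into internal choices, and it is essential, since a recursion through an input prefix sitting under an internal choice has weight $0$; and (b) a propagation argument on weight bounds: a finite bound passes unchanged through external-choice continuations and recursion unfoldings and decreases by one at each prefix-argument crossing, so from $\BoundContext \vdash \SessionTypeS_i\substitution \wbound n-1$ you get a finite bound for $R = \substitution(\tvar)$ at its consulted occurrence, while $R$ reappearing inside its own unfolding across at least one argument crossing shows that $R$ admits no finite bound at all --- the desired contradiction. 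With these two points supplied your argument is complete, and it is in fact a sounder proof of the lemma than the one given in Appendix~\ref{sec:extra_algorithms}.
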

\begin{proof}
  By induction on $\SessionTypeT$:
\begin{iteMize}{$\bullet$}
\item ($\SessionTypeT \seq \SessionEnd$ or $\SessionTypeT \seq
  \InternalChoice{\xmsg{\Tag_i}{\tvar_i}{\Type_i}.\SessionTypeT_i}_{i\in
    I}$) Easy since $\aweight(\BoundContext, \dom(\substitution),
  \SessionTypeT) = 0$.

\item ($\SessionTypeT \seq \tvar$) From the hypothesis $\BoundContext
  \vdash \SessionTypeT\substitution :: n$ we deduce $\tvar \in
  \BoundContext \cup \dom(\substitution)$. By definition of
  algorithmic weight we conclude $\aweight(\BoundContext,
  \dom(\substitution), \SessionTypeT) = 0$.

\item ($\SessionTypeT \seq \trec\tvar.\SessionTypeS$)
  Let $\substitution' = (\substitution \setminus \tvar),\{ \tvar
  \mapsto \SessionTypeT \}$ where $\substitution \setminus \tvar$ is
  the restriction of $\substitution$ to $\dom(\substitution) \setminus
  \{ \tvar \}$. We have $\BoundContext \vdash
  \SessionTypeS\substitution' \wbound n$.
  By induction hypothesis we deduce $\aweight(\BoundContext,
  \dom(\substitution) \cup \{ \tvar \}, \SessionTypeS) \leq n$.
  By definition of algorithmic weight we conclude
  $\aweight(\BoundContext, \dom(\substitution), \SessionTypeT) =
  \aweight(\BoundContext, \dom(\substitution),
  \trec\tvar.\SessionTypeS) = \aweight(\BoundContext,
  \dom(\substitution) \cup \{ \tvar \}, \SessionTypeS) \leq n$.

\item ($\SessionTypeT \seq
  \ExternalChoice{\xmsg{\Tag_i}{\tvar_i}{\Qualifier_i~\SessionTypeS_i}.\SessionTypeT_i}_{i\in
    I}$)
  For every $i\in I$ let $\substitution_i = \substitution \setminus \{
  \tvar_i \}$.
  From the hypothesis $\BoundContext \vdash \SessionTypeT\substitution
  \wbound n$ we deduce $\BoundContext \vdash
  \SessionTypeS_i\substitution_i \wbound n - 1$ and $\BoundContext
  \vdash \SessionTypeT_i\substitution_i \wbound n$ for every $i \in I$.
  By induction hypothesis we deduce $\aweight(\BoundContext,
  \dom(\substitution_i), \SessionTypeS_i) \leq n - 1$ and
  $\aweight(\BoundContext, \dom(\substitution_i), \SessionTypeT_i)
  \leq n$.
  We conclude $\aweight(\BoundContext, \dom(\substitution),
  \SessionTypeT) \leq n$ by definition of algorithmic weight.
  \qedhere
\end{iteMize}
\end{proof}

\noindent Correctness of the weight algorithm is simply a combination of the two
previous lemmas.

\begin{theorem}[Theorem~\ref{thm:weight_algorithm}]
  $\xweight\BoundContext\SessionType = \aweight(\BoundContext,
  \emptyset, \SessionType)$.
\end{theorem}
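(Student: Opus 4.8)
The plan is to obtain the equality directly from the two auxiliary results already established, Lemma~\ref{lem:weight_correctness} and Lemma~\ref{lem:weight_completeness}. Concretely, I would prove the two inequalities $\xweight\BoundContext\SessionType \leq \aweight(\BoundContext, \emptyset, \SessionType)$ and $\aweight(\BoundContext, \emptyset, \SessionType) \leq \xweight\BoundContext\SessionType$ separately, each interpreted in the total order on $\natset \cup \{ \infty \}$, and then combine them. Recall that $\xweight\BoundContext\SessionType$ is defined as $\min\{ n \in \natset \mid \BoundContext \vdash \SessionType \wbound n \}$, with $\min\emptyset = \infty$.

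First I would prove $\xweight\BoundContext\SessionType \leq \aweight(\BoundContext, \emptyset, \SessionType)$. If $\aweight(\BoundContext, \emptyset, \SessionType) = \infty$ the inequality is immediate. Otherwise $\aweight(\BoundContext, \emptyset, \SessionType) = n$ for some $n \in \natset$, and Lemma~\ref{lem:weight_correctness} gives $\BoundContext \vdash \SessionType \wbound n$. Hence $n$ belongs to the set $\{ m \in \natset \mid \BoundContext \vdash \SessionType \wbound m \}$ whose minimum defines $\xweight\BoundContext\SessionType$, so $\xweight\BoundContext\SessionType \leq n = \aweight(\BoundContext, \emptyset, \SessionType)$.

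For the converse inequality $\aweight(\BoundContext, \emptyset, \SessionType) \leq \xweight\BoundContext\SessionType$, I would dispose of the case $\xweight\BoundContext\SessionType = \infty$ trivially and handle $\xweight\BoundContext\SessionType = m \in \natset$. In this case the defining set of weight bounds is nonempty and attains its minimum, so $\BoundContext \vdash \SessionType \wbound m$ holds. Applying Lemma~\ref{lem:weight_completeness} with the empty substitution $\substitution$ (so that $\SessionType\substitution = \SessionType$ and $\dom(\substitution) = \emptyset$) yields $\aweight(\BoundContext, \emptyset, \SessionType) \leq m = \xweight\BoundContext\SessionType$. Combining the two inequalities gives the claimed equality.

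The substance of the argument is entirely carried by the two lemmas: Lemma~\ref{lem:weight_correctness} certifies that the value returned by the algorithm is a genuine weight bound, and Lemma~\ref{lem:weight_completeness} certifies that it does not exceed any weight bound. There is no real obstacle at this final stage; the only point deserving a word is the specialization of Lemma~\ref{lem:weight_completeness} to the empty substitution. That lemma is phrased for an arbitrary $\substitution$ precisely so that its own proof can proceed by structural induction through the $\trec$ and prefix cases, where recursion variables are instantiated on the fly; taking $\substitution$ empty recovers exactly the instance $\BoundContext \vdash \SessionType \wbound n \implies \aweight(\BoundContext, \emptyset, \SessionType) \leq n$ needed here, relating the algorithmic weight of $\SessionType$ itself rather than of an instantiation of it.
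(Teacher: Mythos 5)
Your proposal is correct and follows essentially the same route as the paper: the paper's proof likewise derives $\xweight\BoundContext\SessionType \leq \aweight(\BoundContext, \emptyset, \SessionType)$ from Lemma~\ref{lem:weight_correctness} and the converse inequality from Lemma~\ref{lem:weight_completeness} instantiated with the empty substitution. Your treatment is merely more explicit about the $\infty$ cases and the role of the minimum, which the paper leaves implicit.
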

\begin{proof}
  From Lemma~\ref{lem:weight_correctness} we deduce
  $\xweight\BoundContext\SessionType = \min \{ n \in \natset \mid
  \BoundContext \vdash \SessionType \wbound n \} \leq
  \aweight(\BoundContext, \emptyset, \SessionType)$.
  From Lemma~\ref{lem:weight_completeness}, by taking $\substitution =
  \emptyset$ (the empty substitution), we conclude
  $\aweight(\BoundContext, \emptyset, \SessionType) \leq
  \xweight\BoundContext\SessionType$.
\end{proof}

%%% Local Variables: 
%%% mode: latex
%%% TeX-master: "main"
%%% End: 

\end{document}